\def\dOi{11(3:5)2015}
\subjclass{F.4.1,F.4.3}
\definecolor{my1}{cmyk}{0,.6,0,0}
\definecolor{my2}{cmyk}{.3,.0,.0,.0}
\newcommand\ignore[1]{}
\newcommand{\efgame}{Ehrenfeucht-Fra\"iss\'e\xspace}
\def\hole{\ensuremath{\Box}} 
\newcommand\nat{\ensuremath{\mathbb{N}}\xspace}
\newcommand\purdeux{\usefont{T1}{put}{b}{n}}
\newcommand\ourh{\text{\purdeux h}}
\newcommand\ourv{\text{\purdeux v}}
\newcommand{\A}{\ensuremath{\mathbb{A}}\xspace}
\newcommand{\Bs}{\ensuremath{\mathbb{B}_s}\xspace}
\newcommand{\As}{\ensuremath{\mathbb{A}_s}\xspace}
\newcommand{\Cs}{\ensuremath{\mathbb{C}_s}\xspace}
\renewcommand\next{{\mathbf{X}_{\ourh}}}
\newcommand\previous{\next^{\!\!\!-1}}
\newcommand\forward{{\mathbf{F}_{\ourh}}}
\newcommand\past{\forward^{\!\!\!-1}}
\renewcommand\max{\next,\forward,\previous,\past}
\newcommand\sansx{\forward,\past}
\newcommand\EFX[1]{\ensuremath{\mathbf{EF+F}^{-1}(#1)}\xspace}
\newcommand\EFH{\EFX{\mathbf{S}}}
\newcommand\EFHs{\EFX{\mathbf{S}^{\neq}}}
\newcommand\EFmax{\EFX{\max}}
\newcommand\EFF{\EFX{\sansx}}
\newcommand\EF{\ensuremath{\mathbf{EF+F}^{-1}}\xspace}
\newcommand\EFEX{\ensuremath{\mathbf{EF+EX}^{-1}}\xspace}
\newcommand\Fword{\ensuremath{\text{F+ F}^{-1}}\xspace}
\newcommand\orderv{\ensuremath{<_{\ourv}}\xspace}
\newcommand\orderh{\ensuremath{<_{\ourh}}\xspace}
\newcommand\succv{\ensuremath{Succ_{\ourv}}\xspace}
\newcommand\succh{\ensuremath{Succ_{\ourh}}\xspace}
\newcommand\MSO{\ensuremath{\textup{MSO}(\orderv,\orderh)}\xspace}
\newcommand\FO{\ensuremath{\textup{FO}(\orderv,\orderh)}\xspace}
\newcommand\FOv{\ensuremath{\textup{FO}(\orderv)}\xspace}
\newcommand\FOd{\ensuremath{\textup{FO}^2(\orderv,\orderh)}\xspace}
\newcommand\FOdv{\ensuremath{\textup{FO}^2(\orderv)}\xspace}
\newcommand\FOds{\ensuremath{\textup{FO}^2(s,\orderv)}\xspace}
\newcommand\FOdxs{\ensuremath{\textup{FO}^2(\succh,\orderh,\orderv)}\xspace}
\newcommand\FOdx{\ensuremath{\textup{FO}^2(\succh,\orderh,\succv,\orderv)}\xspace}
\newcommand{\deltad}{\ensuremath{\Delta_{2}(\orderv)}\xspace}
\newcommand{\bsigmau}{\ensuremath{\mathcal{B}\Sigma_{1}(\orderv)}\xspace}
\newcommand\FOdw{\ensuremath{\textup{FO}^2(<)}\xspace}
\newcommand\FOdxw{\ensuremath{\textup{FO}^2(\text{Succ},<)}\xspace}
\newcommand\shal{shallow multicontext\xspace}
\newcommand\shals{shallow multicontexts\xspace}
\newcommand\Shals{Shallow multicontexts\xspace}
\newcommand\mequiv[1]{\ensuremath{\equiv_{#1}}\xspace}
\newcommand\mequivk{\mequiv{k}}
\newcommand\kdef{$k$-definable\xspace}
\newcommand\Xnodes{$X$-nodes\xspace}
\newcommand\Xnode{$X$-node\xspace}
\newcommand\bXnodes{$\overline{X}$-nodes\xspace}
\newcommand\bXnode{$\overline{X}$-node\xspace}
\newcommand\relaxed[1]{$#1$-relaxed\xspace}
\newcommand\relaxedX{\relaxed{X}}
\newcommand\pequiv[1]{\ensuremath{\cong_{#1}}\xspace}
\newcommand\pequivk{\pequiv{k}}
\newcommand\saturated[1]{$#1$-saturated\xspace}
\newcommand\fv{\ensuremath{\mathbf v}\xspace}
\newcommand\fu{\ensuremath{\mathbf u}\xspace}
\newcommand\fW{\ensuremath{\mathbf W}\xspace}
\newcommand\fP{\ensuremath{\mathbf P}\xspace}
\newcommand\fQ{\ensuremath{\mathbf Q}\xspace}
\newcommand\fR{\ensuremath{\mathbf R}\xspace}
\newcommand\fV{\ensuremath{\mathbf V}\xspace}
\newcommand\fU{\ensuremath{\mathbf U}\xspace}
\newcommand\Uc{\ensuremath{\mathcal{U}}\xspace}
\newcommand\Vc{\ensuremath{\mathcal{V}}\xspace}
\newcommand\Gc{\ensuremath{\mathcal{G}}\xspace}
\newcommand\bcC{\ensuremath{\mathfrak C}\xspace}
\newcommand\bcV{\ensuremath{\mathfrak V}\xspace}
\newcommand\bcI{\ensuremath{\mathfrak I}\xspace}
\newcommand\bcT{\ensuremath{\mathfrak T}\xspace}
\newcommand\lesspr{\ensuremath{\sqsubseteq}\xspace}
\DeclareMathOperator{\dclos}{\downarrow}
\newcommand{\uplift}[1]{\ensuremath{\llceil{#1}\rrceil}\xspace}
\newcommand\patts[2]{\ensuremath{(#1,#2)}-patterns\xspace}
\newcommand\patt[2]{\ensuremath{(#1,#2)}-pattern\xspace}
\newcommand\set[1]{\ensuremath{\{#1\}}\xspace}
\theoremstyle{plain}
\newtheorem{theorem}{Theorem}[section]
\newtheorem{corollary}[theorem]{Corollary}
\newtheorem{proposition}[theorem]{Proposition}
\newtheorem{lemma}[theorem]{Lemma}
\newtheorem{claim}[theorem]{Claim} 
\newtheorem{remark}[theorem]{Remark} 
\newtheorem{fct}[theorem]{Fact} 
\tikzstyle{lab} = [inner sep=0pt]
\tikzstyle{port}=[draw,rectangle,minimum size=12pt]
\tikzstyle{node}=[draw,circle,minimum size=12pt,inner sep=0pt]
\tikzstyle{dot}=[draw,circle,fill,minimum size=4pt,inner sep=0pt]
\tikzstyle{ars}=[draw,->,line width=2pt]
\tikzstyle{arr} = [line width=1pt, ->]
\tikzstyle{bag}=[inner sep=0pt]
\tikzstyle{bag1}=[draw,circle,inner sep= 0pt]
\title[Deciding definability in \FOd on trees]{Deciding definability in \FOd on trees}
\author[T.~Place]{Thomas Place\rsuper a}
\address{{\lsuper a}Bordeaux University, LaBRI}
\author[L.~Segoufin]{Luc Segoufin\rsuper b}
\address{{\lsuper b}INRIA and ENS Cachan, LSV}
\keywords{Tree Languages,Tree Automata, Two-Variables First-Order
  Logic, Characterization}
\begin{document}

\begin{abstract}
  We provide a decidable characterization of regular forest languages
  definable in \FOd.  By \FOd we refer to the two variable fragment of first
  order logic built from the descendant relation and the following sibling
  relation. In terms of expressive power it corresponds to a fragment of the
  navigational core of XPath that contains modalities for going up to some
  ancestor, down to some descendant, left to some preceding sibling, and right
  to some following sibling.

  We also show that our techniques can be applied to other two variable
  first-order logics having exactly the same vertical modalities as \FOd but
  having different horizontal modalities.
\end{abstract}

\maketitle

\section{Introduction}

Logics for expressing properties of labeled trees and forests figure
importantly in several different areas of Computer Science. This paper
is about logics on finite unranked trees. The most prominent one is
monadic second-order logic (MSO) as it can be captured by finite tree
automata.  All the logics we consider are less expressive than
monadic second-order logic. Even with these restrictions, this
encompasses a large body of important logics, such as variants of
first-order logic, temporal logics including CTL$^*$ or CTL, as well
as query languages used for XML data. 

This paper is part of a research program devoted to understanding and
comparing the expressive power of such logics.

We say that a logic has a decidable characterization if the following
problem is decidable: given as input a finite tree automaton (or
equivalently a formula of MSO), decide if the recognized language is
definable by the logic in question. Usually a decidable
characterization requires a solid understanding of the expressive
power of the corresponding logic as witnessed by decades of research,
especially for logics for strings. The main open problem in this
research program is to find a decidable characterization of
\FOv, the first-order logic using a binary predicate \orderv for the
ancestor relation.

In this paper we work with unranked ordered trees and by \FO we
refer to the logic that has two binary predicates, one for the
descendant relation, one for the following sibling relation. We
investigate an important fragment of \FO, its two variable restriction
denoted \FOd. This is a robust formalism that, in terms of expressive
power, has an equivalent counterpart in temporal logic. This temporal
counterpart can be seen as the fragment of the navigational core of
XPath that does not use the successor axis~\cite{marx}. More
precisely, it corresponds to the temporal logic \EFF that navigates in
the tree using two ``vertical'' modalities, one for going to some
ancestor node ($\mathbf{F}^{-1}$) and one for going to some descendant
node ($\mathbf{EF}$), and two ``horizontal'' modalities for going to some
following sibling ($\forward$) or some preceding sibling ($\past$).

We provide a characterization of \FOd, or equivalently \EFF, over
unranked ordered trees. We also show that this characterization is
decidable.  Since \FOd can express the fact that a tree has rank $k$
for any fix number $k$, our result also applies to ranked trees. 

Our characterization is stated using closure properties expressed
partly using identities that must be satisfied by the syntactic forest
algebra of the input regular language, and partly via a mechanism that
we call saturation.
 
Here, a forest algebra is essentially a pair of finite semigroups, the
``horizontal'' semigroup for forest types and the ``vertical'' semigroup for
context types, together with an action of contexts over forests. It was
introduced in~\cite{forestalgebra}, using monoids instead of semigroups, and is
a formalism for recognizing forest languages whose expressive power is
equivalent to definability in MSO. Given a formula of MSO, one can compute its
syntactic forest algebra, which recognizes the set of forests satisfying the
formula. Hence any characterization based on a finite set of identities over
the syntactic forest algebra can be tested effectively when given a regular
language as long as each identity can be effectively tested, which will always
be the case in this paper.

The syntactic forest algebra was used successfully for obtaining
decidable characterizations for the classes of tree languages
definable in \EFEX~\cite{EFEX}, \EF~\cite{mikolaj},
\bsigmau~\cite{luclics08} and \deltad~\cite{lucicalp08}. Here \EFEX is
the class of languages definable in a temporal logic that navigates in
trees using two vertical modalities, $\mathbf{EF}$, that we have
already seen before, and $\mathbf{EX}$, which goes to a child of the
current  node. \EF is the class of languages definable in \EFF without
using the horizontal modalities. \bsigmau stands for the class of
languages definable by a Boolean combination of existential formulas
of \FOv and \deltad is the class of languages definable in \FOv by both
a formula of the form $\exists^*\forall^*$ and a formula of the form
$\forall^*\exists^*$.

Over strings, the logics induced by $\Delta_2(<)$, \FOdw and
\Fword, have exactly the same expressive
power~\cite{fodeux-UTL,fodeux}. But over trees this is not the
case. For instance \EF is closed under bisimulation while the other
two are not. While decidable characterizations were obtained for \EF
and $\deltad$~\cite{mikolaj,lucicalp08}, the important case of \FOd
was still missing and is solved in this paper. 

Over strings, a regular language is definable in \FOdw iff its
syntactic semigroup satisfies an identity that can be effectively
tested~\cite{fodeux}. Not surprisingly our first set of identities
requires that the horizontal and vertical semigroups of the syntactic
forest algebra both satisfy this identity. Our extra property is more
complex and mixes at the same time the vertical and horizontal
navigational power of \FOd. We call it \emph{closure under saturation}.

It is immediate from the string case that being definable in \FOd
implies that the vertical and horizontal semigroups of the syntactic
forest algebra satisfy the required identity. That closure under
saturation is also necessary is proved via a classical, but tedious,
Ehrenfeucht-Fraïssé game argument. As usual in this area, the
difficulty is to show that the closure conditions are sufficient. In
order to do so, as it is standard when dealing with \FOd (see
e.g.~\cite{mikolaj,lucicalp08,fodeux}), we introduce Green-like
relations for comparing elements of the syntactic algebra. However, in
our case, we parametrize these relations with a set of forbidden
patterns: the contexts authorized for going from one type to another
type cannot use any of the forbidden pattern. We are then able to
perform an induction using this set of forbidden patterns, thus
refining our comparison relations more and more until they become
trivial.

Our proof has many similarities with the one of Bojańczyk that
provides a decidable characterization for the logic \EF~\cite{mikolaj}
and we reuse several ideas developed in this paper. However it departs
from it in many essential ways. First of all the closure under
bisimulation of \EF was used in~\cite{mikolaj} in an essential way in
order to compute a subalgebra and perform inductions on the size of
the algebra. Moreover, because \EF does not have horizontal
navigation, Bojańczyk was able to isolate certain labels and then also
perform inductions on the size of the alphabet. It is the combination
of the induction on the size of the alphabet and on the size of the
algebra that gave an elegant proof of the correctness of the
identities for \EF given in~\cite{mikolaj}. The logic \FOd is no
longer closed under bisimulation and we were not able to perform an
induction on the algebra. Moreover because our logic has horizontal
navigation, it is no longer possible to isolate the label of a node
from the labels of its siblings, hence it is no longer possible to
perform an induction on the size of the alphabet. In order to overcome
these problems our proof replaces the inductions used in~\cite{mikolaj}
by an induction on the set of forbidden patterns. This makes the two
proofs technically fairly different.

It turns out that our proof technique applies to various horizontal
modalities.  In the final section of the paper we show how to adapt
the characterization obtained for \mbox{\FOd} in order to obtain
characterizations for \EFmax, \EFH and \EFHs, where
$\next$, $\previous$, $\mathbf{S}$ and $\mathbf{S}^{\neq}$ are
horizontal navigational modalities moving respectively to the next
sibling, previous sibling, an arbitrary sibling including the current
node, or an arbitrary different sibling excluding the current node.

\paragraph*{\bf Other related work}

Our characterization is essentially given using forest algebras. There
exist several other formalisms that were used for providing
characterizations of logical fragments of MSO~(see e.g.
\cite{BS09,PS09,Wil96,preclones}). It is not clear however how to use
these formalisms in order to provide a characterization of \FOd.

There exist decidable characterizations of \EF and \deltad over trees
of bounded rank~\cite{place-csl08}. But, as these logics cannot
express the fact that a tree is binary, the unranked and bounded rank
characterizations are different. As mentioned above, we don't
have this problem with \FOd.
 
\paragraph*{\bf Organization of the paper}

We first provide the preliminary definitions in Section~\ref{prelim}.
The main definitions and their basic properties are described in
Section~\ref{sec-reach}.  Our characterization is stated in
Section~\ref{sec:carac}.  That our properties are necessary for
being definable in \FOd is proved in Section~\ref{sec-correc}.  We
give the proof that our characterization for \mbox{\FOd} is sufficient in
Section~\ref{main-proof}. Decidability of closure under saturation is
not immediate and Section~\ref{more} is devoted to this issue. In
Section~\ref{others} we show how to adapt the arguments in order to
characterize several other horizontal navigation modalities. 

Note that Section~\ref{sec-correc}, Section~\ref{main-proof} and
Section~\ref{more} can be read in an arbitrary order.

\medskip

This paper is the journal version of~\cite{PlaceS10}. From the conference
version the statement of the characterization has been slightly changed  and
the proofs have been significantly modified in order to simplify the presentation.


\section{Preliminaries}\label{prelim}

We work with finite unranked ordered trees and forests labeled over a
finite alphabet \A. A finite alphabet is a pair $\A = (A,B)$ where $A$
and $B$ are finite sets of labels. We use $A$ to label leaves and $B$
to label inner nodes. Making the distinction between leaves and inner
nodes labels makes our presentation slightly simpler without harming
the generality of our results. Given a finite alphabet $\A = (A,B)$,
trees and forests are defined inductively as follows: for any $a \in
A$, $a$ is a tree. If $t_1,\cdots,t_k$ is a finite non-empty sequence
of trees then $t_1 + \cdots + t_k$ is a forest. If $s$ is a forest and
$b \in B$, then $b(s)$ is a tree. Notice that we do not consider empty
trees nor empty forests. A set of trees (forests) over a finite
alphabet \A is called a tree language (forest language).

We use standard terminology for trees and forests defining nodes,
ancestors, descendants, following and preceding siblings. We write $x
\orderv y$ to say that $x$ is a strict ancestor of $y$ or,
equivalently, that $y$ is a strict descendant of $x$. We write $x
\orderh y$ to say that $x$ is a strict preceding sibling of $y$ or,
equivalently, that $y$ is a strict following sibling of $x$. 

A context is a forest over $(A\cup\set{\hole},B)$ with a single leaf
of label $\hole$ that cannot be a root and that has no sibling. This
distinguished node is called \emph{the port} of the context (see
Figure~\ref{fig-exem-cont}). This definition is not standard as
usually contexts are defined without the ``no sibling'' restriction
but it is important for this paper to work with this non-standard
definition. If $c$ is a context, the path in $c$ containing all the
ancestors of its port is called the \emph{backbone} of $c$.

A context $c$ can be composed with another context $c'$ or with a
forest $s$ in the obvious way by substituting $c'$ or $s$ in place of
the port of $c$. This composition yields either the context $cc'$ or
the forest $cs$. 

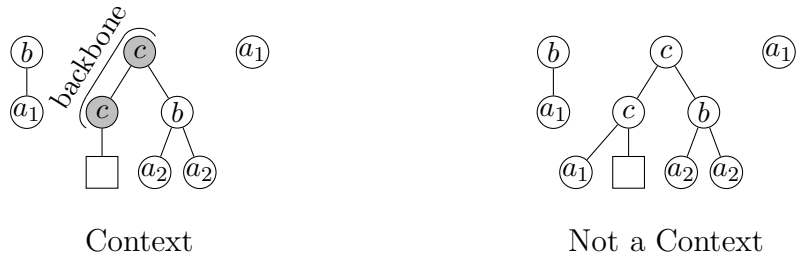
\begin{figure}[h]
\begin{center}
\begin{tikzpicture}
\node[node] (r1) at (-1.5,0) {$b$};
\node[node,fill=gray!50] (r2) at (0,0) {$c$};
\node[node] (r3) at (1.5,0) {$a_1$};

\node[node] (s1) at (-1.5,-0.8) {$a_1$};
\node[node,fill=gray!50] (s2) at (-0.5,-0.8) {$c$};
\node[node] (s3) at (0.5,-0.8) {$b$};

\node[port] (t1) at (-0.5,-1.6) {};
\node[node] (t2) at (0.2,-1.6) {$a_2$};
\node[node] (t3) at (0.8,-1.6) {$a_2$};

\draw (r1) -- (s1);

\draw (r2) -- (s2);
\draw (r2) -- (s3);

\draw (s2) -- (t1);

\draw (s3) -- (t2);
\draw (s3) -- (t3);

\coordinate (b1) at ($(r2)+(-0.21,0.21)$);
\coordinate (b2) at ($(s2)+(-0.21,0.21)$);
\coordinate (b3) at ($(r2)+(0.21,0.21)$);
\coordinate (b4) at ($(s2)+(-0.21,-0.21)$);

\draw (b1) to node[above,sloped] {backbone} (b2);
\draw (b1) to [out=60,in=120] (b3);
\draw (b2) to [out=240,in=150] (b4);

\node[lab] at (0,-2.5) {\large Context};

\begin{scope}[xshift=7cm]
\node[node] (r1) at (-1.5,0) {$b$};
\node[node] (r2) at (0,0) {$c$};
\node[node] (r3) at (1.5,0) {$a_1$};

\node[node] (s1) at (-1.5,-0.8) {$a_1$};
\node[node] (s2) at (-0.5,-0.8) {$c$};
\node[node] (s3) at (0.5,-0.8) {$b$};

\node[port] (t1) at (-0.5,-1.6) {};
\node[node] (tb) at (-1.2,-1.6) {$a_1$};
\node[node] (t2) at (0.2,-1.6) {$a_2$};
\node[node] (t3) at (0.8,-1.6) {$a_2$};

\draw (r1) -- (s1);

\draw (r2) -- (s2);
\draw (r2) -- (s3);

\draw (s2) -- (t1);
\draw (s2) -- (tb);

\draw (s3) -- (t2);
\draw (s3) -- (t3);

\node[lab] at (0,-2.5) {\large Not a Context};
\end{scope}
\end{tikzpicture}
\end{center}
\caption{Illustration of the notion of context. The squared nodes represent
  ports. The right part is not a context because the port has a sibling.}\label{fig-exem-cont}
\end{figure}

If $x$ is a node of a forest then the \emph{subtree at $x$} is the
tree rooted at $x$.  The \emph{subforest of $x$} is the forest
consisting of all the subtrees that are rooted at siblings of $x$
(including $x$). Finally, if $x$ is not a leaf, the \emph{subforest
  below $x$} is the forest consisting of all the subtrees that are
rooted at children of $x$, see Figure~\ref{fig-exem-sub}. Notice
that from the definitions if follows that $s$ is a subforest of a
forest $t$ iff there exists a context $c$ such that $t = cs$. In
particular if we consider the forest $b(a_1 +a _2 + a_3)$,
$a_1+a_2+a_3$ is a subforest but not $a_1+a_2$.

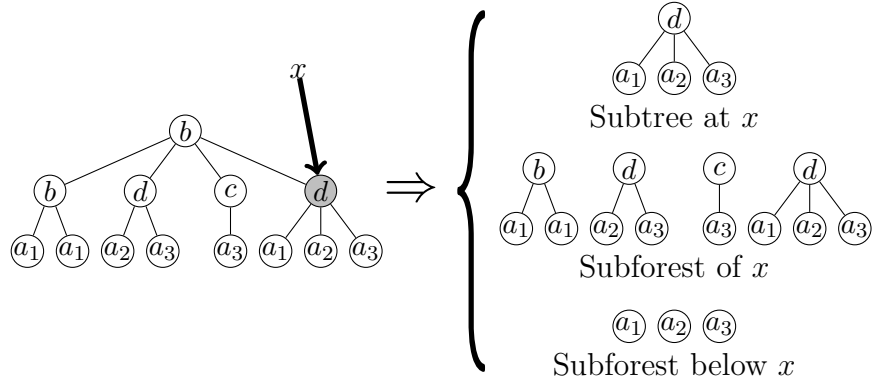
\begin{figure}[h]
\begin{center}
\begin{tikzpicture}

\node[node] (r2) at (1.0,0) {$b$};

\node[node] (s2) at (-0.8,-0.8) {$b$};
\node[node] (s3) at (0.4,-0.8) {$d$};
\node[node] (s4) at (1.6,-0.8) {$c$};
\node[node,fill=gray!50] (s5) at (2.8,-0.8) {$d$};

\node[node] (t1) at (-1.1,-1.6) {$a_1$};
\node[node] (t2) at (-0.5,-1.6) {$a_1$};

\node[node] (t3) at (0.1,-1.6) {$a_2$};
\node[node] (t32) at (0.7,-1.6) {$a_3$};

\node[node] (t33) at (1.6,-1.6) {$a_3$};

\node[node] (t4) at (2.2,-1.6) {$a_1$};
\node[node] (t5) at (2.8,-1.6) {$a_2$};
\node[node] (t6) at (3.4,-1.6) {$a_3$};

\draw (r2) -- (s2);
\draw (r2) -- (s3);
\draw (r2) -- (s4);
\draw (r2) -- (s5);

\draw (s2) -- (t1);
\draw (s2) -- (t2);

\draw (s3) -- (t3);
\draw (s3) -- (t32);

\draw (s4) -- (t33);

\draw (s5) -- (t4);
\draw (s5) -- (t5);
\draw (s5) -- (t6);

\node[lab] (x) at (2.5,0.8) {\large $x$};

\draw[ars] (x) -- (s5);

\node[lab] at (4.0,-0.8) {\huge $\Rightarrow$};

\node[lab,yscale=12.5,xscale=2.7] at (4.8,-0.8) {$\{$};

\begin{scope}[xshift=4.7cm,yshift=2.3cm]
\node[lab] at (2.8,-2.1) {\large Subtree at $x$};
\node[node] (s5) at (2.8,-0.8) {$d$};
\node[node] (t4) at (2.2,-1.6) {$a_1$};
\node[node] (t5) at (2.8,-1.6) {$a_2$};
\node[node] (t6) at (3.4,-1.6) {$a_3$};
\draw (s5) -- (t4);
\draw (s5) -- (t5);
\draw (s5) -- (t6);
\end{scope}

\begin{scope}[xshift=6.5cm,yshift=0.3cm]
\node[lab] at (1.0,-2.1) {\large Subforest of $x$};
\node[node] (s2) at (-0.8,-0.8) {$b$};
\node[node] (s3) at (0.4,-0.8) {$d$};
\node[node] (s4) at (1.6,-0.8) {$c$};
\node[node] (s5) at (2.8,-0.8) {$d$};

\node[node] (t1) at (-1.1,-1.6) {$a_1$};
\node[node] (t2) at (-0.5,-1.6) {$a_1$};

\node[node] (t3) at (0.1,-1.6) {$a_2$};
\node[node] (t32) at (0.7,-1.6) {$a_3$};

\node[node] (t33) at (1.6,-1.6) {$a_3$};

\node[node] (t4) at (2.2,-1.6) {$a_1$};
\node[node] (t5) at (2.8,-1.6) {$a_2$};
\node[node] (t6) at (3.4,-1.6) {$a_3$};

\draw (s2) -- (t1);
\draw (s2) -- (t2);

\draw (s3) -- (t3);
\draw (s3) -- (t32);

\draw (s4) -- (t33);

\draw (s5) -- (t4);
\draw (s5) -- (t5);
\draw (s5) -- (t6);
\end{scope}

\begin{scope}[xshift=6.5cm,yshift=-1.0cm]
\node[lab] at (1.0,-2.1) {\large Subforest below $x$};

\node[node] (t4) at (0.4,-1.6) {$a_1$};
\node[node] (t5) at (1.0,-1.6) {$a_2$};
\node[node] (t6) at (1.6,-1.6) {$a_3$};

\end{scope}

\end{tikzpicture}
\end{center}
\caption{Illustration of the notion of subtrees and subforests}\label{fig-exem-sub}
\end{figure}

\section{The Logic \FOd}\label{sec:logic}

\subsection{Definition}
A forest can be seen as a relational structure. The domain of the
structure is the set of nodes. The signature contains a unary
predicate $P_a$ for each symbol $a\in\A$ plus the binary predicates
\orderv and \orderh. By \MSO we denote the monadic second order logic
over this relational signature.  We use the classical semantics for
\MSO and write $s\models \phi(\bar u)$ if the formula $\phi$ is true
on $s$ when interpreting its free variables with the corresponding
nodes of $\bar u$.  As usual, each sentence $\varphi$ of \MSO 
defines a forest language $L_\varphi=\set{s \mid s\models \varphi}$. A
language defined in \MSO is called a \emph{regular language}. As usual
regular languages form a robust class of languages and there is a
matching notion of unranked ordered forests automata (see for
instance~\cite[chapter 8]{tata}). We will see in Section~\ref{forest}
a corresponding notion of recognizability using forest algebras.

The logic of interest for this paper is \FOd, the two variable
restriction of the first-order fragment of \MSO.

In terms of expressive power, \FOd is equivalent to \EFF, a temporal
logic that we now describe. Essentially, \EFF is the restriction of
the navigational core of XPath without the {\sc child}, {\sc parent},
{\sc next-sibling} and {\sc previous-sibling} predicates. It is
defined using the following grammar:
\begin{equation*}
\varphi :: \A ~|~ \varphi \vee \varphi ~|~ \varphi \wedge \varphi ~|~
\neg\varphi ~|~ \mathbf{EF}\varphi ~|~ \mathbf{F}^{-1}\varphi ~|~ 
\mathbf{F}_{\ourh} \varphi ~|~ \mathbf{F}_{\ourh}^{-1} \varphi
\end{equation*}
We use the classical semantics for this logic which defines when a
formula holds at a node $x$ of a forest $s$. In particular,
$\mathbf{EF}\varphi$ holds at $x$ if there is a strict descendant of
$x$ where $\varphi$ holds, $\mathbf{F}^{-1}\varphi$ holds at $x$ if
there is a strict ancestor of $x$ where $\varphi$ holds,
$\mathbf{F}_{\ourh}\varphi$ holds at $x$ if $\varphi$ holds at some
strict following sibling of $x$, and $\mathbf{F}_{\ourh}^{-1}\varphi$
holds at $x$ if $\varphi$ holds at some strict preceding sibling of
$x$. We then say that a forest $s$ satisfies a formula $\phi$ if
$\phi$ holds at the root of the first tree of $s$. The following
result is immediate from~\cite{marx}:

\begin{theorem}\label{thm-fod-eff}
For any \FOd formula $\phi(x)$, there exists a \EFF formula $\varphi$
holding true on the same set of nodes for every forests. In particular
a forest language is definable in \FOd iff it is definable in \EFF.
\end{theorem}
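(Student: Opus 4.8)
The plan is to derive this equivalence from the known correspondence between the relevant fragment of navigational XPath and two-variable first-order logic on finite ordered trees, established in~\cite{marx}, supplying only the bookkeeping needed to match the precise conventions used in this paper. I would split the argument into the two directions and then a short step passing from node-selecting formulas to forest languages.

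For the direction from \EFF into \FOd, which is elementary, I would prove by structural induction on an \EFF formula $\varphi$ that there is an \FOd formula $T_x(\varphi)$ with the single free variable $x$ such that $s,u\models\varphi$ iff $s\models T_x(\varphi)(u)$ for every forest $s$ and node $u$, writing $T_y(\varphi)$ for the formula obtained by exchanging the names $x$ and $y$. Labels and Boolean connectives are immediate, and for the modalities one sets $T_x(\mathbf{EF}\psi):=\exists y\,(x\orderv y\wedge T_y(\psi))$, $T_x(\mathbf{F}^{-1}\psi):=\exists y\,(y\orderv x\wedge T_y(\psi))$, $T_x(\forward\psi):=\exists y\,(x\orderh y\wedge T_y(\psi))$ and $T_x(\past\psi):=\exists y\,(y\orderh x\wedge T_y(\psi))$. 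Only two variables ever occur, since each nested modality reuses whichever of $x,y$ is not the ``current'' node; this is sound precisely because an \EFF subformula constrains only its evaluation point, and correctness is immediate from the semantics recalled above.

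The substantive content is the direction from \FOd into \EFF, and here I would simply invoke~\cite{marx}: the node-selecting fragment of XPath that uses only the descendant(-or-self), ancestor(-or-self), following-sibling and preceding-sibling axes --- which is exactly \EFF --- has, over finite ordered trees and forests, the same expressive power as two-variable first-order logic with the corresponding binary predicates; the inverse relations are available in \FOd by exchanging the two arguments, and the strict versus reflexive versions of the axes are interchangeable using equality, which \FOd possesses. Thus every \FOd formula $\phi(x)$ translates to an \EFF formula true at exactly the same nodes. (If a self-contained treatment is wanted, the standard route is a quantifier-rearrangement argument putting $\phi(x)$ into a Boolean combination of formulas $\exists y\,(R(x,y)\wedge\beta(y))$ with $R\in\{\orderv,\orderv^{-1},\orderh,\orderh^{-1}\}$ and $\beta$ again a one-variable \FOd formula, exploiting that with only two variables a quantified subformula in the scope of $x$ refers to $x$ only through atoms $R(x,y)$ and can therefore be pulled out of any further nesting; each conjunct is then the \EFF translation of $\beta$ prefixed by the modality corresponding to $R$.)

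Finally, for the ``in particular'' I would match node-level and forest-level definability. Given an \EFF formula $\varphi$, the forest language it defines is $\{s:\text{the root of the first tree of }s\models\varphi\}$, and ``being the root of the first tree'' is $\mathrm{fr}(x):=\neg\exists y\,(y\orderv x)\wedge\neg\exists y\,(y\orderh x)$, so the language is defined by the \FOd sentence $\exists x\,(\mathrm{fr}(x)\wedge T_x(\varphi))$. Conversely a \FOd sentence is a Boolean combination of formulas $\exists x\,\chi(x)$ with $\chi(x)$ a one-variable \FOd formula, and ``some node of $s$ satisfies $\chi$'' is expressible in \EFF because every node of $s$ is reachable from the root of the first tree by at most one $\forward$-step followed by at most one $\mathbf{EF}$-step: the corresponding forest property is $(\theta\vee\mathbf{EF}\theta)\vee\forward(\theta\vee\mathbf{EF}\theta)$, with $\theta$ the \EFF translation of $\chi$. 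Combining the two directions gives the language equivalence. The only genuine obstacle is the \FOd-into-\EFF direction --- that two variables are no stronger here than the unary temporal modalities --- which I would not reprove but take from~\cite{marx}; the only care required on our side is aligning the axis set of \EFF (no {\sc child}, no {\sc next-sibling}) with the predicate set of \FOd (only $\orderv$ and $\orderh$) and handling the strict/reflexive and ``first tree'' conventions as above.
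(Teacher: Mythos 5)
Your proposal is correct and follows essentially the same route as the paper, which gives no proof at all and simply declares the theorem ``immediate from~\cite{marx}'': like the paper, you delegate the substantive \FOd-to-\EFF direction to Marx's equivalence between two-variable first-order logic and the corresponding XPath fragment, and the details you supply (the structural translation for the easy direction, the strict/reflexive and first-tree bookkeeping, and the observation that every node is reachable from the root of the first tree by one $\forward$-step and one $\mathbf{EF}$-step) are exactly the routine checks the paper leaves implicit. Everything you wrote checks out against the paper's conventions for \orderv, \orderh and the evaluation of \EFF formulas at the root of the leftmost tree.
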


\bigskip

We aim at providing a decidable characterization of regular forest
languages definable in \FOd. This means finding an algorithm that
decides whether or not a given regular forest language is definable in
\FOd. 

Note that \FOd is expressive enough to test whether a forest is a tree
and, for each $k$ whether it has rank $k$. Hence any result concerning
forest languages definable in \FOd also applies to tree languages
definable in \FOd and covers the ranked and unranked cases.

\medskip

We shall mostly adopt the \EFF point of view as it is useful when
considering other horizontal modalities or when making comparisons
with the decision algorithm obtained for \EF in~\cite{mikolaj}.

\subsection{\efgame Games}\label{games}
As usual definability in \FOd corresponds to winning strategies in a
\efgame game that we briefly describe here. We adopt the \EFF point of
view as the corresponding game is slightly simpler. Its definition is
standard.  
 
There are two players, Duplicator and Spoiler. The board consists in
two forests and the number $k$ of rounds is fixed in advance. At any
time during the game there is one pebble placed on a node of one
forest and one pebble placed on a node of the other forest and both
nodes have the same label. If the initial position is not specified,
the game starts with the two pebbles placed on the root of the
leftmost tree in each forest. Each round starts with Spoiler moving
one of the pebbles inside its forest, either to some ancestor of its
current position, or to some descendant or to some left or right
sibling. Duplicator must respond by moving the other pebble inside the
other forest in the same direction to a node of the same label. If
during a round Duplicator cannot move then Spoiler wins the game. If
Duplicator was able to respond to all the moves of Spoiler then she
wins the game. Winning strategies are defined as usual. If Duplicator
has a winning strategy for the game played on the forests $s,t$ then
we say that $s$ and $t$ are $k$-equivalent.

The following results on games are classical and simple to prove. 

\begin{lemma}[Folklore]\label{lemma-ef-game}$\quad$
\begin{enumerate}
\item For every $k$, $k$-equivalence is an equivalence relation of
  finite index. 
\item For every $k$, each class of the $k$-equivalence relation is
  definable by a sentence of \EFF such that the nesting depth of its
  navigational modalities is bounded by $k$.
\item {\sloppy For every $k$, if $s$ and $t$ are $k$-equivalent then they
  satisfy the same sentences of\break \EFF such that the nesting depth of
  their navigational modalities is bounded by $k$.}
\end{enumerate}
\end{lemma}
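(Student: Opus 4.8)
The plan is to run the standard correspondence between Duplicator's winning strategies and equality of a suitable notion of \emph{$k$-type}, defined on pointed forests. First I would fix, for a pointed forest $(s,x)$ and $k\ge 0$, its $k$-type $\tau_k(s,x)$: the set of all \EFF formulas of nesting depth at most $k$ that hold at $x$ in $s$. The first thing to establish is a finiteness statement: up to logical equivalence there are only finitely many \EFF formulas of nesting depth at most $k$. I would prove this by induction on $k$ — the base case uses that \A is finite, and the inductive step uses that every depth-$k$ formula is a Boolean combination of atoms $P_a$ and of formulas $\mathbf{EF}\psi$, $\mathbf{F}^{-1}\psi$, $\forward\psi$, $\past\psi$ with $\psi$ of depth at most $k-1$, of which there are finitely many up to equivalence by the induction hypothesis. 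Fixing representatives $\varphi_1,\dots,\varphi_n$ of all depth-$\le k$ formulas, I would then associate to each pointed forest $(s,x)$ the formula $\theta^k_{s,x}=\bigwedge\{\varphi_i\mid s\models\varphi_i(x)\}\wedge\bigwedge\{\neg\varphi_i\mid s\not\models\varphi_i(x)\}$, which has nesting depth at most $k$ and which characterizes the $k$-type, in the sense that $t\models\theta^k_{s,x}(y)$ iff $\tau_k(t,y)=\tau_k(s,x)$; in particular there are only finitely many distinct $k$-types.

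Next I would prove the game characterization: for every $k$, Duplicator wins the $k$-round game from the starting position $(s,x),(t,y)$ iff $\tau_k(s,x)=\tau_k(t,y)$, again by induction on $k$. For $k=0$ both sides just say that $x$ and $y$ carry the same label. For the inductive step, the ``if'' direction amounts to describing Duplicator's strategy: if Spoiler moves, say inside $s$, from $x$ to a descendant $x'$ (the three other directions being symmetric), then $\mathbf{EF}\,\theta^{k-1}_{s,x'}$ has depth $\le k$ and holds at $x$, hence also at $y$, which yields a descendant $y'$ of $y$ with $\tau_{k-1}(t,y')=\tau_{k-1}(s,x')$ — in particular $x'$ and $y'$ share a label, so the move is legal — and Duplicator then continues by the induction hypothesis applied to $(s,x'),(t,y')$. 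For the ``only if'' direction I would show that every depth-$\le k$ formula true at $x$ is true at $y$; it is enough to handle atoms $P_a$ (covered because Duplicator also wins the $0$-round game) and the formulas $\mathbf{EF}\psi$, $\mathbf{F}^{-1}\psi$, $\forward\psi$, $\past\psi$ with $\psi$ of depth $\le k-1$, Boolean combinations being then immediate: for $\mathbf{EF}\psi$, if $s\models\mathbf{EF}\psi(x)$ Spoiler moves to a descendant $x'$ of $x$ satisfying $\psi$, Duplicator's answer $y'$ is a descendant of $y$ from which she wins the remaining $(k-1)$-round game, so $\tau_{k-1}(s,x')=\tau_{k-1}(t,y')$ by induction, hence $t\models\psi(y')$ and $t\models\mathbf{EF}\psi(y)$; the other three modalities are identical.

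Finally I would assemble the three items. Since $s$ and $t$ are $k$-equivalent exactly when Duplicator wins the $k$-round game starting at the roots $r_s,r_t$ of the leftmost trees, the game characterization reduces this to $\tau_k(s,r_s)=\tau_k(t,r_t)$; equality of $k$-types is plainly an equivalence relation, and it has finite index by the finiteness of the set of $k$-types, which gives (1). For (2), the $k$-equivalence class of $s$ is the set of forests $t$ with $\tau_k(t,r_t)=\tau_k(s,r_s)$, which by the first step is defined — reading $\theta^k_{s,r_s}$ as a sentence, evaluated at the leftmost root as in the semantics of \EFF — by a sentence of nesting depth at most $k$. And (3) is then immediate: $k$-equivalence gives $\tau_k(s,r_s)=\tau_k(t,r_t)$, so $s$ and $t$ agree on every \EFF sentence of nesting depth $\le k$. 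I do not expect any of this to be genuinely difficult (hence ``Folklore''); the one real piece of content, and the step most worth being careful about, is the finiteness argument of the first step, since it is exactly what lets a $k$-type be captured by a single formula — something both item (2) and Duplicator's strategy rely on — the remaining care being the bookkeeping in the nested induction and the minor discrepancy that the game is phrased for pointed forests whereas $k$-equivalence is a relation on forests, handled by pointing at the leftmost root.
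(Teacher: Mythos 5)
Your proposal is correct, and it is the standard $k$-type argument: finitely many \EFF formulas up to equivalence at each nesting depth, a characteristic formula $\theta^k_{s,x}$ per pointed forest, and the usual two-directional induction matching Duplicator's winning strategies with equality of $k$-types. The paper does not actually give a proof — the lemma is labelled ``Folklore'' and dismissed as ``classical and simple to prove'' — so there is nothing to compare against, but what you wrote is exactly the expected folklore argument and I see no gap.
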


\noindent When played on words instead of forests, the game is the same except
that now Spoiler can move either to a previous or to a following
position. The results are identical after replacing \FOd with \FOdw,
the two variable first-order logic on strings, using the predicate $<$
for the following position relation.

\subsection{Antichain Composition Principle} As mentioned in the introduction,
we use induction to prove that if $L$ satisfies the characterization then we
can construct a \FOd formula for $L$. At each step in
this construction, we prove that $L$ can be defined as the composition of
simpler languages such that a formula for $L$ can be constructed from formulas
defining the simpler languages. This is what we do with the following simple
composition lemma, essentially adapted from~\cite{mikolaj} and using the same
terminology. 

A formula of \FOd with one free variable is called \emph{antichain} if in every
forest, the set of nodes where it holds forms an antichain, i.e. a set (not
necessarily maximal) of nodes pairwise incomparable with respect to the
descendant relation. This is a semantic property that may not be apparent just
by looking at the syntax of the formula. A typical antichain formula selects in
a forest the set of nodes of label $b \in B$ that have no ancestor of label
$b$.

Given (i) an antichain formula $\varphi$, (ii) disjoint forest
languages $L_1,\cdots,L_n$ and (iii) labels $a_1, \cdots,a_n
\in A$ and (iv) a forest $s$, we define the forest $s' =
s[(L_1,\varphi)  \rightarrow a_1, \cdots, (L_n, \varphi) \rightarrow
a_n]$ as follows. For each node $x$ of $s$ such that $s \models
\varphi(x)$, we determine the unique $i$ such that the forest language
$L_i$ contains the \emph{subforest below $x$}. If such an $i$ exists,
we remove the whole subforest below $x$, and replace it by a leaf of
label $a_i$. Since $\varphi$ is antichain, this can be done
simultaneously for all $x$. Note that the formula $\varphi$ may also
depend on ancestors of $x$, while the languages $L_i$ only talk about
the subforest below $x$.

The composition method that we will use is summarized in the the following
lemma:

\begin{lemma}\label{lemma-ACL}[Antichain Composition Lemma]
Let $\varphi$ , $L_1,\dots,L_n$ and $a_1,\dots,a_n$ be as above. 
If $L_1, \dots, L_n$ and $K$ are languages definable in
\FOd, then so is
\begin{equation*}
L = \set{t ~\mid~ t[(L_1,\varphi) \rightarrow a_1, 
\cdots,(L_n,\varphi) \rightarrow a_n] \in K}.
\end{equation*}
\end{lemma}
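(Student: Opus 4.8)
The plan is to build a formula for $L$ by composing formulas for the $L_i$ and for $K$ via a relabeling argument, exactly in the spirit of Bojańczyk's composition lemma. The core idea is that the operation $t \mapsto t' = t[(L_1,\varphi)\to a_1,\dots,(L_n,\varphi)\to a_n]$ is ``\FOd-definable'' in the sense that, from inside $t$, a two-variable formula can simulate a run of the evaluation on $t'$: at a node $x$ one needs to know whether $x$ survives into $t'$ (i.e.\ whether $x$ lies below some $y$ with $t\models\varphi(y)$ whose subforest below falls in one of the $L_i$), and if $x$ is such a ``replacement node'' itself, which label $a_i$ it carries. Since $\varphi$ is antichain, the replacement nodes form an antichain, so each surviving node of $t'$ corresponds to a well-defined node of $t$, and the ancestor and sibling orders of $t'$ are just the restrictions of those of $t$ to the surviving nodes. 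This is the key structural observation that makes a two-variable translation possible.

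First I would make precise the ``selector'' formula: by Theorem~\ref{thm-fod-eff} and the closure of \FOd under Boolean operations, and using that each $L_i$ is definable in \FOd, one obtains for each $i$ an \FOd formula $\psi_i(x)$ that holds at $x$ in $t$ iff $t\models\varphi(x)$ \emph{and} the subforest below $x$ belongs to $L_i$. The subtlety is that ``the subforest below $x$ belongs to $L_i$'' must be expressed by a formula evaluated in $t$ with free variable $x$, but this is routine: an \EFF sentence for $L_i$ relativizes to the subtrees rooted at children of $x$, and the \EFF modalities $\mathbf{EF},\forward,\past$ stay within that subforest once we are at a child of $x$ (the only care needed is to start the evaluation at the children of $x$ and forbid going above them, which is expressible since \EFF has only the ``strict descendant/sibling'' modalities — so we simply guard each subformula by not leaving the cone below $x$). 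Because $\varphi$ may legitimately refer to ancestors of $x$, keeping $\varphi$ as a black-box \FOd formula is fine. Then $x$ is a ``deleted'' node of $t$ iff it has a strict ancestor $y$ with $\psi_i(y)$ for some $i$, and a surviving node otherwise; call the surviving-node predicate $\mathrm{surv}(x)$, which is again \FOd (a disjunction over $i$ of $\mathbf{F}^{-1}\psi_i(x)$, negated).

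Next I would take an \EFF sentence $\chi$ for $K$ (Theorem~\ref{thm-fod-eff} again) and relativize it to $t'$ inside $t$: replace every subformula by its relativization to surviving nodes, replace each atomic test $P_a$ at a node $x$ by ``$P_a(x)\wedge\mathrm{surv}(x)$'' for $a\in A\cup B$ not among the $a_i$, add the clause that $x$ carries label $a_i$ exactly when $\psi_i(x)$ holds (and at such $x$ it is a leaf of $t'$, so the descendant modality reads as false there), and restrict each modality $\mathbf{EF},\mathbf{F}^{-1},\forward,\past$ to range over surviving nodes only. The point where one must be slightly careful is the \emph{root} of $t'$: $t\models L$ should mean $\chi$ holds at the root of the first tree of $t'$, so one should relativize $\chi$ starting at the first surviving node on the leftmost branch of $t$; this is an \FOd-expressible starting point. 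Putting these pieces together yields a single \FOd sentence equivalent to $L$, and the lemma follows.

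The main obstacle, and the only genuinely delicate point, is the relativization of \EFF modalities to the surviving nodes while staying within two variables: one must check that ``move to the nearest surviving descendant/ancestor/sibling and continue'' does not secretly require remembering a third node. This works precisely because \EFF (unlike, say, \textsc{child}-based logics) only uses \emph{existential reachability} modalities — $\mathbf{EF}\theta$ relativized becomes $\mathbf{EF}(\mathrm{surv}(x)\wedge\theta)$, and similarly for the other three — so no intermediate witness needs to be stored; the two pebbles of the game suffice, which is exactly why the statement is about \FOd and \EFF rather than a successor-based logic. Once this relativization-is-local observation is in hand, the rest is bookkeeping over the finitely many labels $a_i$ and the antichain property of $\varphi$, which guarantees that the replacement at different nodes does not interfere and that each surviving node of $t'$ is unambiguously a node of $t$.
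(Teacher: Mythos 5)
Your plan — build a formula for $L$ by relativizing the \EFF formula for $K$ to the surviving nodes of $t$ and replacing the $a_i$-label tests by relativized formulas for the $L_i$, with the antichain property of $\varphi$ guaranteeing soundness — is exactly the paper's (terse) argument, which in turn paraphrases Boja\'nczyk's composition lemma for \EF. So the overall route is the right one, and the ``relativization stays two-variable because \EFF has only existential reachability modalities'' observation is indeed the crux.

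There is, however, a concrete misreading of the substitution $t\mapsto t'$ that would make your formula wrong as written. You treat a node $x$ with $t\models\varphi(x)$ and subforest-below-$x$ in $L_i$ as if $x$ itself were \emph{relabeled} $a_i$ and became a leaf of $t'$ (``$x$ carries label $a_i$ exactly when $\psi_i(x)$ holds, and at such $x$ it is a leaf of $t'$''). But the definition in the paper is that the \emph{subforest below} $x$ is removed and replaced by a single fresh leaf labeled $a_i$: the node $x$ keeps its original label $b\in B$, remains an inner node of $t'$, and acquires a new unique child carrying $a_i$. (This is essential elsewhere in the paper: Lemma~\ref{lem:thisistheend} and Lemma~\ref{lemma-v+replace} rely on the substitution preserving the forest type, which only works if $x$ keeps its label and the replacement leaf has the same type as the deleted subforest.) Consequently, in your translation the test $P_b$ must still succeed at $x$, $x$ must not be treated as a leaf, and the $a_i$-test must be attached to the (virtual) child of $x$ — handled, as the paper indicates, by letting navigation into the deleted region trigger the relativized $L_i$-formula evaluated on the subforest below $x$. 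The fix is local and does not change your overall strategy, but as stated your relabeling clause does not describe the operation whose definability you are trying to establish.
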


This lemma follows from a simple \efgame game argument. Using the \EFF point of
view we can also construct a formula defining $L$.
The formula for $L$ is obtained from the one for $K$ by restricting all
navigation to nodes that are not descendants of nodes satisfying $\varphi$ and
by replacing each test that a label is $a_i$ by the formula for $L_i$ where all
navigations are now restricted to nodes that are descendants of a node that
satisfies $\varphi$. The fact that $\varphi$ is antichain makes this
construction sound. The details are simple and are omitted here as they
paraphrase those given in~\cite{mikolaj} for \EF.

The inductive step of our proof consists in exhibiting $L_1, \dots, L_n$ and
$K$, together with an antichain formula $\varphi$ such that $L = \{t \mid
t[(L_1,\varphi) \rightarrow a_1, \cdots,(L_n,\varphi) \rightarrow a_n] \in K\}$
and $K,L_1, \dots, L_n$ have smaller inductive parameters than
$L$. In~\cite{mikolaj} the antichain formula is of the form: ``select the set
of nodes of label $b \in B$ that have no ancestor of label $b$.'' Observe that
such a formula allows us to use the size of $B$ as an induction parameter as $K$
does not contain the label $b$. In our case, we replace $B$ by sibling patterns
that we will define in Section~\ref{sec-reach}.

\section{Forest algebras}\label{forest}

\newcommand\one{\textup{1}}
\newcommand\mydelta{\vartriangle}

A key ingredient in our characterization is based on syntactic forest
algebras. \emph{Forest algebras} were introduced by Boja{\'n}czyk and
Walukiewicz as an algebraic formalism for studying regular forest
languages~\cite{forestalgebra}. We work with the semigroup variant of
forest algebras. Moreover we require that the port of each context has
no sibling. These restrictions are necessary as, without them, the
languages definable in \FOd would not form a variety, i.e.  would not
be characterizable by its syntactic forest algebra only.

We give a brief summary of the definition of forest algebras and of
their important properties.  More details can be found
in~\cite{forestalgebra}.  A forest algebra consists of a pair $(H,V)$
of finite semigroups, subject to some additional requirements, which
we describe below. We write the operation in $V$ multiplicatively and
the operation in $H$ additively, although $H$ is not assumed to be
commutative.
 
We require that $V$ acts on the left of $H$.  That is, there is a map 
$$(h,v)\in H\times V\mapsto vh\in H$$ such that $$w(vh)=(wv)h$$ for
all $h\in H$ and $v,w\in V$. We further require that for every $g\in
H$ and $v \in V$, $V$ contains elements $(v+g)$ and $(g+v)$ such
that $$(v+g)h ~~~~=~~~~ vh+g, ~\text{\hspace{2cm}}~ (g+v)h ~~~~=~~~~ g+vh$$ for all $h\in H.$ 

Let $\A = (A,B)$ be a finite alphabet. The \emph{free forest algebra}
on \A, denoted by $\A^\mydelta$, is the pair of semigroups $(H_\A,V_\A)$
where $H_\A$ is the set of forests over \A equipped with the $+$
operation and $V_\A$ the set of contexts equipped with the composition
operation, together with the natural action. One can verify that this
action turns $\A^\mydelta$ into a forest algebra.
 
A morphism $\alpha:(H_1,V_1)\to (H_2,V_2)$ of forest algebras is a
pair $(\gamma,\delta)$ of semigroup morphisms $\gamma: H_1 \rightarrow
H_2$, $\delta: V_1 \rightarrow V_2$ such that
$\gamma(vh)=\delta(v)\gamma(h)$ for all $h\in H,$ $v\in V.$ However,
we will abuse notation slightly and denote both component maps by
$\alpha$. 

We say that a forest algebra $(H,V)$ {\it recognizes} a forest
language $L$ if there is a morphism $\alpha:\A^{\mydelta}\to (H,V)$ and
a subset $X$ of $H$ such that $L=\alpha^{-1}(X).$ We also say that the
morphism $\alpha$ recognizes $L$. It is easy to show that a forest
language is regular if and only if it is recognized by a finite
forest algebra. 
 
Consider some forest language $L$ over an alphabet \A. We define an
equivalence relation $\sim_L$ over contexts and over forests. Given
two forests $t_1,t_2$, we say that $t_1 \sim_L t_2$ iff for any two
forests $s,s'$ and any context $c$, $c(s + t_1 +s') \in L$ iff $c(s +
t_2 +s') \in L$.  Given two contexts $c_1,c_2$ we say that $c_1 \sim_L
c_2$ iff for any forest $s$, $c_1s \sim_L c_2s$.  This equivalence is
a congruence of forest algebras that is of finite index iff $L$ is
regular. The quotient of $\A^\mydelta$ by this congruence yields a
forest algebra recognizing $L$ which we call the \emph{syntactic
  forest algebra} of $L$. The mapping sending a forest or a context to
its equivalence class in the syntactic forest algebra, denoted
$\alpha_L$, is a morphism called the \emph{syntactic morphism} of
$L$.

It is also important to know that given an \MSO sentence $\phi$, the 
syntactic forest algebra of $L_\phi$ and the syntactic morphism
$\alpha_\phi$ can be computed from $\phi$. 

\paragraph*{\bf Idempotents} It follows from standard arguments of semigroup
theory that given any finite semigroup $S$, there exists a
number $\omega(S)$ (denoted by $\omega$ when $S$ is understood from
the context) such that for each element $x$ of $S$, $x^\omega$ is an
idempotent: $x^\omega=x^\omega x^\omega$. Given a forest algebra
$(H,V)$ we will denote by $\omega(H,V)$ (or just $\omega$ when $(H,V)$  
is understood from the context) the product of $\omega(H)$ and
$\omega(V)$ and for any element $u \in V$ and $g \in H$ we will write
$u^\omega$ and $\omega g$ for the corresponding idempotents. 

\medskip
Finally, given a semigroup $S$ we will denote by $S^{\one}$ the monoid
formed from $S$ by adding a neutral element.

\paragraph*{\bf Leaf Surjective Morphisms} Let $\A = (A,B)$ be a
finite alphabet and let $\alpha: \A^\mydelta \rightarrow (H,V)$ be a
morphism into a finite forest algebra $(H,V)$. We say that $\alpha$ is
\emph{leaf surjective} iff for any $h \in H$, there exists $a \in A$
such that $\alpha(a) = h$.

Observe that given any morphism $\alpha: (A,B)^\mydelta \rightarrow
(H,V)$, one can construct a leaf surjective one $\beta:
(A \cup H,B)^\mydelta \rightarrow (H,V)$ by extending $\alpha$ in the
obvious way. We call $\beta$ the \emph{leaf completion} of
$\alpha$. 

\begin{lemma} \label{lem:leafsurj}
Let $\alpha: \A^\mydelta \rightarrow (H,V)$ be a morphism into a finite
forest algebra, $\beta$ be its leaf completion and $h \in H$ be such
that $\beta^{-1}(h)$ is \FOd definable. Then $\alpha^{-1}(h)$ is \FOd
definable.
\end{lemma}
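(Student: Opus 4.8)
The plan is to reduce membership in $\alpha^{-1}(h)$ to membership in $\beta^{-1}(h)$ by rewriting the input forest over the larger alphabet $(A\cup H,B)$, and then invoke \FOd-definability of $\beta^{-1}(h)$ together with the Antichain Composition Lemma (Lemma~\ref{lemma-ACL}). The key observation is that $\beta$ extends $\alpha$: a forest $s$ over $\A=(A,B)$ is also a forest over $(A\cup H,B)$, and on such forests $\beta$ and $\alpha$ agree, so $\alpha^{-1}(h)=\beta^{-1}(h)\cap H_\A$, i.e. $\alpha^{-1}(h)$ is just the set of forests in $\beta^{-1}(h)$ that happen to use no leaf label from $H$. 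Thus it suffices to show that the property ``uses only leaf labels from $A$'' is \FOd-definable and that \FOd-definable languages are closed under intersection — the latter being immediate since \FOd is closed under conjunction. The former is also immediate: ``every leaf has a label in $A$'' is expressible in \FOd (indeed in \EFF) by a formula asserting that no node carries a label in the finite set $H$ and has no descendant — or more simply by a Boolean combination over the finite alphabet. Hence $\alpha^{-1}(h)=\beta^{-1}(h)\cap L_{A}$ with both conjuncts \FOd-definable, and we are done.

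Let me double-check the only subtle point: we must make sure the alphabets match up so that ``$s$ is a forest over $(A\cup H,B)$'' literally makes sense for $s\in H_\A$. Since $A\subseteq A\cup H$ and the inner-node alphabet $B$ is unchanged, every $\A$-forest is an $(A\cup H,B)$-forest, and $\beta$ was defined precisely to restrict to $\alpha$ on these. So $\beta(s)=\alpha(s)$ for all $s\in H_\A$, giving $\beta^{-1}(h)\cap H_\A = \alpha^{-1}(h)$, where the intersection is taken inside $(A\cup H,B)$-forests. Writing $L_A$ for the \FOd-definable language of $(A\cup H,B)$-forests all of whose leaves carry labels in $A$, we get $\alpha^{-1}(h) = \beta^{-1}(h)\cap L_A$ as a language of $(A\cup H,B)$-forests; and since $\alpha^{-1}(h)$ as a language of $\A$-forests is identified with this set under the canonical inclusion, it is \FOd-definable.

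I do not expect any real obstacle here; this is a bookkeeping lemma. The one place to be slightly careful is the definition of the formula cutting out $L_A$: a leaf of a forest over $(A\cup H,B)$ is a node with no child, and we want it to satisfy $\bigvee_{a\in A}P_a$; equivalently we forbid any node satisfying $\bigvee_{h\in H}P_h$ (these are exactly the leaf labels from $H$, since elements of $H$ only appear as leaf labels in the leaf completion). So $L_A$ is defined by the \FOd sentence asserting that no node in the forest has a label in $H$, which — reading it through Theorem~\ref{thm-fod-eff} — is the \EFF sentence $\neg\big(\bigvee_{h\in H}h \vee \mathbf{EF}\bigvee_{h\in H}h \vee \forward\bigvee_{h\in H}h \vee \past\bigvee_{h\in H}h \vee \mathbf{F}^{-1}\bigvee_{h\in H}h\big)$ evaluated at the root of the first tree, which checks the root and all other nodes reachable from it. Conjoining with the formula for $\beta^{-1}(h)$ completes the proof; alternatively one can phrase the whole argument as an instance of Lemma~\ref{lemma-ACL} with an empty list of substitutions, but the direct argument above is cleaner.
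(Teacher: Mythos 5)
Your argument is exactly the paper's: $\alpha^{-1}(h)$ is the set of forests in $\beta^{-1}(h)$ containing no leaf labelled by an element of $H$, and both conditions are \FOd-definable, so their conjunction is. One cosmetic remark: your explicit \EFF rendering of $L_A$ misses nodes that are descendants of a following sibling of the root (you would also need $\forward\mathbf{EF}\bigvee_{h}P_h$), but this is irrelevant since the \FOd sentence $\neg\exists x\,\bigvee_{h\in H}P_h(x)$ that you give first already does the job.
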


\begin{proof}
A forest in $\alpha^{-1}(h)$ is a forest in $\beta^{-1}(h)$ that
contains no leaf with label in $H$. Therefore, one can construct an \FOd
formula for $\alpha^{-1}(h)$ from a formula for $\beta^{-1}(h)$.
\end{proof}

Working with leaf surjective morphisms will be convenient for us. Typically
when applying the Antichain Composition Lemma we will construct $K$ from $L$ by
replacing some subforests of $L$ by leaf nodes with the same forest type. It is
therefore important that such nodes exist.

\paragraph*{\bf The string case}
A reason for using syntactic forest algebras is that the same problem for
strings was solved using syntactic semigroups. In the string case there is only
one linear order and the corresponding logic is denoted by \FOdw. Recall that
the syntactic semigroup (monoid) of a regular string language is the transition
semigroup (monoid) of its minimal deterministic automata. It is therefore
computable from any reasonable presentation of the regular string language. The
following characterization was actually stated using syntactic monoids, but it
is equivalent to this statement\footnote{We are actually not using the
  identity of~\cite{fodeux}. Ours can easily seen to be equivalent
  to it. This is a folklore result. A proof can be found in~\cite{kufDiss06}.}\label{foot}.

\begin{theorem}[\cite{fodeux}]\label{thm-word-fod}
 A regular string language is definable in \FOdw iff its syntactic semigroup
 $S$ satisfies for all $u,v \in S$:
\begin{equation*} (uv)^\omega v (uv)^\omega =(uv)^\omega
  \end{equation*} 
\end{theorem}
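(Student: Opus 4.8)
The plan is to prove the two directions separately, exploiting the known string characterization (Theorem~\ref{thm-word-fod}) and transferring it to the syntactic forest algebra through the horizontal and vertical structure.

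For the easy direction (necessity), suppose $L$ is definable in \FOdw. We want to derive the identity $(uv)^\omega v (uv)^\omega = (uv)^\omega$ in the syntactic semigroup $S$. The standard route is via an \efgame argument: fixing $u,v \in S$, pick representative words $x,y$ with $\alpha_L(x) = u$, $\alpha_L(y) = v$, and compare the two families of words $(xy)^{N} y (xy)^{N}$ and $(xy)^{N}$ for $N$ large. One shows that for $N$ chosen large enough relative to the quantifier rank $k$ of a defining \FOdw formula, Duplicator wins the $k$-round \efgame game (the word-version described after Lemma~\ref{lemma-ef-game}): the inserted block $y$ sits deep inside a long run of $(xy)$-blocks, and Duplicator can always stay within a safe ``buffer'' of blocks, using the fact that with only two variables Spoiler can remember at most one previously-pebbled position. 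By Lemma~\ref{lemma-ef-game}(3) (in its word form), the two words satisfy the same formulas of the relevant rank, so they are either both in $L$ or both outside; passing to the limit $N = \omega$ and applying $\alpha_L$ gives the identity.

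For the hard direction (sufficiency), assume $S$ satisfies the identity; we must build an \FOdw formula for $L$. Rather than reprove this from scratch, I would invoke the known equivalence between \FOdw, $\Delta_2(<)$ and the unary temporal logic \Fword over strings (the cited results \cite{fodeux-UTL,fodeux}), together with the fact (footnote~\ref{foot}, and \cite{kufDiss06}) that the stated identity is equivalent to the one actually appearing in~\cite{fodeux}. Concretely, the chain is: $S$ satisfies our identity $\iff$ $S$ satisfies the identity of~\cite{fodeux} $\iff$ $L$ is definable in \FOdw, where the last equivalence is exactly Theorem~\ref{thm-word-fod} as originally stated for syntactic monoids (one checks the passage between syntactic monoid and syntactic semigroup does not affect the class, since \FOdw over nonempty strings is closed under the relevant operations). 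Since the whole point of this subsection is to record the string case as a black box for later use in the forest setting, citing the theorem in this equivalent normalized form is legitimate and is what the paper intends.

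The main obstacle is the sufficiency direction if one insists on a self-contained proof: constructing the \FOdw formula requires the combinatorial decomposition of $S$-satisfying semigroups (e.g. the $\Delta_2$ / rankers machinery, or the Schützenberger-style induction on $\mathcal{J}$-depth), which is substantial. I expect the paper deliberately sidesteps this by citing~\cite{fodeux} and only verifying the routine equivalence of the two identity formulations; the genuine work of this paper lies in the forest case, where no such black box exists. So the proof here should be short: it amounts to (i) the folklore equivalence of the two identities, with a pointer to~\cite{kufDiss06}, and (ii) the observation that Theorem~\ref{thm-word-fod} is just a restatement of the cited result.
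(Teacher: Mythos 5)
Your reading is exactly right: the paper gives no proof of this statement at all — it is imported verbatim from~\cite{fodeux}, with only the footnote remarking that the stated identity is a folklore-equivalent reformulation of the one in the original reference (proof in~\cite{kufDiss06}). Your proposal, which reduces everything to that citation plus the routine identity equivalence (and optionally sketches the easy necessity direction by an \efgame argument), matches the paper's treatment.
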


Unfortunately in the case of forest languages we were not able to
state our characterization using only the syntactic forest algebra of
the input regular language. We will need an extra ingredient that we
call \emph{saturation}.

\section{\Shals} \label{sec-reach}

In this section, we define \shals which represent sequences of siblings. We
will often manipulate \shals modulo \FOd definability. This is captured by an
equivalence relation on \shals that we also define in this section.

This notion of \shal is central for this paper as we will use it not only as a
parameter in the inductive argument but also to define the notion of
\emph{saturation} that we use in our characterization of \FOd.

Set $\A = (A,B)$ as a finite alphabet. All definitions are parametrized by a
morphism $\alpha: \A^\mydelta \rightarrow (H,V)$. Note that while the definitions
make sense for any morphism $\alpha$, they are designed to be used with a leaf
surjective one.  Given a forest $s$ (a context $p$) we refer to its image under
$\alpha$ as the \emph{forest type} of $s$ (the \emph{context type} of $p$).

\subsection{\Shals} 
A multicontext is defined in the same way as a context but with several ports, possibly
none. The \emph{arity} of a multicontext is the number of its ports, possibly
0. Note that as our forests are ordered, each multicontext implicitly defines a
linear order on its ports. A multicontext is said to be \emph{shallow} if each
of its trees is either a single node $a\in A$, a single inner node with a port
below, $b(\hole)$, or a tree of the form $b(a)$ where $b\in B$ and $a\in A$ (see
Figure~\ref{fig-exem-shal}). 

For technical reasons we do not consider forests with a single tree of the form
$a\in A$ as a \shal. Observe that in our definition of \shal we include trees
of the form $b(a)$. This is because, as mentioned earlier, we will often
replace a subforest by a node having the same type and therefore it is
convenient to immediately have access to this type by looking at the label
of that node.

\begin{figure}[h]
\begin{center}
\begin{tikzpicture}

\node[node] (s1) at (0.0,-0.8) {$b$};
\node[node] (s2) at (0.9,-0.8) {$b$};
\node[node] (s3) at (1.8,-0.8) {$d$};
\node[node] (s4) at (2.7,-0.8) {$a_2$};
\node[node] (s5) at (3.6,-0.8) {$d$};
\node[node] (s6) at (4.5,-0.8) {$c$};
\node[node] (s7) at (5.4,-0.8) {$a_1$};

\node[lab] at (0.45,-0.8) {$+$};
\node[lab] at (1.35,-0.8) {$+$};
\node[lab] at (2.25,-0.8) {$+$};
\node[lab] at (3.15,-0.8) {$+$};
\node[lab] at (4.05,-0.8) {$+$};
\node[lab] at (4.95,-0.8) {$+$};

\node[node] (t1) at (0.0,-1.6) {$a_1$};
\node[port] (t2) at (0.9,-1.6) {};
\node[node] (t3) at (1.8,-1.6) {$a_2$};
\node[port] (t5) at (3.6,-1.6) {};
\node[port] (t6) at (4.5,-1.6) {};

\draw (s1) -- (t1);
\draw (s2) -- (t2);
\draw (s3) -- (t3);
\draw (s5) -- (t5);
\draw (s6) -- (t6);

\end{tikzpicture}
\end{center}
\caption{Illustration of a shallow multicontext of arity 3}\label{fig-exem-shal}
\end{figure}
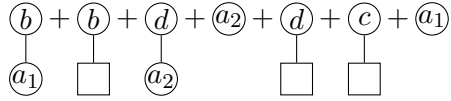

Let $x$ be a node of a forest $s$. Let $t = t_1 + \cdots + t_\ell$ be the
subforest of $x$, composed of $\ell$ trees. The \emph{\shal of $x$ in $s$} is
the sequence $p_1+\cdots+p_\ell$ such that $p_i := a$ if $t_i=a \in A$, $p_i: =
b(a)$ if $t_i=b(a)$, and $p_i := b(\hole)$ if $t_i=b(s')$, where $a\in A,b \in
B$ and $s' \notin A$. A \shal $p$ \emph{occurs} in a forest $s$ iff there
exists a node $x$ of $s$ such that $p$ is the \shal of $x$ in $s$. In the rest
of the paper, a node $x$ will almost always be considered together with
the \shal $p$ occurring at $x$. For this reason we will write
``let $(p,x)$ be a node of a tree $t$'' when $x$ is a node of $t$ and $p$ is
the \shal at $x$. Similarly, if $P$ is a set of \shals, we will write ``let
$(p,x) \in P$'' when $p \in P$ and $x$ is a node of $p$.

Given a \shal $p$ of arity $n$ and a sequence $T$ of $n$ forests,
$p[T]$ denotes the forest obtained after placing the $i^{th}$ forest
of $T$ at the $i^{th}$ port of $p$. Moreover, given a node $x$ of $p$
whose unique child is a port (i.e. $x$ is the root of a tree of the form
$b(\hole)$ within $p$) and a sequence $T$ of $n-1$ forests, $p[T,x]$ denotes
the context obtained as above but leaving the subtree at $x$ unchanged.

\paragraph*{\bf $P$-Valid Forests and Contexts.} Let $P$ be a set of
\shals. Later on $P$ will be a key parameter for the induction. We say that a
forest $t$ is \emph{$P$-valid} iff it has more than one node and all \shals
occurring in $t$ are in $P$. Similarly we define the notion of a $P$-valid
context. Note that we distinguish forests with one node in the definition. This
is a technical restriction that will be convenient without harming the
generality of the argument as the omitted forests are definable in \FOd. We extend
the notion of $P$-validity to elements of $H$ and $V$. We say $h \in H$ is
$P$-valid iff there exists a $P$-valid forest $t$ such that $h =
\alpha(t)$. Similarly for $v \in V$.

\paragraph*{\bf $P$-Reachability.} The logic \FOd can be seen as a two-way
logic navigating up and down within forests. Over strings, this two-way
behavior is reflected by two partial orders over the syntactic semigroup
capturing respectively the current knowledge when reading the string from left
to right and from right to left, and correspond to the Green's
relations $L$ and $R$.

Over forests it turns out that the relevant bottom-up order is a partial order
on forest types while the relevant top-down order is a partial order on context
types~\cite{mikolaj}.  In our case those are even parametrized by a set $P$ of
\shals and are called \emph{$P$-reachability}. The index within these orders
will be another parameter in our induction. 

Let $h,h'$ be two $P$-valid forest types, $h$ is said to be
$P$-reachable from the forest type $h'$ if there exists a $P$-valid
context type $v$ such that $h=vh'$. Two forest types are
$P$-equivalent if they are mutually $P$-reachable. 

The definition is similar for context types and is defined for any
$v,v' \in V$, not just $P$-valid ones. Given two contexts $v,v' \in V$
we say that $v$ is \emph{$P$-reachable} from $v'$ whenever 
there is a $P$-valid context type $u$ such that $v=v'u$. 

Notice that for both partial orders, if $P \subseteq P'$ then
$P$-reachability implies $P'$-reachability.  

\medskip

We will reduce the case when all \shals of $P$ have arity 1 or less to 
the string case. On the other hand, when $P$ contains at least one
\shal of arity at least 2 we will make use of the following property:

\begin{claim}\label{claim-maximal}
If $P$ contains a \shal of arity at least 2 then among $P$-equivalence
classes of $P$-valid forest types there exists a unique maximal one
with respect to $P$-reachability.
\end{claim}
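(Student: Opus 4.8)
The plan is to show that the $P$-reachable $P$-equivalence classes of $P$-valid forest types form a directed set under $P$-reachability, which together with finiteness of $H$ forces a unique maximal class. So the core task is: given two $P$-valid forest types $h_1, h_2$, produce a $P$-valid forest type $h$ that is $P$-reachable from both. The key device is that $P$ contains a \shal $q$ of arity at least $2$. First I would fix $P$-valid forests $t_1, t_2$ with $\alpha(t_i) = h_i$, and a node $x$ of some forest witnessing that $q$ occurs there, so $q$ has, say, $m \ge 2$ ports. The idea is to build a new forest by taking $q$ and plugging $P$-valid forests into its ports: put (something built from) $t_1$ into one port, (something built from) $t_2$ into another, and fill the remaining $m-2$ ports with arbitrary $P$-valid forests — for instance copies of $t_1$. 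Call the resulting forest $t$; by construction every \shal occurring in $t$ is either one of the \shals of $t_1$, one of $t_2$, or $q$ itself (the \shal at the node carrying the ports of $q$), hence all lie in $P$, so $t$ is $P$-valid and $h := \alpha(t)$ is a $P$-valid forest type.

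Next I would check that $h$ is $P$-reachable from both $h_1$ and $h_2$. Consider the port of $q$ into which $t_1$ was plugged: the rest of $t$ — that is, $q$ with $t_2$ and the filler forests already inserted, but with that one port left as a hole — is a $P$-valid context $c_1$, and $t = c_1 t_1$, so $\alpha(t) = \alpha(c_1)\, h_1$ exhibits $h$ as $P$-reachable from $h_1$. Symmetrically, using the port that received $t_2$, we get a $P$-valid context $c_2$ with $t = c_2 t_2$, so $h$ is $P$-reachable from $h_2$. Here it is essential that $q$ has arity at least $2$: we need two distinct ports so that when we vacate one of them the forest plugged into the other still sits inside the context, keeping it a genuine forest-to-forest context and keeping it $P$-valid. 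One should also make sure the contexts $c_1, c_2$ satisfy the non-triviality convention in the definition of $P$-valid context (more than one node, port not at the root, no sibling at the port), which holds automatically since $q$'s ports sit below an inner node and the filler forests are non-empty.

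Finally I would package this: the relation "$[h] \le [h']$ iff $h$ is $P$-reachable from $h'$" is a well-defined partial order on $P$-equivalence classes of $P$-valid forest types (well-definedness and antisymmetry are immediate from the definitions of $P$-reachability and $P$-equivalence, transitivity from closure of $P$-valid context types under composition). The previous two paragraphs show this order is directed. A directed partial order on a finite set has a greatest element, which is the desired unique maximal class. The main obstacle I anticipate is purely bookkeeping: verifying carefully that the forest $t$ assembled from $q$ and the plugged-in forests has \emph{all} its \shals in $P$ — in particular that the \shal occurring at the node of $q$ bearing the ports is exactly $q$ (so that replacing the subtrees at the ports by non-trivial forests does not alter which \shal occurs there, since a \shal only records the labels of a node's children's roots and whether each subtree is a single leaf), and that no new \shals are created at the seams where $t_1, t_2$ and the fillers meet the ports of $q$. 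Everything else is routine.
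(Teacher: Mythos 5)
Your proposal is correct and takes essentially the same approach as the paper: both proofs take a \shal $q\in P$ of arity at least~$2$, plug $P$-valid forests of types $h_1,h_2$ (plus fillers) into its ports, and observe that the resulting forest is $P$-valid and its type is $P$-reachable from both $h_1$ and $h_2$, so the classes are directed and a unique maximal one exists by finiteness. You spell out the bookkeeping (that the \shal at the port-bearing node stays equal to $q$ because the plugged forests are non-leaves, and that the vacated-port contexts are legitimate $P$-valid contexts) that the paper leaves implicit, but the argument is the same.
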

\begin{proof}
Take $p\in P$ of arity $n \geq 2$. Given $h,h' \in H$ that are
$P$-valid, consider $t$ and $t'$ two $P$-valid forests such that 
$\alpha(t)=h$ and $\alpha(t')=h'$. Consider the sequence $T$
of $n$ $P$-valid forests containing copies of $t$ and $t'$, with
at least one copy of $t$ and one copy of $t'$. Now $\alpha(p[T])$
is $P$-reachable from both $h$ and $h'$. The result follows.
\end{proof}
In the cases when Claim~\ref{claim-maximal} applies we say that $P$ is
\emph{branching} and we denote by $H_P$ the maximal class given by
Claim~\ref{claim-maximal}. Finally, we say that a branching set of \shals $P$
is \emph{reduced} if all $P$-valid forest types are mutually
reachable, i.e. $H_p$ is the whole set of $P$-valid forest types.

\subsection{\FOd-Equivalence for \Shals.} It will often be necessary to
manipulate \shals modulo definability in \FOd. Typically, when applying the
Antichain Composition Lemma with a formula of the form ``select all nodes whose
\shal is in $P$ but have no ancestor with that property'', it will be necessary
that $P$ is definable in \FOd.

\paragraph*{\bf Definable set of \shals} Intuitively, \FOd
treats a \shal as a string whose letters are $a$, $b(a)$, or
$b(\hole)$. More formally, we define $\As$ as the alphabet containing
the letters $a$, $b(\hole)$ and $b(a)$ for all $a \in A$ and $b \in B$. We
say that $b$ is the \emph{inner-label} of $b(\hole)$ or $b(a)$. We see
a \shal $p$ as a string over the alphabet~$\As$. 
 
For each positive integer $k$ and any two \shals $p$ and $p'$, we
write $p \mequivk p'$ for the fact that Spoiler has a winning strategy
in the game played on $p$ and $p'$, seen as strings over $\As$. In
particular, we say that a set $P$ of \shals is \kdef iff it is a union
of classes of under $\mequivk$. As the name suggests a set $P$ of
\kdef \shals is definable in \FOd. In particular we get the following
claim which is an immediate consequence of
Lemma~\ref{lemma-ef-game}. 

\begin{claim}\label{lemma-p-valid}
For any $k$ and any \kdef set $P$ of \shals, the language of $P$-valid 
forests is definable in \FOd.
\end{claim}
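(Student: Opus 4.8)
The plan is to reduce the statement to the corresponding fact about strings over the finite alphabet $\As$, together with the description of $\mequivk$-classes by $\FOd$-sentences (i.e. $\FOdw$-sentences on strings), and then lift the string-level definability back to forests using the fact that the \shal at a node is computed from purely local data. First I would observe that being \kdef means $P$ is a union of $\mequivk$-classes, and by Lemma~\ref{lemma-ef-game} (applied on strings, as noted just after it) each such class is definable by an $\FOdw$-sentence $\psi_C$ over $\As$ of navigational depth at most $k$; hence $P$ itself is defined over $\As$ by the finite disjunction $\psi_P := \bigvee_{C \subseteq P} \psi_C$. The goal then is to turn $\psi_P$ into an $\FOd$-sentence $\Phi_P$ over the original alphabet $\A$ that is satisfied by a forest $t$ exactly when $t$ has more than one node and every \shal occurring in $t$ lies in $P$.

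The key step is a syntactic translation. For a node $x$ of a forest $t$, its \shal $p$ is the sequence of trees rooted at the siblings of $x$ (including $x$), where each such tree $t_i$ is abstracted to the letter $a$ if $t_i = a \in A$, to $b(a)$ if $t_i = b(a)$, and to $b(\hole)$ if $t_i = b(s')$ with $s' \notin A$. All three conditions on a sibling node $y$ — ``$y$ is a leaf with label $a$'', ``$y$ has label $b$ and its unique child is a leaf with label $a$'', ``$y$ has label $b$ and is not of the previous two forms'' — are expressible by $\FOd$ formulas with one free variable (they only speak about $y$, its children, and the children of those children, so even quantifier-depth $2$ suffices). Moreover the following-sibling order \orderh on the siblings of $x$ is exactly the linear order on the letters of $p$. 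Therefore, given any $\FOdw$-formula over $\As$, one obtains an $\FOd$-formula with the same number of variables that, relativised to the siblings of a fixed node $x$, holds iff the original formula holds on the string $p$ equal to the \shal at $x$: replace the linear order by \orderh restricted to siblings of $x$, replace each unary test ``letter $= a$'' (resp. ``$= b(a)$'', ``$= b(\hole)$'') by the corresponding $\FOd$ formula above, and relativise all quantifiers to $\{y : y \orderh x \vee y = x \vee x \orderh y\}$. Call $\widetilde\psi_P(x)$ the formula obtained this way from $\psi_P$; then $t \models \widetilde\psi_P(x)$ iff the \shal at $x$ is in $P$. Finally set
\[
  \Phi_P \;:=\; \bigl(\exists x\,\exists y\; x \neq y\bigr) \;\wedge\; \neg\,\exists x\; \neg\,\widetilde\psi_P(x),
\]
which says $t$ has at least two nodes and no node carries a \shal outside $P$; this is exactly the language of $P$-valid forests. (The formula $\Phi_P$ is an $\FOd$-sentence: the two conjuncts each use at most two variables and can be written without nesting free occurrences of $x$ badly, since $\neg\exists x\,\neg\widetilde\psi_P(x)$ reuses the variables inside $\widetilde\psi_P$.)

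The only delicate point is the bookkeeping around the two-variable restriction: one must check that the relativisation of quantifiers to the siblings of $x$ and the substitution of the local unary tests do not force a third variable. This is routine — the predicate ``same parent as $x$'' applied to a variable $z$ can be written, using \orderh and its negation, as a formula in $z$ and $x$ only; and the local tree-shape tests on a sibling $z$ quantify over children of $z$, which can reuse the variable $x$ since $x$ is not needed inside those tests once we are examining $z$. I would spell this out just enough to make clear that the standard relativisation is compatible with $\textup{FO}^2$, and otherwise treat the construction as a direct, if slightly tedious, translation, exactly as the paper does for the Antichain Composition Lemma. The main obstacle, such as it is, is purely notational: keeping the variable count at two while relativising; there is no conceptual difficulty, and the claim follows.
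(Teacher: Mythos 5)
Your proof is correct and takes exactly the route the paper intends: the paper dispatches this claim as an ``immediate consequence of Lemma~\ref{lemma-ef-game}'' (in its word form, over the alphabet $\As$), and your write-up simply fills in the finite disjunction over $\mequivk$-classes and the two-variable relativisation of the resulting $\FOdw$-sentence to the sibling class of a node, with the letter tests replaced by local subtree-shape formulas. The points you flag as delicate (keeping the variable count at two, since ``sibling of'' is quantifier-free and the sibling class is invariant under moving to another sibling) are indeed the only ones, and you handle them correctly.
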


\paragraph*{\bf Definable nodes in \shals} 
It will also sometimes be necessary to refer to an explicit node within a \shal.
Typically, when applying the Antichain Composition Lemma with a formula of the
form ``select all nodes $(p,x)$ such that \ldots and having no ancestor with that
property''. We will of course need to treat $(p,x)$ modulo definability.  It
would be tempting to use a notion of definability similar to the one used for
\shals in the previous paragraph. However the notion of \emph{saturation}
building from this would give a necessary but not sufficient characterization
for languages definable in \FOd. In particular, the induction in our
suffiency proof would not terminate (see Lemma~\ref{lemma-LPtau}). Our
notion of definability is based on an \efgame game relaxing the rules
defined in Section~\ref{games} in order to grant more power to
Duplicator.  Moreover it will be useful to parametrize the game by a
set $X \subseteq H$.  In the sequel $X$ will be another parameter of
the induction denoting those forest types for which we are still
looking for a formula defining the set of forests for that type.  When
applying the Antichain Composition Lemma, any forest of type $h
\not\in X$ can be safely replaced by a node with the appropriate label
as the corresponding language is definable in \FOd.

Given $X$, we distinguish between three kinds of nodes within \shals:
\emph{port-nodes} are nodes that are roots of trees of the form $b(\hole)$ with
$b \in B$, \emph{\Xnodes } are nodes that are roots of trees of the form $b(a)$
with $a \in A,b \in B$ and $\alpha(a) \in X$, finally \emph{\bXnodes } are the
remaining nodes, i.e. nodes with label $a\in A$ or roots of trees of the form
$b(a)$ with $a \in A, b \in B$ and $\alpha(a) \not\in X$.

Let $p$ and $p'$ be two \shals, seen as strings over \As. The $k$-round
\emph{$X$-relaxed game} between $p$ and $p'$ is defined as in
Section~\ref{games} but tests on labels in this new game are relaxed, making
the game easier for Duplicator. Since this alone makes the equivalence too
permissive (this yields a non-necessary characterization), we compensate by
giving Spoiler a third ``safety'' move in addition to the usual left and right
sibling moves. Spoiler can use this third move only under specific conditions
depending on the labels of the nodes the pebbles are currently on. These two
modifications achieve the right balance of expressive power.

At the start of each round Spoiler can move one of the two pebbles inside its
\shal to some left or right sibling $x$. Duplicator must respond by moving the
other pebble inside the other \shal in the same direction to a node $y$, if $x$
is an \bXnode then $y$ must have the same label as $x$. Otherwise, if $x$ is a
port- or \Xnode, $y$ must be a port- or \Xnode with the same
inner-label. Hence, at any point in the game the pebbles may lie on nodes with
labels $b(c),b(c')$ with $c \neq c'$ and $c,c' \in A \cup \{\hole\}$. In that
particular case Spoiler can use a 'safety' move: he selects one of the two
pebbles but does not move it, by hypothesis this pebble is on a node of label
$b(\hole)$ or $b(a)$ with $\alpha(a) \in X$. Duplicator must then place the other
pebble on a node of label $b(\hole)$.

Given two nodes $(p,x)$ and $(p',x')$, we write $(p,x) \pequiv{k}^X
(p,x')$ if $i)$ $p$ and $p'$ contain the same set of labels in \As,
$ii)$ $x,x'$ have the same label and $iii)$ Duplicator wins the
$k$-rounds $X$-relaxed game between $p$ and $p'$ starting at positions
$(p,x)$ and $(p',x')$. It will also be convenient to define
$\pequiv{k}^X$ on \shals only. We write $p \pequiv{k}^X p'$ iff $(p,x)
\pequiv{k}^X (p',x')$ with $x,x'$ the leftmost positions in $p,p'$. The
following claim is immediate from the definitions:

\begin{claim} \label{clm:global}
Assume $p \pequiv{k+2}^X p'$, then for any port-node $x \in p$
(resp. $x' \in p'$) there exists a port-node $x' \in p'$ (resp $x \in
p$) such that $(p,x) \pequiv{k}^X (p',x')$.
\end{claim}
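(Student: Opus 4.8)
The plan is to deduce this from Claim~\ref{clm:global} together with an analysis of how a single $X$-relaxed move interacts with the winning condition. Assume $p \pequiv{k+2}^X p'$ and let $x \in p$ be a port-node; I want a port-node $x' \in p'$ with $(p,x) \pequiv{k}^X (p',x')$ (the symmetric statement is identical). Recall that $p \pequiv{k+2}^X p'$ means Duplicator wins the $(k+2)$-round $X$-relaxed game starting with both pebbles at the leftmost positions of $p$ and $p'$ (and these leftmost positions have the same label, and $p,p'$ carry the same set of \As-labels).

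\textbf{Key steps.} First I would have Spoiler, in the game witnessing $p \pequiv{k+2}^X p'$, spend one move jumping the pebble in $p$ from its current (leftmost) position to the node $x$. Since $x$ is a port-node, its label is $b(\hole)$ for some $b \in B$, so by the relaxed labelling rule Duplicator must answer with a node $y \in p'$ that is a port- or \Xnode with inner-label $b$. If $y$ is already a port-node, set $x' := y$ and we are done after the second step below. If instead $y$ is an \Xnode (label $b(a)$ with $\alpha(a) \in X$), then the two pebbles sit on nodes labelled $b(\hole)$ and $b(a)$ with $\hole \neq a$ and $\alpha(a)\in X$, which is exactly the configuration that licenses Spoiler's ``safety'' move: he picks the pebble on $y$ and Duplicator must move it to a node $x'$ of $p'$ labelled $b(\hole)$, i.e. a port-node. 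So after at most two Spoiler moves we have reached a position where the pebbles are on $x \in p$ (a port-node) and some port-node $x' \in p'$, and Duplicator is still following a winning strategy in the residual game.

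\textbf{Concluding.} After these (at most) two moves, Duplicator has a winning strategy for the remaining $k$ rounds of the $X$-relaxed game played from positions $(p,x)$ and $(p',x')$; by definition of $\pequiv{k}^X$ on nodes this, together with the facts that $p,p'$ share the same label set and $x,x'$ have the same label $b(\hole)$, gives $(p,x) \pequiv{k}^X (p',x')$, as required. The only mildly delicate point — and the one I would state carefully rather than wave at — is the bookkeeping of move counts: one move to reach $x$ plus possibly one safety move to convert $y$ into a port-node consumes two of the $k+2$ rounds, leaving exactly $k$, which is why the hypothesis is $\pequiv{k+2}^X$ and not $\pequiv{k+1}^X$. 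Everything else is an immediate unfolding of the definition of the $X$-relaxed game from Section~\ref{games}, so there is no real obstacle here; the claim is genuinely ``immediate from the definitions'' once the safety move is invoked.
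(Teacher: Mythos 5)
Your proposal matches the paper's proof exactly in strategy: Spoiler spends one round of the $(k+2)$-round game witnessing $p \pequiv{k+2}^X p'$ moving the $p$-pebble to the given port-node $x$, and, if Duplicator's response $y$ is an \Xnode rather than a port-node, a second (safety) move forces Duplicator onto a port-node $x'$ of $p'$; the $k$ remaining rounds of Duplicator's winning strategy then give $(p,x) \pequiv{k}^X (p',x')$.

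One detail of your write-up is flipped, however. You say Spoiler ``picks the pebble on $y$ and Duplicator must move it to a node $x'$''. In the safety move as defined in Section~\ref{games}, Spoiler \emph{selects} a pebble and does \emph{not} move it; Duplicator then places the \emph{other} pebble on a node of label $b(\hole)$. So Spoiler must select the pebble sitting on the port-node $x$ in $p$, which forces Duplicator to relocate the pebble from $y$ to a port-node $x'$ of $p'$. Selecting the pebble on $y$, as you wrote, would instead ask Duplicator to (re)place the pebble in $p$ — unhelpful, since that pebble already rests on a port-node. Your intended outcome is right, but the roles of the two pebbles in the safety move are swapped in the prose. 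Finally, the opening reference to ``Claim~\ref{clm:global}'' is a circular self-reference and should be dropped — the argument uses only the definition of the $X$-relaxed game.
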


\begin{proof}
Set $y,y'$ as the leftmost positions in $p,p'$. In the $(k+2)$-rounds
$X$-relaxed game between $(p,y)$ and $(p',y')$, Spoiler can use the
two initial rounds to move the pebble to $x'$ in $p'$ (resp. to
$x$ in $p$) and then use (if necessary) a safety move. Duplicator's
strategy yields the desired port-node $x$ in $p$ (resp. $x' \in p'$).
\end{proof}

It is not immediate from the definitions that $\pequiv{k}^X$ is an
equivalence relation (transitivity is not obvious). We prove it in the
next lemma.

\begin{lemma} \label{lem-eqiseq}
For all $X\subseteq H$ and all $k$, $\pequiv{k}^X$ is an equivalence
relation.
\end{lemma}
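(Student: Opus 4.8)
Reflexivity and symmetry are immediate from the definition, so the entire content of the lemma is transitivity. The standard way to prove transitivity of an Ehrenfeucht–Fraïssé-style equivalence is to compose winning strategies: given a Duplicator winning strategy $\sigma$ for the $k$-round $X$-relaxed game on $(p,x)$ versus $(q,y)$ and a Duplicator winning strategy $\tau$ for the game on $(q,y)$ versus $(r,z)$, one builds a strategy on $(p,x)$ versus $(r,z)$ by maintaining an auxiliary pebble on $q$ and "playing through the middle". For the non-relaxed game this is routine, but here there are two complications. First, Spoiler's moves in the composed game must be relayed to both sub-games and Duplicator's answers composed; this is mechanical. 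Second — and this is the real obstacle — the relaxed game has two non-uniform features: the label-matching rule is a disjunction (either both pebbles carry the same \relaxn-label, or both are port-/\Xnodes with the same inner-label), and Spoiler has the extra "safety" move available exactly when the two pebbles sit on $b(c), b(c')$ with $c\neq c'$. One must check that the "matching relation" induced on labels by the relaxed rules is itself transitive, and that the availability of safety moves is preserved under composition.

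The plan is as follows. First I would isolate the label-compatibility relation: say labels $m, m' \in \As$ are \emph{$X$-compatible} if either $m = m'$, or both $m$ and $m'$ are of the form $b(\hole)$ or $b(a)$ with $\alpha(a)\in X$ and have the same inner-label $b$ (these are exactly the labels of port-nodes and \Xnodes). I would check directly that this relation is an equivalence relation on $\As$ — the only classes with more than one element are, for each $b\in B$, the set $\{b(\hole)\} \cup \{\,b(a) : a\in A,\ \alpha(a)\in X\,\}$, so transitivity is clear. The point of the relaxed game's first modification is precisely that Duplicator need only maintain $X$-compatibility of the two pebble labels, not equality; so composing two such games keeps the pebble labels in the same $X$-compatibility class, as required. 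Condition $i)$ (same set of \As-labels) and condition $ii)$ (the two start-nodes carry the same label — note: actual equality here, not mere compatibility, which is even easier to compose) are obviously transitive.

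Next I would describe the strategy composition for condition $iii)$. Keep an invisible auxiliary pebble on $q$, starting at $y$. Whenever Spoiler moves in $p$ (say to some sibling $x_1$), feed this as a Spoiler move in the $p$-vs-$q$ game, get Duplicator's answer $y_1$ in $q$ via $\sigma$, then feed the move $y \mapsto y_1$ as a Spoiler move in the $q$-vs-$r$ game and answer with $\tau$; symmetrically when Spoiler moves in $r$. The subtle case is the safety move: Spoiler can play it in the $p$-vs-$r$ game only when the $p$-pebble and $r$-pebble sit on $b(c), b(c')$ with $c\neq c'$. I claim that in this situation the auxiliary $q$-pebble sits on some $b(c'')$ with the same inner-label $b$ (because pebble labels stay $X$-compatible, and a label compatible with $b(c)$ has inner-label $b$). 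If $c'' \neq c$, Spoiler can play a safety move in the $p$-vs-$q$ game; if $c'' \neq c'$, in the $q$-vs-$r$ game — and since $c \neq c'$, at least one of these holds. In the first case, after Duplicator's response via $\sigma$ the $q$-pebble moves to a node labelled $b(\hole)$, and then — because the $q$-pebble label changed — I propagate this as a safety (or ordinary non-move) adjustment on the $q$-vs-$r$ side; in the second case symmetrically. One then checks that after this the target $p$-vs-$r$ configuration again has the $r$-pebble (or $p$-pebble) on $b(\hole)$, fulfilling the safety-move requirement. I expect a small amount of bookkeeping here to confirm that each safety move in the composed game is simulated using at most one round in each sub-game, so that the round budget $k$ is respected; since safety moves don't decrement position and the real danger is only the ordinary sibling moves, the count works out. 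Having shown Duplicator wins the composed $k$-round $X$-relaxed game, transitivity of $\pequiv{k}^X$ follows, completing the proof. The main obstacle, as indicated, is verifying that the safety move composes correctly — specifically that whenever Spoiler is entitled to a safety move in the outer game, the mismatch can be "routed" through the auxiliary pebble on at least one side.
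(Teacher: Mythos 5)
Your high-level plan — reflexivity and symmetry are immediate, transitivity goes by composing the two Duplicator strategies through an auxiliary pebble on the middle \shal $p'$ (your $q$), and the only delicate point is relaying a safety move — is exactly the structure of the paper's proof, including the case split on which side of the mismatch ($c''\neq c$ versus $c''\neq c'$) the safety move can be routed to. The case where a safety move is available in the $p'$-versus-$p''$ subgame (your $q$-vs-$r$, i.e.\ $c''\neq c'$) works just as you say: Duplicator, as Spoiler in that subgame, selects the $p'$-pebble, and her winning strategy hands her a $b(\hole)$-labelled node in $p''$ while the $p'$-pebble stays put, so no re-synchronisation with the $p$-vs-$p'$ game is needed.

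The gap is in the other case (your Case~1, $c''\neq c$ but possibly $c''= c'$). Here you route the safety move through the $p$-vs-$p'$ game, which moves the $p'$-pebble to a $b(\hole)$-labelled node $y'$, and you then try to \emph{propagate} this displacement into the $p'$-vs-$p''$ game as ``a safety (or ordinary non-move) adjustment.'' This does not work. If $c''=c'$ there is no mismatch in $p'$-vs-$p''$, so no safety move is available there at all. If you instead relay the jump to $y'$ as an ordinary Spoiler move, the $\tau$-strategy will answer with \emph{some} port- or \Xnode of inner label $b$ — not necessarily one labelled $b(\hole)$ — so the $p''$-pebble need not land where the outer game requires it. And even when that second safety move in $p'$-vs-$p''$ is available, you have now spent two rounds of the $p'$-vs-$p''$ budget against one round of the outer game, so the accounting you ``expect a small amount of bookkeeping'' to establish is in fact false; an adversarial Spoiler can force repeated safety moves and exhaust the budget. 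The paper avoids all of this by \emph{abandoning} the old $p'$-vs-$p''$ strategy at this point and invoking Claim~\ref{clm:global}: since condition~(i) of the definition of $\pequiv{k}^X$ gives $p'\pequiv{k}^X p''$ globally, for the new $y'$ there is a fresh port-node $y''\in p''$ with the same label and $(p',y')\pequiv{k-2}^X(p'',y'')$, and this supplies a brand-new strategy for the right-hand subgame. The $k-2$ is enough precisely because of the observation — which you also omit — that a safety move cannot occur in round one (the starting nodes have equal labels), so at most $k-2$ rounds remain. Without this reset via Claim~\ref{clm:global}, the composition argument does not close, so the proposal as written has a genuine gap at its central step.
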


\begin{proof}
It is clear from the definitions that the relation is reflexive and
symmetric. We now prove transitivity. Assume $(p,x) \pequiv{k}^X (p',x')$
and $(p',x') \pequiv{k}^X (p'',x'')$. We want to show that $(p,x)
\pequiv{k}^X (p'',x'')$. It is clear that $p$ and $p''$ contain the
same set of labels and that $x$ and $x'$ have the same label. We need
to prove that Duplicator has a winning strategy in the $k$-rounds
$X$-relaxed game between $(p,x)$ and $(p'',x'')$.

By hypothesis, Duplicator has winning strategies in the $k$-rounds
$X$-relaxed games 
between $(p,x)$ and $(p',x')$ and between $(p',x')$ and
$(p'',x'')$. We combine these strategies in the obvious way. Assume
that Spoiler makes a non safety move on $p$, then Duplicator obtains
an answer in $p'$ from her strategy in the game on $p$ and $p'$, plays
that answer as a move for Spoiler in the game on $p'$ and $p''$ which
yields an answer in $p''$ from her strategy on that game. This is her
answer to Spoiler's move, and it is immediate to check that this is a
correct answer. By symmetry she can answer a similar move of
Spoiler on $p''$.

Assume now that Spoiler makes a safety move on $p$. Observe that this
cannot happen in the first-round since $x$ and $x''$ have the same
label. Therefore after this round there will be at most $k-2$ rounds
left to play. Let $b$ be the inner label of the pebble on $p$. Since a
safety move was used, the pebbles in $p$ and $p''$ must have different
labels. Hence at least one of these labels is different from the label
of the pebble in $p'$. We distinguish two cases depending on which one
it is.

If the pebbles on $p'$ and $p''$ have different labels, then
Duplicator can use a Safety move in the game between $p'$ and $p''$ to 
get $y''$ in $p''$ with label $b(\hole)$ from where she can continue
to play the game. If the pebbles on $p'$ and $p$ have different
labels, then Duplicator can use a Safety move in the game between $p'$
and $p$ to get $y'$ in $p'$ with label $b(\hole)$. We can then use
Claim~\ref{clm:global} to get $y''$ with label $b(\hole)$ and such
that $(p'',y'') \pequiv{k-2}^X (p',y')$, this is Duplicator's
answer. It is correct since the number of rounds left to play is less
than $k-2$.
\end{proof}

As for Lemma~\ref{lemma-ef-game} is it easy to show that for all
$k$, $\pequiv{k}^X$ has finite index. Finally, the following claim is
a simple variant of Claim~\ref{lemma-p-valid}:

\begin{claim}\label{lemma-types-fod}
Let $X \subseteq H$ and $(p,x)$ be a node. There is a
\FOd formula $\psi_{p,x}$ having one free variable and such that for
any forest $s$, $\psi_{p,x}$ holds exactly at all nodes $(p',x')$ such
that $(p,x) \pequiv{k}^X (p',x')$.
\end{claim}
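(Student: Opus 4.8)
The plan is to mimic the proof of Claim~\ref{lemma-p-valid} (equivalently Lemma~\ref{lemma-ef-game}), now applied to the relaxed game rather than the standard one. First I would observe that the relation $\pequiv{k}^X$, restricted to \shals equipped with a distinguished node, has finite index (this is stated just above the claim and follows exactly as for Lemma~\ref{lemma-ef-game}: there are only finitely many possible label sets in \As, finitely many labels for $x$, and the number of relaxed $k$-round games up to equivalence is bounded). Hence it suffices to produce, for each $\pequiv{k}^X$-class $C$ containing $(p,x)$, an \FOd formula $\psi_C$ that holds at a node $y$ of a forest $s$ exactly when the \shal $p'$ of $y$ in $s$, together with the position $x'$ corresponding to $y$ inside $p'$, satisfies $(p,x)\pequiv{k}^X(p',x')$; then $\psi_{p,x}$ is $\psi_C$ for the class $C$ of $(p,x)$.

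Next I would build $\psi_C$ in two layers. The inner layer is a formula over the \shal seen as a string over \As: by the word analogue of Lemma~\ref{lemma-ef-game} (item 2), winning in a $k$-round relaxed game starting at a given position is characterized by a formula of \FOdw over the alphabet \As with modality depth bounded by $k$; the only novelty is that the game is the relaxed one, but since relaxing is just a change of the atomic tests (inner-label equality instead of full-label equality, plus the extra safety move which is again definable from the labels currently seen), the same Ehrenfeucht–Fra\"iss\'e argument goes through and yields such a string formula $\chi_C$. The outer layer translates $\chi_C$, together with the conditions $i)$ ``$p$ and $p'$ have the same set of \As-labels'' and $ii)$ ``$x,x'$ have the same label'', into \FOd evaluated at the node $y$ of the ambient forest. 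This translation is the standard one used throughout the paper to pass between a \shal and its ambient forest: being a node of label $a\in A$, or the root of a tree $b(\hole)$, or the root of a tree $b(a)$ with $\alpha(a)\in X$, are all expressible in \FOd of $y$ (the membership $\alpha(a)\in X$ is a finite disjunction over the finitely many $a$ with this property, and ``the subtree at the child is a single leaf'' versus ``it is not'' is expressible using the vertical modalities); the horizontal sibling navigation inside the \shal is exactly the horizontal navigation $\forward,\past$ restricted to siblings of $y$; and the condition that all \shals occurring under the relevant parent are accounted for is captured by bounding navigation to siblings of $y$ and their immediate children. Assembling these pieces gives $\psi_{p,x}$.

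The main obstacle I expect is making the outer translation precise: \FOd must distinguish, purely by navigating from $y$, between the three kinds of nodes (port-nodes, \Xnodes, \bXnodes) of the \shal at $y$, and in particular it must decide whether the subtree rooted at a sibling of $y$ is a single leaf $a$, a tree $b(a)$, or a tree $b(s')$ with $s'\notin A$ — the last case requires detecting that the sibling has a child which itself has a child, which is available with \EFF/\FOd but needs to be written carefully so that the atomic tests of the relaxed string game are faithfully reflected. A secondary subtlety is that $\psi_{p,x}$ must hold \emph{only} at nodes $(p',x')$ with the stated property and at no others, so the three conditions $i)$, $ii)$, $iii)$ must be conjoined exactly and the \shal of $y$ must be read off completely (no trees missed, no spurious siblings), which is why the navigation is pinned to the siblings of $y$ and their children only. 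Once these encoding details are in place, the rest is a routine induction on $k$ entirely parallel to the proof of Lemma~\ref{lemma-ef-game}, so I would only sketch it and refer to that lemma.
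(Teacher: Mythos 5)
Your overall conclusion is right, but the load-bearing step of your argument --- that ``the same Ehrenfeucht--Fra\"iss\'e argument goes through'' for the $X$-relaxed game and yields, for each class, a string formula of depth $k$ --- is exactly the step that does not go through, and it is where the paper takes a completely different (and much shorter) route. The standard correspondence between game rounds and quantifiers requires that the moves available to Spoiler in a given round be determined by one structure at a time; in the relaxed game the safety move is enabled by a condition on \emph{both} pebbles simultaneously (their labels are $b(c)$ and $b(c')$ with $c\neq c'$), and which pebble Spoiler may ``select'' depends on which of the two carries the $b(\hole)$-or-\Xnode label. This cross-structure trigger is precisely why $\pequiv{k}^X$ is not the indistinguishability relation of any logic in the usual sense and why the paper needs a separate, nontrivial argument (Lemma~\ref{lem-eqiseq}) just to establish that it is transitive. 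So you cannot invoke the analogue of Lemma~\ref{lemma-ef-game}(2) for this game; your inner layer is an unproved, and in the form you state it doubtful, assertion.

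The paper sidesteps all of this with a one-line observation: $(p,x)\mequiv{k}(p',x')$ implies $(p,x)\pequiv{k}^X(p',x')$, because Duplicator's strategy in the exact-label game is also valid in the relaxed game and, by keeping the labels identical, it prevents Spoiler from ever using a safety move. Hence every $\pequiv{k}^X$-class is a union of $\mequiv{k}$-classes; since $\mequiv{k}$ has finite index and its classes are \FOd-definable at a node (this is the content of Claim~\ref{lemma-p-valid}, which already handles the entire ``outer layer'' encoding of the \shal into the ambient forest that you spend most of your effort on), a finite disjunction gives $\psi_{p,x}$. I would replace your direct construction by this refinement argument: it avoids having to attach any logic to the relaxed game at all, at the harmless cost of losing control over the quantifier depth of $\psi_{p,x}$ (which the claim does not require).
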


\begin{proof}
It is immediate that $(p,x) \mequiv{k} (p',x') \Rightarrow (p,x)
\pequiv{k}^X (p',x')$ (following the strategy provided by $(p,x)
\mequiv{k} (p',x')$ prevents Spoiler from using any safety move
in the $X$-relaxed game). Hence any $\pequiv{k}^X$-class is a union of
$\mequiv{k}$-classes and the result follows.
\end{proof}

\section{Decidable Characterization of \FOd } \label{sec:carac}

In this section we present our decidable characterization for \FOd.  It
involves three properties of the syntactic morphism of the language. As we
explained, the first two are simple identities on $H$ and $V$, the third one,
\emph{saturation} is a new notion that is parametrized by a set $P$ of \shals
and the associated equivalences.  We first define saturation and then state the
characterization.

\subsection{Saturation}\label{sec-saturation}

Let $\alpha: \A^\mydelta \rightarrow (H,V)$ be a morphism into a finite
forest algebra $(H,V)$. Note that as for \shals while saturation
makes sense for any morphism, it is designed to be used with leaf
surjective ones. In particular, in the characterization, we state
saturation on the leaf completion of the syntactic morphism of the
language. Consider some \emph{branching} and \emph{reduced} set $P$ of
\shals (note that we do \emph{not} ask $P$ to be definable). Recall that
$P$ will be the set of allowed patterns. Let $H_P$ be the unique
maximal class given by Claim~\ref{claim-maximal}. Note that since $P$
is reduced, $H_P$ is also the set of $P$-valid forest types. 

Set $k \in \nat$. We say that a context $\Delta$ is
\emph{\saturated{(P,k)}} if (i) it is $P$-valid and, (ii) for each
port-node $(p,x) \in P$ there exists a port-node $(p',x')$ 
on the backbone of $\Delta$ such that $(p,x) \pequiv{k}^{H_P} (p',x')$.
We say that $\alpha$ is \emph{closed under $k$-saturation} if for
all branching and reduced sets $P$ of \shals, for all contexts
$\Delta$ that are \saturated{(P,k)} and all $h_1,h_2 \in H_P$, we
have:

\begin{equation}\label{eqt}
\alpha(\Delta)^\omega h_1 = \alpha(\Delta)^\omega h_2
\end{equation}

We say that $\alpha$ is \emph{closed under saturation} if it is closed
under $k$-saturation for some $k$. We will need the following simple
observation.

\begin{lemma} \label{lem-sat-k}
Let $\alpha: \A^\mydelta \rightarrow (H,V)$ be a morphism into a finite
forest algebra $(H,V)$ and $k,k'$ two integers such that $k' > k$. If
$\alpha$ is closed under $k$-saturation then $\alpha$ is closed under
$k'$-saturation.
\end{lemma}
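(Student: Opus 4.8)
The plan is to show that closure under $k$-saturation implies closure under $k'$-saturation for any $k' > k$, by exploiting the monotonicity of the equivalence relations $\pequiv{\cdot}^{H_P}$ in the parameter. First I would fix a branching and reduced set $P$ of \shals, a context $\Delta$ that is \saturated{(P,k')}, and two forest types $h_1,h_2 \in H_P$; the goal is to derive $\alpha(\Delta)^\omega h_1 = \alpha(\Delta)^\omega h_2$ using only the hypothesis that $\alpha$ is closed under $k$-saturation.

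The key observation is that, as remarked just after Claim~\ref{clm:global} (and analogously to Lemma~\ref{lemma-ef-game}), if Spoiler wins the $k$-round $X$-relaxed game then he also wins the $k'$-round game for $k' > k$, since he can simply ignore the extra rounds; equivalently, $(p,x) \pequiv{k'}^X (p',x')$ implies $(p,x) \pequiv{k}^X (p',x')$. Therefore every context that is \saturated{(P,k')} is automatically \saturated{(P,k)}: condition (i), $P$-validity, does not mention $k$ at all, and for condition (ii), any port-node $(p',x')$ on the backbone of $\Delta$ witnessing $(p,x) \pequiv{k'}^{H_P} (p',x')$ also witnesses $(p,x) \pequiv{k}^{H_P} (p',x')$.

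Once this is established, the conclusion is immediate: since $\Delta$ is \saturated{(P,k)} and $\alpha$ is closed under $k$-saturation, equation~\eqref{eqt} holds for $\Delta$ and the chosen $h_1,h_2$. As $P$, $\Delta$, $h_1$ and $h_2$ were arbitrary (among branching reduced sets and \saturated{(P,k')} contexts), this shows $\alpha$ is closed under $k'$-saturation. I do not anticipate a genuine obstacle here — the only point requiring a moment's care is the monotonicity of the relaxed game in the round parameter, but this is the standard argument that extra rounds only help Spoiler, and it transfers verbatim from the classical \efgame setting to the $X$-relaxed variant because the safety move is optional and adding rounds never removes options from Spoiler.
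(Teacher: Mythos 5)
Your proposal is correct and follows exactly the paper's argument: the paper's proof is the one-line observation that any \saturated{(P,k')} context is also \saturated{(P,k)}, which is precisely the monotonicity of $\pequiv{k}^{H_P}$ in $k$ that you spell out. The extra detail you give (extra rounds only help Spoiler, so $\pequiv{k'}^{X}$ refines $\pequiv{k}^{X}$) is the right justification and matches the intended reasoning.
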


\begin{proof}
This is immediate since by definition, any $\Delta$ that is
\saturated{(P,k')} is \saturated {(P,k)} as well.
\end{proof}

\subsection{Characterization of  \FOd}\label{carac}

We are now ready to state the main result of this paper.

\begin{theorem}\label{th-efmax}
A regular forest language $L$ is definable in \FOd iff its syntactic
morphism $\alpha: \A^{\mydelta} \rightarrow (H,V)$  satisfies the
following properties:

\begin{enumerate}[label=\alph*)]
\item $H$ satisfies the equation
{\small\begin{equation}\label{eqh}\omega(h+g)+ g+ \omega(h+g)=\omega(h+g)
  \end{equation}
}
\item $V$ satisfies the equation
{\small\begin{equation}\label{eqv}
(uv)^{\omega}v(uv)^{\omega}=(uv)^\omega
\end{equation}
}
\item the leaf completion of $\alpha$ is closed under saturation.
\end{enumerate}
\end{theorem}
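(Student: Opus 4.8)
The plan is to prove the two directions separately, and the bulk of the work — as the authors themselves signal — will be the sufficiency direction, which is deferred to later sections.

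\medskip

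\noindent\textbf{Necessity.} Suppose $L$ is definable in \FOd, hence in \EFF by Theorem~\ref{thm-fod-eff}, say by a sentence using navigational depth at most $k$. For property (a), I would observe that the vertical semigroup $H$, restricted to forests over a one-letter alphabet, behaves exactly like the syntactic semigroup of a word language, and \FOd restricted to such forests is just \FOdw on strings; then equation~\eqref{eqh} is the translation of the string identity of Theorem~\ref{thm-word-fod} (more precisely its additive/$H$-side version). Likewise, for property (b), contexts whose backbone is a single path with only $\hole$-free side forests simulate a string, and \FOd navigation along the backbone is \FOdw navigation on that string, so equation~\eqref{eqv} follows again from Theorem~\ref{thm-word-fod}. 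For property (c), closure under saturation, I would argue by contradiction: if $\alpha(\Delta)^\omega h_1 \neq \alpha(\Delta)^\omega h_2$ for some $(P,k)$-saturated $\Delta$ and $h_1,h_2 \in H_P$, then one builds, for every $N$, two forests of the form (roughly) $C\,\Delta^N[\ldots h_1 \ldots]$ versus $C\,\Delta^N[\ldots h_2 \ldots]$ that are distinguished by $L$ but $N$-equivalent in the \efgame game of Section~\ref{games}; the game argument uses precisely that $\Delta$ is $(P,k)$-saturated, so that whenever Spoiler navigates to a sibling pattern $(p,x)\in P$, Duplicator can find a $\pequiv{k}^{H_P}$-equivalent port-node on the backbone of some copy of $\Delta$ and redirect there, and it uses that $P$ is reduced so every $P$-valid forest type occurring below can be matched. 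Choosing $N > k$ contradicts Lemma~\ref{lemma-ef-game}. This is the classical-but-tedious Ehrenfeucht--Fra\"iss\'e argument the introduction refers to, carried out in Section~\ref{sec-correc}.

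\medskip

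\noindent\textbf{Sufficiency.} Assume $\alpha$ satisfies (a), (b), (c), with $k$ witnessing closure under $k$-saturation. Passing to the leaf completion $\beta$ is harmless by Lemma~\ref{lem:leafsurj}, so assume $\alpha$ is leaf surjective. The goal is: for every $h \in H$, the language $\alpha^{-1}(h)$ is definable in \FOd. I would prove this by a well-founded induction whose parameters are a triple: the set $X \subseteq H$ of forest types still ``under consideration''; the set $P$ of allowed \shals (the forbidden patterns being the complement); and the index of $P$-reachability among $P$-valid forest types. For each fixed $(X,P)$ with $P$ \kdef, the induction step produces, via the Antichain Composition Lemma (Lemma~\ref{lemma-ACL}), a decomposition $L = \{t \mid t[(L_1,\varphi)\to a_1,\dots,(L_n,\varphi)\to a_n] \in K\}$ where $\varphi$ selects nodes whose \shal lies in $P$ (resp. whose $\pequiv{k}^{X}$-type is a designated one) with no ancestor having that property — this is exactly the substitute for Bojańczyk's ``isolate a label $b$'' move — and where $K, L_1, \dots, L_n$ have strictly smaller parameters: $K$ avoids pattern $P$, and the $L_i$ are about subforests below $\varphi$-nodes, where either $X$ has shrunk (types outside $X$ are already definable and can be replaced by leaves) or the reachability index has dropped. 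The base cases are: when all \shals in $P$ have arity $\le 1$, reduce to \FOdw on strings and invoke Theorem~\ref{thm-word-fod} together with properties (a)/(b); and when $P$ is branching, use Claim~\ref{claim-maximal} and, crucially, equation~\eqref{eqt} of closure under saturation to collapse many forest types inside $H_P$, which is what lets one replace large saturated pieces by a single leaf node and thereby terminate. Checking that the $\varphi$ chosen is genuinely antichain and \kdef (so Claim~\ref{lemma-p-valid} and Claim~\ref{lemma-types-fod} apply) is where one must be careful.

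\medskip

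\noindent\textbf{Main obstacle.} The hard part is the sufficiency direction, and within it, the termination of the induction — specifically, showing that the saturation identity~\eqref{eqt} actually kicks in to reduce a parameter once $P$-reachability can no longer be refined and the pattern set has stabilized (this is flagged in the excerpt as Lemma~\ref{lemma-LPtau}, the reason the ``obvious'' notion of node-definability fails). Making the interaction between the three induction parameters well-founded, and verifying that the languages $K$ and $L_i$ handed to the induction hypothesis really are \FOd-definable over the (completed) alphabet after the substitutions, is the technical heart of the proof and occupies Section~\ref{main-proof}; the decidability of condition (c) itself is a separate issue handled in Section~\ref{more}.
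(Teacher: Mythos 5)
Your proposal follows essentially the same route as the paper: necessity of (a) and (b) by reduction to the string characterization of Theorem~\ref{thm-word-fod}, necessity of (c) by an \efgame game on pumped towers of the saturated context $\Delta$ in which Duplicator exploits $(P,k)$-saturation to relocate to the backbone, and sufficiency by a three-parameter induction ($|X|$, the index of $P$, and the $P$-depth of $v$) driven by the Antichain Composition Lemma, with the non-branching string case and the saturation identity~\eqref{eqt} as base cases. The outline is accurate as far as it goes (modulo calling $H$ the ``vertical'' semigroup, when it is the horizontal one), though the two genuinely hard steps you yourself flag --- the nested construction of the game forests $T_{i,j}$ with their pseudo-saturation bookkeeping, and the termination argument via the relaxed equivalence $\pequiv{k}^X$ in Lemma~\ref{lemma-LPtau} --- are named rather than carried out.
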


\noindent
Notice that~\eqref{eqh} and~\eqref{eqv} above are exactly the identities
characterizing, over strings, definability in \FOdw (recall
Theorem~\ref{thm-word-fod}) and are therefore necessary for being definable in
\FOd. It is easy to see that they are not sufficient to characterize \FOd. To
see this, consider the language of trees that corresponds to Boolean
expressions, with {\sc and} and {\sc or} inner nodes and $0$ or $1$ leaves,
that evaluates to $1$. One can verify that the syntactic forest algebra of this
language satisfies~\eqref{eqh} and~\eqref{eqv}. However it is not definable in
\FOd, actually not even in $\textup{FO}(\orderv,\orderh)$~\cite{Potthoff94}.

Recall that \FOd can express the fact that a forest is a tree and, for
each $k$, that a tree has rank $k$, hence Theorem~\ref{th-efmax} also
apply for regular tree languages and regular ranked tree languages.

In Section~\ref{sec-correc} we will prove that the
properties listed in the statement of Theorem~\ref{th-efmax} are
necessary for having definability in \FOd using a simple but tedious
\efgame argument. In Section~\ref{main-proof} we prove the difficult
direction of Theorem~\ref{th-efmax}, i.e. that the properties imply
definability in \FOd.

Finally in Section~\ref{more} we show that the properties listed in 
Theorem~\ref{th-efmax} can be effectively tested. This is simple
for~\eqref{eqh} and~\eqref{eqv} but will require an intricate pumping
argument for saturation. Altogether Theorem~\ref{th-efmax} achieves
our goal and provides a decidable characterization of regular forest
languages definable in \FOd.

\section{Correctness of the properties}\label{sec-correc}

\newcommand\inv[1]{\ensuremath{\mathcal{P}(#1)}}

\tikzstyle{lab} = [inner sep=4pt]
\tikzstyle{scal} = [inner sep=0pt]
\tikzstyle{port}=[draw,rectangle,minimum size=8pt,inner sep=0pt]
\tikzstyle{node}=[draw,circle,minimum size=12pt,inner sep=0pt]
\tikzstyle{nodeg}=[draw,circle,,fill=gray!50,minimum size=12pt,inner sep=0pt]
\tikzstyle{dot}=[draw,circle,fill,minimum size=4pt,inner sep=0pt]
\tikzstyle{tree}=[anchor=north,draw,shape=isosceles triangle,minimum size=28pt,shape border rotate=90,scale=0.8]
\tikzstyle{trees}=[anchor=north,draw,shape=isosceles triangle,minimum size=20pt,shape border rotate=90,scale=0.8]
\tikzstyle{minidot}=[draw,circle,fill,minimum size=1pt,inner sep=0pt]
\tikzstyle{ars}=[draw,->,line width=2pt]

\newcommand\psport{ex-port-node\xspace}
\newcommand\psrela{ex-\Xnode}
\newcommand\psxnod{ex-\bXnode\xspace}
\newcommand\psports{ex-port-nodes\xspace}
\newcommand\psrelas{ex-\Xnodes}
\newcommand\psxnods{ex-\bXnodes\xspace}

In this section we prove that the properties stated in
Theorem~\ref{th-efmax} are necessary for being definable in \FOd. We
prove that any language $L$ definable in \FOd is closed under
saturation and its syntactic forest algebra satisfies the
identities~\eqref{eqh} and~\eqref{eqv}.

If $L$ is definable in \FOd then it is simple to see that its
syntactic forest algebra must satisfy the identities~\eqref{eqh}
and~\eqref{eqv}. This is because Identity~\eqref{eqh} is essentially
concerned by sequences of forests with the $+$ operation. Therefore
each such sequence can be treated as a string over \orderh and
Theorem~\ref{thm-word-fod} can be applied to show that the identity is
necessary. Similarly Identity~\eqref{eqv} concerns only sequences of
contexts that can also be treated as strings over \orderv.

The necessary part of Theorem~\ref{th-efmax} then follows from the
following proposition.

\begin{proposition}\label{prop-saturation}
Let $L$ be a forest language that is definable in \FOd. Then the leaf 
completion of the syntactic morphism of $L$ is closed under
saturation.
\end{proposition}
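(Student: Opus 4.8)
The plan is to fix a forest language $L$ definable in \FOd, let $\alpha$ denote its syntactic morphism and $\beta$ its leaf completion, and fix $k$ so that $L$ is defined by an \EFF sentence of navigational nesting depth at most $k$ (via Theorem~\ref{thm-fod-eff} and Lemma~\ref{lemma-ef-game}). I will show that $\beta$ is closed under $m$-saturation for a suitable $m$ depending only on $k$ (something like $m = k + 2$ will be the natural bound, to absorb the two-round overhead appearing in Claim~\ref{clm:global}). So I must prove: for every branching and reduced set $P$ of \shals, every context $\Delta$ that is \saturated{(P,m)}, and all $h_1, h_2 \in H_P$, one has $\beta(\Delta)^\omega h_1 = \beta(\Delta)^\omega h_2$. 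By the definition of the syntactic morphism it suffices to exhibit, for an arbitrary context $c$ realizing $\beta(\Delta)^\omega$ (e.g.\ $\Delta^{\omega}$ itself, built from $\omega$ copies of $\Delta$) and arbitrary $P$-valid forests $t_1, t_2$ realizing $h_1, h_2$, that the two forests $c\,t_1$ and $c\,t_2$ are $k$-equivalent, hence satisfy the same \EFF sentences of depth $\le k$, hence are $\sim_L$-equivalent when plugged into any ambient context — which is exactly what Identity~\eqref{eqt} asserts algebraically.

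The heart of the argument is therefore a \efgame game strategy for Duplicator on the board $(c\,t_1, c\,t_2)$. The plan is to exploit the structure of a \saturated{(P,m)} context: its backbone is long (it contains $\omega$ nested copies of $\Delta$), and for every port-node \shal $(p,x) \in P$ some port-node $(p',x')$ on the backbone of $\Delta$ is $\pequiv{k}^{H_P}$-equivalent to it. Since $H_P$ is the set of all $P$-valid forest types and $P$ is reduced, any $P$-valid forest (in particular $t_1$ and $t_2$) can be freely exchanged "underneath" such a relaxed-equivalent port-node without Spoiler being able to detect it with $k$ remaining rounds: the relaxed game's safety move precisely accounts for the label mismatch $b(\hole)$ vs $b(a)$ that arises when one of the two sides has a nonempty subforest there. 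Duplicator's strategy thus alternates between (i) the standard \FOd game inside the common top part $c$ — where she simply copies Spoiler's moves, using the long backbone to absorb vertical moves as in the \FOdw analysis underlying Identity~\eqref{eqv} — and (ii) whenever Spoiler descends below the port into $t_1$ or $t_2$, she reroutes the pebble to a matching backbone port-node on the other side (using the $\pequiv{k}^{H_P}$-strategies provided by the saturation hypothesis, together with Claim~\ref{clm:global} to locate the matching port-node, which is the source of the $+2$) and then plays the $P$-reachability/$k$-equivalence inside the corresponding $P$-valid subforests.

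The main obstacle I expect is bookkeeping the interaction between the two "modes" of the strategy across a single $k$-round game: Spoiler may oscillate between the top context $c$ and the bottom forests $t_i$, and each crossing consumes rounds and forces Duplicator to recompute which backbone port-node she is using as a pivot. Keeping the pebbles on labels that are "compatible" in the relaxed sense throughout — so that a safety move is available exactly when needed — requires carefully tracking, after each round, an invariant of the form "the two pebbles are either on genuinely equal labels, or on $b(\hole)/b(c)$ with $c$ an \Xnode-label, and the residual subforests below them are mutually $P$-reachable with enough rounds to spare." Setting up this invariant cleanly, and checking it is preserved by Spoiler's ancestor, descendant, and sibling moves (using the string-case identities~\eqref{eqh},~\eqref{eqv} for the sibling and long-backbone moves, and the relaxed-game hypotheses for the crossings), is the tedious core of the proof; the paper itself flags it as "simple but tedious," and I would organize it as a single invariant-preservation lemma followed by the three-line deduction of~\eqref{eqt} sketched above.
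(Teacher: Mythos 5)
Your overall shape is right: pick representatives $S_1,S_2$ with $\alpha(S_1)=\alpha(\Delta)^\omega h_1$, $\alpha(S_2)=\alpha(\Delta)^\omega h_2$, win the $k$-round game, and conclude $\alpha(S_1)=\alpha(S_2)$ because $\equiv_k$ is a congruence refining $\sim_L$. But the crux you treat as a free choice is where the proof actually lives, and the choice you propose does not work. Taking $c=\Delta^\omega$ and \emph{arbitrary} $P$-valid $t_1,t_2$ of types $h_1,h_2$ does not make $ct_1$ and $ct_2$ $k$-equivalent. Consider what happens when Spoiler descends past the port into $t_1$: your ``reroute to a matching backbone port-node'' idea forces Duplicator to answer inside some subforest hanging off the backbone of $\Delta^\omega$, and then she must win a $k$-round game comparing that fixed subforest against $t_1$. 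Since $\Delta$ has only finitely many subforests while there are infinitely many pairwise $k$-inequivalent $P$-valid forests of type $h_1$, this is simply false for generic $t_1$. Saturation tells you there is a backbone port-node whose \shal is $\pequiv{k}^{H_P}$-equivalent to the one Spoiler left; it says \emph{nothing} about the subforest hanging there, and that subforest is what the remaining rounds get played inside.

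The paper's proof is devoted precisely to manufacturing $S_1,S_2$ so that this rerouting succeeds: it builds the self-similar forests $T_{i,j}$, $\Delta_j$, $U_{i,j}$ by induction so that every subforest hanging off the skeleton is itself a nearly-indistinguishable copy of the relevant $T_{i',j'}$, then sets $S_1=(\Delta_m)^{(m+1)\omega}T_{1,m}$ and $S_2=(\Delta_m)^{(m+1)\omega}T_{2,m}$ with $m=2^k$. The invariant $\mathcal{P}(n)$ tracks three quantities (nesting level, upward number, horizontal number) precisely to control these rerouted subgames, and Lemma~\ref{lem-aper} is what lets Duplicator survive comparing $T_{i,j_1}$ to $T_{i,j_2}$ with $j_1\neq j_2$ after a reroute. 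None of this is ``bookkeeping'' you can defer: without the self-similar construction your crossing moves have no valid target, and the strategy breaks on the first descent into the distinguished subforest. So the gap is not in the deduction from $k$-equivalence to the identity, which you have right, nor in recognizing that saturation and Claim~\ref{clm:global} supply the port-node matching; it is in believing the representatives can be chosen freely and the invariant written down afterward.
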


The rest of this section is devoted to the proof of
Proposition~\ref{prop-saturation}. It is a classical but tedious
Ehrenfeucht-Fraïssé argument.

Assume $L$ is definable in \FOd. It follows from
Theorem~\ref{thm-fod-eff} that $L$ is definable in \EFF. Let $\alpha:
\A^\mydelta \rightarrow (H,V)$ be the leaf completion of its syntactic
morphism. Let $k$ be the nesting depth of the navigational modalities
used in the formula recognizing $L$, we prove that $\alpha$ is closed
under $k$-saturation.

Let $P$ be a branching and reduced set of \shals.
Let $X=H_P$ be the associated class of $P$-valid forest types.
Let $\Delta$ be a \saturated{(P,k)} context.
Let $u=\alpha(\Delta)$ and $h_1,h_2$ be two forest types in $H_P$.
We need to show that $u^\omega h_1=u^\omega h_2$.

For this we exhibit two forests $S_1$ and $S_2$ over \A such that
$\alpha(S_1)=u^\omega h_1$ and $\alpha(S_2)=u^\omega h_2$ and such
that Duplicator has a winning strategy for the $k$-move game described
in Section~\ref{games} when playing on $S_1$ and $S_2$.  Therefore it
follows from Lemma~\ref{lemma-ef-game} that no formula of \EFF whose
nesting depth of its navigational modalities is less than~$k$ can
distinguish between the two forests. This implies $u^\omega h_1 =
u^\omega h_2$ as desired.

Our agenda is now as follows. In Section~\ref{section-game-definition}
we define the two forests $S_1$ and $S_2$ on which we will play. Then in
Section~\ref{section-game-winning} we give the winning strategy for
Duplicator in the $k$-move game on $S_1$ and $S_2$.

We start with some definitions that will play the key role in the
proof. The \emph{root} of a forest is the root of the leftmost tree of
that forest. Recall that the \emph{backbone} of a context is the path
containing all the ancestors of the port of that context. The
\emph{skeleton} of a context is the set of nodes composed of the
backbone together with their siblings. Both notions are illustrated in
Figure~\ref{fig-ske}.

\begin{figure}[h]
\begin{center}
\begin{tikzpicture}
\node[node] (r1) at (-1.9,0) {$b$};
\node[node] (r2) at (-0.95,0) {$a$};
\node[nodeg] (r3) at (0,0) {$b$};
\node[node] (r4) at (0.95,0) {$a$};
\node[node] (r5) at (1.9,0) {$b$};

\coordinate (k0) at (r1.south);
\draw (k0) to ($(k0)-(0.3,0.2)$) to ($(k0)-(0.4,0.7)$) to
($(k0)-(-0.4,0.7)$) to ($(k0)-(-0.3,0.2)$) to ($(k0)-(0.3,0.2)$); 
\draw (k0) to ($(k0)-(-0.3,0.2)$);

\coordinate (k0) at (r4.south);
\draw (k0) to ($(k0)-(0.3,0.2)$) to ($(k0)-(0.4,0.7)$) to
($(k0)-(-0.4,0.7)$) to ($(k0)-(-0.3,0.2)$) to ($(k0)-(0.3,0.2)$); 
\draw (k0) to ($(k0)-(-0.3,0.2)$);

\coordinate (k0) at (r5.south);
\draw (k0) to ($(k0)-(0.3,0.2)$) to ($(k0)-(0.4,0.7)$) to
($(k0)-(-0.4,0.7)$) to ($(k0)-(-0.3,0.2)$) to ($(k0)-(0.3,0.2)$); 
\draw (k0) to ($(k0)-(-0.3,0.2)$);

\node[node] (s1) at (-0.95,-1.2) {$a$};
\node[nodeg] (s2) at (0.0,-1.2) {$c$};
\draw (r3) to [out=-105,in=90] (s1);
\draw (r3) to (s2);

\coordinate (k0) at (s1.south);
\draw (k0) to ($(k0)-(0.3,0.2)$) to ($(k0)-(0.4,0.7)$) to
($(k0)-(-0.4,0.7)$) to ($(k0)-(-0.3,0.2)$) to ($(k0)-(0.3,0.2)$); 
\draw (k0) to ($(k0)-(-0.3,0.2)$);


\node[nodeg] (t2) at (0.0,-2.4) {$d$};
\node[node] (t3) at (0.95,-2.4) {$d$};
\node[node] (t4) at (1.9,-2.4) {$a$};
\node[node] (t5) at (2.85,-2.4) {$b$};

\draw (s2) to (t2);
\draw (s2) to [out=-75,in=90] (t3);
\draw (s2) to [out=-60,in=90] (t4);
\draw (s2) to [out=-45,in=90] (t5);

\coordinate (k0) at (t4.south);
\draw (k0) to ($(k0)-(0.3,0.2)$) to ($(k0)-(0.4,0.7)$) to
($(k0)-(-0.4,0.7)$) to ($(k0)-(-0.3,0.2)$) to ($(k0)-(0.3,0.2)$); 
\draw (k0) to ($(k0)-(-0.3,0.2)$);



\node[port] (p3) at (0.0,-3.0) {};
\draw (t2) -- (p3);

\node[scal] at (0.0,-4.0) {Backbone};

\begin{scope}[xshift=7cm]
\node[nodeg] (r1) at (-1.9,0) {$b$};
\node[nodeg] (r2) at (-0.95,0) {$a$};
\node[nodeg] (r3) at (0,0) {$b$};
\node[nodeg] (r4) at (0.95,0) {$a$};
\node[nodeg] (r5) at (1.9,0) {$b$};

\coordinate (k0) at (r1.south);
\draw (k0) to ($(k0)-(0.3,0.2)$) to ($(k0)-(0.4,0.7)$) to
($(k0)-(-0.4,0.7)$) to ($(k0)-(-0.3,0.2)$) to ($(k0)-(0.3,0.2)$); 
\draw (k0) to ($(k0)-(-0.3,0.2)$);

\coordinate (k0) at (r4.south);
\draw (k0) to ($(k0)-(0.3,0.2)$) to ($(k0)-(0.4,0.7)$) to
($(k0)-(-0.4,0.7)$) to ($(k0)-(-0.3,0.2)$) to ($(k0)-(0.3,0.2)$); 
\draw (k0) to ($(k0)-(-0.3,0.2)$);

\coordinate (k0) at (r5.south);
\draw (k0) to ($(k0)-(0.3,0.2)$) to ($(k0)-(0.4,0.7)$) to
($(k0)-(-0.4,0.7)$) to ($(k0)-(-0.3,0.2)$) to ($(k0)-(0.3,0.2)$); 
\draw (k0) to ($(k0)-(-0.3,0.2)$);

\node[nodeg] (s1) at (-0.95,-1.2) {$a$};
\node[nodeg] (s2) at (0.0,-1.2) {$c$};
\draw (r3) to [out=-105,in=90] (s1);
\draw (r3) to (s2);

\coordinate (k0) at (s1.south);
\draw (k0) to ($(k0)-(0.3,0.2)$) to ($(k0)-(0.4,0.7)$) to
($(k0)-(-0.4,0.7)$) to ($(k0)-(-0.3,0.2)$) to ($(k0)-(0.3,0.2)$); 
\draw (k0) to ($(k0)-(-0.3,0.2)$);


\node[nodeg] (t2) at (0.0,-2.4) {$d$};
\node[nodeg] (t3) at (0.95,-2.4) {$d$};
\node[nodeg] (t4) at (1.9,-2.4) {$a$};
\node[nodeg] (t5) at (2.85,-2.4) {$b$};

\draw (s2) to (t2);
\draw (s2) to [out=-75,in=90] (t3);
\draw (s2) to [out=-60,in=90] (t4);
\draw (s2) to [out=-45,in=90] (t5);

\coordinate (k0) at (t4.south);
\draw (k0) to ($(k0)-(0.3,0.2)$) to ($(k0)-(0.4,0.7)$) to
($(k0)-(-0.4,0.7)$) to ($(k0)-(-0.3,0.2)$) to ($(k0)-(0.3,0.2)$); 
\draw (k0) to ($(k0)-(-0.3,0.2)$);



\node[port] (p3) at (0.0,-3.0) {};
\draw (t2) -- (p3);

\node[scal] at (0.0,-4.0) {Skeleton};

\end{scope}

\end{tikzpicture}
\end{center}
\caption{Illustration of the notions of backbone and skeleton.}\label{fig-ske}
\end{figure}

\subsection{Definition of the forests $S_1$ and
  $S_2$.}\label{section-game-definition}


Let $X=H_P=\set{h_1,\cdots,h_\ell}$. Recall that since $P$ is reduced,
$X$ is also the set of $P$-valid types. In particular all subforests
of a $P$-valid forest or context that are not leaves have a type in $X$.
We fix a set $\set{s_1,\dots,s_\ell}$ of $P$-valid forests such that
for all $i$, $s_i \not\in A$ and $\alpha(s_i)=h_i$. This is without
loss of generality as for each $i$, $h_i$ is in $H_P$ and therefore
reachable from any type and therefore there is a forest of arbitrary
depth of that type.


Given a $P$-valid context $C$ and $\ell$ forests $t_1,\cdots,t_\ell$,
we say that \emph{$C'$ is the context obtained from $C$ by replacing
  all subforests of type $h_j$ by $t_j$} if $C'$ is constructed by
considering all the nodes $x$ that are on the skeleton of $C$ but not
on the backbone and, if the subforest below $x$ is $s$ where
$\alpha(s)=h_j$ with $j\leq \ell$, we replace it with $t_j$. By
construction, $C$ and $C'$ have the same skeleton. Notice that since
$P$ is reduced and $C$ is assumed to be $P$-valid, the construction
replaces the subforests below all ports and \Xnodes on the skeleton of
$C$ that are not on the backbone and leaves the \bXnodes
unchanged. Since we assumed that all $s_i$ are not in $A$, this means
that all \Xnodes on the skeleton of $C$ become port nodes within
$C'$. Therefore, $C'$ may contain on its backbone \shals that are
not in $P$ and saturation may not be preserved. We will show how to
deal with this fact later.


\bigskip

Since $P$ is branching, there exists a \shal $q_0\in P$ of arity
greater than~$1$. For $i\leq \ell$, we denote by $V_i$ the context
obtained from $q_0$ by placing the forest $s_i$ into all the ports of
$q_0$ except for the rightmost one.

By maximality of $H_P$ relative to $P$-reachability, for all $i \leq
\ell$ there exists a $P$-valid context $U'_i$ such that $h_i =
\alpha(U'_i)\alpha(V_\ell \cdots V_1) u^\omega h_1$. For $i \leq
\ell$, we write $U_i$ for the context $U'_i V_\ell \cdots V_0$. For
all $i \leq \ell$ we write $u_i = \alpha(U_i)$. By definition, for $i
\leq \ell$, the contexts $U_i$ have the following properties:

\begin{itemize}
\item $U_i$ is $P$-valid,
\item $u_i u^\omega h_1 = h_i$,
\item the context $U_i$ contains for all $j \leq \ell$ a
  subforest of type $h_j$ such that all nodes on the path
  to this copy are port-nodes (namely $s_j$ within $V_j$).
\end{itemize}

We now construct by induction on $j$ contexts $\Delta_{j}$ and
$U_{i,j}$, and forests $T_{i,j}$ such that for all $i \leq \ell$, we
have $\alpha(\Delta_j)=u$, $\alpha(U_{i,j})=u_i$ and $\alpha(T_{i,j})
= h_i$. We initialize the process by setting for all $i\leq \ell$:

\begin{itemize}
\item $U_{i,0}$ is formed from $U_{i}$ by replacing all subforests of
  type $h_{j}$ by $s_{j}$,
\item $\Delta_0$ is obtained from $\Delta$ by replacing all subforests of type
  $h_j$ with $s_j$,
\item $T_{i,0}:= U_{i,0} \cdot (\Delta_{0})^{\omega} \cdot s_1$. 
\end{itemize}

By construction we have $\alpha(\Delta_0)=u$, $\alpha(U_{i,0})=u_i$ and
$\alpha(T_{i,0})=u_i u^{\omega} h_1 = h_i$ as desired. When $j > 0$,
the inductive step of the construction is done as follows for all $i
\leq \ell$:

\begin{itemize}
\item $U_{i,j}$ is formed from $U_{i}$ by replacing each
 subforest of type $h_{i'}$ by $T_{i',j-1}$ (see Figure~\ref{fig-construc-uij}),
\item $\Delta_j$ is formed from $\Delta$ by replacing each
 subforest of type $h_{i'}$ by $T_{i',j-1}$,
\item $T_{i,j} = U_{i,j} \cdot (\Delta_{j})^{\omega} \cdots (\Delta_{0})^{\omega} \cdot s_1$.
\end{itemize}

\begin{figure}[h]
\begin{center}
\begin{tikzpicture}
\node[node] (r1) at (-1.9,0) {$b$};
\node[node] (r2) at (-0.95,0) {$a$};
\node[node] (r3) at (0,0) {$b$};
\node[node] (r4) at (0.95,0) {$a$};
\node[node] (r5) at (1.9,0) {$b$};

\coordinate (k0) at (r1.south);
\draw (k0) to ($(k0)-(0.3,0.2)$) to ($(k0)-(0.4,0.7)$) to
($(k0)-(-0.4,0.7)$) to ($(k0)-(-0.3,0.2)$) to ($(k0)-(0.3,0.2)$); 
\draw (k0) to ($(k0)-(-0.3,0.2)$);
\node[bag] at ($(k0)-(0.0,0.5)$) {\small $h_1$};

\coordinate (k0) at (r4.south);
\draw (k0) to ($(k0)-(0.3,0.2)$) to ($(k0)-(0.4,0.7)$) to
($(k0)-(-0.4,0.7)$) to ($(k0)-(-0.3,0.2)$) to ($(k0)-(0.3,0.2)$); 
\draw (k0) to ($(k0)-(-0.3,0.2)$);
\node[bag] at ($(k0)-(0.0,0.5)$) {\small $h_2$};

\coordinate (k0) at (r5.south);
\draw (k0) to ($(k0)-(0.3,0.2)$) to ($(k0)-(0.4,0.7)$) to
($(k0)-(-0.4,0.7)$) to ($(k0)-(-0.3,0.2)$) to ($(k0)-(0.3,0.2)$); 
\draw (k0) to ($(k0)-(-0.3,0.2)$);
\node[bag] at ($(k0)-(0.0,0.5)$) {\small $h_1$};

\node[node] (s1) at (-0.95,-1.2) {$a$};
\node[node] (s2) at (0.0,-1.2) {$c$};
\draw (r3) to [out=-105,in=90] (s1);
\draw (r3) to (s2);

\coordinate (k0) at (s1.south);
\draw (k0) to ($(k0)-(0.3,0.2)$) to ($(k0)-(0.4,0.7)$) to
($(k0)-(-0.4,0.7)$) to ($(k0)-(-0.3,0.2)$) to ($(k0)-(0.3,0.2)$); 
\draw (k0) to ($(k0)-(-0.3,0.2)$);
\node[bag] at ($(k0)-(0.0,0.5)$) {\small $h_3$};

\node[node] (t2) at (0.0,-2.4) {$d$};
\node[node] (t3) at (0.85,-2.4) {$d$};
\node[node] (t4) at (1.7,-2.4) {$a$};
\node[node] (t5) at (2.55,-2.4) {$b$};

\coordinate (k0) at (t4.south);
\draw (k0) to ($(k0)-(0.3,0.2)$) to ($(k0)-(0.4,0.7)$) to
($(k0)-(-0.4,0.7)$) to ($(k0)-(-0.3,0.2)$) to ($(k0)-(0.3,0.2)$); 
\draw (k0) to ($(k0)-(-0.3,0.2)$);
\node[bag] at ($(k0)-(0.0,0.5)$) {\small $h_1$};


\node[port] (p3) at (0.0,-3.1) {};

\draw (s2) to (t2);
\draw (s2) to [out=-75,in=90] (t3);
\draw (s2) to [out=-60,in=90] (t4);
\draw (s2) to [out=-45,in=90] (t5);

\draw (t2) -- (p3);

\node[lab] at (0.0,-4.0) {\large $U_i$};

\draw[ars,double] (3.7,-2.0) to (4.7,-2.0);


\begin{scope}[xshift=8cm]
\node[node] (r1) at (-1.9,0) {$b$};
\node[node] (r2) at (-0.95,0) {$a$};
\node[node] (r3) at (0,0) {$b$};
\node[node] (r4) at (0.95,0) {$a$};
\node[node] (r5) at (1.9,0) {$b$};

\coordinate (k0) at (r1.south);
\draw (k0) to ($(k0)-(0.3,0.2)$) to ($(k0)-(0.4,0.7)$) to
($(k0)-(-0.4,0.7)$) to ($(k0)-(-0.3,0.2)$) to ($(k0)-(0.3,0.2)$); 
\draw (k0) to ($(k0)-(-0.3,0.2)$);
\node[bag] at ($(k0)-(0.0,0.5)$) {\small $T_{1,j}$};

\coordinate (k0) at (r4.south);
\draw (k0) to ($(k0)-(0.3,0.2)$) to ($(k0)-(0.4,0.7)$) to
($(k0)-(-0.4,0.7)$) to ($(k0)-(-0.3,0.2)$) to ($(k0)-(0.3,0.2)$); 
\draw (k0) to ($(k0)-(-0.3,0.2)$);
\node[bag] at ($(k0)-(0.0,0.5)$) {\small $T_{2,j}$};

\coordinate (k0) at (r5.south);
\draw (k0) to ($(k0)-(0.3,0.2)$) to ($(k0)-(0.4,0.7)$) to
($(k0)-(-0.4,0.7)$) to ($(k0)-(-0.3,0.2)$) to ($(k0)-(0.3,0.2)$); 
\draw (k0) to ($(k0)-(-0.3,0.2)$);
\node[bag] at ($(k0)-(0.0,0.5)$) {\small $T_{1,j}$};

\node[node] (s1) at (-0.95,-1.2) {$a$};
\node[node] (s2) at (0.0,-1.2) {$c$};
\draw (r3) to [out=-105,in=90] (s1);
\draw (r3) to (s2);

\coordinate (k0) at (s1.south);
\draw (k0) to ($(k0)-(0.3,0.2)$) to ($(k0)-(0.4,0.7)$) to
($(k0)-(-0.4,0.7)$) to ($(k0)-(-0.3,0.2)$) to ($(k0)-(0.3,0.2)$); 
\draw (k0) to ($(k0)-(-0.3,0.2)$);
\node[bag] at ($(k0)-(0.0,0.5)$) {\small $T_{3,j}$};

\node[node] (t2) at (0.0,-2.4) {$d$};
\node[node] (t3) at (0.85,-2.4) {$d$};
\node[node] (t4) at (1.7,-2.4) {$a$};
\node[node] (t5) at (2.55,-2.4) {$b$};

\coordinate (k0) at (t4.south);
\draw (k0) to ($(k0)-(0.3,0.2)$) to ($(k0)-(0.4,0.7)$) to
($(k0)-(-0.4,0.7)$) to ($(k0)-(-0.3,0.2)$) to ($(k0)-(0.3,0.2)$); 
\draw (k0) to ($(k0)-(-0.3,0.2)$);
\node[bag] at ($(k0)-(0.0,0.5)$) {\small $T_{1,j}$};


\node[port] (p3) at (0.0,-3.1) {};

\draw (s2) to (t2);
\draw (s2) to [out=-75,in=90] (t3);
\draw (s2) to [out=-60,in=90] (t4);
\draw (s2) to [out=-45,in=90] (t5);

\draw (t2) -- (p3);

\node[lab] at (0.0,-4.0) {\large $U_{i,j+1}$};
\end{scope}
\end{tikzpicture}
\end{center}
\caption{Illustration of the construction of $U_{i,j}$ from $U_i$: each
  subforest of type $h_{i'}$ in $U_i$ is replaced by $T_{i',j-1}$.}\label{fig-construc-uij}
\end{figure}

By induction we have for all $j \leq \ell$, $\alpha(U_{i,j}) =
\alpha(U_i) = u_i$, $\alpha(\Delta_{j}) = \alpha(\Delta) = u$ and
$\alpha(T_{i,j}) = u_i u^{\omega} h_1 = h_i$ as required. Notice that
each $U_{i,j}$ contains a copy of $T_{i',j-1}$ for all $i' \leq
\ell$. Let $m=2^{k}$ and let:
\begin{equation}
\begin{split}
 S_1 & := (\Delta_{m})^{(m+1)\omega} \cdot T_{1,m}\\
 S_2 & := (\Delta_{m})^{(m+1)\omega} \cdot T_{2,m}
\end{split}
\end{equation}

Note that by definition $\alpha(S_1) = u^\omega h_1$ and
$\alpha(S_2)=u^\omega h_2$. Therefore the following lemma concludes
the proof of Proposition~\ref{prop-saturation}. 

\begin{lemma}\label{lemma-winning}
  Duplicator has a winning strategy for the $k$-move game between $S_1$ and
  $S_2$.
 \end{lemma}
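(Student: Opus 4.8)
The goal is to describe a winning strategy for Duplicator in the $k$-round \EFF-game on $S_1 = (\Delta_m)^{(m+1)\omega} T_{1,m}$ and $S_2 = (\Delta_m)^{(m+1)\omega} T_{2,m}$, where $m = 2^k$. The two forests differ only in the very bottom gadget $T_{i,m}$ (type $h_1$ versus $h_2$), everything above being a long power of $\Delta_m$; and by construction $\Delta_m$, $U_{i,m}$, and $T_{i,m}$ are \saturated{(P,k)}-style objects built recursively so that each one contains, reachable through port-nodes only, a subforest of \emph{every} type $h_j \in H_P$. The strategy will maintain, as an invariant between rounds, a correspondence between the pebbled node in $S_1$ and the pebbled node in $S_2$ together with a ``budget'' (number of remaining rounds $r \le k$) and a guarantee that below the current pebbles, up to depth/width roughly $2^r$, the two forests look $r$-equivalent. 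The key point enabling this is that $\Delta_m$ is saturated: every port-node \shal $(p,x) \in P$ has a $\pequiv{k}^{H_P}$-matching port-node on its backbone, so Duplicator can always mimic horizontal and vertical moves of Spoiler that stay inside a $\Delta_m$-block by jumping to a $\mequiv{k}$- or $\pequiv{k}^{H_P}$-equivalent position. The telescoping product $(\Delta_m)^{(m+1)\omega} = (\Delta_m)^{\text{idempotent}}$ gives enough ``room'' that Spoiler's at most $k$ moves can never exhaust the repeated blocks.

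\medskip

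\textbf{The steps, in order.} First I would fix notation for the ``levels'' of $S_1, S_2$: the outer spine consists of $(m+1)\omega$ copies of $\Delta_m$, each copy of $\Delta_m$ is itself obtained from $\Delta$ by substituting $T_{i',m-1}$ for subforests of type $h_{i'}$, and so on recursively down to $\Delta_0, s_i$. I would record the two crucial structural facts: (1) for each level $j$ and each $i' \le \ell$, the object at level $j$ contains a copy of $T_{i',j-1}$ reachable via a path of port-nodes only, and (2) $\Delta$ is \saturated{(P,k)}, so on its backbone every port-\shal of $P$ is represented up to $\pequiv{k}^{H_P}$. Second, I would set up the invariant for a position with $r$ rounds left: the pebble in $S_1$ sits on a node $x_1$, the pebble in $S_2$ on $x_2$, they carry the same label, and there is an ``$r$-safe'' local isomorphism between a neighbourhood of $x_1$ and of $x_2$ — concretely, the \shal at $x_1$ is $\pequiv{r}^{H_P}$-equivalent (or $\mequiv{r}$-equivalent for \bXnode parts) to that at $x_2$, the subtrees hanging below matching siblings have matching types in $H_P$ whenever that type is still ``visible'' within $r$ rounds, and the number of $\Delta_m$-blocks between $x_i$ and the top (and between $x_i$ and the bottom gadget) is either equal on both sides or $\ge 2^r$ on both sides. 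Third, I would verify this invariant is preserved by each of the four move types (ancestor, descendant, left sibling, right sibling), spending one unit of budget: horizontal moves are handled by the string game on \shals via $\mequiv{r}$/$\pequiv{r}^{H_P}$ and Claim~\ref{clm:global}; a descendant move into a $T_{i',j}$-gadget is answered by the matching port-path guaranteed by fact (1); an ancestor move up through $\Delta_m$-blocks is answered by ``absorbing'' excess blocks using the idempotent power; and crucially, since $h_1$ and $h_2$ both lie in $H_P$ and $H_P$ is a single $P$-reachability class, the bottom gadgets $T_{1,m}$ and $T_{2,m}$ become indistinguishable once Spoiler has used enough budget reaching them — because each $T_{i,m}$ itself contains, via port-paths, subforests of every type in $H_P$, so after $k$ rounds the ``forest type of what's below'' can never be separated. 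Finally, at round $0$ the invariant forces equal labels, so Duplicator has survived.

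\medskip

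\textbf{The main obstacle.} The delicate point is the interaction between vertical navigation across the $\Delta_m$-spine and the fact, flagged in the text right before the construction, that replacing subforests of type $h_{i'}$ by the gadgets $T_{i',j-1}$ creates on the backbone of each $\Delta_j$ some \shals that are \emph{not} in $P$, so saturation of $\Delta$ is not literally inherited by $\Delta_j$. The resolution — and the heart of the argument — is that these ``bad'' backbone \shals only appear because former \Xnodes became port-nodes, and the recursive nesting was arranged precisely so that after peeling $j$ levels one recovers an honest \saturated{(P,k)} structure; so Duplicator's strategy must be organized by an induction on the number of rounds $r$ that tracks, level by level, how deep Spoiler has drilled, and must argue that Spoiler cannot in $k$ rounds drill deeper than $m = 2^k$ levels — hence always stays in a region where the saturation guarantee (via the copies $s_j$, $V_j$ inside $U_i$) still applies. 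Making this bookkeeping precise — exactly which $\pequiv{r}^{H_P}$ or $\mequiv{r}$ equivalence must hold at which level, and checking it is restored after each move — is the tedious but essential core; everything else is the standard composition of string \efgame arguments with the ``jump to an equivalent sibling/port'' moves licensed by Claims~\ref{lemma-p-valid}, \ref{lemma-types-fod}, and \ref{clm:global}.
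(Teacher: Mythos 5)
Your overall architecture---a round-indexed invariant that tracks the pebbles' position relative to the $(\Delta_m)$-spine, their nesting depth, and a local equivalence between their \shals, with saturation answering moves along the backbone and the recursive gadgets answering descendant moves---is the same as the paper's. However, the two points you flag only loosely are precisely where the proof has to do work, and as stated your plan does not get through them. On the ``bad \shals'' problem: your proposed resolution, that ``after peeling $j$ levels one recovers an honest \saturated{(P,k)} structure,'' is not what happens. Every $\Delta_j$, at every level, is obtained from $\Delta$ by the same substitution, so every level's backbone may carry \shals outside $P$ (wherever a former \Xnode became a port-node); there is no level at which honest saturation reappears. The paper's fix is different: it remembers which skeleton nodes were port-nodes and which were \Xnodes \emph{before} the substitution, and replaces $\pequiv{n}^X$ by a pseudo-equivalence $\sim_n^X$ computed on the original contexts, for which saturation of $\Delta$ does transfer to every $\Delta_j$ (``pseudo-saturation''). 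Without some such device, your invariant clause ``the \shal at $x_1$ is equivalent under $\pequiv{r}^{H_P}$ to that at $x_2$'' cannot be restored after an ancestor move, because the backbone \shal of $\Delta_m$ you want to jump to need not be comparable under $\pequiv{r}^{H_P}$ to the post-substitution \shal Spoiler moved from.

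Second, descendant moves. You answer Spoiler's jump into a gadget $T_{i,j_x}$ by the port-path copy of $T_{i,j_y-1}$ on the other side, but these two forests carry \emph{different} nesting indices and the remaining rounds are played inside them; you therefore need, and never state, the analogue of Lemma~\ref{lem-aper}: $T_{i,j_1}$ and $T_{i,j_2}$ are $n$-round equivalent whenever $j_1,j_2\ge 2(n-1)+1$. Relatedly, your claim that ``Spoiler cannot in $k$ rounds drill deeper than $m=2^k$ levels'' is false as stated---a single descendant move can reach the leaves. What is actually maintained is that the nesting level of the current \emph{anchor} pair (the roots of the matched gadgets) loses at most a factor-two margin per round; this is what forces the $2n+1$ thresholds in the invariant and the choice of $m$, and it is the bookkeeping your sketch would still have to supply.
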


\subsection{Basic Properties}
Before proving Lemma~\ref{lemma-winning} we state some basic
properties of the construction.

Recall that the operation constructing $C'$ from a $P$-valid $C$ by
replacing all subforests of type $h_j$ by $t_j$ does not preserve
saturation. The issue is that all \Xnodes become port-nodes and the
resulting \shal may no longer be equivalent to one occurring in the
backbone of $C$. In order to cope with this problem, we remember what
the initial situation was. If $x$ is a port-node on the skeleton of
$C'$ (recall that $C$ and $C'$ have the same skeleton) we say that:

\begin{itemize}
\item $x$ is an \emph{\psport} if $x$, as a node of $C$, was a
  port-node. 
\item $x$ is an \emph{\psrela} if $x$, as a node of $C$, was a
  \Xnode.
\end{itemize}

Observe that, by construction, for any port-node $x$ of $C'$, the
subtree at $x$ in $C'$ is $b(t_j)$ for some $j$ and some $b \in
B$. For \psrelas, this is by construction. For \psports, this is
because $P$ is reduced and $C$ was $P$-valid, hence all subforets of
$C$ but leaves had type in $H_P$. Also notice that all remaining
nodes of $C'$ are \bXnodes and are unchanged with respect to $C$. 

\newcommand\psim{\sim}

Let $x_1'$ (resp. $x'_2$) be a node on the skeleton of a context
$C'_1$ (resp. $C'_2$) constructed from a $P$-valid context $C_1$
(resp. $C_2$) by replacing all subforests of type $h_i$ by
$T_{i,j_1}$ for some fixed $j_1$ (resp. $T_{i,j_2}$ for some fixed
$j_2$). Let $x_1,x_2$ be the corresponding nodes on the skeletons of
$C_1,C_2$. We write $x_1' \psim_n^X x_2'$ when $x_1 \pequiv{n}^X
x_2$. Because $\pequiv{n}^X$ is an equivalence, it is 
straightforward to verify that $\psim_n^X$ is also an equivalence
relation.

There is a game definition of $\psim_n^X$, called pseudo-\relaxedX
game. In this game Spoiler can now use the safety move when one pebble 
is on a \psport and the other on a \psrela or when both pebbles are on
\psrelas with subtrees $b(T_{i_1,j_1}),b(T_{i_2,j_2})$ such that $i_1
\neq i_2$. In this case Duplicator must answer by placing the other
pebble on a \psport.

By construction the following property is an immediate consequence of
the $(P,k)$-saturation of $\Delta$: Assume $C$ is $P$-valid and that
$C'$ is constructed from $C$ by replacing all subforests of type $h_i$
by $T_{i,j}$ for some fixed $j$. Then for any $n$ and any \psport $x$
of $C'$ there exists a $y$ in the {\bf backbone} of $\Delta_n$ such
that $x \psim_k^X y$. We call this property, the
\emph{pseudo-saturation} of $\Delta_n$.

\medskip

Notice that when using \emph{pseudo-saturation} in her strategy,
Duplicator may end up in a situation where the pebbles are above
subtrees $T_{i,j_1}$ and $T_{i,j_2}$ with $j_1 \neq j_2$. Note that
$T_{i,j_1}$ and $T_{i,j_2}$ are essentially the same tree, only with
different nesting. In this situation, Duplicator may have to play a
subgame within the trees $T_{i,j_1}$ and $T_{i,j_2}$. The following
lemma states that this is possible as soon as $j_1,j_2$ are large
enough. 

\begin{lemma} \label{lem-aper}
Given integers $i,n,j_1,j_2$ such that $2(n-1)+1 \leq j_1,j_2$, 
Duplicator has a winning strategy for the $n$-move game between 
$T_{i,j_1}$ and $T_{i,j_2}$.
\end{lemma}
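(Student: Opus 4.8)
The plan is to prove Lemma~\ref{lem-aper} by induction on $n$, exploiting the self-similar structure of the trees $T_{i,j}$. Recall that $T_{i,j} = U_{i,j} \cdot (\Delta_j)^\omega \cdots (\Delta_0)^\omega \cdot s_1$, where $U_{i,j}$ and $\Delta_j$ are built from $U_i$ and $\Delta$ by substituting the previous-level trees $T_{i',j-1}$ into the subforests of type $h_{i'}$, and where the base case substitutes the fixed forests $s_{i'}$ instead. The key structural observation is that any subtree hanging off the skeleton of $T_{i,j}$ that is not on the ``spine'' is of the form $b(T_{i',j-1})$, so descending one level in the tree decreases the nesting index $j$ by exactly one while the shape is otherwise identical between $T_{i,j_1}$ and $T_{i,j_2}$.

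First I would set up the induction. For $n=1$, Spoiler makes a single move; since $T_{i,j_1}$ and $T_{i,j_2}$ have the same root label and (by the base of the construction, together with leaf-surjectivity) the same set of labels appearing, Duplicator can match any single move to a node with the same label, so she wins trivially. For the inductive step, suppose $2(n-1)+1 \le j_1, j_2$ and Spoiler moves the pebble in, say, $T_{i,j_1}$. The crucial point is that the ``hull'' of $T_{i,j_1}$ up to the skeleton --- i.e. the context $U_{i,j_1}\cdot(\Delta_{j_1})^\omega\cdots(\Delta_0)^\omega$ stripped of the substituted subtrees --- is literally the same skeleton as that of $T_{i,j_2}$ modulo the relabeling of port subtrees from $T_{i',j_1-1}$ to $T_{i',j_2-1}$. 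So if Spoiler's move lands on a node still on this common skeleton, Duplicator copies it exactly (same label), and the two pebbles remain in a configuration that is again of the shape ``node on the skeleton of a context of the form $C'_1$ / $C'_2$ obtained by the same $T$-substitution at levels $j_1-1$ and $j_2-1$''. The subtlety is an ancestor move: moving up from inside one of the substituted $T_{i',j_1-1}$ subtrees back onto the skeleton. This is where I would need that the number of times one can ``drop down into a substituted subtree'' before running out of nesting is controlled by $j_1, j_2$ being at least $2(n-1)+1$: each round can consume at most one unit of nesting depth going down and the game has at most $n-1$ further moves after the first, and the $\Delta_j^\omega$ powers on the spine give an aperiodic (idempotent) cushion so that descending or ascending within the $(\Delta)^\omega$ block is always matchable.

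The main obstacle, and the reason for the precise bound $2(n-1)+1$, is the bookkeeping of nesting depth under repeated descend/ascend moves. When Spoiler descends into a port subtree $b(T_{i',j_1-1})$ and then later ascends, Duplicator must have descended into the matching $b(T_{i',j_2-1})$; the two sub-trees now differ in nesting by $j_1 - j_2$, which does not change, but the available indices have both dropped by one. A round that goes down and then (in a later round) back up costs roughly two moves for one unit of depth, hence after $n$ moves the worst case consumes about $2(n-1)/2$... more carefully, one must check that the invariant ``at most $m$ moves remain and both current nesting indices are $\ge 2(m-1)+1$'' is preserved: a single descend move takes us from indices $\ge 2(m-1)+1$ with $m$ moves left to indices $\ge 2(m-1)+1 - 1 = 2(m-1) \ge 2((m-1)-1)+1$ with $m-1$ moves left, which is exactly what is needed. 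Sibling moves and moves landing on $\Delta_j^\omega$ blocks do not decrease the effective index at all, by idempotency. I would organize the proof as: (i) formulate the strengthened invariant relating remaining moves to the two nesting indices; (ii) handle sibling moves and moves staying on the skeleton by exact copying; (iii) handle a descend move into a port subtree by re-indexing and invoking the induction hypothesis on the pair $T_{i',j_1-1}, T_{i',j_2-1}$ with $n-1$ moves; (iv) handle ascend moves symmetrically, using the $\Delta_j^\omega$ idempotent structure (which is where equations~\eqref{eqh} and~\eqref{eqv}, hence aperiodicity, get used implicitly through Lemma~\ref{lemma-ef-game}-style reasoning) to absorb any mismatch in how many $\Delta$-copies lie above the pebble. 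The routine part is verifying that each label matched is indeed equal; the delicate part is (iv) combined with maintaining the invariant, and getting the constant in $2(n-1)+1$ exactly right.
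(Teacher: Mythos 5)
Your top-level plan---induction on $n$, answering a descent into a hanging subtree $b(T_{i',l-1})$ by recursing into the matching $b(T_{i',l'-1})$ and invoking the induction hypothesis---is the same shape as the paper's (which only sketches the proof, describing it as the classical argument that $w^{j_1}$ and $w^{j_2}$ are indistinguishable at fixed rank for $j_1,j_2$ large). But there is a genuine gap in your treatment of the spine. You assert that the skeletons of $T_{i,j_1}$ and $T_{i,j_2}$ are ``literally the same'' and that a move landing on the skeleton can be ``copied exactly.'' This is false: $T_{i,j}=U_{i,j}\cdot(\Delta_j)^\omega(\Delta_{j-1})^\omega\cdots(\Delta_0)^\omega\cdot s_1$ is a telescoping product, so the backbone of $T_{i,j_1}$ traverses $(j_1+1)\omega$ copies of the skeleton of $\Delta$ while that of $T_{i,j_2}$ traverses $(j_2+1)\omega$ of them; there is no label-preserving bijection to copy through. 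Matching vertical moves along these two spines of different lengths is exactly the content of the classical $w^{j_1}$-versus-$w^{j_2}$ argument the lemma generalizes: positions near the top are matched top-aligned (block $\Delta_{j_1-r}$ with $\Delta_{j_2-r}$, whose hanging subtrees $T_{i',j_1-r-1}$ and $T_{i',j_2-r-1}$ are then compared via the induction hypothesis), positions near the bottom are matched bottom-aligned (block $\Delta_r$ with $\Delta_r$, where the two suffixes $(\Delta_r)^\omega\cdots(\Delta_0)^\omega s_1$ are literally identical), and the hypothesis $2(n-1)+1\le j_1,j_2$ is what keeps these two regimes from colliding during $n$ rounds. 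You defer all of this to your step (iv), present it as concerning only ascend moves, and attribute it to an ``idempotent/aperiodic cushion'' of $(\Delta_j)^\omega$ coming from~\eqref{eqh} and~\eqref{eqv}. That mechanism is a non sequitur: the lemma is purely combinatorial and uses no algebraic identity, and each block $(\Delta_j)^\omega$ has fixed length $\omega$ independent of $n$, so it provides no cushion---the cushion is the number $j+1\geq 2n$ of distinct blocks on the spine.

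A second problem is the invariant ``each round consumes at most one unit of nesting depth, so the indices stay $\ge 2(m-1)+1$.'' A single descendant move can pass through arbitrarily many nesting levels (Spoiler may jump from the root straight into a copy of $s_1$ at nesting level $0$), so this invariant cannot be maintained as stated. The fix is to make the strategy compositional rather than level-by-level: a deep descent is answered by entering the matching hanging subtree once and then delegating to the strategy supplied by the induction hypothesis for the pair $T_{i',l_1-1},T_{i',l_2-1}$, treating Spoiler's jump as the first move of that subgame; the arithmetic $2(n-2)+1\le l_1-1,l_2-1$ is then verified at the moment of recursion (using the top-aligned/bottom-aligned bookkeeping above to bound $l_1,l_2$ from below), not once per round.
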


\begin{proof}
This can be proved by a simple induction on $n$. Essentially this is a
straightforward generalization of a classical argument used to prove
that the words $w^{j_1}$ and $w^{j_2}$ cannot be distinguished by a
first-order formula of fixed quantifier rank provided that $j_1,j_2$
are large enough (see~\cite{bookstraub} for example).
\end{proof}

\subsection{The winning strategy: {\bf Proof of Lemma~\ref{lemma-winning}}}\label{section-game-winning}

We give a winning strategy for Duplicator in the $k$-move game between
$S_1$ and $S_2$. In order to be able to formulate this strategy, we
first define the useful parameters and their key properties that we
will later use.

The \emph{backbone} of $S_1$ ($S_2$) is the path going through all the ports of
the copies of $\Delta_m$ within $S_1$ ($S_2$) and the \emph{skeleton} of $S_1$ is
the set of nodes within the backbone of $S_1$ together with their siblings.

The \emph{nesting level} of a node $x$ of $S_1$ or $S_2$ is the
minimal number $j$ such that $x$ belongs to a context $\Delta_{j}$ or
$U_{i,j}$. We set the nesting level of the nodes that are in any copy
of a forest $s_1,\dots,s_\ell$ to~$0$. The notion of nesting level is
illustrated in Figure~\ref{fig-nlevel}. Note that this number is
equal in siblings and may only increase when going up in the
tree. When this number is low, we are near the leaves of the forest
and we need to make sure that the current positions of the game point
to isomorphic subtrees. Recall that because of the construction of the
context $U_{i,j}$, a node of nesting level $j$ always has, for all $i'
\leq \ell$, a descendant that is at the root of a copy of $T_{i',j-1}$
and, for all $j'<j$ a descendant that is at the root of a copy of
$\Delta_{j'}$.

\begin{figure}[h!]
\begin{center}
\begin{tikzpicture}

\draw (0,3.5) -- (-1.2,2.5) -- (1.2,2.5) -- (0,3.5);
\draw (0,1.5) -- (-1.2,0.5) -- (1.2,0.5) -- (0,1.5);
\draw (0,0.5) -- (-1.2,-0.5) -- (1.2,-0.5) -- (0,0.5);



\node[lab] at (-0.4,2.7) {$\Delta_m$};
\node[lab] at (-0.4,0.7) {$U_{1,m}$};
\node[lab] at (-0.4,-0.3) {$\Delta_{m}$};

\draw[decorate,decoration={zigzag}] (0,0.5) -- (0,1.5);
\draw[decorate,decoration={zigzag},dotted,thick] (0,1.5) -- (0,2.5);
\draw[decorate,decoration={zigzag}] (0,2.5) -- (0,3.5);
\draw[decorate,decoration={zigzag}] (0,-0.5) -- (0,0.5);

\draw[decorate,decoration={brace},very thick] (-1.3,-0.5) to
coordinate[sloped,above] (nm) (-1.3,3.5);
\node[align=center,anchor=east] (lj) at ($(nm) - (1.2,0.0)$) {skeleton
  nodes:\\nesting level $m$};
\draw[very thick,->] (lj) to ($(nm)-(0.2,0.0)$);

\draw[decorate,decoration={zigzag},dotted,thick] (0,-0.5) -- (0,-1.3);
\draw (0,-1.3) -- (-1.2,-2.3) -- (1.2,-2.3) -- (0,-1.3);
\draw[decorate,decoration={zigzag}] (0,-1.3) -- (0,-2.3);

\node[lab] at (-0.4,-2.1) {$\Delta_j$};

\draw[decorate,decoration={brace},very thick] (-1.3,-2.3) to
coordinate[sloped,above] (nj) (-1.3,-1.3);
\node[align=center,anchor=east] (lj) at ($(nj) - (1.2,0.0)$) {skeleton
  nodes:\\nesting level $j$};
\draw[very thick,->] (lj) to ($(nj)-(0.2,0.0)$);

\draw (0,-3.1) -- (-1.2,-4.1) -- (1.2,-4.1) -- (0,-3.1);
\draw (0,-4.1) -- (-1.2,-5.1) -- (1.2,-5.1) -- (0,-4.1);

\node[lab] at (-0.4,-3.9) {$\Delta_0$};
\node[lab] at (0.0,-4.9) {$s_1$};

\draw[decorate,decoration={zigzag},dotted,thick] (0,-2.3) -- (0,-3.1);
\draw[decorate,decoration={zigzag}] (0,-3.1) -- (0,-4.1);

\draw[decorate,decoration={brace},very thick] (-1.3,-5.1) to
coordinate[sloped,above] (n0) (-1.3,-3.1);
\node[align=center,anchor=east] (l0) at ($(n0) - (1.2,0.0)$) {skeleton
  nodes:\\nesting level $0$};
\draw[very thick,->] (l0) to ($(n0)-(0.2,0.0)$);


\node[lab] at (0.0,-5.6) {\huge $S_1$};









\draw[dotted,thick] (0.3,2.9) to [out=-60,in=90] (3.2,2.0);

\begin{scope}[xshift=3.2cm,yshift=1.0cm]
\draw (0,1.0) -- (-1.2,-0.0) -- (1.2,-0.0) -- (0,1.0);
\draw (0,-0.0) -- (-1.2,-1.0) -- (1.2,-1.0) -- (0,-0.0);
\draw (0,-1.8) -- (-1.2,-2.8) -- (1.2,-2.8) -- (0,-1.8);
\draw (0,-3.6) -- (-1.2,-4.6) -- (1.2,-4.6) -- (0,-3.6);
\draw (0,-4.6) -- (-1.2,-5.6) -- (1.2,-5.6) -- (0,-4.6);


\node[lab] at (-0.4,0.2) {$U_{i,n}$};
\node[lab] at (-0.4,-0.8) {$\Delta_n$};
\node[lab] at (-0.4,-2.6) {$\Delta_{j'}$};
\node[lab] at (-0.4,-4.4) {$\Delta_0$};
\node[lab] at (0.0,-5.4) {$s_1$};

\draw[decorate,decoration={zigzag}] (0,1.0) -- (0,0.0);
\draw[decorate,decoration={zigzag}] (0,-0.0) -- (0,-1.0);
\draw[decorate,decoration={zigzag},dotted,thick] (0,-1.0) -- (0,-1.8);
\draw[decorate,decoration={zigzag}] (0,-1.8) -- (0,-2.8);
\draw[decorate,decoration={zigzag},dotted,thick] (0,-2.8) -- (0,-3.6);
\draw[decorate,decoration={zigzag}] (0,-3.6) -- (0,-4.6);

\draw[decorate,decoration={brace},very thick] (1.3,1.0) to
coordinate[sloped,above] (nn) (1.3,-1.0);
\node[align=center,anchor=west] (ln) at ($(nn) + (1.2,0.0)$) {skeleton
  nodes:\\nesting level $n < m$};
\draw[very thick,->] (ln) to ($(nn)+(0.2,0.0)$);

\draw[decorate,decoration={brace},very thick] (1.3,-1.8) to
coordinate[sloped,above] (nj) (1.3,-2.8);
\node[align=center,anchor=west] (lj) at ($(nj) + (1.2,0.0)$) {skeleton
  nodes:\\nesting level $j'$};
\draw[very thick,->] (lj) to ($(nj)+(0.2,0.0)$);

\draw[decorate,decoration={brace},very thick] (1.3,-3.6) to
coordinate[sloped,above] (n0) (1.3,-5.6);
\node[align=center,anchor=west] (l0) at ($(n0) + (1.2,0.0)$) {skeleton
  nodes:\\nesting level $0$};
\draw[very thick,->] (l0) to ($(n0)+(0.2,0.0)$);







\end{scope}

\end{tikzpicture}
\end{center}
\caption{Illustration of the notion of nesting level in the proof of Lemma~\ref{lemma-winning}.}\label{fig-nlevel}
\end{figure}
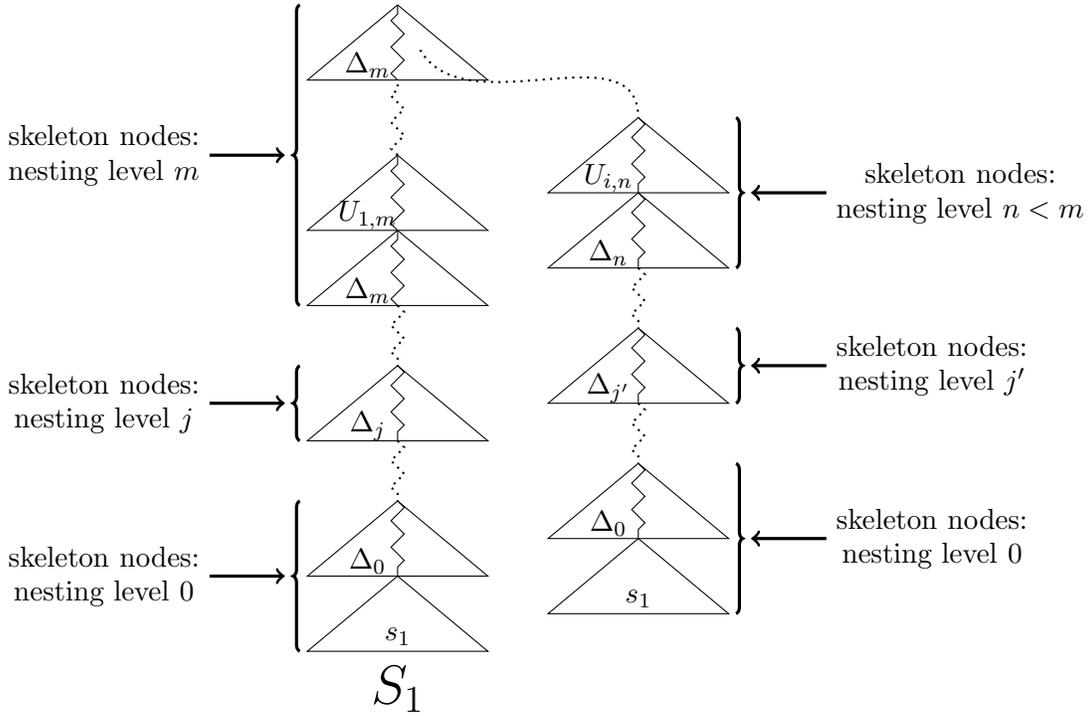

The \emph{upward number} of a node $x \in S_1$ (or $x \in S_2$) is the
number of occurrences of $\Delta_m$ in the path from $x$ to the root
of $S_1$ (see Figure~\ref{fig-unumber}). When this number is low, we
are near the roots and we need to make sure the current positions are
identical. Fortunately the two forests $S_1$ and $S_2$ are identical
up to a certain depth. This number is equal in siblings. When
moving up in the tree this number may only decrease and, by
construction, it can only decrease when traversing a copy of
$\Delta_m$ and therefore the resulting node must be on the backbone of
$S_1$ ($S_2$).

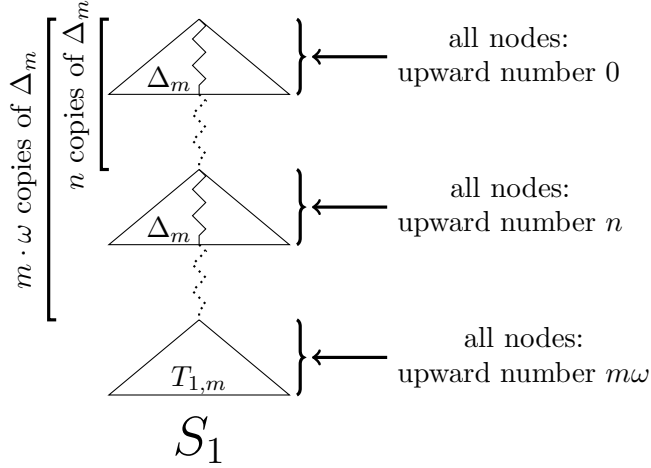
\begin{figure}[h]
\begin{center}
\begin{tikzpicture}

\draw (0,5.0) -- (-1.2,4.0) -- (1.2,4.0) -- (0,5.0);
\draw (0,3.0) -- (-1.2,2.0) -- (1.2,2.0) -- (0,3.0);
\draw (0,1.0) -- (-1.2,0.0) -- (1.2,0.0) -- (0,1.0);

\node[lab] at (-0.4,4.2) {$\Delta_m$};
\node[lab] at (-0.4,2.2) {$\Delta_m$};
\node[lab] at (-0.0,0.2) {$T_{1,m}$};

\draw[decorate,decoration={zigzag}] (0,5.0) -- (0,4.0);
\draw[decorate,decoration={zigzag},dotted,thick] (0,4.0) -- (0,3.0);
\draw[decorate,decoration={zigzag}] (0,3.0) -- (0,2.0);
\draw[decorate,decoration={zigzag},dotted,thick] (0,2.0) -- (0,1.0);

\draw[very thick] (-1.9,1.0) to (-2.0,1.0) to node[lab,sloped,above] {$m \cdot \omega$ copies of
  $\Delta_m$} (-2.0,5.0) to (-1.9,5.0);

\draw[very thick] (-1.2,3.0) to (-1.3,3.0) to node[lab,sloped,above]
{$n$ copies of $\Delta_m$} (-1.3,5.0) to (-1.2,5.0);

\draw[decorate,decoration={brace},very thick] (1.3,5.0) to
coordinate[sloped,above] (n0) (1.3,4.0);
\node[align=center,anchor=west] (l0) at ($(n0) + (1.2,0.0)$) {all
  nodes:\\upward number $0$};
\draw[very thick,->] (l0) to ($(n0)+(0.2,0.0)$);

\draw[decorate,decoration={brace},very thick] (1.3,3.0) to
coordinate[sloped,above] (nn) (1.3,2.0);
\node[align=center,anchor=west] (ln) at ($(nn) + (1.2,0.0)$) {all
  nodes:\\upward number $n$};
\draw[very thick,->] (ln) to ($(nn)+(0.2,0.0)$);

\draw[decorate,decoration={brace},very thick] (1.3,1.0) to
coordinate[sloped,above] (no) (1.3,0.0);
\node[align=center,anchor=west] (lo) at ($(no) + (1.2,0.0)$) {all
  nodes:\\upward number $m\omega$};
\draw[very thick,->] (lo) to ($(no)+(0.2,0.0)$);





\node[lab] at (0.0,-0.6) {\huge $S_1$};





\end{tikzpicture}
\end{center}
\caption{Illustration of the notion of upward number in the proof of Lemma~\ref{lemma-winning}.}\label{fig-unumber}
\end{figure}

Given a node $x \in S_1$ (or $x \in S_2$), the \emph{horizontal
  number} of $x$ is the maximal number $n \leq k$ such that for all
strict ancestors $y$ of $x$, there exists a node $z$ on the backbone
of $\Delta_m$ such that $y \psim_n^X z$. Note that this number is
equal in siblings and can only increase when going up in the
tree. Recall also that if $y$ is an \psport by pseudo-saturation $y
\psim_k^X z$ for some $z$ on the backbone of $\Delta_m$. Hence, if all
the strict ancestors of $x$ are \psports, then its horizontal number
is $k$. In particular all nodes $x$ in the skeleton of $S_1$ (or
$S_2$) have horizontal number $k$.

We now state a property \inv{n} that depends on an integer $n$ and two
nodes $x \in S_1$ and $y \in S_2$. We then show that when \inv{n+1}
holds in a game starting at $x,y$, then Duplicator can play one move
while enforcing \inv{n}. As it is easy to see that \inv{k} holds for
the roots of $S_1$ and $S_2$, this will conclude the proof of
Lemma~\ref{lemma-winning}. The inductive property \inv{n} is a
disjunction of three cases:

\begin{enumerate}
\item There exist ancestors $\hat x,\hat y$ of $x,y$ such that $\hat
  x$ and $\hat y$ have nesting level $\geq 2n+1$, upward number $\geq
  n$ and horizontal number $\geq n$. Furthermore, either 
  \begin{enumerate}
  \item $\hat x \psim^{X}_{n} \hat y$ and Duplicator has a winning
    strategy in the $n$-move game played on the \emph{subtrees at}
    $\hat x$ and $\hat y$, starting at positions $x$ and $y$, or, 
  \item Duplicator has a winning strategy in the $n$-move game played
    on the \emph{subforests of} $\hat x$ and $\hat y$, starting at
    positions $x$ and $y$. 
   \end{enumerate}

\item The nodes $x$ and $y$ are at the same relative position within
  the copy of the context $(\Delta_m)^{m\omega}$ in their respective
  forest.

\item The upward numbers of $x$ and $y$ are $\geq n$, their nesting
  levels are $\geq 2n+1$ and their horizontal number are $\geq
  n$. Moreover, we have $x \psim^X_n y$. 
\end{enumerate}

\noindent Observe that there is a factor $2$ involved in the conditions on the
nesting levels of the nodes. We need this factor in order to be able
to use Lemma~\ref{lem-aper}.

Assume we are in a situation where \inv{n+1} holds. We show how
Duplicator can play to enforce \inv{n}. The strategy depends on why
\inv{n+1} holds. In all cases we assume that Spoiler moves the pebble
from $x$ in $S_1$. The case when  Spoiler moves the pebble from $y$ in
$S_2$ is symmetrical. Recall that $n < k$, and $m=2^k > 2n$.

\subsubsection{\bf Case 1} \inv{n+1} holds because of Item~(1). 

In this case we have two nodes $\hat x$, and $\hat y$ satisfying the
properties of Item~(1).

\noindent$\bullet$ \emph{Spoiler moves from $x$ to a node that is still in the
subtree at $\hat x$.} In that case, Duplicator simply responds in the
subtree at $\hat y$ using the strategy provided by Item~(1) of
\inv{n+1} and \inv{n} holds because Item~(1) remains true.

\noindent$\bullet$ \emph{Spoiler moves to a sibling $x'$ of $\hat x$.} This can
only occur if $x = \hat x$ and $y = \hat y$. If $a)$ holds, 
by hypothesis we have $\hat x \psim^{X}_{n+1} \hat y$, therefore,
Duplicator can answer with a sibling $y'$ of $\hat y$ such that
$x'\psim^{X}_{n} y'$. Since $x'$ ($y'$) is a sibling of $\hat x$
($\hat y$), it has the same upward number, nesting level and
horizontal number as $\hat x$ ($\hat y$). Hence by hypothesis, all
those numbers satisfy Item~(3) of \inv{n} and we are done. If
$b)$ holds Duplicator simply responds in the subforest of $\hat y$
using the strategy provided by $b)$ and \inv{n} holds because Item~(1.b)
remains true.

\noindent$\bullet$ \emph{Spoiler moves to an ancestor $x'$ of $\hat x$.}

Assume first that the upward number of $x'$ is $< n$. Recall that
by hypothesis the upward number of $\hat x$ is $\geq n+1$. Hence $x'$
is on the backbone of $S_1$. As, by hypothesis, $\hat y$ has also an
upward number $\geq (n+1)$, the copy $y'$ of $x'$ in the other forest
is also an ancestor of $\hat y$. Duplicator then selects $y'$ as her
answer, satisfying Item~(2) of \inv{n}.

Assume now that the upward number of $x'$ is $\geq n$. Since the
horizontal number of $\hat x$ is $\geq n+1$, there exists a node $z$
on the skeleton of $\Delta_m$ such that $x' \psim^{X}_{n+1} z$. By
hypothesis the upward number of $\hat y$ is $\geq (n+1)$.  Hence we
can find above $\hat y$ an occurrence of $\Delta_m$ of upward number
$\geq n$. Duplicator answers by the copy of $z$ in this occurrence of
$\Delta_m$. By construction, $x',y'$ have upward numbers $\geq
n$. Moreover $x'$ ($y'$) is an ancestor of $\hat x$ ($\hat y$) and
therefore has a bigger nesting level. As by hypothesis the latter was
$\geq 2(n+1)+1$, $x'$ and $y'$ have nesting level $\geq 2n+1$. For the 
same reason the horizontal number of $x'$ is larger than the one of
$\hat x$ and is therefore $\geq n$. It follows that Item~(3) of
\inv{n} is satisfied.

\subsubsection{\bf Case 2} \inv{n+1} holds because of Item~(2). 

In this case $x$ and $y$ are at the same relative position within the
copy of the context $(\Delta_m)^{m\omega}$ in their respective forest.

\noindent$\bullet$ \emph{Spoiler moves to a node $x'$ that remains within the
context $(\Delta_m)^{m\omega}$}. Then Duplicator copies this move in
the other forest and \inv{n} is satisfied because of Item~(2).

\noindent$\bullet$ \emph{Spoiler moves to a node $x'$ that is within the
subforest $T_{1,m}$ of $S_1$.} In particular this means that $x,y$ are
on the backbones of $S_1,S_2$. By construction the subforest $T_{2,m}$
of $S_2$ contains at least one copy of the forest $T_{1,m-1}$ that can
be chosen such that all nodes occurring on the path to this copy are
\psports. It follows from Lemma~\ref{lem-aper} that there exists a
node $y'$ in $T_{1,m-1}$ such that Duplicator has a winning strategy
in the $n$-move game played on $T_{1,m}$ and $T_{1,m-1}$, starting at
positions $x'$ and $y'$. This is Duplicator's answer.

Set $\hat x$ and $\hat y$ as the roots of the copies $T_{1,m}$ and
$T_{1,m-1}$ in $S_1$ and $S_2$. Observe that by construction, $\hat x$
and $\hat y$ have nesting level $\geq m-1 \geq 2n+1$, upward number
$m\omega \geq n$ and horizontal number $k \geq n$. Moreover, by choice
of $y'$, Duplicator has a winning strategy in the $n$-move game played
on the subforest of $\hat x$ and $\hat y$, starting at positions $x'$
and $y'$. We conclude that \inv{n} holds because of Item~(1.b).

\subsubsection{\bf Case 3} \inv{n+1} holds because of Item~(3).

In this case the upward numbers of $x$ and $y$ are $\geq n+1$, their
nesting levels are $\geq 2(n+1)+1$ and their horizontal number are
$\geq n+1$. Moreover we have $x \psim^{X}_{n+1} y$.

\noindent$\bullet$ \emph{Spoiler moves horizontally.} Then Duplicator moves
according to the winning strategy provided by $x \psim_{n+1}^{X} y$
and Item~(3) of \inv{n} holds.

\noindent$\bullet$ \emph{Spoiler moves to an ancestor $x'$ of $x$.}

If the upward number of $x'$ is $< n$, as the upward number of $x\geq
n+1$, $x'$ must be on the backbone of $S_1$. Duplicator answers by the
copy $y'$ of $x'$ in the other forest, satisfying Item~(2) of
\inv{n}. Note that the upward number of $y$ is $\geq n+1$. Therefore
$y'$, having an upward number $< n$ is indeed an ancestor of $y$.

If the upward number of $x'$ is $\geq n$. By hypothesis, the
horizontal number of $x$ is $\geq n+1$, therefore, there exists a node
$z$ in the skeleton of $\Delta_m$ such that $x' \psim_{n+1}^X z$. By
hypothesis the upward number of $y$ is $\geq n+1$. Hence we can find
above $y$ an occurrence of $\Delta_m$ of upward number $n$. Duplicator
answers by the copy $y'$ of $z$ in this occurrence of $\Delta_m$. By
construction we have $x' \psim_n^X y'$. By hypothesis, for $x'$, and
by construction, for $y'$, both have upward number $\geq n$. As this
is a move up, the nesting level increases and therefore remains $\geq
2n+1$.  Hence Item~(3) of \inv{n} is satisfied.

\noindent$\bullet$ \emph{Spoiler moves down to some node $x'$.} Note that this
means that $x$ is a port-node and therefore either an \psrela or an
\psport. Moreover, since  $x \psim_{n+1}^{X} y$, this is also true of
$y$. Set $T_{i_x,j_x}$ and $T_{i_y,j_y}$ as the subforests below $x$
and $y$. Observe that by hypothesis $j_x,j_y \geq 2(n+1)$. We
distinguish two cases:

Assume first that $x$ and $y$ are both \psrelas and $i_x=i_y$. Set
$\hat x$ as $x$ and $\hat y$ as $y$. Using Lemma~\ref{lem-aper}, we
get that Duplicator wins the $(n+1)$-moves game played on the subtrees
at $\hat x$ and $\hat y$. This gives Duplicator's answer $y'$ to $x'$
and Item~(1.a) of \inv{n} holds.

Otherwise, we use pseudo-saturation to prove that there exists a
node $z$ on the backbone of $\Delta_m$ such that $z \psim_n^{X} x
\psim_n^{X} y$ and provide an answer satisfying Item~(1.b) of \inv{n}
for Duplicator.

When either $x$ or $y$ is an \psport node, the existence of $z$ is
immediate from pseudo-saturation and transitivity of $\psim_n^{X}$.
In the only remaining case, $x$ and $y$ are both \psrelas and $i_x
\neq i_y$. Therefore, Spoiler is allowed to use the safety move in the
pseudo-\relaxedX game $x \psim_{n+1}^{X} y$, and we get a \psport $z'$
in the \shal of $y$ such that $z' \psim_{n}^{X} x$. By
pseudo-saturation we then obtain $z$ on the backbone of $\Delta_m$
such that $z \psim_{k}^{X} z'$. By transitivity, we get 
that $z \psim_n^{X} x \psim_n^{X} y$.

We can now describe Duplicator's answer. By hypothesis, $x'$ belongs
to $T_{i_x,j_x}$ and by definition $T_{i_y,j_y}$ contains at least one
copy of the forest $T_{i_x,j_y-1}$ that can be chosen such that all
nodes occurring on the path to this copy are \psports. Since $j_x,j_y
\geq 2(n+1)$, it follows from Lemma~\ref{lem-aper} that there exists a
node $y'$ in $T_{i_x,j_y-1}$ such that Duplicator has a winning
strategy in the $n$-move game played on $T_{i_x,j_x}$ and
$T_{i_x,j_y-1}$, starting at positions $x'$ and $y'$. This is
Duplicator's answer. 

Set $\hat x$ and $\hat y$ as the roots of the copies $T_{i_x,j_x}$ and
$T_{i_x,j_x-1}$ in $S_1$ and $S_2$. Observe that by definition, $\hat x$
and $\hat y$ have nesting level $j_x,j_y-1 \geq 2n+1$ and upward
numbers $\geq n+1 > n$ (the same as $x,y$). Moreover, all ancestors of
$\hat x, \hat y$ are either ancestors of $x,y$, \psports or $x,y$
themselves. Since we proved that there exists $z$ on the backbone of 
$\Delta_m$ such that $z \psim_n^{X} x \psim_n^{X} y$, it follows that
$\hat x, \hat y$ have horizontal number $\geq n$. Finally, by choice
of $y'$, Duplicator has a winning strategy in the $n$-move game played 
on the subforests of $\hat x$ and $\hat y$, starting at positions $x'$
and $y'$. We conclude that \inv{n} holds because of Item~(1.b).

This concludes the proof of Lemma~\ref{lemma-winning} and therefore
the proof of Proposition~\ref{prop-saturation}.

\section{Sufficiency of the properties}\label{main-proof}

For this section we fix a regular forest language $L$ recognized by a
morphism $\alpha: \A^\mydelta \rightarrow (H,V)$ into a 
finite forest algebra $(H,V)$. Assume that $H$ and $V$ satisfy
Identities~\eqref{eqh} and~\eqref{eqv} and that the leaf completion of
$\alpha$ is closed under saturation. We prove that any language
recognized by $\alpha$, including $L$, is definable in \FOd,
concluding the proof of Theorem~\ref{th-efmax}.

Recall that given a forest $s$ (a context $p$) we refer to its image
by $\alpha$ as the \emph{forest type} of $s$ (the \emph{context type}
of $p$). In view of Lemma~\ref{lem:leafsurj}, we assume without loss
of generality that $\alpha$ itself is leaf surjective and closed under
saturation. By definition, this implies in particular that for each $h
\in H$ there exists a tree consisting of a single node whose forest
type is $h$.

As mentioned earlier, we will often manipulate \shals modulo \mequivk
for some fixed integer $k$. We start by defining a suitable $k$. Given
a \shal $q$ and a forest $s$ we denote by $q[\bar s]$ the forest
constructed from $q$ by placing $s$ at each port of $q$.

\begin{lemma} \label{lem-choicek}
There exists a number $k'$ such that for all  $k\geq k'$, for all
\shals $p \mequivk p'$ and for all forests $s$, $p[\bar s]$ and
$p'[\bar s]$ have the same forest type.
\end{lemma}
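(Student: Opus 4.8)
The statement asserts a uniform bound: there is $k'$ so that for every $k \geq k'$, whenever two \shals $p$ and $p'$ are $k$-equivalent (as strings over \As), the forests $p[\bar s]$ and $p'[\bar s]$ obtained by plugging a fixed forest $s$ into every port have the same forest type under $\alpha$. The plan is to reduce this to the finiteness of the syntactic congruence of a suitable regular language of strings over \As, and then to the fact that \mequivk refines that congruence once $k$ is large enough.

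First I would fix an arbitrary forest $s$ and observe that the map sending a \shal $q$ (over \As) to $\alpha(q[\bar s]) \in H$ factors through the evaluation of $q$ in a finite structure. Concretely, turn each letter of \As into a context over the forest algebra: a letter $a \in A$ contributes the constant forest $\alpha(a)$, a letter $b(a)$ contributes $\alpha(b(a))$, and a letter $b(\hole)$ contributes the context $z \mapsto \alpha(b)\,(z)$ evaluated at $\alpha(s)$, i.e.\ the forest type $\alpha(b)\alpha(s)$. Since $H$ is finite, the function $q \mapsto \alpha(q[\bar s])$ is computed by reading $q$ left to right while accumulating a partial forest type in $H$ via the $+$ operation and the (now constant, $s$-instantiated) contributions of the port-letters; this is exactly a morphism from \As$^+$ into a finite monoid (or the transition monoid of a small deterministic automaton over \As). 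Hence for each $s$ there is a regular string language — or rather a finite-index congruence $\equiv_s$ on \As$^+$ — such that $q \equiv_s q'$ implies $\alpha(q[\bar s]) = \alpha(q'[\bar s])$.

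The second step is to make the bound uniform in $s$. There are only finitely many possible ``behaviours'': the contribution of a port-letter $b(\hole)$ to the accumulator depends on $s$ only through $\alpha(b)\alpha(s) \in H$, and $H$ is finite, so as $s$ ranges over all forests only finitely many distinct congruences $\equiv_s$ arise. Let $\equiv$ be their common refinement; it is still of finite index, hence definable in \FOdw, hence (by Lemma~\ref{lemma-ef-game}, applied to strings over \As as noted after that lemma, or directly by the standard fact that \mequivk has finite index and separates \mequivk-inequivalent strings with a depth-$k$ formula) there exists $k'$ such that \mequiv{k'} refines $\equiv$. By the last item of Lemma~\ref{lemma-ef-game} — more precisely its string analogue — if $k \geq k'$ and $p \mequivk p'$ then $p$ and $p'$ satisfy the same \FOdw sentences of navigational depth $\le k'$, so they lie in the same class of $\equiv$, so $p \equiv_s p'$ for every $s$, and therefore $\alpha(p[\bar s]) = \alpha(p'[\bar s])$ as required.

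The only mildly delicate point is the bookkeeping in the first step: one must check that $q \mapsto \alpha(q[\bar s])$ really is a one-pass left-to-right computation with finite state, i.e.\ that no information about ports ``to the right'' is needed when processing a letter. This holds because the $+$ operation of $H$ lets us accumulate $\alpha$ of a prefix of the forest $q[\bar s]$ independently, and each port-letter's contribution, once $s$ is fixed, is a single element of $H$ determined by the letter alone. I expect this to be the main obstacle only in the sense of requiring care; conceptually it is routine, and the rest is a direct appeal to finiteness of index and to Lemma~\ref{lemma-ef-game}.
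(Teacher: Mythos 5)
There is a genuine gap at the step ``it is still of finite index, hence definable in \FOdw''. Finite index of a congruence on $\As^+$ only gives regularity of its classes, not definability in \FOdw; and the parenthetical fallback (that $\mequiv{k}$ has finite index and separates inequivalent strings) is equally insufficient, since $\bigcap_k \mequiv{k}$ is exactly \FOdw-equivalence, which does not refine an arbitrary finite-index congruence. In fact the lemma is simply false for a general morphism: if $(H,+)$ were, say, $\mathbb{Z}/2\mathbb{Z}$, then $p = b(\hole)$ repeated $2^k$ times and $p'$ the same letter repeated $2^k+1$ times satisfy $p \mequiv{k} p'$ yet $p[\bar s]$ and $p'[\bar s]$ would have different types. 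So some hypothesis beyond finiteness must enter, and your argument never invokes one.

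The missing ingredient is Identity~\eqref{eqh}, which is a standing assumption in Section~\ref{main-proof}. Your reduction of $q \mapsto \alpha(q[\bar s])$ to a one-pass computation is fine, and the resulting finite semigroup is (a divisor of) $(H,+)$; but to conclude that its inverse images are \FOdw-definable of bounded rank you must appeal to Theorem~\ref{thm-word-fod}, using that $H$ satisfies Identity~\eqref{eqh}. That is exactly what the paper's proof does: it considers the evaluation morphism $\beta : H^+ \to H$, uses Identity~\eqref{eqh} and Theorem~\ref{thm-word-fod} to get, for each $h \in H$, a \FOdw formula $\varphi_h$ defining $\beta^{-1}(h)$, and sets $k'$ to the maximal rank of these formulas. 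With that repair (and noting, as you do, that the uniformity in $s$ is automatic because only the element $\alpha(b)\alpha(s) \in H$ matters for each port letter), your argument becomes essentially the paper's.
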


\begin{proof}
This is a consequence of Theorem~\ref{thm-word-fod} and the fact that
$H$ satisfies Identity~\eqref{eqh}. Consider strings over $H$ as
alphabet and the natural morphism $\beta : H^{+} \rightarrow H$. Since
$H$ satisfies Identity~\eqref{eqh}, it follows from
Theorem~\ref{thm-word-fod} that for every $h \in H$, $\beta^{-1}(h)$
is definable using a formula of $\varphi_h$ of \FOdw. We choose $k'$
as the maximal rank of all these formulas.

Let $k \geq k'$ and take $p \mequivk p'$ and $s$ some forest. Let
$t_1,\dots,t_n$ be the sequence of trees occurring in $p[\bar s]$ and
$t'_1,\dots,t'_{n'}$ be the sequence of trees 
occurring in $p'[\bar s]$.  For all $i$ let $h_i = \alpha(t_i)$ and
$h'_i=\alpha(t'_i)$. As $p\mequiv{k} p'$ the strings $h_1\dots h_n$
and $h'_1\dots h'_{n'}$ satisfy the same formulas of \FOdw of rank
$k'$ over the alphabet $H$.  Let $h = \beta(h_1\dots h_n)$, by our
choice of $k'$ it follows that $h'_1\dots h'_{n'} \models
\varphi_h$. Hence $\beta(h_1\dots h_n) = h = \beta(h'_1\dots
h'_{n'})$. Therefore $\alpha(p[\bar s])=\alpha(p'[\bar s])$.
\end{proof} 

As $\alpha$ is closed under saturation, there is an integer $k''$ such
that $\alpha$ is closed under $k''$-saturation. We set $k$ as the
maximum of $k'$ as given by Lemma~\ref{lem-choicek} and
$k''$. By Lemma~\ref{lem-sat-k}, $\alpha$ remains closed
under $k$-saturation. Recall that a set $P$ of \shals is $k$-definable if
it is a union of equivalence classes of $\mequivk$.

Recall that $V^{\one}$ is the monoid obtained from $V$ by adding a
neutral element $\one_V$.
For each $h \in H$, $v \in V^{\one}$ and each 
set $P$ of \shals let
\[
L^P_{v,h}=\{t ~|~ v\alpha(t)=h \text{ and $t$ is $P$-valid}\}
\]
Our goal in this section is to show that:

\begin{proposition}\label{main-prop}
For all $h \in H$, all $v\in V^{\one}$ and all 
sets $P$ of \shals, there exists a language definable in \FOd that
agrees with $L^P_{v,h}$ on $P$-valid forests.
\end{proposition}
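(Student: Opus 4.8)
The plan is to prove the statement by induction, after two easy simplifications. First, $t\in L^P_{v,h}$ holds exactly when $t$ is $P$-valid and $\alpha(t)$ lies in the finite set $\{h'\mid vh'=h\}$, so $L^P_{v,h}$ is a finite union of the languages $L^P_{\one_V,h'}$ and it suffices to treat $v=\one_V$, i.e. to \FOd-define the property ``$\alpha(t)=h$'' on $P$-valid forests. Second, replacing $P$ by a union of $\mequivk$-classes only enlarges the set of $P$-valid forests and never changes the forest type of a $P$-valid forest; hence a language agreeing with the enlarged $L^{P}_{\one_V,h}$ on the enlarged class of valid forests \emph{a fortiori} agrees with the original one, so we may assume $P$ is $k$-definable and then, by Claim~\ref{lemma-p-valid}, the set of $P$-valid forests is itself \FOd-definable. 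The induction is run on a well-founded measure built from $(X,P)$, where $X\subseteq H$ records the forest types for which no \FOd formula is yet available and the complexity of $P$ is measured by the number of branching patterns it contains together with the index of $P$-reachability; the measure strictly decreases at every step. Throughout, any subforest whose type lies outside $X$ may be replaced by a single leaf of the same type (legitimate, since those $\alpha^{-1}(h)$ are already \FOd-definable), all such replacements being carried out through the Antichain Composition Lemma (Lemma~\ref{lemma-ACL}).

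\emph{Base case: $P$ is not branching.} Then every $P$-valid forest is thin: in the subforest of any node at most one subtree is not a single leaf or a node over a leaf, so the forest has a single spine and is coded by the finite sequence of \shals met while descending that spine, ending with an arity-$0$ \shal. Coding each \shal by its $\mequivk$-class makes this a string over a finite alphabet, and by Lemma~\ref{lem-choicek} the forest type is obtained by a left action of $V^{\one}$ along that string followed by one element of $H$. The resulting string language ``the code evaluates to $h$'' is recognised by a finite semigroup dividing powers of $V^{\one}$ and of $H$; these satisfy Identities~\eqref{eqv} and~\eqref{eqh} (and $V^{\one}$ still satisfies~\eqref{eqv}, since that identity forces aperiodicity), so Theorem~\ref{thm-word-fod} provides an \FOdw formula, which translates back into an \FOd formula agreeing with $L^{P}_{\one_V,h}$ on $P$-valid forests. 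The degenerate cases, such as $P=\emptyset$ where the only $P$-valid forests are single-node trees, are immediate since $\alpha$ is leaf surjective.

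\emph{Inductive step: $P$ is branching.} If $P$ is not reduced, the $P$-valid types that are not $P$-reachable from $H_P$ form a proper, downward-closed family; peeling off the subforests realising such a ``small'' type via Lemma~\ref{lemma-ACL} (antichain formula: the highest nodes whose subforest has small type; replaced languages treated first by the induction on $X$) leaves a $P$-valid forest in which small types occur only at leaves, lowering the measure. So assume $P$ branching and reduced, with $H_P$ the set of all $P$-valid forest types, and fix $k$ with $\alpha$ closed under $k$-saturation (Lemmas~\ref{lem-sat-k} and~\ref{lem-choicek}). The crucial use of saturation is: if, inside a $P$-valid forest, the context $C$ from the root to a node $x$ is long enough and its backbone already matches — up to $\pequiv{k}$ — every port-node pattern occurring in $P$, then a pumping argument in the finite semigroup $V$ extracts from $\alpha(C)$ an idempotent factor that is the image of a \saturated{(P,k)} context, so that (because $H_P$ is the maximal reachability class) $\alpha(C)\,h$ is independent of $h\in H_P$; hence the subforest below such an $x$, whose type lies in $H_P$, may be replaced by one fixed leaf of a type in $H_P$ without changing the forest's type. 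We then apply Lemma~\ref{lemma-ACL} with $\varphi(x)$ asserting that $x$ has sufficiently many strict ancestors, that every port-node pattern of $P$ is $\pequiv{k}$-matched by a strict ancestor of $x$, and that no strict ancestor of $x$ satisfies both — an antichain property expressible in \FOd by Claim~\ref{lemma-types-fod} and the finite index of $\pequiv{k}$ — and a single replacement leaf. The transformed forest contains no such deep saturated block, so its spine of \shals misses a port-node pattern class of $P$; after recoding it is $P'$-valid for a $P'$ strictly lower in the measure, and the induction hypothesis applies to it and to the trivial pieces.

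\emph{Main obstacle.} The difficulty is entirely in the branching step: one must choose the antichain formula $\varphi$, the replacement languages $L_i$ and the target language $K$ so that simultaneously (i) on $P$-valid forests, ``$\alpha(t)=h$'' is equivalent to the transformed forest lying in $K$ — which forces saturation to be invoked precisely in the form in which it collapses the types of the cut subforests; (ii) $\varphi$ is a genuine antichain \FOd formula, resting on the finite index of $\pequiv{k}$ and on Claim~\ref{lemma-types-fod}; and (iii) the transformed forest, together with the pieces, lies strictly lower in the inductive order, \emph{even though} filling ports by leaves creates new \shals not in $P$. Reconciling these — choosing the replacement labels so that the new \shals do not spoil the decrease of the measure, and managing the interplay between shrinking $X$, thinning $P$ and lowering the $P$-reachability index — is the technical heart of the argument; it plays, for horizontal navigation, the role of the inductions on alphabet size used for \EF in~\cite{mikolaj}.
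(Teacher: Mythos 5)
Your proposal departs from the paper at the very first step, and that departure opens a gap that the rest of the plan cannot close. Reducing to $v=\one_V$ is logically sound as a preprocessing, but it erases the ambient context type, which the paper deliberately carries along as part of the statement: the third induction parameter is the $P$-depth of $v$, and the whole of the ``top-down'' case (the paper's Case~2, where $P$ is reduced but no \saturated{(P,k)} context $P$-preserves $v$) runs induction on it. With $v$ frozen at $\one_V$ this parameter is constant and that case has to be handled some other way; your measure --- number of branching patterns plus ``the index of $P$-reachability'' --- does not recover it, since the reachability index is a property of $(P,v)$ and you have discarded $v$.

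The replacement you propose for this case does not work. You cut at topmost nodes $x$ such that the context $C$ from the root to $x$ is ``long enough'' and its backbone $\pequiv{k}$-matches every port-node pattern of $P$, and you claim that a pumping argument yields an idempotent \saturated{(P,k)} factor inside $\alpha(C)$, hence $\alpha(C)h$ is independent of $h\in H_P$ and the subforest below $x$ can be replaced by a fixed leaf. This is false: saturation only gives $\alpha(\Delta)^\omega h_1 = \alpha(\Delta)^\omega h_2$, and a backbone that matches each port-node pattern once (which is all that the ``topmost'' antichain can enforce) need not factor through the $\omega$-power of any saturated context type. More fundamentally, ``the context from the root to $x$ has an $\alpha$-image that factors through $\alpha(\Delta)^\omega$ for some saturated $\Delta$'' involves evaluating $\alpha$ over an unbounded vertical segment and is not an \FOd-definable property of $x$, so it cannot serve as the antichain formula. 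The paper circumvents this precisely by \emph{not} looking for a saturated ancestry. Instead, in Case~2 it uses Identity~\eqref{eqv} (Lemma~\ref{lem:delchoice}) to extract a port-node $(p,x)$ that is $P$-bad for $v$, and cuts at the topmost nodes $\pequiv{k}^X$-equivalent to $(p,x)$: below such a cut the context type has strictly larger $P$-depth (third parameter increases toward its bound), and above it the offending port-nodes become \Xnodes, strictly lowering the index of $P$ (second parameter). There is also a base case that your outline lacks: when $P$ is reduced \emph{and} some saturated context does $P$-preserve $v$, saturation gives $vh_1=vh_2$ directly and $L^P_{v,h}$ is trivially all-or-nothing on $P$-valid forests. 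Your base case (not branching) and Case~1 (not reduced) are essentially in line with the paper's; the gap is concentrated in your handling of the reduced branching case, where both the induction parameter and the algebraic lemma that feeds it are missing.
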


Theorem~\ref{th-efmax} is a direct consequence of
Proposition~\ref{main-prop}. Let $L'$ be the union of all definable languages
resulting from applying Proposition~\ref{main-prop} to all $L^P_{v,h}$
where $h \in \alpha(L)$, $v = 1_V$ and $P$ is the set of all \shals. By
definition $L'$ is definable in \FOd and agrees with $L$ on all $P$-valid
forests. Hence $L = L' \cup \{a \in A \mid a \in L\}$ which is
definable in \FOd.

The remainder of this section is devoted to the proof of
Proposition~\ref{main-prop}. Assume that $v,h$ and $P$ are fixed as in the
statement of the proposition, we prove that there exists a definable language
that agrees on $L_{v,h}^P$ on $P$-valid forests. We begin by considering the
special case when $P$ is not branching (i.e. contains only \shals of arity $0$
or $1$). In that case we conclude directly by
applying Theorem~\ref{thm-word-fod}.

\subsection{Special Case: $P$ is not branching} In this case we treat
our forests as strings and use the known results on strings. Since all
\shals in $P$ have arity $0$ or $1$, any $P$-valid forest $t$ is of
the form:  
\[
c_{1} \cdots c_{k}s 
\]
where $k$ is possibly $0$ and the $c_{1},\cdots,c_{k}$ are $P$-valid
\shals of arity $1$ and $s$ a $P$-valid \shal of arity $0$. For each
$u\in V^{\one}$ and $g\in H$, consider the languages:
\begin{equation*}
\begin{split}
M_{u,g} = \set{t ~~~|~~~  t=c_{1} \cdots c_{k} s \text{ is } P\text{-valid},
\alpha(c_{1}\cdots c_{k}) = u,  \text{ and } \alpha(s)=g}
\end{split}
\end{equation*}
Notice that $L^P_{v,h}$ is the union of those languages where
$vug=h$. We show that for any $u$ and $g$, there exists a language definable in
\FOd that agrees with $M_{u,g}$ on $P$-valid forests. This will conclude this case.

By definition \shals of arity $1$ are contexts. Let $\set{v_{1},...,v_{n}}$ be
the context types that are images of \shals of arity $1$ in $P$. 

Let $P'$ be the set of \shals from $P$ of arity 1. Let $p,p' \in P'$,
by Lemma~\ref{lem-choicek} if $p \mequivk p'$ for all
forests $s$, $p[s]$ and $p'[s]$ have the same forest type. Hence, $p$
and $p'$ have the same context type. This means that for all $v_i$ the
set of \shals of context type $v_i$ is $k$-definable. Therefore, by
Claim~\ref{lemma-p-valid}, there is a formula $\theta_{v_i}(x)$ of
\FOd testing whether the \shal of $x$ has $v_i$ as forest type.

Let $\Gamma = \set{d_{1},...,d_{n}}$ be an alphabet and define a morphism
$\beta : \Gamma^{*} \rightarrow V$ by $\beta(d_{i}) = v_{i}$. Since $V$
satisfies Identity~(\ref{eqv}), for each $u \in V$ there is a \FOdw formula
$\varphi_{u}$ such that the strings of $\Gamma^{*}$ satisfying $\varphi_{u}$
are the strings of type $u$ under $\beta$. From $\varphi_{u}$ we construct a
formula $\Psi_{u}$ of \FOd defining all $P'$-valid contexts having $u$ as
context type. This is done by replacing in $\varphi_u$ all atomic formulas
$P_{d_{i}}(x)$ with $\theta_{v_i}(x)$. We can also easily define in \FOd the
set of \shals of arity $0$ such that $\alpha(s)=g$. After combining this last
formula with $\Psi_u$ we get the desired language definable in \FOd and
agreeing with $M_{u,g}$ on $P$-valid forests.

\medskip

In the remainder of the proof we assume that $P$ is branching, i.e. it
contains one \shal of arity at least~2. Recall, that by
Claim~\ref{claim-maximal}, it follows that there exists a unique
maximal $P$-reachable class $H_P$. The rest of the proof is by
induction on three parameters that we now define.

\subsection{Induction Parameters} The first and most important of our
induction parameters is the size of the set of $P$-valid forest
types. We denote this set by $X$. Observe that by definition $H_P
\subseteq X$.

\smallskip

Our second parameter is an index defined on sets $P$ of \shals. During
the proof we will construct from $P$ new sets $P'$ by replacing some
of their port-nodes with \Xnodes. Our definition ensures that the
index of $P'$ will be smaller than the index of $P$, hence guarantees
termination of the induction. It is based on following preorder on
\shals called simulation modulo~$X$.

Given two \shals $p$ and $p'$, we say that \emph{$p$ simulates $p'$
  modulo $X$} if $p'$ is obtained from $p$ by replacing some of its
port-nodes $b(\hole)$ by an \Xnode $b(a)$ with the same inner
label. Observe that simulation modulo $X$ is a partial order.

For each \shal $p$ its \emph{$X$-number} is the number of non
$\pequiv{k+2}^X$-equivalent \shals $q$ (not necessarily in $P$) that
can be simulated modulo $X$ by $p$. For each set $P$ of \shals the
\emph{$n$-index of $P$} is the number of non $\pequiv{k+2}^X$-equivalent
\shals $p \in P$ of $X$-number $n$. Our second induction parameter,
called the \emph{index of $P$}, is the sequence of its $n$-indexes
ordered by decreasing $n$.

\smallskip

The third parameter is based on $v$. Consider the preorder on context
types defined by the quotient of the $P$-reachability relation by the
$P$-equivalence relation. The \emph{$P$-depth} of a context type $v$
is the maximal length of a path in this preorder from the empty context~to~$v$.

\smallskip

We prove Proposition~\ref{main-prop} by induction on the following
three parameters, given below in their order of importance:

\begin{center}
\begin{enumerate}[label=(\roman*)]
 \item $|X|$
 \item the index of $P$
 \item the $P$-depth of $v$
 \end{enumerate}
\end{center}

We distinguish three cases: a base case and two cases in which we will 
use antichain composition and induction. We say that a context type
$u$ \emph{$P$-preserves} $v$ if $v$ is $P$-reachable from $vu$. A
context $c$ \emph{$P$-preserves} $v$ if its context-type $P$-preserves
$v$.

\subsection{Base Case: $P$ is reduced and $v$ is $P$-preserved by a
  \saturated{(P,k)} context $\Delta$}  We use saturation to prove that
$v$ is constant over the set $X = H_P$ of $P$-valid forest types,
i.e. for any $P$-valid $h_1,h_2 \in H$, $vh_1 = vh_2$. Since all forests in $L^P_{v,h}$
are $P$-valid, it follows that $L^P_{v,h}$ is either empty or the
language of all $P$-valid forests. The desired definable language is
therefore either the empty language of the language of all forests.

Since $\Delta$ $P$-preserves $v$, there exists a $P$-valid context $c$
such that $v\alpha(\Delta c) = v$. It follows that $v = v
\alpha(\Delta c)^{\omega}$. Moreover, observe that since $\Delta$ is
$P$-saturated and $c$ is $P$-valid, $\Delta c$ is $P$-saturated as
well. It then follows from saturation that

\[
vh_1 = v \alpha(\Delta c)^{\omega}h_1 = v \alpha(\Delta c)^{\omega}h_2
= vh_2
\]

This terminates the proof of the base case. We now consider two cases
in which we conclude by induction.

\subsection{Case 1: $P$ is not reduced, Bottom-Up Induction.}

By definition, since $P$ is not reduced there exists a $P$-valid
forest type $g \in X \setminus H_P$. We choose $g$ to be minimal with
respect to $P$-reachability, i.e., any $P$-valid forest type $g'$ is
either $P$-equivalent to $g$ or $g$ is not $P$-reachable from
$g'$. Let $G$ be the set of $P$-valid forest types that are
$P$-equivalent $g$. Observe that by minimality of $g$, in any
$P$-valid forest $s$ whose type is in $G$, all subforests of $s$ that are not
single leaves have a forest type in $G$ (recall that a subforest
consists of \emph{all} the children of some node). Moreover, by choice of $g$,
$G \cap H_P = \emptyset$ and all $g' \in X$ such that $G$ is
$P$-reachable from $g'$ are in $G$. We obtain the desired definable language
for $L^P_{v,h}$ via the Antichain Composition Lemma using languages that we prove to be
definable in \FOd by induction on $|X|$. Correctness of the construction relies on both
Equation~\eqref{eqh} and Equation~\eqref{eqv}. 
 
\medskip
\noindent {\bf Outline.} Our agenda is as follows. We construct from $P$ a
\kdef set $P'$ and prove that a $P$-valid forest has a type in $G$ iff it
contains no \shal of $P'$. Since \kdef sets of \shals can be expressed in \FOd,
we use $P'$ to define an antichain formula $\varphi$ which selects all
positions whose subforest contains a \shal in $P'$ (i.e. has a forest type
outside $G$) but have no descendant with that property (i.e. all descendants of
the position have a subforest of type in $G$). This formula splits $P$-valid
forests into two parts: a lower part and upper part. In the lower part all
subforests have type in $G \subsetneq X$ and in the upper part the set of valid
forest types is included in $X \setminus G$. In both cases we get definable
languages by induction on $|X|$ we glue them back together using the Antichain
Composition Lemma. This situation is illustrated in
Figure~\ref{figure-composition}.

\begin{figure}[h]
\begin{center}
\begin{tikzpicture}

\draw (0,0) -- (5,0) -- (2.5,3) -- (0,0);

\draw (0.7,0) -- (1.7,0) -- (1.2,1.3) -- (0.7,0);

\draw (2,0) -- (3.5,0) -- (2.75,1.7) -- (2,0);

\draw (3.55,0) -- (4.3,0) -- (3.925,1) -- (3.55,0);

\node[dot] (d1) at (1.2,1.3) {};

\node[dot] (d2) at (2.75,1.7) {};

\node[dot] (d3) at (3.925,1) {};

\node[bag] (labd) at (2,-1) {$(P \setminus P')$-valid, has type
in $G$ by Lemma~\ref{lemma-LG}};

\node[bag] (labu) at (4.5,3.8) {All forest types in $X \setminus G$};

\node[bag] (labf) at (5.5,2.0) {$\varphi$ holds};

\draw[arr] (labu) -> (2.5,2.5);

\draw[arr] (labd) -> (1.2,0.65);

\draw[arr] (labd) -> (2.75,0.85);

\draw[arr] (labd) -> (3.925,0.5);

\draw[arr] (labf) to [bend right] (1.2,1.3);

\draw[arr] (labf) to (2.75,1.7);

\draw[arr] (labf) to [bend left](3.925,1.0);

\end{tikzpicture}
\end{center}
\caption{Illustration of the Antichain Composition Lemma for
  Case~1. The marked are the lowest nodes whose subforest contains a
  \shal in $P'$}\label{figure-composition}
\end{figure}
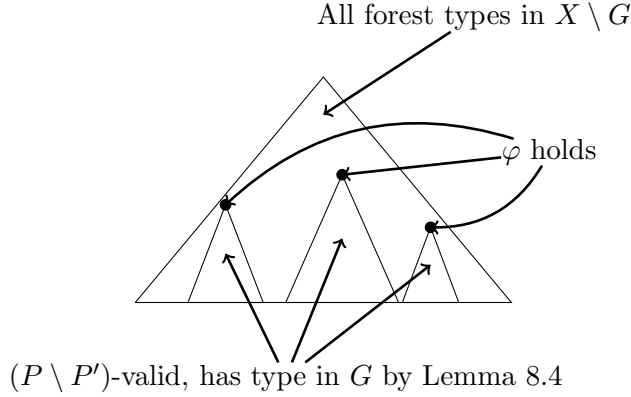

\medskip
\noindent
{\bf Definition of $P'$.} Let $s$ be some arbitrarily chosen
$P$-valid forest such that $\alpha(s) \in G$. We set
\[
P' = \set{p ~\mid~ \alpha(p[\bar{s}]) \not\in G}  
\]
We prove in the next lemma that $P'$ is well-defined, i.e. that its
definition does not depend on the choice of $s$.

\begin{lemma}\label{lem-H}
Let $p$ be a \shal of arity $n$ and $T$ and $T'$ be two sequences of
$n$ $P$-valid forests of forest type in $G$. We have:
\begin{equation*}
\alpha(p[T]) \in G \ \Leftrightarrow \  \alpha(p[T']) \in G
\end{equation*}
\end{lemma}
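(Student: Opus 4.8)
The plan is to reduce this statement about forests to the string case by treating the trees occurring at the top level of $p[T]$ and $p[T']$ as letters over the alphabet $H$ and invoking Identity~\eqref{eqh} via Theorem~\ref{thm-word-fod}. Fix a \shal $p$ of arity $n$ and two sequences $T = (t_1,\dots,t_n)$ and $T' = (t'_1,\dots,t'_n)$ of $P$-valid forests, all of forest type in $G$. Both $p[T]$ and $p[T']$ are forests obtained from $p$ by plugging the forests of $T$ (resp. $T'$) into the ports. Since $p$ is a \shal, each of its trees is either a single leaf $a \in A$, a tree $b(a)$, or a tree $b(\hole)$; after substitution, a port-tree $b(\hole)$ becomes $b(t_i)$ (resp. $b(t'_i)$) for the appropriate $i$. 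The key observation is that the sequence of top-level trees of $p[T]$ and the sequence of top-level trees of $p[T']$ have the same length and differ only at the positions coming from ports of $p$, and at such a position the two trees $b(t_i)$ and $b(t'_i)$ have forest types $\alpha(b)\alpha(t_i)$ and $\alpha(b)\alpha(t'_i)$ respectively — both of which lie in $G$, because $\alpha(t_i),\alpha(t'_i)\in G$ and $G$ is $P$-equivalent to $g$ while $b(\hole)$ corresponds to a $P$-valid context type (here I use that $p\in P'$-context reasoning is not needed; rather that $G$ is closed under $P$-reachability from inside $G$, which is exactly the minimality property of $g$ recalled just before the lemma: any $P$-valid subforest of a $G$-typed forest has type in $G$). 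Hence the two sequences of forest types, read left to right, agree outside the port positions and at port positions both take values in $G$.

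Next I would phrase this in string terms. Let $\beta : H^{+} \to H$ be the canonical morphism and consider the string $h_1\cdots h_m \in H^{+}$ of forest types of the top-level trees of $p[T]$, and similarly $h'_1\cdots h'_m$ for $p[T']$; we have $\alpha(p[T]) = \beta(h_1\cdots h_m)$ and $\alpha(p[T']) = \beta(h'_1\cdots h'_m)$, and $h_j = h'_j$ for every non-port position $j$, while for port positions both $h_j$ and $h'_j$ lie in $G$. Now I want to conclude that $\beta(h_1\cdots h_m)\in G \iff \beta(h'_1\cdots h'_m)\in G$. This is where Identity~\eqref{eqh} enters: I claim that in any monoid satisfying the string analog of~\eqref{eqh} (equivalently, by Theorem~\ref{thm-word-fod}, recognizing an \FOdw-definable language), replacing, inside a product, one element of $G$ by another element of $G$ — where $G$ is a class of mutually $P$-equivalent idempotent-reachable elements — does not change membership of the product in $G$. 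Concretely one shows that for $g_1,g_2\in G$ and $u,w\in H^{1}$, $u+g_1+w$ and $u+g_2+w$ lie in the same $\mathcal{J}$-class (hence, since $G$ is a single $P$-equivalence class, $u+g_1+w\in G \iff u+g_2+w\in G$): both are $P$-reachable from $g_1$ and from $g_2$ using the $P$-valid contexts $\hole + w$ composed with $u + \hole$, and conversely. Iterating this swap over all port positions transforms $h_1\cdots h_m$ into $h'_1\cdots h'_m$ one coordinate at a time, preserving membership in $G$ at each step.

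The technical point that I expect to require the most care — and which is the main obstacle — is the justification that swapping a single $G$-element inside the product preserves $G$-membership. This uses the interplay between the $+$-structure of $H$, the action of $P$-valid contexts, and Identity~\eqref{eqh}: one has to check that the contexts built from the surrounding factors $u$, $w$ together with the fixed plugging forests are $P$-valid (so that $P$-reachability, not just plain reachability, is available), and that $G$ being a full $P$-equivalence class closed upward-within-$G$ means any such reachable element stays in $G$. A clean way to organize this is to prove the one-coordinate swap as an auxiliary claim and then apply it $n$ times; the bookkeeping of which intermediate forests are $P$-valid is routine but must be spelled out. Once the one-step swap is established, the lemma follows immediately by induction on the number of port positions where $T$ and $T'$ disagree.
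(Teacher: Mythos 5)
Your high-level plan---read off the sequence of top-level forest types of $p[T]$ and $p[T']$ and swap one port coordinate at a time---is the same decomposition the paper realizes with its contexts $c_i$. The problem is that your justification of the one-coordinate swap is wrong in two places. First, you claim $\alpha(b)\alpha(t_i)$ and $\alpha(b)\alpha(t'_i)$ lie in $G$ because ``$G$ is closed under $P$-reachability from inside $G$.'' Minimality of $g$ gives the \emph{opposite} closure: any $P$-valid type \emph{from which} $G$ is $P$-reachable lies in $G$ (this is what yields that subforests of a $G$-typed forest have type in $G$); it does not say that applying a $P$-valid context to a $G$-type keeps you in $G$. Indeed $\alpha(b)\alpha(t_i)$ may lie strictly above $G$ in the $P$-reachability preorder, and the lemma only asserts that $\alpha(p[T])$ and $\alpha(p[T'])$ have the \emph{same} $G$-membership, not that either (or any intermediate factor) belongs to $G$. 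Your claim that the two sums lie in the same $\mathcal{J}$-class is also stronger than what is true---it fails when neither side is in $G$.

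Second, and more fundamentally, the converse $P$-reachability you need to close the swap (assuming $u_i h_i\in G$, conclude $h'_i$ is $P$-reachable back from $u_i h'_i$) is not a consequence of Identity~\eqref{eqh}. This is a statement about the action of $V$ on $H$, not about the semigroup $(H,+)$, and the paper proves it precisely with Identity~\eqref{eqv}: from the assumed memberships one extracts $P$-valid $u,u'\in V$ with $h'_i = u\,u_i\,u'\,h'_i$ and then applies $(u'uu_i)^\omega\,uu_i\,(u'uu_i)^\omega = (u'uu_i)^\omega$ to derive $h'_i = (uu_iu')^\omega u\cdot u_i h'_i$, exhibiting a $P$-valid context that takes $u_i h'_i$ back to $h'_i$. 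Identity~\eqref{eqh} plays no role in this lemma; in Case~1 of the sufficiency proof it is used only (via Lemma~\ref{lem-choicek}) to establish that $P'$ is $k$-definable in Lemma~\ref{lem:kdef}. So your plan should drop the appeal to~\eqref{eqh} and route the swap step through the~\eqref{eqv}-based computation in $V$.
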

\begin{proof}
We use Identity~\eqref{eqv} to prove this lemma. Let $T = 
(t_1,...,t_n)$ and $T' = (t'_1,...,t'_n)$. For $i \in [1,n]$ we write
$c_i$ the context obtained from $p[T']$ by replacing $t'_i$ by a port
and $t'_j$ by $t_j$ for $j>i$. Notice that by hypothesis on $p$, $T$
and $T'$, $c_{i}$ is $P$-valid for all $i\leq n$. For all $i\leq n$,
we write $u_{i} = \alpha(c_{i})$, $h_i=\alpha(t_i)$ and
$h'_i=\alpha(t'_i)$. We first show that:
\begin{equation}\label{eq-inG}
\forall i\leq n, ~~~~u_ih_{i} \in G \ \Leftrightarrow \  u_ih'_{i} \in G
\end{equation}
Assuming that $u_i h_i \in G$, we show that $u_i h'_i \in G$. By
symmetry this will prove~\eqref{eq-inG}.  As $G$ is closed under
mutual $P$-reachability, it is enough to show that $u_ih'_i$ is
mutually $P$-reachable from $h'_i$.  By definition $u_ih'_i$ is
$P$-reachable from $h'_i$, therefore it remains to show that $h'_i$ is
$P$-reachable from $u_{i}h'_i$.  From $u_i h_i \in G$ we get that
$h'_i$ is $P$-reachable from $u_ih_i$ and therefore there is a
$P$-valid context $u$ such that $h'_i=uu_ih_i$.  By hypothesis $h_i$
is $P$-reachable from $h'_i$ and therefore there exists a $P$-valid
context $u'$ such that $h_i = u' h_i'$. A little bit of algebra and
Identity~\eqref{eqv} yields:
\begin{align*}
h'_i &=  uu_iu'h'_i& \\
 &= (uu_iu')^{\omega+1} h'_i&\\
 &= uu_i(u'uu_i)^{\omega}u'h'_i&\\
 &= uu_i(u'uu_i)^{\omega}uu_i(u'uu_i)^{\omega}u' h'_i &\textrm{\small using Identity~\eqref{eqv}} \\ 
 &= (uu_iu')^{\omega}uu_i(uu_iu')^{\omega+1} h'_i &\\
 &= (uu_iu')^{\omega}u~~~u_i h'_i &
\end{align*}
as $ (uu_iu')^{\omega}u$ is $P$-valid, $h'_i$ is $P$-reachable from
$u_ih'_i$ and~\eqref{eq-inG} is proved. 

For concluding the proof of the lemma, notice that by construction 
$\alpha(p[T]) = u_1h_1$, $\alpha(p[T'])=u_nh'_{n}$ and $u_ih'_i
=u_{i+1}h_{i+1}$. As from~\eqref{eq-inG} we get $u_ih_i \in G$ iff
$u_ih'_i \in G$, this implies by induction on $i$ that for all $i\leq
n$, $u_1h_1 \in G$ iff $u_ih_i\in G$ iff $u_ih'_i \in G$. The case
$i=n$ proves the lemma.
\end{proof}

We now prove that $P'$ can be used to test whether a $P$-valid forest has
a type in~$G$. 

\begin{lemma}\label{lemma-LG}
A $P$-valid forest has type in $G$ iff it is $(P \setminus P')$-valid.
\end{lemma}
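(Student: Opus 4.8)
The plan is to reduce both directions to Lemma~\ref{lem-H}. The key observation is that if a \shal $p$ of arity $n$ occurs at a node $x$ of $t$, and $T=(s_1,\dots,s_n)$ is the sequence of subforests of $t$ occupying the ports of $p$ at $x$, then by the definition of the \shal of $x$ the subforest of $x$ is exactly $p[T]$. Each $s_i$ is the subforest below some node $y_i$ (either $x$ itself or a sibling of $x$) whose subtree in $t$ has the shape $b(s_i)$ with $s_i\notin A$; hence $s_i$ is not a single leaf, it is itself a subforest of $t$ (obtained by replacing all children of $y_i$ by a single port), every \shal occurring in $s_i$ also occurs in $t$ and so lies in $P$, it equals the subforest of any child of $y_i$, and it has strictly fewer nodes than $p[T]$. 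Note also that $p[T]$ is never a single leaf, since a \shal is never a forest reduced to a single leaf.

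For the forward direction, assume $t$ is $P$-valid and $\alpha(t)\in G$; as $t$ is already $P$-valid, it suffices to show that no \shal occurring in $t$ belongs to $P'$. Let $p$ occur at $x$ and take $T$ as above. Since $p[T]$ and all the $s_i$ are subforests of $t$ that are not single leaves, the minimality of $g$ observed earlier in this case gives $\alpha(p[T])\in G$ and $\alpha(s_i)\in G$ for every $i$; thus $T$ is a sequence of $P$-valid forests of forest type in $G$. Fixing a $P$-valid forest $s$ with $\alpha(s)\in G$ (such an $s$ exists since $g\in G$), Lemma~\ref{lem-H} applied to $p$ and the sequences $T$ and $(s,\dots,s)$ gives $\alpha(p[\bar s])\in G$, i.e. $p\notin P'$.

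For the converse I would prove, by induction on the number of nodes, the following: whenever a \shal $p$ of arity $n$ occurs at a node $x$ of $t$, with $T=(s_1,\dots,s_n)$ as above, then $\alpha(p[T])\in G$, i.e. the subforest of $x$ has forest type in $G$. For such an $x$, each $s_i$ equals the subforest of some child of $y_i$, is not a single leaf, and has strictly fewer nodes than $p[T]$, so the induction hypothesis applied at that child yields $\alpha(s_i)\in G$; hence $T$ is a sequence of $P$-valid forests of forest type in $G$. Since $t$ is $(P\setminus P')$-valid and $p$ occurs in $t$, we have $p\in P\setminus P'$, so $\alpha(p[\bar s])\in G$, and Lemma~\ref{lem-H} transfers this to $\alpha(p[T])\in G$. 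Applying this to the \shal occurring at the root of the first tree of $t$, whose subforest is $t$ itself (which has more than one node, hence carries a \shal and is not a single leaf), gives $\alpha(t)\in G$.

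The routine points to verify are the assertions above that the subforest below a node is a subforest and inherits all its \shals from $t$, together with the bookkeeping around trees of the shape $b(a)$ and around single-leaf subforests, which carry no \shal and therefore impose no constraint. The one genuinely delicate point is, in the converse direction, making sure that the forests $s_i$ plugged into $p$ are $P$-valid of forest type in $G$ so that Lemma~\ref{lem-H} is applicable; this is precisely why the induction runs on the number of nodes of the subforest rather than on its height, as a forest that is not a single leaf may still have height zero.
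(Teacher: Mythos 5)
Your proof is correct and follows essentially the same route as the paper's: both directions are reduced to Lemma~\ref{lem-H}, using the minimality of $g$ for the forward direction and a smallest-subforest argument for the converse (the paper phrases the latter as ``take a minimal subforest whose type is not in $G$'' rather than your explicit induction on node count, but these are the same idea). Your version just spells out the bookkeeping that the paper leaves implicit.
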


\begin{proof}
This is a consequence of Lemma~\ref{lem-H}. Let $t$ be a $P$-valid
forest such that $\alpha(t) \not\in G$. We prove that $t$ contains a
\shal of $P'$. Consider a minimal subforest $t'$ of $t$ whose type is
not in $G$. Then we have $t' = p[T]$ where $p$ is a \shal and $T$ a
sequence of forests of forest type in $G$ (possibly empty if $p$ is of
arity 0). Let $T'$ be the sequence $\bar s$ for some $s$ with $\alpha(s)\in
G$. By Lemma~\ref{lem-H}~ $\alpha(p[T']) \not\in G$ and therefore $p \in P'$.

Conversely, if $\alpha(t) \in G$, by minimality of $G$, all
subforests of $t$ are in $G$. It is then immediate by definition of
$P'$ and Lemma~\ref{lem-H} that $t$ cannot contain a \shal of $P'$.
\end{proof}

\medskip
\noindent {\bf Setting up the Composition.} Let $\varphi$ be the antichain
formula which holds at port-nodes $(p,x)$ such that $p\in P'$ and $x$ has no
descendant with that property. It follows from the next lemma that $\varphi$ is expressible in \FOd.

\begin{lemma} \label{lem:kdef}
$P'$ is \kdef.
\end{lemma}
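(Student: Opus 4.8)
The plan is to read off the statement directly from Lemma~\ref{lem-choicek} together with the choice of~$k$. Recall that $P'$ was defined as $P' = \set{p \mid \alpha(p[\bar{s}]) \notin G}$ for some fixed $P$-valid forest $s$ with $\alpha(s)\in G$, and that (by Lemma~\ref{lem-H} and the remark following it) this set does not depend on the choice of~$s$. Since being \kdef just means being a union of $\mequivk$-classes, it suffices to show that membership of a \shal in $P'$ depends only on its $\mequivk$-class, i.e. that $p \mequivk p'$ implies $p \in P' \Leftrightarrow p' \in P'$.

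First I would recall that $k$ was chosen to be at least the integer $k'$ provided by Lemma~\ref{lem-choicek}. Now take two \shals $p \mequivk p'$. Applying Lemma~\ref{lem-choicek} to the particular forest $s$ occurring in the definition of $P'$, we get that $p[\bar{s}]$ and $p'[\bar{s}]$ have the same forest type, that is $\alpha(p[\bar{s}]) = \alpha(p'[\bar{s}])$. In particular $\alpha(p[\bar{s}]) \notin G$ if and only if $\alpha(p'[\bar{s}]) \notin G$, i.e. $p \in P'$ if and only if $p' \in P'$. Hence $P'$ is a union of $\mequivk$-classes, which is precisely the definition of \kdef.

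The only points that need a word of care are bookkeeping ones: that $P'$ is genuinely a set of \shals so that $p[\bar{s}]$ makes sense, including the degenerate arity-$0$ case where $p[\bar{s}] = p$; and that we may work with a single fixed $s$, which is exactly the well-definedness of $P'$ established around Lemma~\ref{lem-H}. There is no real obstacle here — all of the substance has already been absorbed into Lemma~\ref{lem-choicek}, which itself rests on Identity~\eqref{eqh} and the string characterisation of Theorem~\ref{thm-word-fod}.
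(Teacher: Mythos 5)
Your argument is correct and is essentially the same as the paper's: both reduce the claim to Lemma~\ref{lem-choicek} applied to the fixed forest $s$ in the definition of $P'$, concluding that $\alpha(p[\bar{s}]) = \alpha(p'[\bar{s}])$ whenever $p \mequivk p'$ and hence that $P'$ is a union of $\mequivk$-classes. The bookkeeping remarks at the end are harmless but not needed.
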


\begin{proof}
This is a consequence of Lemma~\ref{lem-choicek} (which is itself a
consequence of~\eqref{eqh}). Set $p \in P'$ and $p' \mequivk p$. We
prove that $p' \in P'$. By definition of $P'$, $\alpha(p[\bar{s}])
\not\in G$. As $p \mequivk p'$, by choice of $k$ and
Lemma~\ref{lem-choicek} we get $\alpha(p'[\bar{s}]) = 
\alpha(p[\bar{s}])$. Hence $\alpha(p'[\bar{s}]) \not\in G$ and $p' \in
P'$.
\end{proof}

We now define the languages that we will use to apply the Antichain Composition Lemma.

\begin{lemma}\label{lemma-Lg}
For any $g\in G$,
there exists a language definable in \FOd that
agrees with $L^P_{1_V,g}$ on $P$-valid forests.
\end{lemma}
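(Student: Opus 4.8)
The plan is to reduce the statement to the inductive hypothesis of Proposition~\ref{main-prop}, applied to the strictly smaller set of \shals $P'' := P \setminus P'$. The starting point is Lemma~\ref{lemma-LG}: a $P$-valid forest has type in $G$ if and only if it is $P''$-valid. In particular every $P''$-valid forest, being $P$-valid, has type in $G$, and conversely every $g' \in G$, being $P$-valid, is realized by some $P$-valid forest, which is then $P''$-valid by Lemma~\ref{lemma-LG}; hence the set of $P''$-valid forest types is exactly $G$. Since $G \cap H_P = \emptyset$ whereas, by Claim~\ref{claim-maximal}, $H_P$ is a nonempty subset of $X$, we get $|G| < |X|$, so induction parameter~(i) strictly decreases. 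Applying the inductive hypothesis to $(1_V, g, P'')$ therefore yields a language $K_g$ definable in \FOd that agrees with $L^{P''}_{1_V,g}$ on $P''$-valid forests.

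It remains to carve out, among the $P$-valid forests, exactly those that are $P''$-valid, and to intersect. Since $P'$ is \kdef (Lemma~\ref{lem:kdef}), the property ``the \shal occurring at the current node belongs to $P'$'' is expressible in \FOd — this is exactly what makes the antichain formula $\varphi$ legitimate — and hence so is the property ``no node carries a \shal of $P'$''; let $N$ be the corresponding \FOd-definable language (which also contains all single-node forests, harmlessly, as those carry no \shal at all). A $P$-valid forest lies in $N$ precisely when it is $P''$-valid. I then take
\[
K := K_g \cap N
\]
as the witness. For the verification, let $t$ be $P$-valid. If $\alpha(t) = g$, then $\alpha(t) \in G$, so $t$ is $P''$-valid by Lemma~\ref{lemma-LG}, hence $t \in N$ and $t \in L^{P''}_{1_V,g}$, and therefore $t \in K_g$ by the agreement of $K_g$ on $P''$-valid forests; thus $t \in K$. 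Conversely, if $t \in K$, then $t \in N$ together with $P$-validity makes $t$ a $P''$-valid forest, so $t \in K_g$ gives $t \in L^{P''}_{1_V,g}$, i.e. $\alpha(t) = g$; as $t$ is moreover $P$-valid, $t \in L^P_{1_V,g}$. Hence $K$ agrees with $L^P_{1_V,g}$ on $P$-valid forests.

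The essential combinatorics are already done in Lemma~\ref{lem-H} and Lemma~\ref{lemma-LG} (which rely on Identity~\eqref{eqv}), so there is no real obstacle in this lemma beyond bookkeeping. The two points that need care are: (a) checking that $G$ is the \emph{full} set of $P''$-valid forest types — this is where Lemma~\ref{lemma-LG} is used in both directions — so that the first induction parameter genuinely drops; and (b) keeping track of the ``on $P$-valid forests'' qualifier, in particular remembering that $K_g$ is only guaranteed correct on $P''$-valid inputs, which is why the intersection with $N$ cannot be omitted. Note finally that $P''$ need not be branching, but this is immaterial since Proposition~\ref{main-prop}, and hence the inductive hypothesis, handles that case as well.
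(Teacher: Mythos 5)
Your proof is correct and follows essentially the same route as the paper's: reduce to the inductive hypothesis of Proposition~\ref{main-prop} for $P\setminus P'$, using Lemma~\ref{lemma-LG} and the observation that the set of $(P\setminus P')$-valid forest types is $G\subsetneq X$. Your additional intersection with the \FOd-definable language $N$ is a careful touch that the paper glosses over (it simply asserts that the language obtained by induction already agrees with $L^{P}_{1_V,g}$ on all $P$-valid forests, not only the $(P\setminus P')$-valid ones), but it does not change the approach.
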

\begin{proof}
  Notice that the set of all $(P \setminus P')$-valid forest types is $G
  \subsetneq X$. Hence by induction on the first parameter in
  Proposition~\ref{main-prop} there exists a language $L_1$ definable in \FOd
  that agrees with $L^{(P\setminus P')}_{1_V,g}$ on $(P\setminus P')$-valid
  forests. By Lemma~\ref{lemma-LG}, a $P$-valid forest has type $g \in
  G$ iff it is $(P \setminus P')$-valid. Hence $L_1$ agrees with
  $L^{P}_{1_V,g}$ on  $P$-valid forests.
\end{proof}

Assume $G= \set{g_1,\cdots,g_n}$. For all $i
\leq n$, let $L_i$ be a language definable in \FOd that agrees with
$L^P_{\one_V,g_i}$ on $P$-valid forests given by Lemma~\ref{lemma-Lg}.

Let $Q$ be the set of \shals $q$ that can be obtained from some $p \in P$ by replacing
some port-nodes (possibly none) with $G$-nodes of the same inner label
and such that either: 
\begin{itemize}
\item $q$ has arity greater than $1$ (i.e. one port-node of $p$ was
  left unchanged)
\item or $q$ has arity $0$ and $p \in P'$ (hence $\alpha(q) \not\in
  G$).
\end{itemize}

We have:

\begin{lemma}\label{lemma-XG}
There is a language $K$ definable in \FOd that agrees with $L^Q_{v,h}$ on $Q$-valid forests.
\end{lemma}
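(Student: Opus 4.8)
The plan is to obtain $K$ directly from the induction hypothesis of Proposition~\ref{main-prop}, applied to the same $v$ and $h$ but with the set $Q$ in place of $P$. For this to be a legitimate appeal to the induction it suffices to check that the first induction parameter strictly decreases, i.e.\ that the set of $Q$-valid forest types, call it $X_Q$, satisfies $|X_Q| < |X|$. I will in fact prove the sharper statement $X_Q \subseteq X \setminus G$; since $g \in G$ this gives $|X_Q| < |X|$, and Proposition~\ref{main-prop}, invoked for $(v,h,Q)$, then produces a language definable in \FOd that agrees with $L^Q_{v,h}$ on $Q$-valid forests. This is the desired $K$.

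So the whole content is the claim that every $Q$-valid forest $t$ has $\alpha(t) \in X \setminus G$. I would prove this by structural induction on $t$. Let $q$ be the shal at the root of $t$. By the definition of $Q$, $q$ is obtained from some $p \in P$ by replacing some port-nodes of $p$ with $G$-nodes of the same inner label, and moreover either (a) $q$ has arity $0$ and $p \in P'$, or (b) $q$ has arity at least $2$; since $Q$ contains no shal of arity $1$, these two cases are exhaustive. In both cases the key device is an \emph{expansion} of $t$ into a genuinely $P$-valid forest $\hat t$ of the same type: fix for every forest type $g' \in X$ a $P$-valid forest of type $g'$ (these exist because $X$ is by definition the set of $P$-valid forest types), then in $t$ replace every $G$-leaf $b(a)$ with $\alpha(a) = g'$ by $b$ applied to a $P$-valid forest of type $g'$, and replace the $Q$-valid forest $s'$ sitting below each port-node of $q$ by a $P$-valid forest of type $\alpha(s')$. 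A routine verification — the shals at the top of $\hat t$ are exactly those of $p$, and the shals further down either come from the substituted $P$-valid forests or are untouched — shows that $\hat t$ is $P$-valid, while $\alpha(\hat t) = \alpha(t)$ since every substitution preserves forest types. In particular $\alpha(t) \in X$, so it only remains to rule out $\alpha(t) \in G$.

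If $q$ has arity $0$, then $t = q$ has no genuine subforest and $\alpha(t) = \alpha(p[T'])$ for a sequence $T'$ of $P$-valid forests with types in $G$; since $p \in P'$, Lemma~\ref{lem-H} together with the definition of $P'$ gives $\alpha(p[T']) \notin G$. If $q$ has arity at least $2$, pick one of its port-nodes; in $t$ it carries a tree $b(s')$ where $s'$ is $Q$-valid (it is a non-leaf subforest of $t$ and all shals occurring in it lie in $Q$), so $\alpha(s') \in X \setminus G$ by the induction hypothesis. Hence $\hat t$ is a $P$-valid forest of type $\alpha(t)$ containing a non-leaf subforest whose type lies outside $G$; if $\alpha(t)$ were in $G$, the minimality of $g$ would force every non-leaf subforest of $\hat t$ to have type in $G$, a contradiction. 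So $\alpha(t) \notin G$, and the induction goes through.

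The step I expect to require the most care is the expansion: one must check that simultaneously replacing the frozen $G$-leaves and the $Q$-valid subforests below the port-nodes of $q$ still yields a $P$-valid forest — in particular that all shals along the backbones of the substituted parts and at their roots belong to $P$ — and one must make sure that the case analysis on the shape of the top shal $q$ is exhaustive, which is exactly the place where the absence of arity-$1$ shals in $Q$ (visible directly from the definition of $Q$) is used. Everything else follows from Lemma~\ref{lem-H}, the minimality of $g$, and the definition of $P$-validity.
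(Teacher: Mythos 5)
Your proof matches the paper's: both establish that the set of $Q$-valid forest types is contained in $X$ and disjoint from $G$ (so the first induction parameter strictly decreases) and then invoke Proposition~\ref{main-prop} inductively; your structural induction using the minimality of $g$ is essentially an inlined version of the paper's appeal to Lemma~\ref{lemma-LG} via the arity-$0$ \shals of $P'$. One small care point: in the expansion you should revert only the \emph{newly introduced} $G$-nodes (those corresponding to port-nodes of the witnessing $p\in P$), not every $G$-leaf, since $p$ itself may already contain leaf labels with type in $G$ and converting those to port-nodes could take the \shal out of $P$.
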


\begin{proof}
Let $Y$ be the set of $Q$-valid forest types. We prove that $Y
\subseteq X$ and $Y \cap G = \emptyset$. It will follow that $|Y| <
|X|$. Hence $K$ is obtained by applying Proposition~\ref{main-prop}
by induction on the first parameter.

Let $h \in Y$, by definition, there exists a $Q$-valid forest $s$ such
that $\alpha(s) = h$. All \shals $q \in Q$ occurring in $s$ are
constructed from $p \in P$ by replacing some port-nodes of $p$ with
$G$-nodes. As $G$ contains only $P$-valid forest types, for any $g \in
G$ there exists a $P$-valid forest whose type is $g$. By replacing the
newly introduced $G$-nodes in $s$ by the correspond $P$-valid forest
with the same type we get a $P$-valid forest $s'$ whose type remains
$h$. Hence $h \in X$. Moreover, for any \shal of arity $0$ occurring in
$s$, the corresponding \shal occurring in $s'$ must belong to
$P'$. It follows that $s'$ contains at least one \shal in $P'$ and by
Lemma~\ref{lemma-LG} that $h\not\in G$.
\end{proof}

\medskip
\noindent
{\bf Applying Antichain Composition.} We now apply the Antichain Composition
Lemma to the languages $K$, and $L_1 \cdots L_n$ defined above. The situation is depicted in
Figure~\ref{figure-composition}.

Recall that $G= \set{g_1,\cdots,g_n}$. For any $i \leq n$, let $a_i\in A$ be
such that $\alpha(a_i)=g_i$. Set $L = \{t \mid t[(L_1,\varphi) \rightarrow
a_1, \cdots, (L_n, \varphi) \rightarrow a_n] \in K\}$. Since $K,L_1,\dots,L_n$
are definable in \FOd, it follows from Lemma~\ref{lemma-ACL} that $L$ is
definable in \FOd. We terminate the proof by proving that $L$ agrees with
$L^{P}_{v,h}$ on $P$-valid forests.

\begin{lemma} \label{lem:thisistheend}
Let $t$ be a $P$-valid forest, then $\alpha(t) =
\alpha(t[(L_1,\varphi) \rightarrow a_1, \cdots, (L_n, \varphi)
\rightarrow a_n])$. 
\end{lemma}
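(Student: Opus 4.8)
The plan is to observe that $t' := t[(L_1,\varphi) \rightarrow a_1, \cdots, (L_n, \varphi) \rightarrow a_n]$ differs from $t$ only through a family of local modifications, each of which preserves the forest type, and then to conclude by compositionality of $\alpha$. Concretely, let $N$ be the set of nodes $x$ of $t$ such that $t \models \varphi(x)$ and the subforest below $x$ — which I denote $s_x$ — lies in some $L_i$, and for $x \in N$ let $i(x)$ be this index. By construction $t'$ is obtained from $t$ by replacing, for every $x \in N$ simultaneously, the forest $s_x$ by a single leaf of label $a_{i(x)}$; since $\varphi$ is an antichain formula the nodes of $N$ are pairwise incomparable for the descendant relation, so the forests $s_x$ (for $x \in N$) are pairwise disjoint subforests of $t$ and the simultaneous replacement is well defined.

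The key step is to check, for a fixed $x \in N$, that $\alpha(s_x) = \alpha(a_{i(x)})$. Because $\varphi$ holds only at port-nodes, $x$ is a port-node of its \shal, so the subtree at $x$ has the form $b(s_x)$ with $b \in B$ and $s_x \notin A$; in particular $s_x$ has more than one node, so that $P$-validity is meaningful for it. Moreover the \shal occurring at any node of $s_x$ is determined by that node, its siblings, and the subtrees rooted at them, and none of these change when one passes from $t$ to $s_x$; hence every \shal occurring in $s_x$ already occurs in $t$, and since $t$ is $P$-valid so is $s_x$. Now $L_{i(x)}$ agrees with $L^P_{\one_V,g_{i(x)}}$ on $P$-valid forests, so from $s_x \in L_{i(x)}$ and the $P$-validity of $s_x$ we get $s_x \in L^P_{\one_V,g_{i(x)}}$, that is $\alpha(s_x) = g_{i(x)}$; and $\alpha(a_{i(x)}) = g_{i(x)}$ by the choice of the $a_i$. (Incidentally this also shows $i(x)$ is unique, as $\alpha(s_x)$ is a single element of $H$.)

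To finish, write $N = \{x_1, \dots, x_m\}$ and let $D$ be the multicontext obtained from $t$ by replacing each $s_{x_j}$ by a port; this is a legitimate multicontext since each such port becomes the unique child of a node with a label in $B$ (namely $x_j$), so it is neither a root nor has a sibling. Then $t = D[s_{x_1}, \dots, s_{x_m}]$ and $t' = D[a_{i(x_1)}, \dots, a_{i(x_m)}]$. Since $\alpha$ is a morphism of forest algebras, substitution into a multicontext factors through $\alpha$: the value $\alpha(D[u_1, \dots, u_m])$ depends only on $D$ and on $(\alpha(u_1), \dots, \alpha(u_m))$. By the previous paragraph the tuples $(\alpha(s_{x_j}))_j$ and $(\alpha(a_{i(x_j)}))_j$ coincide, hence $\alpha(t) = \alpha(t')$.

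I do not expect a deep obstacle: this is essentially a bookkeeping lemma. The one point that genuinely needs care is the inheritance of $P$-validity by $s_x$ (together with the preliminary remark that $s_x$ has more than one node), which relies on the fact that the \shal occurring at a node depends only on the node and its siblings; once that is in place, the equality $\alpha(s_x) = \alpha(a_{i(x)})$ is exactly what the phrase ``agrees with $L^P_{\one_V,g_{i(x)}}$ on $P$-valid forests'' delivers, and the rest is routine algebra of forest-algebra morphisms.
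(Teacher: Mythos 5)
Your proof is correct and takes essentially the same (indeed, the only sensible) route: the paper's own proof simply declares the lemma ``immediate by definition,'' and what you have done is to spell out the straightforward argument that the phrase elides. You correctly identify the two small points that need a word of justification — that each excised subforest $s_x$ inherits $P$-validity from $t$ (because a node's \shal depends only on its subforest, which is unchanged inside $s_x$), and that once $\alpha(s_x)=g_{i(x)}=\alpha(a_{i(x)})$ is established, the equality of types follows from the fact that $\alpha$ is a forest-algebra morphism, so substituting forests of equal type into the ports of a multicontext cannot change the image.
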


\begin{proof}
This is immediate by definition of $Q$, $K$ and $L_1\cdots L_n$.
\end{proof}

\subsection{Case 2: $P$ is reduced but there exists no
  \saturated{(P,k)} context $\Delta$ that $P$-preserves $v$, Top-Down
  Induction}

In this case we use again the Antichain Composition Lemma using
languages that we prove to be definable by induction on  
the index of $P$ and the $P$-depth of $v$. Recall that since $P$ is
reduced, $X = H_P$. Correctness relies on Identity~\eqref{eqv}.

\medskip
\noindent
{\bf Outline.} We proceed as follows. First we use our hypothesis to
define a port-node $(p,x) \in P$ with the following properties. For
any $P$-valid forest $t$ and port-node $(p',x')$ of $t$ such that $(p,x)
\pequiv{k}^X (p',x')$, the context $c$ obtained from $t$ by
replacing the subforest below $x'$ by a port does not $P$-preserve
$v$. Since by Claim~\ref{lemma-types-fod} all such nodes $(p',x')$
can be defined in \FOd, this gives an antichain formula $\varphi$ which
selects such nodes having no ancestor with that property. Such
a formula splits a forest in two parts: an upper part and a lower
part. For the upper part, we will prove that the set of occurring
\shals has smaller index and use induction on that
parameter. Moreover, observe that by choice of $(p,x)$, each subforest
in the lower part is below a context that has larger $P$-depth than
$v$, we will use induction on this parameter. This situation is
depicted in Figure~\ref{fig:comptd}.

\begin{figure}[h]
\begin{center}
\begin{tikzpicture}

\node[bag] at (2.5,3.2) {$v$};

\draw (0,0) -- (5,0) -- (2.5,3) -- (0,0);

\draw (0.7,0) -- (1.7,0) -- (1.2,1.3) -- (0.7,0);

\draw (2,0) -- (3.5,0) -- (2.75,1.7) -- (2,0);

\draw (3.55,0) -- (4.3,0) -- (3.925,1) -- (3.55,0);

\node[dot] at (1.2,1.3) {};

\node[dot] at (2.75,1.7) {};

\node[dot] at (3.925,1) {};

\node[bag,align=center] (labd) at (2,-1) {\small $v^+$-equivalence classes definable in \FOd by Lemma~\ref{lemma-Lv+}};

\node[bag,align=center] (labu) at (5.5,3) {\small Set of \shals\\
  \small $P'$ with smaller index};


\draw[arr] (labu) -> (2.5,2);

\draw[arr] (labd) -> (1.2,0.65);

\draw[arr] (labd) -> (2.75,0.85);

\draw[arr] (labd) -> (3.925,0.5);




\end{tikzpicture}
\end{center}
\caption{Illustration of the Antichain Composition Lemma for Case~2. The
  marked nodes are the topmost nodes equivalent to $(p,x)$.}\label{fig:comptd}
\end{figure}
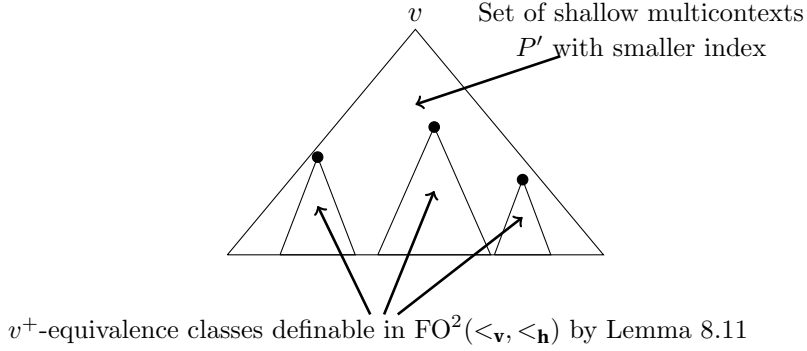

\medskip
\noindent
{\bf Definition of $(p,x)$.} Let $(p,x)$ be a port-node in $P$. We say
that $(p,x)$ is \emph{$P$-bad for $v$} iff there exists no $P$-valid
context $c$ satisfying the two following properties: 
\begin{enumerate}
\item $c$ $P$-preserves $v$.
\item the port-node $(p',x')$ above the port of $c$ verifies $(p,x)
  \pequiv{k}^X (p',x')$.
\end{enumerate}

\begin{lemma} \label{lem:delchoice}
There exists a port-node $(p,x) \in P$ that is $P$-bad for $v$.
\end{lemma}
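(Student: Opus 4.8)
The plan is to prove the contrapositive: assume that \emph{no} port-node $(p,x)\in P$ is $P$-bad for $v$, and derive a contradiction with the standing hypothesis of Case~2, namely that there is no \saturated{(P,k)} context that $P$-preserves $v$. Concretely, I will assemble a single \saturated{(P,k)} context out of the witnessing contexts supplied by the failure of $P$-badness. Recall that here $P$ is reduced, so $X=H_P$ and the relation $\pequiv{k}^{X}$ appearing in the definition of $P$-bad is the same as the $\pequiv{k}^{H_P}$ appearing in the definition of \saturated{(P,k)}.

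First I would collect witnesses. Since $\pequiv{k}^{X}$ has finite index, there are only finitely many $\pequiv{k}^{X}$-classes containing a port-node $(p,x)$ with $p\in P$; pick one representative $(p_i,x_i)$ in each, for $i=1,\dots,r$, noting $r\geq 1$ because $P$ is branching (a \shal of arity $\geq 2$ has a port-node). For each $i$, the assumption that $(p_i,x_i)$ is not $P$-bad for $v$ provides a $P$-valid context $c_i$ that $P$-preserves $v$ and whose port-node above the port, say $(q_i,y_i)$, satisfies $(p_i,x_i)\pequiv{k}^{X}(q_i,y_i)$.

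The main obstacle is that ``$P$-preserves $v$'' does not compose directly: from ``$c_i$ $P$-preserves $v$'' we only get a $P$-valid context type $w_i$ with $v\,\alpha(c_i)w_i=v$, and since $V$ is not commutative the witness $w_i$ sits between $\alpha(c_i)$ and $v$, so concatenating the $c_i$ naively fails. The device to circumvent this: choose a $P$-valid context $d_i$ with $\alpha(d_i)=w_i$ (one exists since $w_i$ is $P$-valid) and set $C_i:=c_id_i$. Then $C_i$ is $P$-valid (composition of $P$-valid contexts is $P$-valid: composing contexts creates no node, and every \shal occurring in $c_id_i$ already occurs in $c_i$ or in $d_i$), and, crucially, $v\,\alpha(C_i)=v\,\alpha(c_i)w_i=v$ \emph{exactly}. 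Moreover $y_i$ is still a port-node on the backbone of $C_i$ and its \shal in $C_i$ is still $q_i$, since substituting $d_i$ at the port of $c_i$ affects neither the siblings of $y_i$ nor the $b(\hole)$-abstraction of the subtree at $y_i$.

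Finally I would put $\Delta:=C_1C_2\cdots C_r$. This is a $P$-valid context and $v\,\alpha(\Delta)=v\,\alpha(C_1)\cdots\alpha(C_r)=v$, so $\Delta$ $P$-preserves $v$ (with witness $\alpha(\Delta)$). Each $y_i$ lies on the backbone of $C_i$, hence on the backbone of $\Delta$, is a port-node, and carries the \shal $q_i$ with $(q_i,y_i)\pequiv{k}^{X}(p_i,x_i)$; therefore, for every port-node $(p,x)\in P$, taking the representative $(p_i,x_i)$ of its $\pequiv{k}^{X}$-class and using transitivity of $\pequiv{k}^{X}$ (Lemma~\ref{lem-eqiseq}), we obtain a port-node on the backbone of $\Delta$ that is $\pequiv{k}^{H_P}$-equivalent to $(p,x)$. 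Hence $\Delta$ is \saturated{(P,k)} and $P$-preserves $v$, contradicting the hypothesis of Case~2. This contradiction establishes that some port-node $(p,x)\in P$ is $P$-bad for $v$. The only steps requiring care beyond the trick just described are the two bookkeeping facts about backbones and node-\shals being preserved under context composition, both of which are straightforward from the definitions.
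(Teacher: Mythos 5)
Your proof is correct, and it takes a genuinely different route from the paper's. Both start the same way: argue by contradiction, collect one witnessing context per $\pequiv{k}^X$-class of port-nodes in $P$, and concatenate. The divergence is in how $P$-preservation of the concatenation is established. The paper concatenates the raw witnesses $c_{p,x}$ and then must invoke Claim~\ref{clm:fo2classic} — a composition property of ``$P$-preserves $v$'' whose proof is exactly where Identity~\eqref{eqv} enters — to conclude that the concatenation still $P$-preserves $v$. You instead \emph{normalize} each witness first: since $c_i$ $P$-preserves $v$ you get a $P$-valid $w_i$ with $v\,\alpha(c_i)w_i = v$, realize $w_i$ by a $P$-valid context $d_i$, and pass to $C_i := c_id_i$, which satisfies $v\,\alpha(C_i) = v$ \emph{exactly}. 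After that, $v\,\alpha(C_1)\cdots\alpha(C_r) = v$ is immediate by induction, so the concatenated context $P$-preserves $v$ with no appeal to Identity~\eqref{eqv} or to Claim~\ref{clm:fo2classic}. Your bookkeeping is also right: substitution at the port cannot destroy the port-node $y_i$ above it or change its \shal, because the port has no sibling and the forest substituted in ($d_i$, or later $C_{i+1}\cdots C_r$) always has more than one node, so the abstraction of $y_i$'s subtree stays $b(\hole)$ and $y_i$ remains on the backbone. What each approach buys: the paper's version isolates the algebraic fact (Claim~\ref{clm:fo2classic}) as a reusable statement and displays a use of the \FOdw-identity~\eqref{eqv}, which fits the narrative of the characterization; your version is shorter and more elementary, requiring only $P$-validity of $w_i$ and leaf-completeness (to realize it as a context), and dispenses with the identity for this lemma — though Identity~\eqref{eqv} is, of course, still essential elsewhere in the sufficiency proof (e.g.\ Lemma~\ref{lem-H}).
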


\begin{proof}
This is where Identity~\eqref{eqv} is used. We proceed by
contradiction and assume that no port-node $(p,x) \in P$ is $P$-bad
for $v$.

By definition, for all port-nodes $(p,x) \in P$ we get a $P$-valid
context $c_{p,x}$ that $P$-preserves $v$ and such that the port-node
$(p',x')$ above the port of $c_{p,x}$ verifies $(p,x) \pequiv{k}^X
(p',x')$. Note that since $\pequiv{k}^X$ is of finite index, we may
assume that there are finitely many different contexts $c_{p,x}$ for
all $(p,x) \in P$. Let $\Delta$ be the context obtained by
concatenating all these finitely many contexts $c_{p,x}$ for all
$(p,x)$. By definition $\Delta$ is $(P,k)$-saturated. We use
Identity~\eqref{eqv} to prove that $\Delta$ $P$-preserves $v$ 
which contradicts the hypothesis of this case. This is an immediate 
consequence of the next claim.

\begin{claim} \label{clm:fo2classic}
Let $u,u' \in V$ such that both $u$ and $u'$ $P$-preserve $v$. Then
$uu'$ $P$-preserves $v$.
\end{claim}

We finish the proof of Lemma~\ref{lem:delchoice} by proving
Claim~\ref{clm:fo2classic}. By hypothesis, we have $w,w' \in V$ that
are $P$-valid and such that $vuw =v$ and $vu'w'=v$. Set $e =
(wu'w'u)^\omega$, a little algebra yields $vue = vu$. Applying
Identity~\eqref{eqv}, we get that
\[
vu=vue=vueu'w'ue=vuu'w'ue
\]
Hence $v = vuu'w'uew$ and since $w'uew$ is $P$-valid, this terminates
the proof.
\end{proof}

\medskip
\noindent
{\bf Setting-up the Composition.} For the remainder of the proof
we set $(p,x) \in P$ as a port-node which is $P$-bad for $v$ as
given by Lemma~\ref{lem:delchoice}. We define our antichain formula
$\varphi$ as the formula holding exactly at all port-nodes $(p',x')$
such that $(p,x) \pequiv{k}^X (p,x')$ and having no ancestor with that
property. By definition, $\varphi$ is antichain and by
Claim~\ref{lemma-types-fod}, $\varphi$ is expressible in \FOd. We now
define the languages $L_1,\dots,L_n$ and $K$ necessary for applying the
Antichain Composition Lemma.

Given two elements $g$ and $g'$ of $H$, we say that $g$ is
$v^+$-equivalent to $g'$ if for all context types $u$ which do not  
$P$-preserve $v$ (hence the $P$-depth of $vu$ is strictly higher that
the $P$-depth of $v$) we have $vug=vug'$. Set $\{\gamma_1,\dots,
\gamma_n\}$ as the set of all $v^+$-equivalence classes. For all $i$,
we define $M_i = \set{s \mid \alpha(s) \in \gamma_i \text{ and $s$
    $P$-valid}}$.

\begin{lemma}\label{lemma-Lv+}
For all $i$, there is a language $L_i$ definable in \FOd that agrees with $M_i$
on $P$-valid forests.
\end{lemma}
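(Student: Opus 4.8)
The plan is to show that each $M_i$ agrees, on $P$-valid forests, with a finite Boolean combination of languages that are \FOd-definable by the induction hypothesis of Proposition~\ref{main-prop}. The key point is that whether a $P$-valid forest $s$ lies in $M_i$ depends only on the values $vu\alpha(s)$ as $u$ ranges over the (finitely many) context types that do not $P$-preserve $v$: this is exactly what the definition of $v^{+}$-equivalence says. Moreover, for each such $u$ the context type $vu$ has strictly larger $P$-depth than $v$ (the parenthetical remark in the definition of $v^{+}$-equivalence), while $|X|=|H_P|$ and the index of $P$ are unchanged, so Proposition~\ref{main-prop} is available for $vu$ by induction on the $P$-depth parameter.

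Concretely, for each $i$ fix a representative $g_i\in\gamma_i$. For every context type $u\in V$ that does not $P$-preserve $v$ and every $i$, the induction hypothesis of Proposition~\ref{main-prop}, applied with context type $vu$ and target forest type $vug_i$, gives a language $N_{u,i}$ definable in \FOd that agrees with $L^P_{vu,vug_i}$ on $P$-valid forests. Let $L_i=\bigcap_{u} N_{u,i}$, the intersection ranging over all context types $u\in V$ that do not $P$-preserve $v$ (in the degenerate situation where there is no such $u$ — which does not in fact occur in Case~2, since then $v$ would be $P$-preserved even by a saturated context — take $L_i$ to be the language of all forests). As $V$ is finite this is a finite intersection, and since \FOd is closed under Boolean combinations, $L_i$ is definable in \FOd. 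Now let $s$ be a $P$-valid forest. By definition of $v^{+}$-equivalence, $\alpha(s)\in\gamma_i$ iff $vu\alpha(s)=vug_i$ for every context type $u$ not $P$-preserving $v$, i.e.\ iff $s\in L^P_{vu,vug_i}$ for all such $u$; hence $s\in M_i$ iff $s\in\bigcap_u L^P_{vu,vug_i}$, which on $P$-valid forests coincides with $L_i$. Thus $L_i$ agrees with $M_i$ on $P$-valid forests.

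I expect essentially all the content of this lemma to be the bookkeeping around the induction parameters: one must be sure that composing $v$ with a context type $u$ that fails to $P$-preserve $v$ only ever pushes the $P$-depth strictly upward while leaving $|X|$ and the index of $P$ untouched, so that the finitely many appeals to Proposition~\ref{main-prop} are legitimate. Once this is granted, the statement reduces to the observation that $v^{+}$-equivalence is, by construction, a finite conjunction of $L^P_{vu,\cdot}$-membership conditions, together with the closure of \FOd under finite Boolean combinations.
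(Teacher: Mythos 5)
Your proof is correct and follows essentially the same route as the paper's: for each context type $u$ that does not $P$-preserve $v$, apply Proposition~\ref{main-prop} inductively to $L^P_{vu,vug_i}$ (the $P$-depth has strictly increased while $|X|$ and the index of $P$ are unchanged), and take $L_i$ to be the finite intersection of the resulting \FOd-definable languages. The only additions are the explicit handling of the degenerate case with no such $u$ and the spelled-out equivalence with $M_i$, both of which are harmless.
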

\begin{proof}
Fix a $v^+$-equivalence class $\gamma_i$ and let $g \in \gamma_i$. For
any $u$ such that $vu$ is not $P$-reachable from $v$, by induction in
Proposition~\ref{main-prop}, the third parameter has increased and the
other two are unchanged, there is a language $K_u$ definable in \FOd
that agrees with $L_{vu,vug}^P$ on $P$-valid forests. The lemma then
follows by taking for $L_i$ the intersection of all languages $K_u$
for $u$ such that $vu$ is not $P$-reachable from $v$.
\end{proof}

Let $P' = \set{p' \mid p \pequiv{k+2}^X p'}$. Observe that by
Claim~\ref{clm:global}, any \shal $p'$ in $P'$ contain at least one
position $x'$ such that $(p,x) \pequiv{k}^X (p',x')$. For $p' \in P'$,
let $x_1,\cdots,x_\ell$ be all the port-nodes of $p'$ such that $(p,x)
\pequiv{k}^X (p',x_i)$. Let $b(\hole)$ be the label of all the $x_i$
in $p'$. Let $\Delta_{p'}$ be the set of all the \shals that are
constructed from $p'$ by replacing at all the positions $x_i$,
$b(\hole)$ by a label $b(a)$ (possibly different for each position),
for $a$ such that $\alpha(a) \in X$. Let $\hat P$ be the union of all
$\Delta_{p'}$ for $p' \in P'$.  Finally, let $Q = (P \setminus P')
\cup \hat P$.

\begin{lemma}\label{lemma-LPtau}
There is a language $K$ definable in \FOd that agrees with $L^Q_{v,h}$ on
$Q$-definable forests.
\end{lemma}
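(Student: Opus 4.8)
The plan is to prove Lemma~\ref{lemma-LPtau} just as its Case~1 analogue Lemma~\ref{lemma-XG}: name the set $Y$ of $Q$-valid forest types, check that the induction parameters attached to the triple $(Q,v,h)$ are strictly below those of $(P,v,h)$, and then obtain $K$ by applying Proposition~\ref{main-prop} to $(Q,v,h)$. Since the $P$-depth of $v$ is irrelevant here, everything rests on the first two parameters, $|X|$ and the index of $P$.

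First I would show $Y\subseteq X$. A \shal occurring in a $Q$-valid forest is either a \shal of $P$ (one coming from $P\setminus P'$) or a member of $\hat P$, hence obtained from some $p'\in P'$ by turning the port-nodes $x_i$ --- all of them $\pequiv{k}^X$-equivalent to $(p,x)$ --- into \Xnodes $b(a)$ with $\alpha(a)\in X$. Given a $Q$-valid forest $s$, replace in it every such introduced \Xnode $b(a)$ by $b(s_a)$, where $s_a$ is a $P$-valid forest with more than one node and type $\alpha(a)$ (it exists because $\alpha(a)\in X=H_P$ and $P$ is reduced). Since $s_a\notin A$, the \shal at that node reverts from the \shal of $\hat P$ back to $p'\in P'$, so the resulting forest is $P$-valid, and its type is unchanged because $\alpha(b(a))=\alpha(b)\,\alpha(a)=\alpha(b)\,\alpha(s_a)=\alpha(b(s_a))$. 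Hence $\alpha(s)\in X$, i.e.\ $Y\subseteq X$. If $Y\subsetneq X$ we are done at once: Proposition~\ref{main-prop} applied to $(Q,v,h)$ has a strictly smaller first parameter and yields a definable $K$ agreeing with $L^Q_{v,h}$ on $Q$-valid forests. Likewise, if $Q$ is not branching, the non-branching case of Proposition~\ref{main-prop} produces such a $K$ directly, without any recursion.

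So assume $Y=X$ and $Q$ branching. The task now --- and this is exactly why the relaxed equivalence $\pequiv{k}^X$ and the notion of $X$-number were introduced instead of a naive notion of node equivalence --- is to show that the index of $Q$, computed with respect to $X=Y$, is strictly smaller than the index of $P$. I would compare the two sets class by class. The \shals of $P$ lying in $P'$ form a single $\pequiv{k+2}^X$-class, that of $p$, of common $X$-number $N$; the \shals of $P\setminus P'$ keep their $X$-numbers. On the other side, every $q\in\hat P$ arises from some member $p'$ of $P'$ (so of $X$-number $N$) by turning $\ell\ge 1$ of its port-nodes into \Xnodes, so $p'$ simulates $q$ modulo $X$ while $q\neq p'$; by antisymmetry of simulation modulo $X$, $q$ does not simulate $p'$, and --- using Claim~\ref{clm:global}, the condition on $\As$-labels in the definition of $\pequiv{k+2}^X$, and the safety move of the $X$-relaxed game, which is precisely what tells $b(\hole)$ apart from $b(a)$ --- one deduces that no \shal $\pequiv{k+2}^X$-equivalent to $p'$ is simulated modulo $X$ by $q$. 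Hence the $X$-number of $q$ is strictly less than $N$. Consequently the $m$-indexes of $Q$ and $P$ coincide for $m>N$, $\hat P$ affects only $m$-indexes with $m<N$, and the $N$-index strictly decreases when passing from $P$ to $Q$; in the decreasing-$n$ lexicographic order this means the index of $Q$ is smaller than the index of $P$. Therefore Proposition~\ref{main-prop}, applied to $(Q,v,h)$ by induction on the second parameter, produces the required $K$.

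I expect the strict inequality --- that the $X$-number of $q$ is less than $N$ --- to be the main obstacle. The delicate point is to rule out that some $q\in\hat P$ is itself $\pequiv{k+2}^X$-equivalent to $p$: if that were possible, $Q$ would still realize the class of $p$ at level $N$ while $\hat P$ would inflate the lower indexes, and the induction would not terminate --- exactly the pitfall announced earlier for this lemma. Controlling it is what the safety move in the $X$-relaxed game is for: the \Xnode newly placed at $x_i$ creates a $b(\hole)$-versus-$b(a)$ mismatch that Spoiler can exploit unless a fresh $b(\hole)$-node is present in $q$, and a careful comparison of the $\As$-labels occurring in $q$ and in $p'$ then pushes the class of $p'$ outside the set of \shals simulated modulo $X$ by $q$.
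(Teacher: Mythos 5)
Your overall scaffolding matches the paper's: show $Y\subseteq X$, dispatch the case $Y\subsetneq X$ by induction on the first parameter, and otherwise show that the index of $Q$ strictly drops by showing every $\hat p\in\hat P$ has strictly smaller $X$-number than the maximum $X$-number over $P'$. Your observation that the non-branching case of $Q$ requires no recursion is a valid (if tacit in the paper) point, and your proof of $Y\subseteq X$ fleshes out what the paper only sketches by reference to Lemma~\ref{lemma-XG}.

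The genuine gap is the sentence ``one deduces that no \shal $\pequiv{k+2}^X$-equivalent to $p'$ is simulated modulo $X$ by $q$.'' That is precisely the nontrivial content of the lemma and you do not deduce it; you only name the ingredients. Antisymmetry only rules out $q$ simulating $p'$ \emph{itself}, not an arbitrary $p''\pequiv{k+2}^X p'$, and the extra step is where all the work lies. The paper's argument is: given a hypothetical $p''\pequiv{k+2}^X p'$ simulated by $\hat p$, invoke Claim~\ref{clm:global} to locate a port-node $(p'',x'')$ with $(p'',x'')\pequiv{k}^X(p,x)$; by simulation this port-node corresponds to a port-node $x'$ of $p'$ lying \emph{outside} $\{x_1,\dots,x_\ell\}$; then one must establish the equivalence $(p'',x'')\pequiv{k}^X(p',x')$ by an explicit Duplicator strategy (this is Claim~\ref{clm:finish}), and the transitivity proved in Lemma~\ref{lem-eqiseq} turns this into $(p',x')\pequiv{k}^X(p,x)$, contradicting the maximality of $\{x_1,\dots,x_\ell\}$. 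That game argument --- Duplicator plays the isomorphism between $p'$ and $p''$ until a safety move, and then uses Claim~\ref{clm:global} to relocate onto a port-node --- is the heart of the proof and is absent from your proposal. Moreover, your closing discussion mislocates the difficulty (``rule out that $q\in\hat P$ is itself $\pequiv{k+2}^X$-equivalent to $p$'' is only a special case of what is needed) and inverts the role of the safety move: in the crucial claim the safety move is what Duplicator \emph{survives} to establish equivalence and thus the contradiction, not a weapon that forces a mismatch.
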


\begin{proof}
Let $Y$ be the set of all $Q$-valid forest types. We first observe
that $Y \subseteq X$. The argument is similar to the one in the proof
of Lemma~\ref{lemma-XG} as the newly introduced \shals can be
represented by $P$-definable forests. If $Y \subsetneq X$, then the
lemma follows by induction on the first parameter in
Proposition~\ref{main-prop}. Otherwise $X = Y$ and we prove that $Q$
has smaller index than $P$. The result is then immediate by induction
on the second parameter in Proposition~\ref{main-prop}. Note that
this is where we use the fact that our notion of equivalence between
positions is weaker than $\mequivk$. With a stronger notion, it would
not be possible to prove that the index has decreased.

Set $n \in \nat$ as the largest integer such that there exists a \shal 
in $P'$ with $X$-number $n$. We prove that all $\hat{p} \in \hat{P}$
have a $X$-number that is strictly smaller than $n$. It will then be
immediate from the definitions that $Q$ has smaller index than $P$.

Let $\hat{p} \in \hat{P}$. By definition, there exists $p' \in P'$ and 
$x_1,\cdots,x_\ell \in p'$ such that for all $i$, $(p,x) \pequiv{k}^X
(p',x_i)$ and replacing the labels $b(\hole)$ at all positions $x_i$
in $p'$ by $b(a)$ for $\alpha(a) \in X$ yields $\hat{p}$. In
particular this means that $p'$ simulates $\hat{p}$ modulo $X$ and
that the $X$-number of $\hat{p}$ is smaller or equal to that of $p'$
and hence smaller or equal to $n$. We prove that $\hat{p}$ does not
simulate any $p'' \pequiv{k+2}^X p'$ modulo $X$. It will follow that
the inequality is strict which terminates the proof.

We proceed by contradiction, assume that there exists $p''
\pequiv{k+2}^X p'$ such that $\hat{p}$ simulates $p''$ modulo $X$. By
definition, $p'' \pequiv{k+2}^X p' \pequiv{k+2}^X p$, hence, by
Claim~\ref{clm:global}, $p''$ contains a port-node $x''$ such that
$(p'',x'') \pequivk^X (p,x)$. By definition of simulation, $x''$
corresponds to a port-node $\hat{x}$ in $\hat{p}$ and a port-node $x'$
in $p'$. Moreover, since $\hat{x}$ is a port-node, this means that $x'
\not\in \{x_1,\cdots,x_\ell\}$, i.e. $(p',x')$ is not
$\pequivk^X$-equivalent to $(p,x)$. This contradicts the following
claim. 

\begin{claim} \label{clm:finish}
$(p'',x'') \pequiv{k}^X (p',x') \pequiv{k}^X (p,x)$.
\end{claim}

It remains to prove Claim~\ref{clm:finish}. We prove that $(p'',x'')
\pequiv{k}^X (p',x')$. By definition, $p'$ and $p''$ use the same set
of labels in \As and $x',x''$ have the same label $b(\hole)$. We give
a winning strategy for Duplicator in the $X$-relaxed game between
$(p'',x'')$ and $(p',x')$. By definition $p''$ is obtained from $p'$
by replacing some port nodes with \Xnodes with the same inner-node
label. Therefore as long as Spoiler does not use a safety move,
Duplicator can answer by playing the isomorphism.  Assume now that
Spoiler does a safety move. Then the pebbles are on positions $z' \in
p'$ and $z'' \in p''$ with labels $b(\hole)$ and $b(a)$ as
Duplicator's strategy disallow any other possibility such as
$b(a),b(a')$ where $a \neq a'$. Is Spoiler selects $z''$, then
Duplicator continues to play the isomorphism by leaving the other
pebble on $z'$. If Spoiler selects $z'$, observe that since 
$p' \pequiv{k+2}^X p''$, we can use Claim~\ref{clm:global} and get a
node $y'' \in p''$ such that $(p',x') \pequiv{k}^X (p'',y'')$, this is
Duplicator's answer. Duplicator can then continue to play by using the
strategy given by  $(p',x') \pequiv{k}^X (p'',y'')$.
\end{proof}

\medskip
\noindent {\bf Applying Antichain Composition.} Let $K$ and $L_1 \cdots L_n$ be
languages definable in \FOd as given by Lemma~\ref{lemma-LPtau} and
Lemma~\ref{lemma-Lv+}. For all $i$ let $a_i\in A$ be such that $\alpha(a_i)\in \gamma_i$.
Set $L = \{t \mid t[(L_1,\varphi) \rightarrow a_1,
\cdots, (L_k, \varphi) \rightarrow a_n] \in K\}$. It follows from 
Lemma~\ref{lemma-ACL} that $L$ is definable in \FOd.  We terminate the proof by
proving that $L$ agrees with $L^{P}_{v,h}$ on $P$-valid forests.

\begin{lemma}\label{lemma-v+replace}
For any $P$-valid forest $t$, $v\alpha(t)=v\alpha(t[(L_1,\varphi)
\rightarrow a_1, \cdots, (L_k, \varphi) \rightarrow a_k])$.
\end{lemma}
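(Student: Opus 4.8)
The plan is to carry out the substitutions defining the modified forest one node at a time and to check that each single substitution leaves the value $v\alpha(\cdot)$ unchanged.

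First I would organize the argument along the relevant nodes. Let $x_1,\dots,x_m$ be the nodes of $t$ at which $\varphi$ holds; by the definition of $\varphi$ they form an antichain, and each $x_j$ is a port-node, so its subforest below, which I call $s_j$, is not a single leaf; being a subforest of the $P$-valid forest $t$ it is itself $P$-valid, whence $\alpha(s_j)\in H_P=X$. Let $\gamma_{i_j}$ be the $v^+$-equivalence class of $\alpha(s_j)$. Since $L_{i_j}$ agrees with $M_{i_j}$ on $P$-valid forests, $s_j\in L_{i_j}$, so the substitution at $x_j$ replaces $s_j$ by a single leaf of label $a_{i_j}$ with $\alpha(a_{i_j})\in\gamma_{i_j}$. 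Using leaf surjectivity I would moreover pick, for every $i$ with $\gamma_i\cap X\neq\emptyset$ (in particular for every $i$ that actually occurs here), a representative $a_i$ with $\alpha(a_i)\in X$. Let $t^{(j)}$ be the forest obtained from $t$ by performing only the substitutions at $x_1,\dots,x_j$, so that $t^{(0)}=t$ and $t^{(m)}$ is the modified forest. It is then enough to prove $v\alpha(t^{(j-1)})=v\alpha(t^{(j)})$ for each $j$.

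Fix $j$ and let $c_j$ be the context obtained from $t^{(j-1)}$ by replacing the subforest below $x_j$ with a port, so that $t^{(j-1)}=c_j s_j$ and $t^{(j)}=c_j\cdot a_{i_j}$ (the last $a_{i_j}$ read as a one-node forest). As $\alpha(s_j)$ and $\alpha(a_{i_j})$ lie in the same $v^+$-class, the definition of $v^+$-equivalence gives $v\alpha(c_j)\alpha(s_j)=v\alpha(c_j)\alpha(a_{i_j})$, i.e. $v\alpha(t^{(j-1)})=v\alpha(t^{(j)})$, as soon as I can show that $\alpha(c_j)$ does \emph{not} $P$-preserve $v$; this is exactly the point for which $(p,x)$ was chosen $P$-bad for $v$. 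Now $c_j$ itself need not be $P$-valid, because the leaves introduced at $x_1,\dots,x_{j-1}$ turned those port-nodes into \Xnodes and may have dragged the surrounding \shals out of $P$. I would repair this by replacing each such leaf $a_{i_k}$ ($k<j$) by a $P$-valid forest of the same forest type — which exists since $\alpha(a_{i_k})\in X=H_P$ — obtaining a context $\hat c_j$ with $\alpha(\hat c_j)=\alpha(c_j)$. In $\hat c_j$ the nodes $x_1,\dots,x_{j-1}$ are again port-nodes, so every \shal occurring in $\hat c_j$ already occurs in $t$ and hence lies in $P$: $\hat c_j$ is $P$-valid. Furthermore the \shal at $x_j$ in $\hat c_j$ equals its \shal $p_j$ in $t$, and $(p,x)\pequiv{k}^X(p_j,x_j)$ since $\varphi(x_j)$ holds in $t$. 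Thus $\hat c_j$ is a $P$-valid context whose port-node above its port is $\pequiv{k}^X$-equivalent to $(p,x)$; by $P$-badness of $(p,x)$ for $v$, $\hat c_j$ cannot $P$-preserve $v$, hence neither does $\alpha(c_j)=\alpha(\hat c_j)$. Chaining $v\alpha(t^{(0)})=\dots=v\alpha(t^{(m)})$ concludes.

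The step I expect to be delicate is the construction of the witness $\hat c_j$: one must check at once that reinflating the already-substituted leaves into $P$-valid forests (i) does not change the context type, (ii) makes the whole context $P$-valid, and (iii) leaves the \shal at $x_j$ — the one that must match $(p,x)$ — untouched. These hold for a single reason: a \shal only records whether each of its trees is a leaf $a$, a $b(a)$, or a port-node $b(\hole)$, and inserting a $P$-valid (hence non-leaf) forest under $x_k$ turns $x_k$ into a port-node exactly as it was in $t$; but carrying this bookkeeping out carefully, together with the small point that the representatives $a_i$ may be taken inside $X$, is where the real care is needed.
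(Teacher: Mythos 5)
Your proof is correct. It follows the same basic plan as the paper's — replace at the $\varphi$-nodes and use $P$-badness together with $v^{+}$-equivalence to show each replacement preserves $v\alpha(\cdot)$ — but it handles one point more explicitly. The paper argues by induction on the number of port-nodes $\pequiv{k}^X$-equivalent to $(p,x)$: it replaces at one topmost such node $y$, checks $P$-badness for the context $c$ formed from the \emph{original} $P$-valid $t$ (which is indeed $P$-valid), and then invokes the induction hypothesis on $t'=ca_i$; but $t'$ is in general no longer $P$-valid — after the replacement $y$ is an \Xnode rather than a port-node, so the \shal at $y$ in $t'$ need not lie in $P$ — so the induction hypothesis, as stated only for $P$-valid forests, does not literally apply to $t'$. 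Your iterative write-up exposes and repairs exactly this: you note that $c_j$ need not be $P$-valid because of the earlier substitutions, and you reinflate the substituted leaves into $P$-valid forests of matching type to build a $P$-valid $\hat c_j$ with $\alpha(\hat c_j)=\alpha(c_j)$, to which $P$-badness does apply. Your insistence on choosing $a_i$ with $\alpha(a_i)\in\gamma_i\cap X$ is also the right convention (it is what makes the reinflation possible, and, independently, what makes the substituted forest $Q$-valid in Lemma~\ref{lemma-LPtau}), whereas the paper only says $\alpha(a_i)\in\gamma_i$. In short: same route as the paper, with a cleaner treatment of the $P$-validity of the intermediate contexts.
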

\begin{proof}
  This is because $(p,x)$ is $P$-bad for $v$. The proof goes by induction on
  the number of occurrences in $t$ of port-node $(q,y)$ such that $(q,y)
  \pequiv{k}^X (p,x)$. If there is no occurrence, this is immediate as the
  substitution does nothing.

Consider a node $y$ of a \shal $q$ such that $(q,y) \pequiv{k}^X
(p,x)$ and no node above $y$ satisfies that property. Let $s$ be
the subforest below $y$ in $t$ and let $i$ be such that $\alpha(s) \in
\gamma_i$. Let $c$ be the context formed from $t$ by replacing $s$ by
a port and let $u_c$ be its type. Since $(p,x)$ is $P$-bad for $v$,
$u_c$ does not $P$-preserve $v$. Hence, $v\alpha(t) = vu_c\alpha(s) =
vu_c \alpha(a_i)$ by definition of $v^+$-equivalence. 

We write $t' = ca_i$, we already know that $v\alpha(t') = v\alpha(t)$. Observe
that by construction $t'[(L_1,\varphi) \rightarrow a_1, \cdots, (L_k, \varphi)
\rightarrow a_k]$ is $t[(L_1,\varphi) \rightarrow a_1, \cdots, (L_k, \varphi)
\rightarrow a_k]$. By induction we have that $v \alpha(t') = v
\alpha(t'[(L_1,\varphi) \rightarrow a_1, \cdots, (L_k, \varphi) \rightarrow
a_k])$ which terminates the proof.
\end{proof}

\section{Decidability}\label{more}

In this section we prove that the characterization of \FOd given in
Theorem~\ref{th-efmax} is decidable.

\begin{theorem}\label{cor-decid}
Let $L$ be a regular language of forests. It is decidable 
whether $L$ is definable in \FOd.
\end{theorem}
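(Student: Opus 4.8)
The plan is to reduce the decidability of the characterization in Theorem~\ref{th-efmax} to the decidability of each of its three conditions on the syntactic morphism of $L$. Given $L$ as a regular language, one first computes its syntactic forest algebra $(H,V)$ and syntactic morphism $\alpha$; this is effective since $L$ is regular (and, if $L$ is given by an \MSO sentence, the syntactic forest algebra is computable from the sentence as recalled in Section~\ref{forest}). One then computes the leaf completion $\beta$ of $\alpha$, which is again effective. By Theorem~\ref{th-efmax}, $L$ is definable in \FOd iff (a) $H$ satisfies~\eqref{eqh}, (b) $V$ satisfies~\eqref{eqv}, and (c) $\beta$ is closed under saturation. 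So it suffices to show each of these three properties is decidable.

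Conditions (a) and (b) are immediate: since $H$ and $V$ are finite, one simply checks the finitely many instances of the identities $\omega(h+g)+g+\omega(h+g)=\omega(h+g)$ for all $h,g\in H$ and $(uv)^\omega v(uv)^\omega=(uv)^\omega$ for all $u,v\in V$, where $\omega$ is the computable exponent $\omega(H,V)$. The real content is condition (c). By Lemma~\ref{lem-sat-k}, closure under saturation is equivalent to closure under $k$-saturation for some $k$; and if one could a priori bound such a $k$ in terms of the size of $(H,V)$ and the alphabet, then closure under $k$-saturation would itself be decidable: there are only finitely many branching reduced sets $P$ of \shals (up to \mequiv{k}), finitely many $\pequiv{k}^{H_P}$-classes of nodes, and the condition ``$\Delta$ is \saturated{(P,k)} implies $\alpha(\Delta)^\omega h_1=\alpha(\Delta)^\omega h_2$ for all $h_1,h_2\in H_P$'' ranges over a finite, effectively enumerable set of data, since whether a context type is the type of a \saturated{(P,k)} context is checkable by a finite reachability computation in $(H,V)$. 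Thus the plan is: first, prove that there is a computable bound $K=K(|H|,|V|,|\A|)$ such that $\alpha$ is closed under saturation iff it is closed under $K$-saturation; second, observe that closure under $K$-saturation is then decidable by the finite enumeration just sketched.

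The hard part — and I expect this to be essentially the whole proof of Theorem~\ref{cor-decid} — is establishing the a priori bound $K$. The naive difficulty is that the parameter $k$ in the definition of saturation appears inside the equivalence $\pequiv{k}^{H_P}$ used to require that every port-node \shal of $P$ is matched on the backbone of $\Delta$; a smaller $k$ makes more contexts \saturated{(P,k)} and hence is a stronger requirement (Lemma~\ref{lem-sat-k}), so one cannot simply take $k$ large. The approach is to show that if $\alpha$ is closed under $k$-saturation for \emph{some} $k$, then it is already closed under $k_0$-saturation for a fixed $k_0$ depending only on the algebra, via a pumping argument on the backbone of a candidate \saturated{(P,k_0)} context $\Delta$ that is \emph{not} \saturated{(P,k)}: the point is that the relevant information a port-node \shal carries, modulo $\pequiv{}^{H_P}$, stabilizes after boundedly many rounds because $\pequiv{n}^{H_P}$ has bounded index (of a size computable from $|H|$ and $|\A|$), so beyond that round count the ``$\saturated{(P,k)}$'' and ``$\saturated{(P,k_0)}$'' predicates select the same context types, and one transfers the equation~\eqref{eqt} from one regime to the other by inserting, into the backbone of $\Delta$, the missing matching port-nodes — which exist as subcontexts of $\Delta$ in a more nested copy — using Identity~\eqref{eqv} to absorb the extra factors as in the base case of Section~\ref{main-proof}. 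This is the ``intricate pumping argument for saturation'' announced at the end of Section~\ref{carac}; I would structure it as a lemma stating that closure under $k$-saturation for any $k\ge k_0$ implies closure under saturation with the explicit $k_0$, together with a converse reduction lemma, and then conclude Theorem~\ref{cor-decid} in a short paragraph by combining it with the trivial decidability of~\eqref{eqh} and~\eqref{eqv} and the finite enumeration of branching reduced $P$.
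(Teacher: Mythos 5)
Your reduction to checking the three conditions of Theorem~\ref{th-efmax}, and the observation that~\eqref{eqh} and~\eqref{eqv} are trivially decidable over the finite algebra, are correct and match the paper; you also locate the difficulty in the right place. However, both halves of your plan for condition (c) have genuine gaps.

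First, the claim that, once $k$ is fixed, closure under $k$-saturation reduces to a finite enumeration over ``finitely many branching reduced sets $P$ up to $\mequiv{k}$'' is unjustified. The saturation predicate is not determined by the $\mequiv{k}$-classes met by $P$: one needs (i) the set $H_P$, which depends on the transition behaviour of the \shals of $P$ on forest types (the maps $G \mapsto p(G)$ and $G \mapsto p(G,x)$), and (ii) the set of context types that are images of $(P,k)$-saturated contexts, which depends on how the $\pequiv{k}^{H_P}$-classes of port-nodes of $P$ can be jointly realized along a backbone. The paper handles this by abstracting a position $(p,x)$ into a \emph{profile} (arity up to threshold $2$, alphabet, forest mapping, context mapping) and a set $P$ into a \emph{configuration} (a set of sets of profiles); the entire difficulty is then to compute which configurations are \emph{realizable} by actual sets of \shals, i.e.\ the sets $\bcI_k[\alpha,X]$. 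This realizability computation is the content of the fixpoint algorithm $Sat[X,\alpha]$ and of Proposition~\ref{prop:algo}; without it, your ``finite reachability computation in $(H,V)$'' is not available.

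Second, your justification for a computable bound $K$ --- that the information carried by a port-node modulo $\pequiv{n}^{H_P}$ ``stabilizes after boundedly many rounds because $\pequiv{n}^{H_P}$ has bounded index'' --- rests on a false premise: each $\pequiv{n}^{X}$ has finite index, but that index grows with $n$ and the refinement does not stabilize at the level of individual positions (already $\mequiv{n}$ on strings separates longer and longer strings as $n$ grows). What stabilizes is the decreasing sequence $\bcI_k[\alpha,X]$ inside the fixed finite set $\bcC$; mere stabilization (Fact~\ref{fct:pro3}) is easy but non-effective, and the explicit bound $\ell = 2|\As|^2(|\bcC|+1)$ comes from the completeness proof of Proposition~\ref{prop:algo}, an induction on the alphabet of a \shal using \patt{\Bs[X]}{n} decompositions and the rule $\omega\Vc + \uplift{\Bs} + \omega\Vc'$ --- a pumping argument on the horizontal structure of \shals, not on the backbone of a single candidate context $\Delta$ as you propose. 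These two missing pieces constitute essentially all of the paper's Section~\ref{more}.
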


In view of Theorem~\ref{th-efmax} the decision procedure works as follows. From
$L$ we first compute its syntactic morphism $\alpha: \A^\mydelta \rightarrow
(H,V)$. Then we check that~\eqref{eqh} holds in $H$, that~\eqref{eqv} holds in
$V$ and that $\alpha$ is closed under saturation. This is straightforward
for~\eqref{eqh} and~\eqref{eqv} as $H$ and $V$ contain only finitely many
elements. However it is not obvious from the definitions that closure under
saturation can be decided. The main result of this section is an algorithm
which, given as input a morphism $\alpha$, decides whether $\alpha$ is closed
under saturation.

Recall the definition of saturation. It requires the existence of a number $k$
such that for all branching and reduced sets $P$ of \shals and all
\saturated{(P,k)} contexts a property holds. The main problem is that all these
quantifications range over infinite sets. In the first part of this section we
introduce an ``abstract'' version of these sets with finitely many objects
together with an associated ``abstract'' notion of saturation and show that
closure under saturation corresponds to closure under the abstract notion of
saturation.

Then, in the remaining part of the section we present an algorithm
that computes the needed abstract sets.

\subsection{Abstraction}

Let $\A = (A,B)$ be a finite alphabet and $\alpha: \A^\mydelta
\rightarrow (H,V)$ be a morphism into a finite  forest algebra
$(H,V)$. Recall that we see \shals as strings over $\As$, i.e. as
elements of $\As^+$. In order to stay
consistent with our notation on \shals, we will denote by $+$ the
concatenation within $\As^+$. Recall that if $Q \subseteq \As^+$ is a set of
\shals, then we write $(p,x) \in Q$ instead of $x$ is a node of some \shal $p\in Q$.

We start with some terminology. Let $p$ be a \shal of arity $n$ and
let $G \subseteq H$. We denote by $p(G)$ the set of forest types $h
\in H$ such that there exists a sequence $T$ of $n$ forests which all
have a type in $G$ and such that $\alpha(p[T])=h$. For a port-node $x$
of $p$, we denote by $p(G,x)$ the set of context types $v \in V$ such
that there exists a sequence $T$ of $n-1$ forests which have all a
type in $G$ such that $\alpha(p[T,x]) = v$. If $x$ is not a port-node
of $p$ then we set $p(G,x) = \emptyset$ for all $G$. The following
fact is immediate. 

\begin{fct} \label{fct:gentypes}
Let $(p,x)$ and $(q,y)$ be nodes and $r = p + q$. Then for
any $G \subseteq H$, $r(G) = p(G) + q(G)$, $r(G,x) = p(G,x) + q(G)$
and $r(G,y) = p(G) + q(G,y)$.
\end{fct}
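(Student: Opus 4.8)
The plan is to prove the three equalities by directly unfolding the definitions of $p(G)$, $q(G)$, $p(G,x)$ and $q(G,y)$, together with the observation that filling the ports of $r = p+q$ amounts to filling the ports of $p$ and those of $q$ separately. Write $n_p$ and $n_q$ for the arities of $p$ and $q$, so that $r$ has arity $n = n_p + n_q$, and recall that the linear order on the ports of $r$ lists the $n_p$ ports of $p$ first and then the $n_q$ ports of $q$.

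For $r(G) = p(G) + q(G)$: any sequence $T$ of $n$ forests with all forest types in $G$ splits uniquely as a concatenation $T = T_p\,T_q$ with $|T_p| = n_p$ and $|T_q| = n_q$, and by construction $r[T] = p[T_p] + q[T_q]$; conversely every such pair $(T_p,T_q)$ arises this way. Since $\alpha$ is a morphism of forest algebras, its component on $H$ is a semigroup morphism for $+$, so $\alpha(r[T]) = \alpha(p[T_p]) + \alpha(q[T_q])$. Taking the union over all admissible $T$ gives both inclusions. The degenerate cases are harmless: if $G = \emptyset$ and $n \ge 1$ both sides are empty, and if $n = 0$ the claim is just $\{\alpha(r)\} = \{\alpha(p)\} + \{\alpha(q)\}$.

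For $r(G,x) = p(G,x) + q(G)$ — the identity for $y$ being symmetric — I would first dispose of the case where $x$ is not a port-node of $p$: then $x$ is not a port-node of $r$ either, so both sides equal $\emptyset$. Otherwise $x$ is a port-node of $p$, hence of $r$, and a sequence $T$ of $n-1$ forests placed in the ports of $r$ other than $x$ splits as $T = T_p\,T_q$ with $|T_p| = n_p - 1$ (for the ports of $p$ other than $x$) and $|T_q| = n_q$, with $r[T,x] = p[T_p,x] + q[T_q]$, the additive composition of the context $p[T_p,x]$ with the forest $q[T_q]$. Applying $\alpha$ yields $\alpha(r[T,x]) = \alpha(p[T_p,x]) + \alpha(q[T_q])$, an element of $p(G,x) + q(G)$, and conversely every element of $p(G,x)+q(G)$ is hit; unioning over all admissible $T$ finishes the argument.

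The one point needing a word of (routine) care — and the closest thing here to an obstacle — is the compatibility of $\alpha$ with the \emph{mixed} operation: that $\alpha(c + s) = \alpha(c) + \alpha(s)$ when $c$ is a context and $s$ a forest, where the right-hand side denotes the element $(\alpha(c) + \alpha(s)) \in V$ provided by the forest-algebra axioms. This is a standard consequence of $\alpha$ being a forest-algebra morphism: for every forest $h$ one has $\alpha(c+s)\,\alpha(h) = \alpha((c+s)h) = \alpha(ch + s) = \alpha(c)\,\alpha(h) + \alpha(s)$, which is exactly the defining equation of $(\alpha(c)+\alpha(s))$, so the two context types coincide. With this in hand everything above is bookkeeping, which is why the statement is stated as immediate; I would include at most one sentence for it in the final write-up.
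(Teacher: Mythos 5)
Your proof is correct: it is exactly the routine unfolding of the definitions that the paper leaves implicit by declaring the fact ``immediate,'' and you have rightly isolated the only non-bookkeeping point, namely that $\alpha$ respects the mixed sum of a context and a forest, which you verify cleanly from the defining equation of $v+g$. Nothing further is needed.
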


\medskip
\noindent
{\bf Abstracting \shals: Profiles.} We now define an abstract version
of positions in \shals that we call \emph{profiles}.

Consider a pair $(p,x)$ where $p$ is a \shal and $x$ a position in
$p$. The profile of $(p,x)$, denoted $\beta(p,x)$ is the quadruple
$\fv = (i,\Bs,f_H,f_V)$ where
\begin{enumerate}
\item $i \in \{0,1,2\}$ is the \emph{arity} of $p$ counted up to
  threshold $2$,
\item $\Bs \subseteq \As$ is the \emph{alphabet} $p$, i.e. the set of
  labels used in $p$, 
\item $f_H: 2^H \rightarrow 2^H$ is the \emph{forest mapping} of $p$,
  defined as the mapping $G \mapsto p(G)$,
\item $f_V : 2^H \rightarrow 2^V$ is the \emph{context mapping} of
  $(p,x)$, defined as the mapping $G \mapsto p(G,x)$.
\end{enumerate}
Observe that if $p$ has arity $0$ (i.e. $p$ is a forest) then $f_H$ is
the mapping $G \mapsto \{\alpha(p)\}$. Moreover, whenever $x$ is not a
port-node, $f_V$ is the mapping $G \mapsto \emptyset$. We let $\fP$ be
the set of profiles of all \shals. Observe that \fP is finite:
\[
\fP \subseteq \{0,1,2\} \times 2^{\As} \times (2^H)^{2^H} \times (2^V)^{2^H}
\]
In the rest of this section we shall denote by $\fu,\fv,\dots$ profiles
(elements of \fP), by $\fU,\fV, \dots$ sets of profiles (subsets of \fP)
and by $\Uc,\Vc,\dots$ sets of sets of profiles (subsets of
$2^{\fP}$).

Let us first present two semigroup operations for \fP. Both operations
are adapted from the concatenation operation between \shals. If
$(p,x)$ and $(p',x')$ are pairs where $p,p'$ are \shals and $x,x'$ are
positions of $p,p'$, then one can use concatenation to construct two
new pairs: $(p+p',x)$ in which we keep the position $x$ of $p$ and
$(p+p',x')$ in which we keep the position $x'$ of $p'$.

Abstracted on profiles, this yields the two following operations. Let
$\fv,\fv' \in \fP$ be two profiles and set $(i,\Bs,f_H,f_V) = \fv$ and
$(i',\Bs',f'_H,f'_V) = \fv'$. We define, two new profiles $\fv +_\ell
\fv' \in \fP$ and $\fv +_r \fv' \in \fP$ as follows,
\begin{center}
\begin{minipage}{0.46\linewidth}
 $\fv +_\ell \fv' = (j,\Cs,g_H,g_V)$ with
\begin{itemize}
\item $j = min(i+i',2)$
\item $\Cs = \Bs \cup \Bs'$.
\item $g_H: G \mapsto f_H(G) + f'_H(G)$.
\item $g_V: G \mapsto f_V(G) + f'_H(G)$. 
\end{itemize}
\end{minipage}
\begin{minipage}{0.46\linewidth}
 $\fv +_r \fv' = (j,\Cs,g_H,g'_V)$ with
\begin{itemize}
\item $j = min(i+i',2)$
\item $\Cs = \Bs \cup \Bs'$.
\item $g_H: G \mapsto f_H(G) + f'_H(G)$.
\item $g'_V: G \mapsto f_H(G) + f'_V(G)$. 
\end{itemize}
\end{minipage}
\end{center}

On the \shal level, the definition exactly means that for any
$(p,x)$, $(p',x') \in \As^+$ such that $\fv = \beta(p,x)$ and $\fv' =
\beta(p',x')$, we have $\fv +_\ell \fv' = \beta(p+p',x)$ and $\fv +_r
\fv' =  \beta(p+p',x')$. One can verify that 
$+_r$ and $+_\ell$ are both semigroup operations. Moreover, the
following fact is immediate from the definitions and states that one
can use the operations $+_\ell$ and $+_r$ to compute the whole set \fP
from the profiles of one-letter \shals.

\begin{fact} \label{fct:computeP}
\fP is the smallest subset of $\{0,1,2\} \times 2^{\As} \times
(2^H)^{2^H} \times (2^V)^{2^H}$ such that:
\begin{itemize}
\item \fP contains the profiles of one-letter \shals: for all $c \in
  \As$, $\beta(c,x) \in \fP$ (where $x$ is the unique position in $c$)
\item \fP is closed under $+_\ell$.
\item \fP is closed under $+_r$.
\end{itemize}

\end{fact}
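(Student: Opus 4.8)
The plan is to establish the ``smallest'' claim through the usual two inclusions. Let $\fU$ denote an arbitrary subset of $\{0,1,2\}\times 2^{\As}\times(2^H)^{2^H}\times(2^V)^{2^H}$ that contains the profiles of all one-letter \shals and is closed under both $+_\ell$ and $+_r$. First I would show that $\fP$ is itself one such $\fU$; since the smallest such set is contained in every such set, this gives that the smallest one lies inside $\fP$. Then I would show that $\fP$ is contained in every such $\fU$; applied to the smallest one, this yields the reverse inclusion. Together these force $\fP$ to \emph{be} the smallest such set.

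For the first inclusion, a one-letter \shal is a \shal, so its profile lies in $\fP$ simply because $\fP$ is \emph{defined} as the set of profiles of all \shals. For closure, given $\fv,\fv'\in\fP$ I pick \shals $(p,x)$ and $(p',x')$ realizing them, observe that $p+p'$ is again a \shal (concatenation keeps every tree of one of the three admissible shapes and always yields at least two trees), and invoke the \shal-level reading of the operations stated just before the Fact: $\fv+_\ell\fv'=\beta(p+p',x)$ and $\fv+_r\fv'=\beta(p+p',x')$, so both products lie in $\fP$. That \shal-level identity is the one place where I would actually check things component by component: the alphabet component and the mappings $f_H,f_V$ are handled directly by Fact~\ref{fct:gentypes}, while for the arity component one only needs the elementary identity $\min(\min(i,2)+\min(i',2),2)=\min(i+i',2)$ for $i,i'\in\{0,1,2\}$.

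For the second inclusion, fix such a $\fU$ and prove that $\beta(p,x)\in\fU$ for every \shal $p$ and every position $x$ of $p$, by induction on the number $n$ of trees of $p$. When $n=1$ this is exactly the base assumption on $\fU$, reading the profile map — as the statement of the Fact implicitly does — as defined on every one-letter string over $\As$, which absorbs the minor nuisance that a bare leaf $a\in A$ is not formally a \shal. When $n\ge 2$, write $p=p_1+p_2$ with $p_1$ the leftmost tree and $p_2$ the remaining $n-1$ trees. If $x$ lies in $p_1$, then $\beta(p,x)=\beta(p_1,x)+_\ell\beta(p_2,y)$ for any position $y$ of $p_2$ — the operation $+_\ell$ does not depend on the position chosen in its right operand — and both factors are in $\fU$ by the base case and the induction hypothesis, hence $\beta(p,x)\in\fU$ by closure under $+_\ell$; symmetrically, if $x$ lies in $p_2$ one uses $+_r$.

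I do not expect a genuine obstacle: the statement is essentially the bookkeeping remark that $\fP$ is generated from one-letter profiles under the two concatenation operations, and once Fact~\ref{fct:gentypes} is available the argument is a routine structural induction. The only points worth spelling out are the ones flagged above — verifying that $\fP$, which is defined via \shals rather than via the operations, is closed under $+_\ell$ and $+_r$, and the single-leaf edge case in the induction — and neither carries any real difficulty.
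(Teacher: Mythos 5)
Your proof is correct and supplies, via the standard two-inclusion argument together with a structural induction on the number of trees, exactly the routine verification that the paper leaves implicit — the Fact is asserted there as immediate from the definitions, with no proof given. Your treatment of the edge case (a bare leaf $a\in A$ is not formally a \shal, so the base bullet must be read as applying the profile map to every one-letter string over \As) correctly identifies and resolves a small looseness in the paper's own statement, and the reduction of the closure of \fP under $+_\ell$ and $+_r$ to the \shal-level identity $\fv+_\ell\fv'=\beta(p+p',x)$, $\fv+_r\fv'=\beta(p+p',x')$ is precisely what the surrounding text intends.
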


\medskip
\noindent
{\bf Abstracting sets of \shals: Configurations.} Recall the
definition of saturated contexts: let $P$ be a set of \shals, a
context is $(P,k)$-saturated iff it is $P$-valid and for all $(p,x)
\in P$, there exists a $\pequiv{k}^X$-equivalent position on the
backbone of the context. This means that we need to define an
abstraction of sets of \shals $P$ that contains two informations:
\begin{itemize}
\item the set of $P$-valid types.
\item the set of images under $\alpha$ of $(P,k)$-saturated contexts.
\end{itemize}
For this we introduce the notion of \emph{configurations}. Notice that this abstraction
needs to be parametrized by the equivalence $\pequiv{k}^X$. In order
to do this, we will abstract this equivalence on profiles which are
our abstraction of the objects compared by $\pequiv{k}^X$.

There is an issue however. Intuitively, we want two profiles \fu and
\fv to be ``equivalent'' if one can find $(p,x)$ and $(q,y)$ such that
$(p,x) \pequiv{k}^X (q,y)$, $\beta(p,x) = \fv$ and $\beta(q,y) =
\fu$. Unfortunately, this is not the right definition as the relation
we obtain is not transitive in general and hence not an equivalence
anymore. This is a problem since the definition of a saturated context
requires to pick one position $(p,x)$ among a \emph{set} of equivalent
ones. We solve this problem by abstracting sets of equivalent
positions directly by sets of profiles.

Moreover, if $P \subseteq \As^+$, the configuration that abstracts $P$
needs to have exhaustive information about \emph{all} sets of
equivalent positions that can be found in $P$. Therefore we define a
configuration as a sets of sets of profiles, i.e. an element of the 
set:
\[
\bcC = 2^{2^{\fP}}
\]
Of course, we are only interested in elements of \bcC that correspond
to actual sets of \shals. Let $k \in \nat$ and $X \subseteq H$, we let
$\bcI_k[\alpha,X]$ be the set of \emph{$(X,k)$-relevant}
configurations: 
\begin{align*}
  \bcI_k[\alpha,X] &= \{ \Vc \in \bcC ~|~  \exists Q \subseteq \As^+ \text{
    such that }\\
  & 1.~~  \forall q,q' \in Q, \exists x,x' \in q,q' \text{ s.t. }  (q,x) \pequiv{k}^X (q',x')\\
   & 2.~~ \forall (q,x) \in Q ~ \text{ there exist } (q_1,x_1) \pequiv{k}^X \cdots \pequiv{k}^X
    (q_n,x_n) \pequiv{k}^X (q,x) \in Q \text{ such that }\\
   & \text{\ ~~~~~\ ~~~} \set{\beta(q_1,x_1),\dots,\beta(q_n,x_n)} \in \Vc\\
    & 3.~~ \forall \fV \in \Vc \text{ there exist } (q_1,x_1) \pequiv{k}^X \cdots \pequiv{k}^X
    (q_n,x_n) \in Q \text{ such that }\\
& ~~~~~\fV = \set{\beta(q_1,x_1),\dots,\beta(q_n,x_n)} \\\}
\end{align*}
Note that condition~(1) restricts the definition to sets of \shals
that are $\pequiv{k}^X$-equivalent. This will later be necessary
when computing the sets of relevant configurations. However,
when considering saturation, we will actually work with unions of
relevant configurations.

This definition takes care of the quantification over the infinite set
of sets of \shals in the definition of saturation. One quantification
still needs to be dealt with: quantification over $k \in \nat$. We
achieve this by defining the set of \emph{$X$-relevant}
configurations as the intersection of the previous sets for all $k$:
\begin{equation*}
\bcI[\alpha,X] = \bigcap_k \bcI_k[\alpha,X]
\end{equation*}
We will present an algorithm for computing $\bcI[\alpha,X]$ in the
second part of this section. The following fact is immediate
from the definitions: 

\begin{fct} \label{fct:pro3}
For any $k,k' \in \nat$ such that $k \leq k'$ and $X \subseteq H$,
$\bcI[\alpha,X] \subseteq \bcI_{k'}[\alpha,X] \subseteq
\bcI_{k}[\alpha,X]$. In particular, there exists $\ell \in \nat$ such
that for all $k \geq \ell$ and all $X \subseteq H$, $\bcI[\alpha,X] =
\bcI_k[\alpha,X]$.
\end{fct}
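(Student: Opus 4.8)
The plan is to derive everything from a single monotonicity observation about the relaxed-game equivalences: for $k\le k'$ we have $\pequiv{k'}^X\ \subseteq\ \pequiv{k}^X$. Indeed, the label conditions $i)$ and $ii)$ in the definition of $\pequiv{k}^X$ do not mention the number of rounds, and any winning strategy for Duplicator in the $k'$-round $X$-relaxed game restricts to one in the $k$-round game (she simply follows it for the first $k$ moves). Hence every $\pequiv{k'}^X$-equivalent pair is $\pequiv{k'}^X$-equivalent, and every chain $(q_1,x_1)\pequiv{k'}^X\cdots\pequiv{k'}^X(q_n,x_n)$ is also a $\pequiv{k}^X$-chain. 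From this I would deduce the two halves of the statement separately: the inclusion chain $\bcI[\alpha,X]\subseteq\bcI_{k'}[\alpha,X]\subseteq\bcI_{k}[\alpha,X]$, and then the eventual stabilisation.

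For the inclusion, I would take $\Vc\in\bcI_{k'}[\alpha,X]$ together with a witnessing set of \shals $Q\subseteq\As^+$ satisfying conditions $1$, $2$, $3$ of the definition of $\bcI_{k'}[\alpha,X]$, and claim the very same $Q$ witnesses $\Vc\in\bcI_k[\alpha,X]$. Each of those three conditions only asserts the existence of certain pairs or chains of positions that are $\pequiv{k'}^X$-equivalent (resp. $\pequiv{k'}^X$-chained), subject to constraints purely on their profiles $\beta(\cdot,\cdot)$; by the monotonicity above the same pairs and chains are $\pequiv{k}^X$-equivalent (resp. $\pequiv{k}^X$-chained), and the profile constraints are untouched, so conditions $1$, $2$, $3$ for $\bcI_k[\alpha,X]$ hold with the same $Q$. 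This gives $\bcI_{k'}[\alpha,X]\subseteq\bcI_k[\alpha,X]$ whenever $k\le k'$, and since $\bcI[\alpha,X]=\bigcap_k\bcI_k[\alpha,X]$ by definition, we also get $\bcI[\alpha,X]\subseteq\bcI_{k'}[\alpha,X]$ for every $k'$, hence the full chain.

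For the "in particular" clause I would argue by finiteness. Since $\fP$ is finite, so is $\bcC=2^{2^{\fP}}$, and therefore the non-increasing sequence $\bcI_0[\alpha,X]\supseteq\bcI_1[\alpha,X]\supseteq\cdots$ of subsets of $\bcC$ must be eventually constant, say from some index $\ell_X$ onwards. For $k\ge\ell_X$ we then have $\bcI_k[\alpha,X]=\bcI_{\ell_X}[\alpha,X]$; because the sequence is non-increasing, this common value is exactly $\bigcap_k\bcI_k[\alpha,X]=\bcI[\alpha,X]$. Finally $H$ is finite, so there are finitely many $X\subseteq H$, and setting $\ell=\max_{X\subseteq H}\ell_X$ yields one threshold valid for all $X$ simultaneously.

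The whole argument is essentially bookkeeping, and I do not expect a real obstacle — which is presumably why the statement is labelled a Fact. The only point that needs a moment's care is to check that each of the three defining clauses of $\bcI_k[\alpha,X]$ is monotone in $k$ in the correct direction, i.e. that decreasing the round count only makes those clauses \emph{easier} to satisfy; this is exactly what forces us to use $\pequiv{k'}^X\subseteq\pequiv{k}^X$ rather than the reverse inclusion, and it is the reason the sequence $(\bcI_k[\alpha,X])_k$ shrinks rather than grows.
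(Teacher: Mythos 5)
Your argument is correct and matches the paper's, which simply declares the fact ``immediate from the definitions'': the anti-monotonicity $\pequiv{k'}^X\subseteq\pequiv{k}^X$ makes each of the three existential clauses defining $\bcI_k[\alpha,X]$ easier to satisfy as $k$ decreases, and finiteness of $\bcC$ (and of $2^H$) gives stabilisation of the non-increasing chain, hence the threshold $\ell$. The only blemish is a typo (``every $\pequiv{k'}^X$-equivalent pair is $\pequiv{k'}^X$-equivalent'' should end with $\pequiv{k}^X$), which does not affect the argument.
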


Note that while proving the existence of $\ell$ in Fact~\ref{fct:pro3}
is simple, computing an actual bound on $\ell$ is more difficult and
will be a consequence of our algorithm computing $\bcI[\alpha,X]$.

Finally we equip $\bcC$ with a semigroup operation by generalizing the
operations $+_\ell$ and $+_r$ defined on \fP. Observe that $+_\ell$
and $+_r$ can be generalized on sets of profiles: we define the sum of
two sets as the set of all possible sums of elements of the two
sets. If $\Uc,\Vc \in \bcC$, we can now define $\Uc + \Vc$ as the set
\[
\{\fU +_\ell \bigcup_{\fV \in
  \Vc} \fV \mid \fU \in \Uc\} \cup
\{\bigcup_{\fU \in
  \Uc} \fU +_r \fV \mid \fV \in \Vc\}  
\]
The following fact can be verified from the definitions.
\begin{fct} \label{fct:prosemi2} $(\bcC,+)$ is a semigroup.
\end{fct}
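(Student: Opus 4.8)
The plan is to reduce associativity of $+$ on \bcC to a few elementary identities relating $+_\ell$ and $+_r$ on subsets of \fP, which in turn are verified on single profiles by inspecting the four components in the definitions of $\fv +_\ell \fv'$ and $\fv +_r \fv'$. Recall that $+$ on \bcC is defined purely from $+_\ell$, $+_r$ and the union map $\Vc \mapsto \bigcup_{\fV \in \Vc}\fV$; writing $\bigcup\Vc$ for $\bigcup_{\fV\in\Vc}\fV$, one has by definition $\bigcup(\Uc+\Vc) = (\bigcup\Uc +_\ell \bigcup\Vc)\cup(\bigcup\Uc +_r \bigcup\Vc)$. Since $+_\ell$ and $+_r$ are semigroup operations on \fP, their elementwise liftings are semigroup operations on $2^{\fP}$, and any identity built from single applications of these operations lifts from \fP to $2^{\fP}$; I would use these facts without further comment.

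The structural observation driving everything is that $\fv +_\ell \fv'$ and $\fv +_r \fv'$ always agree in their first three components (arity, alphabet, forest mapping), that $+_\ell$ depends on its right argument only through that forest part while $+_r$ depends on its left argument only through its forest part, and that the context mappings are assembled by composing with $+$ on $H$, which is associative. From this I would extract, for all $\fU,\fV,\fW \subseteq \fP$, the identities (i) $\fU +_\ell (\fV +_r \fW) = \fU +_\ell (\fV +_\ell \fW)$, (ii) $(\fU +_\ell \fV) +_r \fW = (\fU +_r \fV) +_r \fW$, and the one genuinely mixed law (iii) $(\fU +_r \fV) +_\ell \fW = \fU +_r (\fV +_\ell \fW)$, each obtained by comparing the four components of both sides. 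Combining (i) with the definition of $+$ then gives $\fU +_\ell \bigcup(\Vc + \mathcal W) = \fU +_\ell (\bigcup\Vc +_\ell \bigcup\mathcal W)$ for $\fU \in \Uc$, and $\bigcup(\Uc + \Vc) +_r \fW = (\bigcup\Uc +_r \bigcup\Vc) +_r \fW$ for $\fW \in \mathcal W$.

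Then I would expand $(\Uc + \Vc) + \mathcal W$ and $\Uc + (\Vc + \mathcal W)$ from the definition and match them term by term. Applying the definition twice, $(\Uc + \Vc) + \mathcal W$ is the union of three families: $(\fU +_\ell \bigcup\Vc) +_\ell \bigcup\mathcal W$ for $\fU \in \Uc$; $(\bigcup\Uc +_r \fV) +_\ell \bigcup\mathcal W$ for $\fV \in \Vc$; and $\bigcup(\Uc+\Vc) +_r \fW$ for $\fW \in \mathcal W$. By associativity of $+_\ell$ together with the identity from the previous paragraph, the first family equals $\{\fU +_\ell \bigcup(\Vc+\mathcal W) : \fU \in \Uc\}$, the left-flavoured part of $\Uc + (\Vc+\mathcal W)$. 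By (iii) the second family equals $\{\bigcup\Uc +_r (\fV +_\ell \bigcup\mathcal W) : \fV \in \Vc\}$, which lies in $\Uc + (\Vc+\mathcal W)$ since $\fV +_\ell \bigcup\mathcal W \in \Vc + \mathcal W$. By the identity from the previous paragraph and associativity of $+_r$, the third family equals $\{\bigcup\Uc +_r (\bigcup\Vc +_r \fW) : \fW \in \mathcal W\}$, which likewise lies in $\Uc + (\Vc+\mathcal W)$ since $\bigcup\Vc +_r \fW \in \Vc + \mathcal W$. The reverse inclusion is obtained symmetrically using the same three laws, so $(\Uc + \Vc) + \mathcal W = \Uc + (\Vc + \mathcal W)$ and $(\bcC,+)$ is a semigroup.

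The main obstacle is exactly that $+_\ell$ and $+_r$ do \emph{not} satisfy full mutual associativity: a profile produced by a chain of these operations records, in its context mapping, one distinguished underlying position, and one must keep track of which one. Identities (i)--(iii) are chosen precisely to push that bookkeeping past the outer operation, and the only real work is checking that they suffice to make all the term-shapes arising in the two expansions correspond; everything else is mechanical component-chasing.
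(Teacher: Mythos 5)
Your proof is correct and is exactly the direct verification from the definitions that the paper leaves to the reader (it states the fact without proof); identities (i)--(iii) do hold componentwise, and your term-by-term matching of the two expansions of $(\Uc+\Vc)+\Wc$ and $\Uc+(\Vc+\Wc)$ is complete. The only point you pass over quickly is that the mixed sums of context- and forest-type sets appearing in the context mappings are themselves associative, which follows from Fact~\ref{fct:gentypes} since they are images of concatenations of actual \shals.
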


\medskip
\noindent {\bf Validity and Reachability for Configurations.} Let
$\Vc$ be a configuration and let $\fV \in 2^\fP$ be the union of all
sets in $\Vc$. The set of $\Vc$-valid forest types is the smallest $X
\subseteq H$ such that for every $(i,\Bs,f_H,f_V) \in \fV$, $f_H(H)
\subseteq X$ when $i = 0$ and $f_H(X) \subseteq X$
otherwise. $\Vc$-valid context types are defined as the smallest $Y
\subseteq V$ such that $Y \cdot Y \subseteq Y$ and for all
$(i,\Bs,f_H,f_V) \in \fV$, $f_V(X) \subseteq Y$ (with $X$ the set of
$\Vc$-valid forest types).

Finally, given $\Vc$-valid forest types $h$ and $h'$, we say that $h$ is
$\Vc$-reachable from $h'$ iff there exists $v \in V$ that is
$\Vc$-valid and such that $h = vh'$. The following fact can be
verified from the definitions.

\begin{fct} \label{fct:validpro} Let $Q$ be a set of \shals such that
$\fV = \{\beta(q,x) \mid (q,x) \in Q\}$. Then $h \in H$ (resp. $v \in
V$) is $\Vc$-valid iff it is $Q$-valid. Moreover, for all $\Vc$-valid
$h,h' \in H$, $h$ is $\Vc$-reachable from $h'$ iff $h$ is
$Q$-reachable from $h'$.
\end{fct}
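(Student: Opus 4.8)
The plan is to prove the two halves of the statement — that $\Vc$-validity and $Q$-validity agree on $H$ and on $V$, and that the two reachability relations agree — by exhibiting explicit recursive decompositions of $Q$-valid forests and of $Q$-valid contexts and then matching them against the least-pre-fixpoint definitions of $\Vc$-validity. Once the validity halves are in place the reachability half is immediate, since ``$\Vc$-reachable'' and ``$Q$-reachable'' are both, by definition, the existence of a valid (respectively $\Vc$- or $Q$-valid) context type $v$ with $h = vh'$, over valid forest types $h, h'$.

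For the forest types, the first step is the decomposition: a forest $t$ is $Q$-valid if and only if $t = q[T]$ for some \shal $q \in Q$ of arity $n$ and some sequence $T$ of $n$ $Q$-valid forests (with $n = 0$ meaning $T$ empty and $t = q$). The ``if'' direction is a direct check: the \shal at the roots of $t$ is $q \in Q$, every other \shal occurring in $t$ occurs inside one of the forests of $T$ and hence lies in $Q$, and $t$ has more than one node because $q$ is a \shal and the forests of $T$ are $Q$-valid, both of which kinds of object have more than one node. For the ``only if'' direction, the \shal $q$ at the roots of $t$ lies in $Q$ by $Q$-validity, and the subforest of $t$ below each port-node of $q$ is a genuine subforest, has strictly fewer nodes, is not a single leaf, and has all its \shals in $Q$, so it is $Q$-valid and one recurses on the number of nodes. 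By definition of the profile $\beta(q,x)$ the forest mapping $f_H$ of $q$ is exactly $G \mapsto q(G)$, so the decomposition shows that the set $X^Q$ of $Q$-valid forest types is the least pre-fixpoint of the monotone operator $F : G \mapsto \bigcup_{q\in Q} q(G)$ on $2^H$: it is a pre-fixpoint by a ``realise each type of $X^Q$ by some $Q$-valid forest and substitute into the ports of $q$'' argument, and it is contained in every pre-fixpoint by induction on the size of a $Q$-valid forest realising a given type. On the other side, the defining clause of the $\Vc$-valid forest types — ``$f_H(H) \subseteq X$ when the arity is $0$ and $f_H(X) \subseteq X$ otherwise'', quantified over $(i,\Bs,f_H,f_V) \in \fV = \{\beta(q,x) \mid (q,x)\in Q\}$ — is exactly ``$q(X) \subseteq X$ for every $q \in Q$'' (when $q$ has arity $0$ the map $f_H$ is the constant $\{\alpha(q)\}$, so the two clauses coincide there), i.e.\ exactly ``$X$ is a pre-fixpoint of $F$''. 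Hence the $\Vc$-valid forest types coincide with $X^Q$.

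For the context types one argues analogously along the backbone: a context $c$ is $Q$-valid if and only if $c = q_1[T_1,x_1]\cdots q_m[T_m,x_m]$ for some $m \geq 1$ with each $q_i \in Q$, each $x_i$ a port-node of $q_i$, and each $T_i$ a sequence of $Q$-valid forests — one peels off the root \shal $q_1 \in Q$, locates its backbone port, notes that the subcontext of $c$ hanging below that port-node is a $Q$-valid context with fewer nodes while the forests below the remaining ports are $Q$-valid, and recurses. Since the context mapping $f_V$ of $(q,x)$ is $G\mapsto q(G,x)$ and every type of $X^Q$ is realised by some $Q$-valid forest, $q(X^Q,x)$ is precisely the set of context types of the blocks $q_i[T_i,x_i]$, so the set $Y^Q$ of $Q$-valid context types is the subsemigroup of $V$ generated by $\bigcup_{(q,x)\in Q} q(X^Q,x)$. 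This is verbatim the definition of the $\Vc$-valid context types (the least $Y$ with $Y\cdot Y \subseteq Y$ and $f_V(X)\subseteq Y$ for all profiles of $\fV$, where $f_V$ is the constant $\emptyset$ map at non-port-nodes and, by the previous paragraph, $X = X^Q$). Combining this equality with the forest-type equality and the observation of the first paragraph finishes the proof.

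The main obstacle is not conceptual but bookkeeping: one must get the two recursive decompositions, and the induction/least-fixpoint arguments derived from them, exactly right in the presence of the degeneracies built into the definitions — a single leaf $a\in A$ is not a \shal, an arity-$0$ \shal is itself a forest, and $P$-validity carries the ``more than one node'' clause — and one must check carefully the one genuinely load-bearing point, namely that $q(X^Q)$ and $q(X^Q,x)$ really do enumerate \emph{all} types obtainable by plugging $Q$-valid forests into the (non-backbone) ports of $q$, which rests on the non-circular fact that the types realised by $Q$-valid forests are exactly those in $X^Q$.
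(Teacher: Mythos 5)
The paper offers no proof of this fact (it is merely asserted to ``be verified from the definitions''), and your verification is exactly the intended unwinding: the root-\shal decomposition of $Q$-valid forests identifies the $Q$-valid forest types with the least pre-fixpoint defining $\Vc$-validity, the backbone decomposition does the same for context types, and the reachability claim then follows because both relations are given by the same formula over the (now identified) sets of valid context types. The only delicate points, which you handle correctly, are the built-in degeneracies (arity-$0$ \shals, single-leaf subforests below non-port nodes, the ``more than one node'' clause) and the closure of $Q$-valid contexts under composition, which is what makes the set of $Q$-valid context types the subsemigroup generated by the block types $q(X^Q,x)$.
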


We say that $\Vc$ is \emph{branching} iff $\fV$ contains a profile of arity
$2$. One can verify that this implies the existence of a maximal
$\Vc$-reachability class denoted $H_{\Vc}$. Finally, we say that a branching
$\Vc$ is \emph{reduced} when all $\Vc$-valid forest types are mutually
reachable, i.e. $H_{\Vc}$ is the whole set of $\Vc$-valid forest types.

\medskip
\noindent {\bf Profile Saturation.} We are now ready to rephrase
saturation as a property of the sets $\bcI[\alpha,X]$. Set $X \subseteq
H$, we say that a configuration $\Vc \in \bcC$ is
\emph{$X$-compatible} iff it is branching, it is reduced, $H_{\Vc} = X$
and $\Vc = \bigcup_i \Vc_i$ with $\Vc_i \in \bcI[\alpha,X]$ for all
$i$.

Let $\Vc$ be an $X$-compatible configuration. We say that $v \in V$ is 
$\Vc$-saturated iff there exist $v_1 \cdots v_n = v$ such that:
\begin{itemize}
\item for all $j$, $v_j$ is $\Vc$-valid.
\item for all $\fV \in \Vc$ there exists $(i,\Bs,f_H,f_V) \in \fV$
  such that either $i = 0$ (i.e. $\fV$ abstracts a set of non-port
  nodes) or $v_j \in f_V(H_{\Vc})$ for some $j$.
\end{itemize}
Let $\ell$ be as defined in Fact~\ref{fct:pro3}. The following fact is
a simple consequence of the definitions. 

\begin{fct} \label{fct:prosat}
Set $X \subseteq H$ and $v$ be an idempotent of $V$. For every $k \geq
\ell$ the following properties are equivalent: 
\begin{enumerate}
\item  There exists a branching and reduced $Q \subseteq \As^+$ such
  that $X = H_{Q}$ and $v$ is the image of some $(Q,k)$-saturated
  context.
\item There exists a $X$-compatible $\Vc \in \bcC$ such that $v$ is
  $\Vc$-saturated.
\end{enumerate}
\end{fct}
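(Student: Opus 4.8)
The plan is to prove the two implications separately, in each case using the concrete set of \shals $Q$ as the bridge between the syntactic objects and their profile abstractions. Throughout I would use the following dictionary, all of which is already available: given a set $Q\subseteq\As^+$ of \shals, the associated set of profiles is $\fV=\{\beta(q,x)\mid (q,x)\in Q\}$; by Fact~\ref{fct:validpro}, $Q$-validity and $\Vc$-validity coincide (for whichever configuration $\Vc$ abstracts $Q$), and similarly for reachability, so $Q$ is branching/reduced iff $\Vc$ is, with $H_Q=H_{\Vc}$; and by Fact~\ref{fct:pro3}, for $k\geq\ell$ we have $\bcI[\alpha,X]=\bcI_k[\alpha,X]$, so ``$(X,k)$-relevant'' and ``$X$-relevant'' are interchangeable at this threshold.

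For the direction $(1)\Rightarrow(2)$: suppose $Q$ is branching and reduced with $X=H_Q$, and $c$ is a $(Q,k)$-saturated context with $\alpha(c)=v$. First I would split $Q$ into its $\pequiv{k}^X$-equivalence classes of positions (using Lemma~\ref{lem-eqiseq}, $\pequiv{k}^X$ is an equivalence of finite index), say $Q=Q_1\cup\cdots\cup Q_m$, and let $\Vc_i\in\bcC$ be the configuration built from $Q_i$ exactly as prescribed by the definition of $\bcI_k[\alpha,X]$ — i.e.\ $\Vc_i$ collects, for each maximal $\pequiv{k}^X$-chain inside $Q_i$, the set of profiles of the positions on that chain. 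Conditions (1)--(3) in the definition of $\bcI_k[\alpha,X]$ are then satisfied by construction, so each $\Vc_i\in\bcI_k[\alpha,X]=\bcI[\alpha,X]$, and $\Vc:=\bigcup_i\Vc_i$ is $X$-compatible (branching, reduced, $H_{\Vc}=X$ all transfer from $Q$ via Fact~\ref{fct:validpro}). It remains to exhibit the factorization witnessing $\Vc$-saturation of $v$. Write $c$ as the composition $c=c_1\cdots c_n$ of its skeleton slices down the backbone; each $c_j$ is $Q$-valid hence $\Vc$-valid, with $v=\alpha(c_1)\cdots\alpha(c_n)$. Now take any $\fV\in\Vc$. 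By condition (3) in the definition of $\bcI_k$, $\fV$ is the set of profiles of a $\pequiv{k}^X$-chain of positions in $Q$, so in particular $\fV$ contains $\beta(p,x)$ for some $(p,x)\in Q$ that is a port-node (or, if the chain consists of non-port positions, every element of $\fV$ has arity $0$ and the first alternative in the definition of $\Vc$-saturation applies). By $(Q,k)$-saturation of $c$, there is a port-node $(p',x')$ on the backbone of $c$ with $(p,x)\pequiv{k}^X(p',x')$; the backbone slice at that node is some $c_j$, and $\beta(p',x')$ has a context-mapping whose value at $H_{\Vc}$ contains $\alpha(c_j)=v_j$. Since $(p,x)\pequiv{k}^X(p',x')$ forces equality of the relevant profile data (same alphabet, same label, and — crucially — $f_V$ records exactly the context types obtainable by filling ports with forests of type in $X=H_{\Vc}$, information preserved under $\pequiv{k}^X$ because the relaxed game preserves arity and inner-labels), we get $v_j\in f_V(H_{\Vc})$ for the profile $(i,\Bs,f_H,f_V)=\beta(p,x)\in\fV$. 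This is exactly the second bullet in the definition of $\Vc$-saturation, so $v$ is $\Vc$-saturated.

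For the direction $(2)\Rightarrow(1)$: given an $X$-compatible $\Vc=\bigcup_i\Vc_i$ with $\Vc_i\in\bcI[\alpha,X]=\bcI_k[\alpha,X]$, pick for each $i$ a witnessing set of \shals $Q^{(i)}$ as in the definition of $\bcI_k$, and set $Q=\bigcup_i Q^{(i)}$. By Fact~\ref{fct:validpro} the profile abstraction of $Q$ has $H_Q=H_{\Vc}=X$ and $Q$ is branching and reduced. Now let $v_1\cdots v_n=v$ be a factorization witnessing $\Vc$-saturation; each $v_j$ is $\Vc$-valid, hence $Q$-valid, so I can choose a $Q$-valid context $c_j$ with $\alpha(c_j)=v_j$, and $c:=c_1\cdots c_n$ is $Q$-valid with $\alpha(c)=v$. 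I must check $c$ is $(Q,k)$-saturated: take any port-node $(p,x)\in Q$. It lies in some $Q^{(i)}$, and by condition (2) in the definition of $\bcI_k$ there is a $\pequiv{k}^X$-chain ending at $(p,x)$ whose profile-set $\fV$ lies in $\Vc_i\subseteq\Vc$. Applying $\Vc$-saturation to this $\fV$: since $(p,x)$ is a port-node its profile has arity $\neq0$ — but wait, a subtlety: the definition only guarantees \emph{some} $(i,\Bs,f_H,f_V)\in\fV$ with $i=0$ or $v_j\in f_V(H_{\Vc})$; if that element is the arity-$0$ one we would be stuck. I would handle this by noting that condition (2) lets us choose the chain so that $(p,x)$ itself, a port-node, is on it, and then argue that in a $\pequiv{k}^X$-chain all positions are either all port-nodes/\Xnodes or all \bXnodes (because $\pequiv{k}^X$ preserves the label up to the port/\Xnode distinction — indeed $i)$ and $ii)$ in the definition of $\pequiv{k}^X$ are literally ``same alphabet, same label''), so if one position of the chain is a port-node, \emph{every} profile in $\fV$ has arity $\geq1$, ruling out the $i=0$ escape and forcing $v_j\in f_V(H_{\Vc})$ for some profile in $\fV$, i.e.\ for the profile of some $(q,y)$ on the chain with $(q,y)\pequiv{k}^X(p,x)$. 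Then $v_j\in f_V(H_{\Vc})$ means there is a way to fill the ports of $q$ (other than $y$) with forests of type in $X$ so as to realize $v_j=\alpha(c_j)$; since $(q,y)\pequiv{k}^X(p,x)$, the backbone port-node $x$-position, suitably filled, also realizes $v_j$, placing a $\pequiv{k}^X$-equivalent copy of $(p,x)$ on the backbone of $c_j$, hence of $c$. That gives $(Q,k)$-saturation of $c$, completing the proof.

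The main obstacle, as the previous paragraph already flags, is the port-node/arity bookkeeping in the $(2)\Rightarrow(1)$ direction: one has to be careful that $\Vc$-saturation delivers a \emph{port-node} witness on the backbone matching a given port-node of $Q$, not merely \emph{some} profile with the right context type, and this rests on the (routine but essential) observation that $\pequiv{k}^X$ never mixes port-/\Xnodes with \bXnodes, so that the three alternatives in the configuration machinery line up correctly with the port-node structure of saturated contexts. Everything else — that $+_\ell,+_r$ and the induced $+$ on $\bcC$ track concatenation of \shals (Facts~\ref{fct:computeP}, \ref{fct:prosemi2}), that validity/reachability transfer (Fact~\ref{fct:validpro}), and that the $k\geq\ell$ threshold collapses $\bcI_k$ to $\bcI$ (Fact~\ref{fct:pro3}) — is either already established or a direct unwinding of definitions.
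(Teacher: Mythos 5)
Your $(1)\Rightarrow(2)$ direction matches the paper's argument in structure and content: build the configuration from $\pequiv{k}^{X}$-equivalence classes of positions in $Q$, read off the factorization $v=v_1\cdots v_n$ from the backbone slices of the saturated context, and use the observation that positions related by $\pequiv{k}^{X}$ have identical labels (so port-node chains never contain arity-$0$ profiles) to rule out the first disjunct of $\Vc$-saturation. That part is fine.

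There is, however, a genuine gap in the $(2)\Rightarrow(1)$ direction, and it is exactly where the idempotency hypothesis on $v$ — which appears in the statement of the fact but never in your proof — is needed. You fix a factorization $v_1\cdots v_n=v$ and then commit to a context $c=c_1\cdots c_n$ with $\alpha(c_j)=v_j$. Afterwards, for an arbitrary port-node $(p,x)\in Q$, you invoke ``$v_j\in f_V(H_{\Vc})$ means there is a way to fill the ports of $q$ ... so as to realize $v_j=\alpha(c_j)$'' and conclude that $(q,y)$ lands on the backbone of $c_j$. But the $c_j$ you have already fixed need not be \emph{that} context; the factorization only tells you that some $Q$-valid context of type $v_j$ has an appropriate port-node on its backbone. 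Worse, different $\pequiv{k}^{X}$-classes of port-nodes in $Q$ may all be assigned the \emph{same} index $j$ by $\Vc$-saturation, and a single context $c_j$ of type $v_j$ — a product of \shal-slices along a single backbone — cannot in general carry a representative of each such class. Trying to stack several representatives inside one $c_j$ forces $\alpha(c_j)=v_j^r$ for some $r>1$, which need not equal $v_j$.

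The paper resolves this precisely by exploiting that $v$ is idempotent. For each port-node $(p,x)\in Q$, one builds a $Q$-valid context of type $v_j$ whose unique backbone node is $\pequiv{k}^{H_Q}$-equivalent to $(p,x)$, then completes it to a $Q$-valid context of type $v$ by padding with $Q$-valid contexts of types $v_1,\dots,v_{j-1},v_{j+1},\dots,v_n$ (all available since those types are $\Vc$-valid, hence $Q$-valid). Concatenating one such type-$v$ context per port-node $(p,x)$ yields a context of type $v^r=v$, by idempotency, which is $(Q,k)$-saturated by construction. Your version, which builds exactly $n$ factors and never invokes idempotency, cannot reach this conclusion as written; the obstacle you flag (the arity-$0$ escape clause) is real but is the easier of the two issues, and you already handle it correctly.
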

\begin{proof}[Proof sketch]
From top to bottom. Let $Q$ and $v$ be as in (1). Let $\Delta$ be the
$(Q,k)$-saturated context such that $\alpha(\Delta)=v$. We construct
$\Vc$ and $v_1\cdots v_n$ witnessing (2) as follows. To each $(q,x)
\in Q$, we associate the set $\fU =\{\beta(q',x') \mid (q',x')
\pequiv{k}^X (q,x)\}$. We then set $\Vc$ as the set of all such sets
\fU. It follows from the definition and Fact~\ref{fct:validpro} that
$H_Q = X = H_\Vc$. Moreover, $\Vc$ is by definition a union of
elements of $\bcI_k[\alpha,X]$ (and hence of $\bcI_k[\alpha,X]$ by
definition of $k$) and is therefore $X$-compatible. It is then
immediate to check that the $(Q,k)$-saturation of $\Delta$ implies the
existence of $v_1\cdots v_n$ with the desired properties. Note that we
did not use the hypothesis that $v$ is idempotent, it is only required
for the other direction.
 
From bottom to top. Let $\Vc$ and $v_1\cdots v_n$ be as required for
(2). By definition, we have $\Vc =\bigcup_i \Vc_i$ where each $\Vc_i$
is $H_\Vc$-relevant and therefore
$(H_\Vc,k)$-relevant. This means that for all $i$, there 
exists a set of \shal $Q_i$ for $\Vc_i$ as in the definition of
$\bcI_k[\alpha,H_{\fQ}]$. We set $Q = \bigcup_i Q_i$. It is immediate
from the definition of $Q$ and Fact~\ref{fct:validpro} that $H_Q = X =
H_\Vc$ and that $v_1,\dots,v_n$ are $Q$-valid. We construct the desired
$(Q,k)$-saturated context $\Delta$ as follows. For any node $(p,x) \in
Q$, we construct a $Q$-valid context of type $v$ having a node
$(p',x')$ on its backbone satisfying $(p',x') \pequiv{k}^{H_Q}
(p,x)$. It will then suffice to define $\Delta$ as the concatenation
of all these contexts. Since $v$ is idempotent $\Delta$ will have type
$v$ as well.

Let $(p,x)$ with $p$ in $Q$ and $x$ a port-node of $p$, by definition,
there exist some $i$, some $\fV \in \Vc_i$ and some $(i,\Bs,f_H,f_V)
\in \fV$ such that $(i,\Bs,f_H,f_V) = \beta(q,y)$ with $(p,x) 
\pequiv{k}^{H_Q} (q,y)$. As $x$ is a port-node, so is $y$ and we
have  $f_V(H_Q) \neq \emptyset$ and therefore by $\Vc$-saturation
of $v$, we get $(i',\Bs',f'_H,f'_V) \in \fV$ such that $f'_V(H_Q)$
contains $v_j$ for some $j$. By definition, we get $(p',x')
\pequiv{k}^{H_Q} (q,y) \pequiv{k}^{H_Q} (p,x)$ such that
$\beta(p',x') = (i',\Bs',f'_H,f'_V)$.  Hence we can create a $Q$-valid
context of type $v_j$, with a unique position $(p',x')$ on its
backbone. Since $v_1,\dots,v_n$ are all $Q$-valid, this
context can then be completed into a $Q$-valid context of type $v$
which terminates the proof.
\end{proof}

We say that $\alpha$ is \emph{closed under profile saturation} iff for
all $X \subseteq H$, for all $X$-compatible $\Vc \in \bcC$, for all $v
\in V$ that are $\Vc$-saturated and all $h_1,h_2 \in H_{\Vc}$:
\[
v^\omega h_1 =v^\omega h_2
\]
Observe that all quantifications in the definition range over finite
sets. Therefore, if one can compute the $X$-compatible configurations
for all $X$, one can decide closure under profile saturation by
testing all possible combinations. In the next proposition, we prove
that this is equivalent to testing closure under saturation.

\begin{proposition} \label{prop:satpro}
Let $\alpha: \A^\mydelta \rightarrow (H,V)$ be a morphism into a finite 
forest algebra. Then the following three properties are equivalent:
\begin{enumerate}
\item $\alpha$ is closed under saturation.
\item $\alpha$ is closed under $\ell$-saturation
\item $\alpha$ is closed under profile saturation.
\end{enumerate}
\end{proposition}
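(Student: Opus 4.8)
The plan is to derive Proposition~\ref{prop:satpro} almost entirely from Fact~\ref{fct:prosat}, which already performs the hard combinatorial translation between $(Q,k)$-saturated contexts and $X$-compatible configurations for every $k$ above the threshold $\ell$ of Fact~\ref{fct:pro3}. Two auxiliary observations are needed on top of it: an algebraic remark that lets us restrict both notions of saturation to \emph{idempotent} context types, and careful handling of the existential ``for some $k$'' hidden in closure under saturation, keeping in mind that Lemma~\ref{lem-sat-k} only lets $k$ be \emph{increased}.

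First I would record the idempotent reduction. If $\Delta$ is a $(P,k)$-saturated context, then $\Delta^\omega=\Delta\cdot\Delta^{\omega-1}$ is again $(P,k)$-saturated: the context $\Delta^{\omega-1}$ is $P$-valid, a $P$-saturated context composed with a $P$-valid one is $P$-saturated (as already used in the base case of Section~\ref{main-proof}), and $\alpha(\Delta^\omega)=\alpha(\Delta)^\omega$ is idempotent. Consequently, $\alpha$ is closed under $k$-saturation if and only if $\alpha(\Delta)h_1=\alpha(\Delta)h_2$ holds for all branching and reduced $P$, all $(P,k)$-saturated $\Delta$ \emph{with $\alpha(\Delta)$ idempotent}, and all $h_1,h_2\in H_P$; the nontrivial implication is obtained by applying the restricted statement to $\Delta^\omega$. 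Symmetrically, if $v$ is $\Vc$-saturated with witnessing factorisation $v=v_1\cdots v_n$, then $v^\omega$ is $\Vc$-saturated too, witnessed by $v^\omega=(v_1\cdots v_n)^\omega$: this is still a product of $\Vc$-valid factors and it still contains each $v_j$, so the condition defining $\Vc$-saturation is preserved. Hence $\alpha$ is closed under profile saturation if and only if $vh_1=vh_2$ holds for all $X$, all $X$-compatible $\Vc$, all \emph{idempotent} $\Vc$-saturated $v$, and all $h_1,h_2\in H_{\Vc}$.

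Next I would combine these with Fact~\ref{fct:prosat}. Fix $k\geq\ell$ and reindex the quantifiers of the idempotent form of ``closed under $k$-saturation'' by $X=H_P$ and $v=\alpha(\Delta)$: it becomes the statement that for every $X\subseteq H$ and every idempotent $v$ admitting a branching and reduced $Q$ with $H_Q=X$ and $v$ the image of some $(Q,k)$-saturated context, one has $vh_1=vh_2$ for all $h_1,h_2\in X$. By Fact~\ref{fct:prosat}, and since $k\geq\ell$, the side condition on $v$ is equivalent to the existence of an $X$-compatible $\Vc$ (so that $H_{\Vc}=X$) with $v$ $\Vc$-saturated, which is precisely the idempotent form of ``closed under profile saturation''. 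Therefore, for every $k\geq\ell$, closure under $k$-saturation is equivalent to closure under profile saturation; taking $k=\ell$ gives (2)$\Leftrightarrow$(3).

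It then remains to close the loop. Implication (2)$\Rightarrow$(1) is immediate, since closure under $\ell$-saturation is by definition a witness of closure under saturation. For (1)$\Rightarrow$(3): suppose $\alpha$ is closed under $k_0$-saturation for some $k_0$; if $k_0<\ell$ then $\alpha$ is closed under $\ell$-saturation by Lemma~\ref{lem-sat-k}, so we may assume $k_0\geq\ell$, and then the equivalence above applied at $k_0$ yields closure under profile saturation. This completes the cycle (1)$\Rightarrow$(3)$\Rightarrow$(2)$\Rightarrow$(1). The one genuinely delicate point is this last routing: since Lemma~\ref{lem-sat-k} transports saturation only upward in $k$, one cannot descend from an arbitrary witnessing $k$ to $\ell$ directly, and it is essential to pass through profile saturation --- where, by Fact~\ref{fct:pro3}, the abstract sets of relevant configurations have stabilised --- in order to return to level $\ell$; everything else is bookkeeping over Fact~\ref{fct:prosat}.
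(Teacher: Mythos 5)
Your proof is correct and follows essentially the same route as the paper's: both directions rest on Fact~\ref{fct:prosat} together with Lemma~\ref{lem-sat-k}, organized as the cycle $(1)\Rightarrow(3)\Rightarrow(2)\Rightarrow(1)$ (the paper) versus $(2)\Leftrightarrow(3)$ plus the trivial arrows (you), which amounts to the same thing. Your explicit idempotent reduction (passing to $\Delta^\omega$ and $v^\omega$, and checking these remain $(P,k)$-saturated, resp.\ $\Vc$-saturated) is exactly the step the paper performs implicitly when it applies Fact~\ref{fct:prosat} to $v^\omega$ and $\alpha(\Delta)^\omega$.
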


\begin{proof}
We prove that $1) \Rightarrow 3) \Rightarrow 2) \Rightarrow 1)$. That
$2) \Rightarrow 1)$ is immediate by definition of
saturation.

We now prove $1) \Rightarrow 3)$. Assume that $\alpha$ is closed
under saturation. By Lemma~\ref{lem-sat-k}, $\alpha$ is closed by
$k$-saturation for some $k \geq \ell$. We need to prove that $\alpha$
is closed under profile saturation. Let $X \subseteq H$, $\Vc \in
\bcC$ that is $X$-compatible $\Vc \in \bcC$, $v \in V$ that is
$\Vc$-saturated and $h_1,h_2 \in H_{\Vc}$. Using
Fact~\ref{fct:prosat}, we get $Q \subseteq \As^+$ such that $H_{\Vc} =
H_{Q}$ and $v^\omega$ is the image of some $(Q,k)$-saturated
context. It is now immediate from $k$-saturation that $v^\omega h_1 =
v^\omega h_2$.

It remains to prove that $3) \Rightarrow 2)$. Assume that $\alpha$ is
closed under profile saturation.  We need to prove that $\alpha$
is closed under $\ell$-saturation. Let $Q \subseteq \As^+$, $\Delta$
that is $(Q,\ell)$-saturated and $h_1,h_2 \in H_Q$. Using
Fact~\ref{fct:prosat},we get $\Vc \in \bcC$ such that $H_{\Vc} =
H_{Q}$ and $\alpha(\Delta^\omega)$ is $\Vc$-saturated. It is now
immediate from profile saturation that $\alpha(\Delta)^\omega h_1 =
\alpha(\Delta)^\omega h_2$.
\end{proof}

In view of Proposition~\ref{prop:satpro}, it is enough to show that closure
under profile saturation is decidable in order to prove Theorem~\ref{cor-decid}.
Because all the quantifications inside the definition of profile saturation
range over finite sets, it is enough to show that those finite sets, namely
$\bcI[\alpha,X]$ for all $X\subseteq H$, can be computed.

This is immediate in the case of ranked trees. Indeed for trees of rank $l$, the
set of legal \shals is a subset of $\As^l$. Therefore
$\bcI[\alpha,X] = \bcI_{l+1}[\alpha,X]$ can now be computed by
considering all the finitely many possible sets $Q\subseteq
\As^l$. Hence Theorem~\ref{cor-decid} is proved for regular languages
of ranked trees.

In the general case it is not obvious how to compute $\bcI[\alpha,X]$
and this is the goal of the remaining part of this section.


\subsection{Computing the Sets of $X$-indistinguishable Configurations}

We present an algorithm which, given as input $\alpha: \A^\mydelta \rightarrow
(H,V)$ and $X \subseteq H$, computes the set $\bcI[\alpha,X]$. This is a
fixpoint algorithm that starts from trivial configurations corresponding to
sets of \shals that are singletons composed of a single letter \shal
and saturate the set with two operations. 

Our first operation is the semigroup operation on $\bcC$ (recall
Fact~\ref{fct:prosemi2}) which corresponds to concatenating \shals. Our second
and most important operation is derived from a well-known property of \FOdw on
strings. Let $C$ be a finite string alphabet and let $u,u' \in C^+$ such that
$u,u'$ both contain all labels in $C$. Then for all $k \in \nat$ and any $u''
\in C^+$:
\[
(u)^ku''(u')^{k} \mequivk (u)^k(u')^{k} 
\]
In our case however, the situation will be slightly more complicated
as we work with the weaker equivalence $\pequiv{k}^X$ in which tests
on labels are relaxed. 

\begin{remark}
By definition for any $\Vc \in \bcI[\alpha,X]$ all profiles contained
in sets of \Vc have the same alphabet. Therefore, we will assume
implicitly that this is true of all sets of profiles we consider from
now and whenever we refer to ``the alphabet of \Vc'' we mean this
common alphabet.
\end{remark}

\bigskip
\noindent
{\bf Fixpoint Algorithm.} Recall from Fact~\ref{fct:prosemi2} that
$\bcC$ is equipped with a semigroup operation. We start with a few
definitions about alphabets that we will need in order to present
the algorithm. To each alphabet $\Bs \subseteq \As$, we associate a
configuration $\uplift{\Bs}$ as follows,
\[
\uplift{\Bs} = \{\{\beta(p,x) \mid \text{$p$ has alphabet $\Bs$ and
  $x$ has label $c$}\} \mid c \in \Bs\} \in \bcC
\]
Observe that for any \Bs, it is simple to compute $\uplift{\Bs}$ from
$\alpha$. Indeed, for any $c \in \Bs$, if $y$ denotes the unique
position in the \shal $c$, one can verify that,
\[
\left\{\beta(p,x) \mid \begin{array}{l}\text{$p$ has alphabet $\Bs$}\\
                         \text{and $x$ has label $c$}\end{array}\right\} = \fP +_r \{\beta(c,y)\} +_\ell
  \fP \bigcap \{\fv \in \fP \mid \fv \text{ has alphabet \Bs}\}
\]
An important remark is that while $\uplift{\Bs}$ is a configuration,
in general, it is not an $X$-relevant configuration (for any $X$). The
main idea behind the fixpoint algorithm is that $\uplift{\Bs}$ can
become $X$-relevant if one adds ''appropriate'' $X$-relevant
configurations to its left and to its right. The definition of
''appropriate'' is based on the notion of \emph{$X$-approximation} of
an alphabet that we define now.

In the \relaxedX game, there are three types of nodes, port-nodes,
\Xnodes and \bXnodes. Let $\Bs \subseteq \As$, and let $c,c' \in \Bs$
we say that $c,c'$ are \emph{$\Bs[X]$-equivalent} iff $c=c'$ or there
exists $b(\hole) \in \Bs$ such that $c,c'$ are port-nodes or \Xnodes
labels of inner label $b$. Finally, an \emph{$X$-approximation} of \Bs
is an alphabet $\Cs \subseteq \Bs$ such for any $c \in \Bs$, there
exists $c' \in \Cs$ that is \emph{$\Bs[X]$-equivalent} to $c$. 

We can now present the algorithm. Set $\bcT[\alpha] \subseteq
\bcI[\alpha,X]$ for all $X$ as the set of configurations associated to
sets of \shals of the form $\{c\}$ where $c$ is a single letter in
$\As$. More precisely, $\bcT[\alpha]$ is the set of configurations:
\[
\{\{\{\beta(c,x)\}\} \mid c \in \As \text{ and $x$ the unique position
  in $c$}\}
\]
We set $Sat[X,\alpha]$ as the smallest set $S \subseteq \bcC$
containing $\bcT[\alpha]$ and such that:

\begin{enumerate}
\item \label{op:one} For all $\Vc,\Vc' \in S$, $\Vc + \Vc' \in
  S$.  
\item \label{op:two} For all $\Bs \subseteq \As$, if $\Vc,\Vc' \in S$
  have (possibly different) alphabets that are both $X$-approximations
  of $\Bs$, then $\omega \Vc + \uplift{\Bs} + \omega \Vc' 
  \in S$.
\end{enumerate}
where $\omega = \omega(\bcC)$. Clearly $Sat[X,\alpha]$ can be computed
from $\alpha$. It is connected to $\bcI[\alpha,X]$ via the proposition
below. For $\Uc_1,\Uc_2 \in \bcC$, we write $\Uc_1 \lesspr \Uc_2$ iff 
\begin{enumerate}
\item\label{item-pos1} For every $\fV_1 \in \Uc_1$ there exists
  $\fV_2 \in \Uc_2$ such that $\fV_1 \subseteq \fV_2$.
\item\label{item-pos2} For every $\fV_2 \in \Uc_2$ there exists
  $\fV_1 \in \Uc_1$ such that $\fV_1 \subseteq \fV_2$.
\end{enumerate}
One can verify that $\lesspr$ is a preorder. If $\bcV \subseteq
\bcC$ , the \emph{downset} of $\bcV$ is set $\dclos \bcV = \{\Vc \mid
\exists \Uc \in \bcV \text{ such that } \Vc \lesspr \Uc\}$.

\begin{fct} \label{fct:prosemi3} Within $\bcC$, $+$
  is compatible with $\lesspr$ (i.e. $\Vc_1 \lesspr \Vc_2$ and $\Uc_1 \lesspr
  \Uc_2$ implies $\Vc_1+\Uc_1 \lesspr \Vc_2 + \Uc_2$).
\end{fct}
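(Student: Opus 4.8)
The plan is to unfold the definition of $+$ on $\bcC$ and reduce the statement to two elementary monotonicity facts. Write $\widehat{\Vc} := \bigcup_{\fV \in \Vc} \fV$ for the union of all sets of profiles occurring in a configuration $\Vc$, so that by definition $\Vc + \Uc = L(\Vc,\Uc) \cup R(\Vc,\Uc)$, where $L(\Vc,\Uc) = \{\fV +_\ell \widehat{\Uc} \mid \fV \in \Vc\}$ and $R(\Vc,\Uc) = \{\widehat{\Vc} +_r \fU \mid \fU \in \Uc\}$. The goal is then to verify clauses~\ref{item-pos1} and~\ref{item-pos2} in the definition of $\lesspr$ for the pair $\Vc_1+\Uc_1$, $\Vc_2+\Uc_2$.

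First I would record two facts. (i) On sets of profiles, $+_\ell$ and $+_r$ are monotone in each argument with respect to $\subseteq$: since $\fU +_\ell \fV$ is by construction $\{\fu +_\ell \fv \mid \fu \in \fU,\ \fv \in \fV\}$, enlarging either operand only enlarges the result, and symmetrically for $+_r$. (ii) If $\Vc_1 \lesspr \Vc_2$ then $\widehat{\Vc_1} \subseteq \widehat{\Vc_2}$: indeed, by clause~\ref{item-pos1} of $\Vc_1 \lesspr \Vc_2$, every $\fV \in \Vc_1$ is contained in some $\fV' \in \Vc_2$, hence in $\widehat{\Vc_2}$, so the union $\widehat{\Vc_1}$ is contained in $\widehat{\Vc_2}$.

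Now fix $\Vc_1 \lesspr \Vc_2$ and $\Uc_1 \lesspr \Uc_2$. I would check clause~\ref{item-pos1} for the sum by a case split on whether a given element of $\Vc_1+\Uc_1$ lies in $L(\Vc_1,\Uc_1)$ or in $R(\Vc_1,\Uc_1)$. If it is $\fV +_\ell \widehat{\Uc_1}$ with $\fV \in \Vc_1$, pick $\fV' \in \Vc_2$ with $\fV \subseteq \fV'$ using clause~\ref{item-pos1} of $\Vc_1 \lesspr \Vc_2$; since $\widehat{\Uc_1} \subseteq \widehat{\Uc_2}$ by fact~(ii), monotonicity of $+_\ell$ gives $\fV +_\ell \widehat{\Uc_1} \subseteq \fV' +_\ell \widehat{\Uc_2} \in L(\Vc_2,\Uc_2)$. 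The $R$ case is symmetric, using clause~\ref{item-pos1} of $\Uc_1 \lesspr \Uc_2$ together with $\widehat{\Vc_1} \subseteq \widehat{\Vc_2}$. Clause~\ref{item-pos2} for the sum is handled dually: given $\fV' +_\ell \widehat{\Uc_2} \in L(\Vc_2,\Uc_2)$, clause~\ref{item-pos2} of $\Vc_1 \lesspr \Vc_2$ yields $\fV \in \Vc_1$ with $\fV \subseteq \fV'$, and again $\widehat{\Uc_1} \subseteq \widehat{\Uc_2}$ gives $\fV +_\ell \widehat{\Uc_1} \subseteq \fV' +_\ell \widehat{\Uc_2}$ with $\fV +_\ell \widehat{\Uc_1} \in L(\Vc_1,\Uc_1)$; and symmetrically on the $R$ side using clause~\ref{item-pos2} of $\Uc_1 \lesspr \Uc_2$.

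There is no serious obstacle here; the argument is pure bookkeeping about $\subseteq$. The one point worth flagging is that verifying clause~\ref{item-pos2} for $\Vc_1+\Uc_1 \lesspr \Vc_2+\Uc_2$ already relies on clause~\ref{item-pos1} of the hypotheses (through the inclusions $\widehat{\Uc_1}\subseteq\widehat{\Uc_2}$ and $\widehat{\Vc_1}\subseteq\widehat{\Vc_2}$), so all four ``halves'' of the two hypotheses get used. Note also that the argument never invokes the convention that all profiles inside a configuration share a common alphabet, so it holds for arbitrary elements of $\bcC$.
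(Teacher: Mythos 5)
Your proof is correct: the paper states Fact~\ref{fct:prosemi3} without proof as something that can be verified from the definitions, and your argument is exactly that verification, done carefully (the decomposition of $\Vc+\Uc$ into the $L$- and $R$-parts, the monotonicity of $+_\ell$ and $+_r$ under $\subseteq$, and the observation that $\Vc_1\lesspr\Vc_2$ forces inclusion of the unions). Your closing remarks --- that both clauses of both hypotheses are genuinely used, and that the common-alphabet convention plays no role --- are accurate.
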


\begin{proposition} \label{prop:algo}
Let $\ell = 2|\As|^2(|\bcC| + 1)$ and $X \subseteq
H$, then for any $k \geq \ell$:  
\[ 
\bcI[\alpha,X] = \bcI_k[\alpha,X] = \dclos Sat[X,\alpha]
\]
\end{proposition}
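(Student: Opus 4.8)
The plan is to prove the two inclusions $\dclos Sat[X,\alpha]\subseteq\bcI[\alpha,X]$ and $\bcI_k[\alpha,X]\subseteq\dclos Sat[X,\alpha]$ for $k\geq\ell$. Combined with Fact~\ref{fct:pro3}, which gives $\bcI[\alpha,X]=\bigcap_k\bcI_k[\alpha,X]\subseteq\bcI_k[\alpha,X]$ for every $k$, these yield the chain $\dclos Sat[X,\alpha]\subseteq\bcI[\alpha,X]\subseteq\bcI_k[\alpha,X]\subseteq\dclos Sat[X,\alpha]$, hence all three sets coincide for $k\geq\ell$. I shall call the first inclusion \emph{soundness} (the algorithm produces only relevant configurations) and the second \emph{completeness} (every relevant configuration is obtained up to $\lesspr$).

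\emph{Soundness.} I would first record a lemma that $\bcI_k[\alpha,X]$ is downward closed for $\lesspr$ for every $k$: from a witnessing set $Q$ of \shals for $\Uc_2$ and $\Uc_1\lesspr\Uc_2$, one builds a witness for $\Uc_1$ by restricting the $\pequiv{k}^X$-chains realising the members of $\Uc_2$ to the sub-chains whose profiles belong to the matching member of $\Uc_1$ — legitimate since $\pequiv{k}^X$ is an equivalence (Lemma~\ref{lem-eqiseq}), so sub-chains are still chains. It then suffices to prove $Sat[X,\alpha]\subseteq\bcI_k[\alpha,X]$ for every $k$, which by minimality of $Sat[X,\alpha]$ reduces to three points: the generators $\bcT[\alpha]$ are $(X,k)$-relevant, witnessed by single-letter \shals and reflexivity of $\pequiv{k}^X$; $\bcI_k[\alpha,X]$ is closed under the semigroup operation of $\bcC$, with a witness for $\Vc+\Vc'$ obtained by concatenating witnesses of $\Vc$ and $\Vc'$ (using that all \shals in a witness share an alphabet, so concatenation does not violate the chain conditions, and Fact~\ref{fct:validpro} controls validity); and $\bcI_k[\alpha,X]$ is closed under operation~(2). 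For the last point the crucial observation is that, since $\omega=\omega(\bcC)$ is an idempotent power in $(\bcC,+)$, the configuration $\omega\Vc+\uplift{\Bs}+\omega\Vc'$ equals $k\omega\Vc+\uplift{\Bs}+k\omega\Vc'$ as an element of $\bcC$ for every $k$; the latter is witnessed by \shals of the form $u_1+\cdots+u_{k\omega}+p+u'_1+\cdots+u'_{k\omega}$ with each $u_i$ (resp.\ $u'_j$) of the alphabet of $\Vc$ (resp.\ of $\Vc'$), an $X$-approximation of $\Bs$, and $p$ of alphabet $\Bs$. One then invokes the relaxed-game analogue of the classical identity $(u)^k u'' (u')^k \mequivk (u)^k (u')^k$: with at least $k$ blocks on each side whose alphabets $X$-approximate $\Bs$, Duplicator in the $\pequiv{k}^X$-game can answer a move into the central factor $p$ either inside a matching central factor or inside a block, replacing a missing port-node or \Xnode letter by a $\Bs[X]$-equivalent present one (here Claim~\ref{clm:global} is the useful tool). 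This forces all such \shals with equally labelled marked positions to be $\pequiv{k}^X$-equivalent, and their profile sets to be exactly those listed in $\omega\Vc+\uplift{\Bs}+\omega\Vc'$, so this configuration lies in $\bcI_k[\alpha,X]$ for every $k$. Hence $\dclos Sat[X,\alpha]\subseteq\bigcap_k\bcI_k[\alpha,X]=\bcI[\alpha,X]$.

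\emph{Completeness.} Fix $k\geq\ell$, take $\Vc\in\bcI_k[\alpha,X]$ witnessed by a set $Q$ of \shals of common alphabet $\Bs$, and aim for $\Vc\lesspr\Uc$ with $\Uc\in Sat[X,\alpha]$. The combinatorial heart is a decomposition lemma: every \shal of length at least $\ell$ admits a factorisation $q=q_1\cdot u\cdot p\cdot u'\cdot q_2$ in which $p$ has alphabet $\Bs$, the factors $u$ and $u'$ have alphabets that are $X$-approximations of $\Bs$ and, read through the profile structure of Fact~\ref{fct:computeP}, carry an idempotent $\bcC$-value (obtained by pigeonhole on $|\bcC|+1$ partial products), while $q_1,q_2$ are strictly shorter; the bound $\ell=2|\As|^2(|\bcC|+1)$ is exactly what lets one both collect, for each of the $|\As|$ letters, a block meeting all $\Bs[X]$-classes (the square accounting for scanning past all letters) and extract an idempotent block (the summand $|\bcC|+1$), with the factor $2$ for the two sides. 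Abstracting such a factorisation via Facts~\ref{fct:computeP} and~\ref{fct:prosemi2} and absorbing repetitions through $\omega=\omega(\bcC)$, the profile of $(q,x)$, and of every position $\pequiv{k}^X$-equivalent to it, is captured by a configuration of the shape $\Uc_1+\bigl(\omega\Vc_1+\uplift{\Bs}+\omega\Vc_2\bigr)+\Uc_2\in Sat[X,\alpha]$, the pieces $\Uc_1,\Uc_2,\Vc_1,\Vc_2$ being supplied by the induction hypothesis applied to $q_1,q_2,u,u'$; short \shals are handled instead by peeling off one letter and using closure under $+$, the base case being a single letter, i.e.\ a generator of $\bcT[\alpha]$. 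The induction runs on the \emph{length} of the \shals — the alphabet need not shrink, since when no two letters of $\Bs$ are $\Bs[X]$-equivalent the only $X$-approximation of $\Bs$ is $\Bs$ itself, so one cannot induct on $|\Bs|$. One then assembles the configurations obtained for the finitely many profiles occurring in chains of $Q$ (finitely many by finiteness of $\bcC$) into a single $\Uc$, using Fact~\ref{fct:prosemi3} and downward closure to conclude $\Vc\lesspr\Uc$.

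I expect the main obstacle to be the completeness direction, and within it the decomposition lemma together with the verification that the configuration it produces is \emph{faithful} — that is, that the profile sets it lists are \emph{exactly} (not merely containing) those realised by $\pequiv{k}^X$-chains, which is precisely condition~(3) in the definition of $(X,k)$-relevance and the point where $k\geq\ell$ is genuinely needed; pinning down the induction so that the bound $\ell$ falls out of this argument is also what will yield the effective stabilisation promised after Fact~\ref{fct:pro3}. The soundness direction, by contrast, is mostly a matter of transporting witnessing sets of \shals through the two algebraic operations, its only new ingredient being the relaxed-game pumping identity, everything else reducing to the known behaviour of \FOdw on strings together with Lemma~\ref{lem-eqiseq} and Claim~\ref{clm:global}.
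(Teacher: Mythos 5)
Your soundness direction (generators, closure of $\bcI_k[\alpha,X]$ under $+$ via concatenation of witnesses, and the relaxed-game pumping identity for Operation~(2)) is essentially the paper's argument and is fine. The problem lies in your completeness direction.

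First, your reason for abandoning the induction on the alphabet rests on a confusion. The alphabet is not supposed to shrink in the \emph{blocks} $u,u'$ — those are indeed only required to be $X$-approximations of $\Bs$ and may equal $\Bs$ when every $\Bs[X]$-class is a singleton — but in the \emph{gap factors} of the leftmost (resp.\ rightmost) decomposition of a \patt{\Bs[X]}{n}: by definition of ``leftmost'', the factor $q_i$ contains no label of the class $C_{i\bmod m}$, hence uses a strictly smaller alphabet than $\Bs$, regardless of whether the classes are singletons. That is exactly what the paper's Proposition~\ref{prop:comp} inducts on, and it is what bounds the recursion depth by $|\As|$.

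Second, and more seriously, your replacement induction on the \emph{length} of the \shal cannot be carried out within the fixed round budget $k=\ell$. Passing from $\Gc_k[X](q)$ to a sum over its factors is not free: the analogue of Lemma~\ref{lem:decomp} costs game rounds, because Spoiler needs them (together with the missing-boundary-label property of the leftmost/rightmost decomposition) to confine the game to a single factor, and for an \emph{arbitrary} split $q=q_1+q_2$ the relation $\Gc_k[X](q)\lesspr\Gc_{k'}[X](q_1)+\Gc_{k'}[X](q_2)$ simply fails. Your recursion re-enters $q_1,q_2,u,u'$, which are merely shorter, so its depth is unbounded in the length of $q$, while $\ell$ depends only on $|\As|$ and $|\bcC|$; the rounds run out. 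The same issue kills your base case: ``peeling off one letter'' yields a sum of generators of $\bcT[\alpha]$, whose members are \emph{singleton} profile sets, whereas the members $\fV_{q,y}$ of $\Gc_k[X](q)$ are full $\pequiv{k}^X$-classes and are not singletons in general (positions of other, longer \shals can be $\pequiv{k}^X$-equivalent to $(q,y)$ while having different profiles), so the first condition in the definition of $\lesspr$ fails. The alphabet induction is not an optional convenience here: it is precisely what makes the bound $\ell=2|\As|^2(|\bcC|+1)$ come out, as depth at most $|\As|$ times a cost of at most $2|\As|(|\bcC|+1)$ rounds per level.
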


It follows from Proposition~\ref{prop:algo} that $\bcI[\alpha,X]$ can
be computed for any $X \subseteq H$. By combining this with
Proposition~\ref{prop:satpro}, we obtain the desired corollary:

\begin{corollary} \label{cor:sat}
Let $\alpha: \A^\mydelta \rightarrow (H,V)$ be a morphism into a finite  
forest algebra. It is decidable whether $\alpha$ is closed under
saturation.
\end{corollary}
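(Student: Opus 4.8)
The plan is to combine Proposition~\ref{prop:satpro} and Proposition~\ref{prop:algo}, so that the corollary becomes a matter of assembling effectivity statements that are already in hand. By Proposition~\ref{prop:satpro}, $\alpha$ is closed under saturation if and only if it is closed under profile saturation, so it suffices to give an algorithm deciding the latter. Recall that closure under profile saturation asserts that for every $X \subseteq H$, every $X$-compatible $\Vc \in \bcC$, every $\Vc$-saturated $v \in V$ and all $h_1,h_2 \in H_{\Vc}$ one has $v^\omega h_1 = v^\omega h_2$. Since $2^H$, $\bcC = 2^{2^\fP}$ and $V$ are all finite, the only obstruction to a brute-force check is the reference to $X$-compatibility, which in turn mentions $\bcI[\alpha,X] = \bigcap_k \bcI_k[\alpha,X]$, defined by an infinite intersection.

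The first step is to invoke Proposition~\ref{prop:algo}: for every $X \subseteq H$, with $\ell = 2|\As|^2(|\bcC|+1)$, one has $\bcI[\alpha,X] = \bcI_\ell[\alpha,X] = \dclos Sat[X,\alpha]$. Here $Sat[X,\alpha]$ is the least subset of the finite set $\bcC$ containing the explicitly computable set $\bcT[\alpha]$ of one-letter configurations and closed under the semigroup operation of $\bcC$ and under the rule involving $\uplift{\Bs}$ and $X$-approximations of $\Bs$; being the least fixpoint of a monotone operator on a finite lattice, it is computable by a standard saturation loop, and then $\dclos Sat[X,\alpha]$ is obtained by checking the preorder $\lesspr$ on all pairs of elements of $\bcC$. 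Hence $\bcI[\alpha,X]$ is effectively computable for each of the finitely many $X$.

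With $\bcI[\alpha,X]$ available, every remaining ingredient is decidable by direct inspection of its definition: given $\Vc \in \bcC$ one computes its valid forest and context types, tests whether $\Vc$ is branching and reduced, computes $H_{\Vc}$, and checks whether $\Vc$ can be written as a finite union of elements of $\bcI[\alpha,X]$ (a finite test since $\bcC$ is finite), which together decide $X$-compatibility; the predicate ``$v$ is $\Vc$-saturated'' is an existential statement over products $v_1\cdots v_n$ of $\Vc$-valid elements of $V$, which is finitary because each factor lies in the finite semigroup $V$ so it suffices to search over products whose set of factors is a fixed subset of $V$; and $v^\omega$, $H_{\Vc}$ and the products $v^\omega h_i$ are computed inside the finite semigroups $V$ and $H$. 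Iterating over all $X \subseteq H$, all $X$-compatible $\Vc$, all $\Vc$-saturated $v$ and all $h_1,h_2 \in H_{\Vc}$, and verifying $v^\omega h_1 = v^\omega h_2$ in each case, therefore decides closure under profile saturation, and hence, by Proposition~\ref{prop:satpro}, closure under saturation.

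There is no real obstacle left at the level of the corollary: the genuinely difficult work — the pumping/fixpoint analysis bounding $\ell$ and identifying $\bcI[\alpha,X]$ with $\dclos Sat[X,\alpha]$ in Proposition~\ref{prop:algo}, and the reduction of saturation to profile saturation in Proposition~\ref{prop:satpro} — has already been done. The only point requiring a line of care is the quantification over arbitrarily long products in the definition of $\Vc$-saturation, which one dispatches exactly as indicated above by noting that the set of admissible factors is a finite subset of $V$.
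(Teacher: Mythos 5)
Your proposal is correct and follows essentially the same route as the paper, which derives the corollary by combining Proposition~\ref{prop:satpro} (saturation $\Leftrightarrow$ profile saturation) with Proposition~\ref{prop:algo} (computability of $\bcI[\alpha,X]$ as $\dclos Sat[X,\alpha]$) and then brute-forcing the finitely many remaining checks. Your extra care about the unbounded products $v_1\cdots v_n$ in the definition of $\Vc$-saturation is a legitimate detail the paper glosses over, and your resolution (only the set of $\Vc$-valid factors occurring matters, so a finite reachability computation in $V$ suffices) is sound.
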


Observe that Proposition~\ref{prop:algo} also contains a bound
for $\ell$ in Fact~\ref{fct:pro3}. This bound is of particular
interest: as explained in Proposition~\ref{prop:satpro}, $\ell$ 
is also a bound for saturation, if $\alpha$ is closed under 
saturation, then it is closed under $\ell$-saturation.  

It now remains to prove Proposition~\ref{prop:algo}. We prove that for
any $k \geq \ell$, $\bcI[\alpha,X] \subseteq \bcI_k[\alpha,X]
\subseteq \dclos Sat[X,\alpha] \subseteq \bcI[\alpha,X]$. Observe
that  $\bcI[\alpha,X] \subseteq \bcI_k[\alpha,X]$ is immediate by
Fact~\ref{fct:pro3}. We give the two remaining inclusions their own
subsections.

\subsection{Proof of Correctness}

We prove that $Sat[X,\alpha] \subseteq \bcI[\alpha,X]$. One can then
verify that $\dclos \bcI[\alpha,X] = \bcI[\alpha,X]$ and therefore
that $\dclos Sat[X,\alpha] \subseteq \bcI[\alpha,X]$. Recall that
$\bcI[\alpha,X]$ is defined as $\bigcap_{k   \in \nat}
\bcI_k[\alpha,X]$. This means that it suffices to prove that for all
$k \in \nat$, $Sat[X,\alpha] \subseteq \bcI_k[\alpha,X]$. We fix such
a $k \in \nat$ for remainder of the proof.

By definition, $\bcT[\alpha] \subseteq \bcI_k[\alpha,X]$ for every $k \in
\nat$. We prove that $\bcI_k[\alpha,X]$ is closed under the two
operations in the definition of $Sat$. We begin with 
Operation~\eqref{op:one}.

\medskip
\noindent {\bf Operation~\eqref{op:one}.} Let $\Vc,\Vc' \in \bcI_k[\alpha,X]$
and let $Q,Q'$ be the sets of \shals witnessing the membership of $\Vc$ and
$\Vc'$ in $\bcI_k[\alpha,X]$. Set $R = \{q + q' \mid q \in Q \text{ and } q'
\in Q'\}$, we prove that $R$ witnesses the membership of $\Vc + \Vc'$ in
$\bcI_k[\alpha,X]$. This is a consequence of Fact~\ref{fct:gentypes} and the
following lemma:

\begin{lemma} \label{lem:comp}
Let $(p_1,x_1),(p_2,x_2) \in Q$ such that $(p_1,x_1) \pequiv{k}^X
(p_2,x_2)$ and $(p'_1,x'_1),(p'_2,x'_2) \in Q'$ such that $(p'_1,x'_1)
\pequiv{k}^X (p'_2,x'_2)$. Then if $r_1 = p_1 + p'_1$ and $r_2 = p_2 +
p'_2$,
\[
(r_1,x_1) \pequiv{k}^X (r_2,x_2) \quad\text{and}\quad (r_1,x'_1) \pequiv{k}^X (r_2,x'_2)
\]
\end{lemma}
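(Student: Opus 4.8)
The plan is to prove the lemma by composing Duplicator's winning strategies for the two games on $(p_1,x_1)$ vs $(p_2,x_2)$ and on $(p'_1,x'_1)$ vs $(p'_2,x'_2)$ in a way that respects the concatenation $r_i = p_i + p'_i$. The key observation is that a pebble placed in $r_1 = p_1 + p'_1$ lies either in the $p_1$-part or the $p'_1$-part, and since Spoiler moves only to siblings (or uses a safety move, which does not change the pebble's position), the pebble cannot cross the boundary between the two parts in a single move. So Duplicator can maintain the invariant that both pebbles are simultaneously in the left parts ($p_1,p_2$) or simultaneously in the right parts ($p'_1,p'_2$), and play the appropriate sub-strategy. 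The only subtlety is that a safety move is available to Spoiler depending on the labels of the two pebbles, and we need to check that the safety-move precondition in the composed game on $r_1,r_2$ is exactly the precondition in whichever sub-game currently holds the pebbles — this is immediate because the labels of the pebbles in $r_1,r_2$ are literally the labels of the pebbles in the active sub-pair.

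First I would verify the two static conditions in the definition of $\pequiv{k}^X$: that $r_1$ and $r_2$ contain the same set of $\As$-labels (this is $\content{r_1} = \content{p_1}\cup\content{p'_1} = \content{p_2}\cup\content{p'_2} = \content{r_2}$, using that both hypotheses give equality of alphabets), and that the distinguished positions have the same label (inherited from the corresponding hypothesis). Then I would set up the combined strategy: Duplicator keeps track of which part each pebble is in; the invariant is that the two pebbles are in corresponding parts, and that within that part Duplicator is following her winning strategy from the corresponding hypothesis, with the number of rounds used being consistent. When Spoiler moves, the move stays within the current part (sibling moves in $r_i$ never leave $p_i$ for $p'_i$ or vice versa, since $p_i$ and $p'_i$ are separate subtrees joined only at the forest level and siblinghood is within the same subtree — actually here $r_i$ is the concatenation as a \shal, so all roots are siblings; I need to be slightly careful and note that a sibling move from a node of $p_i$ can in fact land in $p'_i$).

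Let me reconsider: since $r_i = p_i + p'_i$ is a single \shal whose trees are all at the top level, \emph{all} roots of $p_i$ and of $p'_i$ are mutual siblings, so Spoiler's sibling move can indeed jump from the $p_i$-part to the $p'_i$-part. This means the naive "stay in your part" invariant is not enough, and the main obstacle is handling such a crossing move. The fix is the standard one for concatenation games: when Spoiler crosses into the other part, Duplicator must answer in the other part. To do this correctly Duplicator should, in addition to the two sub-strategies, play a simultaneous "left-endpoint/right-endpoint bookkeeping" so that whenever a crossing occurs she can respond at an appropriate position in the other part. Concretely, I would augment Duplicator's state with the currently-tracked pair of positions in \emph{both} parts: the position in the part not containing the pebble is maintained as the position her sub-strategy for that part "would be at" had Spoiler just moved there, and crossing moves are then answered by that bookkept position. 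This is exactly how one proves that $\pequiv{k}^X$ (or any EF-type equivalence defined by sibling moves) is a congruence for concatenation, and it is essentially the same combinatorics already used in the proof of transitivity (Lemma~\ref{lem-eqiseq}) and in Claim~\ref{clm:global}. The routine but slightly tedious part is checking that safety moves interact correctly with crossings; since a safety move does not move any pebble and its precondition depends only on the current pebble labels, it reduces immediately to the precondition holding in the currently-active sub-game, and Duplicator answers it there. Having established this, both assertions $(r_1,x_1) \pequiv{k}^X (r_2,x_2)$ and $(r_1,x'_1) \pequiv{k}^X (r_2,x'_2)$ follow by symmetry between which distinguished position we start from.
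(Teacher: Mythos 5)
Your proposal is correct, and it elaborates the very argument the paper leaves implicit: the published proof of Lemma~\ref{lem:comp} is the single sentence that it is ``immediate using \efgame games,'' and your plan is the standard composition-of-strategies argument that justifies that claim. You correctly catch your own initial slip (a sibling move in $r_i = p_i + p'_i$ can indeed jump between the $p_i$-part and the $p'_i$-part, since the \shal is treated as a flat string over \As), and your fix --- tracking positions in both halves and answering a crossing move by consulting the sub-strategy in the other half, noting that the external left/right direction is automatically respected and that safety moves reduce to the active sub-game since they involve no pebble motion --- is exactly the right one.
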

\begin{proof}
This is a composition lemma whose proof is immediate using \efgame games.
\end{proof}

We have three conditions to check. That~(1) holds is immediate from
Lemma~\ref{lem:comp}. For~(2), if $(r,x)\in R$, we have $r = p + p'$
with $p,p' \in Q,Q'$. By symmetry, assume that $x \in p$. By
definition of $Q$, we get $(p_1,x_1),\dots,(p_n,x_n) \in Q$ such that
$(p,x) \pequiv{k}^X (p_1,x_1) \pequiv{k}^X \cdots \pequiv{k}^X
(p_n,x_n)$ and $\fV = \set{\beta(p_i,x_i) \mid i\leq n} \in \Vc$. Set
$\fV'$ as the union of all sets in $\Vc'$ and set $R'$ as the set of
pairs $(q,y)$ such that $q = p_j + p''$ with $j \leq n$ and $p'' \in
Q'$. By Lemma~\ref{lem:comp} for all $(q,y) \in R'$, $(r,x)
\pequiv{k}^X (q,y)$. Moreover, one can verify using
Fact~\ref{fct:gentypes} that 
\[
\beta(R') = \fV +_\ell \fV' \in \Vc
\]

It remains to verify~(3). Set $\fU \in \Vc + \Vc'$. By symmetry,
assume that $\fU = \fV +_\ell \fV'$ with $\fV \in 
\Vc$ and $\fV'$ the union of all sets of $\Vc'$. By definition of
$Q,Q'$ there exists $(p_1,x_1) \pequiv{k}^X \cdots \pequiv{k}^X
(p_n,x_n) \in Q$ such that $\fV = \set{\beta(p_i,x_i) \mid i \leq
  n}$. Using the same set $R'$ of pairs $(q,y)$ as above we get
that all pairs in $R'$ are $\pequiv{k}^X$-equivalent and
\[
\beta(R') = \fV +_\ell \fV' = \fU
\]

\medskip
\noindent {\bf Operation~\eqref{op:two}.} Let $\Bs \subseteq \As$ and
$\Vc,\Vc' \in \bcI_k[\alpha,X]$ having alphabets $\Cs,\Cs'$ that are 
$X$-approximations of $\Bs$ and let $\fV$ and $\fV'$ be the unions of
all sets in \Vc and $\Vc'$ respectively. Let $Q$ and $Q'$ be the sets
of \shals witnessing the membership of $\Vc$ and $\Vc'$ into
$\bcI_k[\alpha,X]$. Furthermore, set $R$ as the set of all \shals of
alphabet $\Bs$. We prove that $P = k\omega Q + R + k\omega Q'$
witnesses the fact that $k\omega \Vc + \uplift{\Bs} + k\omega \Vc' =
\omega \Vc + \uplift{\Bs} + \omega \Vc'$ belongs to
$\bcI_k[\alpha,X]$. Most of the proof is based on the following
property of the equivalence $\pequiv{k}^X$.

\begin{lemma}\label{lem:fo2game}
Let $\hat{q_1},\hat{q_2} \in k\omega Q$, $\hat{q_1}'\hat{q_2}' \in
k\omega Q'$ and $r_1,r_2 \in R$. Then for every nodes $x_1,x_2$ of
$r_1,r_2$ with the same label $c \in \Bs$
\[
(\hat{q_1} + r_1 +\hat{q_1}',x_1) \pequiv{k}^X (\hat{q_2} + r_2 +\hat{q_2}',x_2)
\]  
\end{lemma}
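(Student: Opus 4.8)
The statement is a composition-style fact: if we concatenate many copies of the same pair of alphabet-equivalent gadgets ($k\omega$ copies of $Q$ on the left, $k\omega$ copies of $Q'$ on the right) around an arbitrary middle shal $r \in R$, then the resulting shals are $\pequiv{k}^X$-equivalent when the pebbles start on positions with the same label $c \in \Bs$. The key point is that the $X$-relaxed game is still, essentially, an $\mathrm{FO}^2$-style pebble game on a string, so the classical ``long enough power'' argument (the one underlying $w^{j_1} \mequivk w^{j_2}$ for large $j_1,j_2$, cf.\ Lemma~\ref{lem-aper} and~\cite{bookstraub}) applies, but with two twists: (i) label tests are relaxed, so we only need the left- and right-contexts to carry the same \emph{relaxed} label content, which is exactly why $\Cs,\Cs'$ being $X$-approximations of $\Bs$ suffices; and (ii) Spoiler has the extra safety move, which must be handled explicitly.

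First I would set up the game-theoretic skeleton. Duplicator plays a standard ``block-counting'' strategy: she maintains, after each round, a partition of each of the two shals $\hat{q_1}+r_1+\hat{q_1}'$ and $\hat{q_2}+r_2+\hat{q_2}'$ into a left zone (inside the $k\omega Q$-part), a middle zone (inside $R$), and a right zone (inside the $k\omega Q'$-part), and she keeps the invariant that after $j$ rounds the two pebbles are in matching zones, and the ``distance to the zone boundaries'', counted up to threshold $2^{k-j}$, agrees on both sides. Because there are $k\omega$ copies of $Q$ (resp.\ $Q'$), and each copy of $Q$ (resp.\ $Q'$) contains all letters of $\Cs$ (resp.\ $\Cs'$), Duplicator can always find, in the target shal, a position with the required relaxed label arbitrarily far (up to the current threshold) from either boundary. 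The middle zone $r_1$ versus $r_2$ is genuinely different between the two shals, but both have alphabet exactly $\Bs$, so any non-port/non-\Xnode label $c$ that Spoiler touches in the middle of one can be matched by an occurrence of $c$ in the other; and since $r_1,r_2$ sit far inside the long left/right powers, the surrounding context looks the same up to the threshold. The formal bookkeeping is the routine Ehrenfeucht--Fra\"iss\'e induction on the number of remaining rounds.

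Second, I would handle the relaxed-label and safety-move subtleties. When Spoiler moves to an \bXnode, Duplicator must match its exact label; this is fine because \bXnode labels live in $\Bs$ (in the middle) or in $\Cs,\Cs'$ (in the sides), and each side-block contains all of $\Cs$ resp.\ $\Cs'$. When Spoiler moves to a port- or \Xnode, Duplicator only needs a port- or \Xnode with the same inner label $b$; since $\Cs$ is an $X$-approximation of $\Bs$, for every inner label $b$ occurring as a port/\Xnode in $\Bs$ there is a port/\Xnode of inner label $b$ in $\Cs$, hence in every copy of $Q$ -- so Duplicator can always respond. Finally, if Spoiler plays a safety move, by the rules the selected pebble is on a node $b(\hole)$ or $b(a)$ with $\alpha(a)\in X$, and Duplicator must place the other pebble on a $b(\hole)$-node; she picks such a node in the appropriate zone of the other shal at a matching threshold-distance, which exists for the same alphabet-approximation reason. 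One checks that the safety move consumes a round and strictly shrinks the threshold, so the invariant is preserved.

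The main obstacle, and the only place requiring real care, is the interaction of the safety move with the zone invariant near the boundary between the middle zone $R$ and a side zone: after a safety move Spoiler could in principle be ``one letter away'' from a boundary on one side while Duplicator is forced to be at a genuine $b(\hole)$-node that may not be exactly that close on the other side. The resolution is that the $X$-approximation hypothesis guarantees $b(\hole) \in \Bs$ whenever a port/\Xnode of inner label $b$ occurs, so such a $b(\hole)$-node is available \emph{inside} the relevant block, and one chooses it at the correct threshold-distance; this is exactly the adjustment that makes the classical large-power argument go through in the relaxed setting. Everything else is the standard pumping computation, which I would not spell out beyond noting that $k\omega$ copies is more than enough room for $k$ rounds.
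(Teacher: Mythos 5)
Your overall approach --- a threshold/zone-counting \efgame strategy over the three blocks $k\omega Q$, $R$, $k\omega Q'$, together with the $X$-approximation hypothesis to handle safety moves --- is the same strategy the paper sketches. However, one step in your handling of the safety move is incorrect. You claim that ``the $X$-approximation hypothesis guarantees $b(\hole)\in\Bs$ whenever a port- or \Xnode of inner label $b$ occurs.'' This is false: if $\Bs$ contains $b(a)$ with $\alpha(a)\in X$ but not $b(\hole)$, then by definition $\{b(a)\}$ is its own $\Bs[X]$-equivalence class, and $\Cs=\Bs$ is a perfectly good $X$-approximation, yet $b(\hole)\notin\Bs$. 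The correct observation is essentially the converse one: when $b(\hole)\notin\Bs$, every port/\Xnode label with inner label $b$ lies in a singleton $\Bs[X]$-class, so every $X$-approximation of $\Bs$ (in particular $\Cs$ and $\Cs'$) must contain \emph{all} of them --- hence Duplicator can always match such labels exactly and the safety move never arises for this $b$. Conversely, if Duplicator's optimal play does lead to a situation where a safety move is available, it is because an exact label match was impossible, which forces the relevant class to be non-singleton and hence $b(\hole)\in\Bs$; then $b(\hole)$ occurs in $r_1,r_2$ (which have full alphabet $\Bs$) and Duplicator has her target. With this corrected case split the rest of your argument goes through and matches the paper's reasoning. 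One further point: the paper first normalises to the special case $\hat{q_1}=\hat{q_2}=k\omega q$ and $\hat{q_1}'=\hat{q_2}'=k\omega q'$, using that all elements of $Q$ (resp.\ $Q'$) are $\pequiv{k}^X$-equivalent by Item~(1) in the definition of $\bcI_k[\alpha,X]$; your proposal does not mention this, and without it the two ``left zones'' are not literal copies of one another, so the zone invariant needs the small extra adaptation the paper alludes to.
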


\begin{proof}
We give a winning strategy for Duplicator in the \relaxedX game. We
simplify the argument by assuming that $\hat{q_1} = \hat{q_2} =
k\omega q$ for some $q \in Q$ and that $\hat{q_1}' = \hat{q_2}' =
k\omega q'$  for some $q' \in Q'$. Since all \shals in $Q$
(resp. $Q'$) are $\pequivk^X$-equivalent (see Item~(1) in the
definition of $\bcI_k[\alpha,X]$), one can then obtain a strategy for
the general case by adapting this special case.

Recall that $q$ (resp. $q'$) has alphabet $\Cs$ (resp. $\Cs'$) and
that $\Cs$ and $\Cs'$ are $X$-approximations of $\Bs$. Therefore,
using a standard game argument,  one can verify that Duplicator can
win $k$ moves of the \relaxedX game between $(k\omega q + r_1 + k\omega
q',x_1)$ and $(k\omega q + r_2 + k\omega q',x_2)$ as long as no safety
move is played. In case a safety move is played the $X$-approximation
hypothesis guarantees that $r_1,r_2$ of alphabet $\Bs$ contains the
appropriate letter with a port-node label. 
\end{proof}

We now prove that the set $P$ satisfies the definition of
$\bcI_k[\alpha,X]$ for $k\omega \Vc + \uplift{\Bs} + k\omega \Vc'$. We
have three conditions to verify. That~(1) holds is immediate from
Lemma~\ref{lem:fo2game}.

For~(2), consider $(p,x) \in P$. By definition, $p = \hat{q} + r +
\hat{q}'$ with $r \in R$, $\hat{q} \in k\omega Q$ and $\hat{q}' \in
k\omega Q'$. We treat the case when $x \in r$ (the other cases are
treated with a similar argument). Let $c$ be the label of $x$ and let
$(r_1,x_1)$ \ldots $(r_n,x_n)$ be all nodes such that $r_i \in R$ and
$x_i$ is a node of $r_i$ of label $c$. By definition of \uplift{\Bs}
and $R$, we have 
\[
\fU = \{\beta(r_1,x_1),\dots,\beta(r_n,x_n)\} \in \uplift{\Bs}
\]
Let $(p_1,y_1)$ \ldots $(p_m,y_m)$ be all nodes such that $p_i \in P$ 
and $y_i$ is a node of label $c$ in the ``$R$-part'' of $p_i$. From
Lemma~\ref{lem:fo2game} we have:
\[
(p,x) \pequiv{k}^X (p_1,y_1) \pequiv{k}^X \dots \pequiv{k}^X (p_m,y_m)
\]
Observe that viewed as nodes of the ``$R$-part'' of $p_i$, the nodes 
$y_i$ are exactly the nodes $x_j$. Using Fact~\ref{fct:gentypes} one
can then verify that
\[
\set{\beta(p_i,y_i) \mid i \leq m} = \underbrace{\fV +_r \cdots +_r
  \fV}_{k\omega\ times} +_r \fU +_\ell \underbrace{\fV' +_\ell \cdots +_\ell
  \fV'}_{k\omega\ times} \in k\omega \Vc +
\uplift{\Bs} + k\omega \Vc'
\]
It remains to prove~(3). Let $\fU \in \omega \Vc +
\uplift{\Bs} + \omega \Vc'$. Again, we concentrate on the case when
\[
\fU = \underbrace{\fV +_r \cdots +_r
  \fV}_{k\omega\ times} +_r \fW +_\ell \underbrace{\fV' +_\ell \cdots +_\ell
  \fV'}_{k\omega\ times}
\]
\noindent
with $\fW \in \uplift{\fR}$ (other cases are treated in a similar
way). By definition of $\uplift{\Bs}$, we have
\[
\fW = \{\beta(r_1,x_1),\dots,\beta(r_n,x_n)\} \in \uplift{\Bs}
\]
\noindent
with $(r_1,x_1)$ \ldots $(r_n,x_n)$ as all nodes such that $r_i \in R$
and $x_i$ is a node of $r_i$ of label $c$ for some fixed $c$. Let
$(p_1,y_1),\dots,(p_m,y_m) \in P$ be all the \shals of $P$ such that
$y_i$ is a node of $p_i$ with label $c$ in the ``$R$-part'' of
$p_i$. By Lemma~\ref{lem:fo2game}, we have 
\[
(p_1,y_1) \pequiv{k}^X \dots \pequiv{k}^X (p_m,y_m)
\]
\noindent
Observe that viewed as nodes of the ``$R$-part'' of $p_i$, the nodes 
$y_i$ correspond exactly to the nodes $x_j$. Using
Fact~\ref{fct:gentypes} one can then verify that
\[
\set{\beta(p_i,y_i) \mid i \leq m} = \fU
\]

\subsection{Proof of Completeness}

Let $\ell$ be defined as in Proposition~\ref{prop:algo}. We
prove that for any $k \geq \ell$, $\bcI_k[\alpha,X] \subseteq
Sat[X,\alpha]$. We will need the following definition.

Let $k \in \nat$, $X \subseteq H$. To every \shal $q \in \As^+$, we
associate a configuration $\Gc_k[X](q) \in \bcI[\alpha,X]$. For any
$p,x$ set $\fV_{p,x} = \set{\beta(p',x') \mid (p,x)
\pequiv{k}^X (p',x')}$. We set
\[
\Gc_k[X](q) = \{\fV_{q,y} \mid y \in q\}
\]
The following two facts are immediate consequences of the definitions:
\begin{fct} \label{fct:gen1}
For all $k \leq k' \in \nat$, $X \subseteq H$ and $q \in \As^+$ we have
$\Gc_{k'}[X](q) \lesspr \Gc_k[X](q)$.
\end{fct}
\begin{fct} \label{fct:gen2}
For all $k \in \nat$ and $X \subseteq H$ we have $\bcI_k[\alpha,X] = \dclos
\{\Gc_k[X](q) \mid q \in \As^+\}$. 
\end{fct}

We can now finish the proof of Proposition~\ref{prop:algo}.
The proof is by  induction on the size of the alphabet as stated in
the proposition below.

\begin{proposition} \label{prop:comp}
Let $\Bs \subseteq \As$, $k \geq 2|\Bs|^2(|\bcC|+1)$ and
$p$ a \shal such that $p$ contains only labels in $\Bs$. Then
$\Gc_k[X](p) \in \dclos Sat[X,\alpha]$.
\end{proposition}

Using Proposition~\ref{prop:comp} with $\Bs = \As$, we obtain that for  
any $k \geq \ell$ and any $p \in \As^+$, we have $\Gc_k[X](p) \in
\dclos Sat[X,\alpha]$. It then follows
from Fact~\ref{fct:gen2} that $\bcI_k[\alpha,X] \subseteq \dclos
Sat[X,\alpha]$ which terminates the proof of
Proposition~\ref{prop:algo}. It now remains to prove
Proposition~\ref{prop:comp}. The remainder of the section is devoted
to this proof.

\smallskip

For the sake of simplifying the presentation, we assume that $p$ can 
be an empty \shal denoted '$\varepsilon$' and that
$Sat[X,\alpha]$ contains an artificial neutral element '$0$'
such that $\Gc_k[X](\varepsilon) = 0$ for any $k$. As $\varepsilon$ will be
the only \shal having that property this does not harm the generality
of the proof.

As explained above, the proof is by induction on the size of $\Bs$.
The base case happens when $\Bs = \emptyset$. In that case, $p =
\varepsilon$ and $\Gc_k[X](\varepsilon) \in Sat[X,\alpha]$ by  
definition. Assume now that $\Bs \neq \emptyset$, we set $k \geq
2|\Bs|^2(|\bcC|+1)$ and $p$ as a \shal containing only
labels in $\Bs$. We need to prove that $\Gc_k[X](p) \in
\dclos Sat[X,\alpha]$.

\medskip
First observe that when $p$ does not contain \emph{all} labels in
$\Bs$, the result is immediate by induction. Therefore, assume that
$p$ contains all labels in $\Bs$. We proceed as follows. First, we
define a new notion called a \patt{\Bs[X]}{n}. Intuitively, a \shal $q$
contains a \patt{\Bs[X]}{n} iff all labels in $\Bs$ (modulo
$\Bs[X]$-equivalence) are repeated at least $n$ times in $q$. Then,
we prove that if $p$ contains a \patt{\Bs[X]}{n} for a large enough
$n$, then $\Gc_k[X](p)$ can be decomposed in such a way that it can
be proved to be in $Sat[X,\alpha]$ by using induction on the
factors, and Operations~\eqref{op:one} and~\eqref{op:two} to compose
them. Otherwise, we prove that $\Gc_k[X](p)$ can be decomposed as a
sum of bounded length whose elements can be proved to be in
$Sat[X,\alpha]$ by induction. We then conclude using
Operation~\eqref{op:one}. We begin with the definition of 
\patts{\Bs[X]}{n}.

\medskip
\noindent {\bf \patts{\Bs[X]}{n}.} Consider the $\Bs[X]$-equivalence
of labels in $\Bs$ and let $m$ be the number of equivalence
classes. We fix an arbitrary order on these classes that we denote by
$C_0,\dots,C_{m-1} \subseteq \Bs$. Recall that \Cs is a an
$X$-approximation of \Bs iff \Cs contains at least one element of each
class. Let $n \in \nat$. We say that a \shal $q$ contains a
\patt{\Bs[X]}{n} iff $q$ can be decomposed as
\[
q = q_0 + c_0 + q_1 + c_1 + \dots + q_n + c_n + q_{n+1}
\]
\noindent
such that for all $i \leq n$, $c_i \in C_{j}$ (with $j= i \bmod m$) and
$q_i$ is a (possibly empty) \shal. In particular, the decomposition
above is called the \emph{leftmost decomposition} iff for all $i \leq
n$ no label in $C_{j}$ (with $j = i \bmod m$) occurs in
$q_i$. Symmetrically, in the \emph{rightmost decomposition}, for all $i
\geq 0$, no label in $C_{i}$ (with $j = i \bmod m$) occurs in
$q_{i+1}$. Observe that by definition the leftmost and rightmost
decompositions are unique. In the proof, we use the following decomposition
lemma.

\begin{lemma}[Decomposition Lemma] \label{lem:decomp}
Let $n \in \nat$. Let $q$ be a \shal that contains a \patt{\Bs[X]}{n}
and let $q = q_0 + c_0 + \dots + c_n + q_{n+1}$ be the associated
leftmost or rightmost decomposition. Then
\[
\Gc_k[X](q) \lesspr  \Gc_{k-n}[X](q_0) + \Gc_{k-n}[X](c_0) + \dots +
\Gc_{k-n}[X](c_n) + \Gc_{k-n}[X](q_{n+1})
\]
\end{lemma}

\begin{proof}
  This is a simple \efgame game argument. Because of the missing boundary labels
  within the $q_j$, using at most $n$ moves, Spoiler can make sure that
  the game stays within the appropriate segment $q_j$ and can use the remaining
  $k-n$ moves for describing that segment.
\end{proof}

This finishes the definition of patterns. Set $n = m(|\bcC| + 1)$. We now
consider two cases depending on whether our \shal $p$ contains a
\patt{\Bs[X]}{2n}.

\medskip
\noindent
{\bf Case 1: $p$ does not contain a \patt{\Bs[X]}{2n}.} In that case
we conclude using induction and Operation~\eqref{op:one}. Let $n'$ be
the largest number such that $p$ contains a \patt{\Bs[X]}{n'}. By
hypothesis $n' < 2n$. Let $p = p_0 + c_0 + \dots + c_{n'}+ p_{n'+1}$
be the associated leftmost decomposition. Observe that by definition,
for $i \leq n'$, $p_i$ uses a strictly smaller alphabet than
$\Bs$. Moreover, since $p$ does not contain a \patt{\Bs[X]}{n'+1} this
is also the case for $p_{n'+1}$. Set $\tilde{k} = k-n'$, by choice of
$k$, we have $\tilde{k} \geq 2(|\Bs|-1)^2(|\bcC|+1)$. Therefore, we
can use our induction hypothesis and for all $i$ we get,
\[
\Gc_{\tilde{k}}[X](p_i) \in \dclos Sat[X,\alpha]
\]
Moreover, for all $i$, $\Gc_{\tilde{k}}[X](c_i) \in \bcT[\alpha]
\subseteq Sat[X,\alpha]$. Finally, using
Lemma~\ref{lem:decomp} we obtain
\[
\Gc_{k}[X](p) \lesspr  \Gc_{\tilde{k}}[X](p_0) + \Gc_{\tilde{k}}[X](c_0) + \dots +
\Gc_{\tilde{k}}[X](c_{n'}) + \Gc_{\tilde{k}}[X](p_{n'+1})
\]
From Operation~\eqref{op:one} the right-hand sum
is in $\dclos Sat[X,\alpha]$. We then conclude that
$\Gc_{k}[X](p) \in \dclos Sat[X,\alpha]$ which terminates
this case.

\medskip
\noindent
{\bf Case 2: $p$ contains a \patt{\Bs[X]}{2n}.} In that case we
conclude using induction, Operation~\eqref{op:one} and
Operation~\eqref{op:two}. By hypothesis, we know that $p$ contains a
\patt{\Bs[X]}{n}, let $p = p_0 + c_0 + \dots + c_{n}+ p_{n+1}$ be the
associated leftmost decomposition. Since $p$ contains a
\patt{\Bs[X]}{2n}, $p_{n+1}$ must contain a \patt{\Bs[X]}{n}. We set
$p_{n+1} = p' + c'_0 + \dots + c'_{n}+ p'_{n+1}$ as the associated
rightmost decomposition. In the end we get
\[
p =  p_0 + c_0 + \dots + c_{n} + p' + c'_0 + p'_1 + \dots + c'_{n}+ p'_{n+1}
\]
Set $\tilde{k} = k - 2n$ and observe that by choice of $k$, $\tilde{k}
\geq 2(|\Bs|-1)^2(|\bcC|+1)$. Therefore, as in the previous
case, we get by induction that for all $i$, $\Gc_{\tilde{k}}[X](p_i)
\in \dclos Sat[X,\alpha]$, $\Gc_{\tilde{k}}[X](p'_i) \in
\dclos Sat[X,\alpha]$, $\Gc_{\tilde{k}}[X](c_i) \in \dclos
Sat[X,\alpha]$ and $\Gc_{\tilde{k}}[X](c'_i) \in \dclos
Sat[X,\alpha]$. Using the same inductive argument for $p'$
may not be possible as $p'$ might contain all labels in \Bs.

If $p'$ does not contain all labels in \Bs, then, by induction,
$\Gc_{\tilde{k}}[X](p') \in \dclos Sat[X,\alpha]$ and we can
then use Lemma~\ref{lem:decomp} as in Case~1 to conclude that
$\Gc_{k}[X](p) \in \dclos Sat[X,\alpha]$. Assume now that
$p'$ contains all labels in $\Bs$. Recall that $m$ is the number of
$\Bs[X]$-equivalence classes. For all $j \leq |\bcC|$, set
\[
\Vc_j = \sum^{m-1+jm}_{i = jm} (\Gc_{\tilde{k}}[X](p_{i}) +
\Gc_{\tilde{k}}[X](c_{i})) \qquad \Vc'_j = \sum^{m-1+jm}_{i = jm} 
(\Gc_{\tilde{k}}[X](c'_{i}) + \Gc_{\tilde{k}}[X](p'_{i+1}))
\]
Observe that for all $j$, by definition $\Vc_j,\Vc'_j$ have an
alphabet which is an $X$-approximation of $\Bs$ and by
Operation~\eqref{op:one}, $\Vc_j,\Vc'_j \in \dclos
Sat[X,\alpha]$. Moreover, it follows from a pigeon-hole
principle argument that the sequences $\Vc_0 + \cdots + \Vc_{|\bcC|}$
and $\Vc'_0 + \cdots + \Vc'_{|\bcC|}$ must contain ``loops'',
i.e. there exists $j_1 < j_2$ and $j'_1 < j'_2$
such that
\[
\begin{array}{rcl}
\Vc_0 + \cdots + \Vc_{j_1} & = & \Vc_0 + \cdots + \Vc_{j_2} \\
\Vc'_{j_2} + \cdots + \Vc'_{|\bcC|} & = & \Vc'_{j_1} + \cdots +
\Vc'_{|\bcC|}
\end{array}
\]
Set $\Uc_1 = \Vc_0 + \cdots + \Vc_{j_1}$, $\Uc_2 = \Vc_{j_1+1} +
\cdots + \Vc_{j_2}$, $\Uc'_1 = \Vc'_{j'_1} + \cdots \Vc'_{j'_2-1}$ and
$\Uc'_2 = \Vc'_{j'_2} + \cdots + \Vc'_{|\bcC|}$. Observe that by
Operation~\eqref{op:one}, we have $\Uc_1,\Uc_2,\Uc'_1,\Uc'_2 \in
\dclos Sat[X,\alpha]$ and that by construction the alphabets of
$\Uc_2,\Uc'_1$ are $X$-approximations of $\Bs$. Moreover, a little
algebra yields $\Uc_1 = \Uc_1 + \Uc_2 = \Uc_1 + \omega \Uc_2$ and
$\Uc'_2 = \Uc'_1 + \Uc'_2 = \omega \Uc'_1 + \Uc'_2$.

Set $p''= p_{j_2m} + \cdots + c_n + p' + c'_0 + \cdots +
p'_{j'_1m-1}$. Observe that by hypothesis on $p'$, $p''$ contains
all labels in $\Bs$. It follows from Fact~\ref{fct:gen1} and
Lemma~\ref{lem:decomp} that
\[
\Gc_{k}[X](p) \lesspr \Uc_1 + \Uc_2 + \Gc_{\tilde{k}}[X](p'')
+ \Uc'_1 + \Uc'_2 = \Uc_1 + \omega \Uc_2 + \Gc_{\tilde{k}}[X](p'')
+ \omega \Uc'_1 + \Uc'_2
\]
Moreover, since $p''$ has alphabet $\Bs$, it is immediate that
$\Gc_{\tilde{k}}[X](p'') \lesspr \uplift{\Bs}$. Therefore, 
\[
\Gc_{k}[X](p) \lesspr \Uc_1 + \omega \Uc_2 + \uplift{\Bs} + \omega \Uc'_1 + \Uc'_2
\]
It is now immediate from Operation~\eqref{op:two} $\omega \Uc_2 + \uplift{\Bs}
+ \omega \Uc'_1 \in \dclos Sat[X,\alpha]$. By combining this
with Operation~\eqref{op:one}, we obtain
\[
\Gc_{k}[X](p) \lesspr \Uc_1 + \omega \Uc_2 +  \uplift{\Bs} + \omega
\Uc'_1 + \Uc'_2 \in \dclos Sat[X,\alpha]
\]
We conclude that $\Gc_{k}[X](p) \in \dclos Sat[X,\alpha]$
which terminates the proof.

\section{Other logics}\label{others}

It turns out that the proof of Theorem~\ref{th-efmax} depends on the horizontal
modalities of the logic only via the notion of definability within \shals. It
can therefore be adapted to many other horizontal modalities assuming those can
at least express the fact that two nodes are siblings (ie. can talk about the
\shal of a given node). By tuning this notion
one can obtain several new characterizations.
We illustrate this feature in this section with the horizontal
predicates $\next$, $\previous$, $\mathbf{S}$ and $\mathbf{S}^{\neq}$,
adopting the  point of view of temporal logic.

The semantic of these predicates is defined as follows. The formula
$\mathbf{S}^{\neq}\varphi$ holds at a node $x$ if $\varphi$ holds at
some sibling of $x$ distinct from $x$. It is a shorthand for
$\forward\varphi \vee \past\varphi$. The formula $\mathbf{S}\varphi$
holds at $x$ if $\varphi$ holds at some sibling of $x$ 
including $x$. It is a shorthand for $\varphi \vee
\mathbf{S}^{\neq}\varphi$. The predicates $\next$ and $\previous$ are
the usual next sibling and previous sibling modalities.

\smallskip 

The vertical navigational modalities remain the same and the 
corresponding logics are denoted by \EFH, \EFHs, \EFmax and a
characterization can be obtained for each of them using the
same scheme as for \EFF.

\medskip

{\sloppy As before, \EFHs and \EFmax are equivalent to two-variable fragments of
first-order logic. \EFHs has the same expressive power as \FOds while
\EFmax corresponds to \FOdxs. Here $s(x,y)$ is a 
binary predicate that holds when $x$ and $y$ are siblings and \succh
is a binary predicate that holds when $y$ is the next sibling of
$x$. These facts can be proved along the same way as the equivalence
between \FOd and \EFF, see Theorem~\ref{thm-fod-eff}.}

Note that this is no longer the case for \EFH as languages defined in 
this formalism are closed under bisimulation while in the two variable
fragment of first-order logic it is possible to have quantifications
over incomparable nodes by using the equality and negation which rules
out closure under bisimulation. 

\medskip

The proof techniques presented in the previous sections require at
least the power of testing whether two nodes are sibling in order to
extract a \shal within a forest. Hence it cannot be applied to \FOdv
and finding a decidable characterization for this logic remains an
open question. Similarly, we rely on the fact that the child relation
cannot be expressed and finding a decidable characterization in the
presence of this predicate remains also an open question. 

\medskip

As we don't have a vertical successor modality, the characterizations we obtain
for \EFH, \EFHs and \EFmax still require Identity~\eqref{eqv} on the vertical
monoid $V$ of the forest algebra. Identity~\eqref{eqh} is now replaced by the
appropriate identity corresponding to the new horizontal expressive
power. Finally the notion of saturation is adapted by replacing $\pequiv{k}^X$
with a notion reflecting the horizontal expressive power of the logic. It is
defined as in Section~\ref{sec-reach} by only modifying the allowed moves in
the game in order to reflect the horizontal expressive power of the associated
logic (the constraints on the labels remaining untouched within each game).  In
a similar way, $\mequiv{k}$ is replaced in the proof by the suitable
game. Besides these changes at the level of definitions, the characterization is stated
and proved as for Theorem~\ref{th-efmax}.

\subsection{\EFH}
\newcommand\spequiv{\ensuremath{\mathbf{S}\text{-}\!\!\pequiv{}^X}\xspace}
\newcommand\smequiv{\ensuremath{\mathbf{S}\text{-}\!\!\mequiv{}}\xspace}

In this case the games on \shals are defined with no navigational
constraints on Duplicators moves: Duplicator can respond by choosing
an arbitrary node, the restriction being only on its label.

Note that the games no longer depend on $k$, as only the presence or absence of
a given symbol of $\As$ inside the \shal matters. We write
\smequiv and \spequiv the
equivalence relations resulting from this game and its $X$-relaxed variant.

The following  analog of Claim~\ref{lemma-types-fod} is an immediate
consequences of the definitions.

\begin{claim}\label{lemma-S-types}
Let $X \subseteq H$ and $(p,x)$ be a node. There is a
\EFH formula $\psi_{p,x}$ having one free variable and such that for
any forest $s$, $\psi_{p,x}$ holds exactly at all nodes $(p',x')$ such
that $(p,x) ~\spequiv (p',x')$.
\end{claim}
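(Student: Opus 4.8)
The plan is to mimic the proof of Claim~\ref{lemma-types-fod}, replacing $\mequiv{k}$ and $\pequiv{k}^X$ throughout by \smequiv and \spequiv. The first step is to check that $(p,x) \smequiv (p',x')$ implies $(p,x) \spequiv (p',x')$. Both relations impose the same requirements on the set of \As-labels occurring in the \shal and on the label carried by the marked node, so it is enough to note that a winning strategy for Duplicator in the plain $\mathbf{S}$-game is also winning in its $X$-relaxed variant: by answering every move with a node carrying \emph{exactly} the same label, Duplicator keeps the two pebbles on equally labelled nodes throughout, so Spoiler never reaches the configuration ($b(c)$ facing $b(c')$ with $c \neq c'$) from which a safety move is permitted.

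Next I would observe that \smequiv has finite index on nodes and that each of its classes is definable in \EFH by a formula with one free variable. Since the moves of the $\mathbf{S}$-game are unconstrained, the only relevant information is which symbols of \As occur in the \shal, so $(p,x) \smequiv (p',x')$ holds precisely when $x$ and $x'$ carry the same symbol $\sigma_0 \in \As$ and $p,p'$ have the same subset $E$ of \As as alphabet; as \As and its powerset are finite, \smequiv has finite index. Writing $\chi_\sigma$ for a formula asserting that the \As-symbol of the current node is $\sigma$, the corresponding \smequiv-class is defined by the conjunction of $\chi_{\sigma_0}$, of $\mathbf{S}\chi_\sigma$ for each $\sigma \in E$, and of $\neg\mathbf{S}\chi_\sigma$ for each $\sigma$ not occurring in $E$ --- because $\mathbf{S}$ quantifies over all siblings of the current node, itself included, and therefore pins down exactly the alphabet of its \shal. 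The formulas $\chi_\sigma$ are themselves available in \EFH: a leaf symbol $a$ is tested by $P_a$, while an inner symbol is tested from the label of the node together with $\mathbf{EF}$-tests describing the subtree below it; this is the one place where a little care is needed, and it is handled just as in the \FOd case, keeping track of how the \shal-labelling of a node interacts with the label constraints of the relaxed game.

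Finally, since \smequiv refines \spequiv and has finite index, each \spequiv-class is a \emph{finite} union of \smequiv-classes; taking $\psi_{p,x}$ to be the disjunction of the \EFH-formulas above for the \smequiv-classes contained in the \spequiv-class of $(p,x)$ then yields the required formula. The only genuine work sits in the middle step, and there the obstacle is minor: ``the symbol $\sigma$ occurs among the siblings of $y$'' is immediate from the $\mathbf{S}$ modality, so all that must be verified is that the \As-symbol of a node is itself an \EFH-property of that node; everything else is routine bookkeeping, which is why the statement is announced as an immediate consequence of the definitions.
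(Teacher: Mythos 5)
Your overall route is the paper's: the paper states this claim without proof (``an immediate consequence of the definitions''), the intended argument being exactly the one used for Claim~\ref{lemma-types-fod} --- show that \smequiv refines \spequiv, that \smequiv has finite index, that each \smequiv-class of nodes is \EFH-definable, and take the finite disjunction over the classes contained in the \spequiv-class. Your refinement step (Duplicator answers with an identically labelled node, so the safety move never becomes available) and your analysis of \smequiv (same node label, same alphabet of the \shal) are both correct.

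The gap sits precisely in the step you defer as ``handled just as in the \FOd case'': the definability of $\chi_\sigma$. For $\sigma = b(a)$ the formula must assert that the current node has label $b$ and \emph{exactly one} child, which is the leaf $a$. In \FOd this is expressed via the sibling order (the unique descendant is an $a$-leaf with no strict sibling), but \EFH is closed under bisimulation and cannot count children. Concretely, take $y_1$ and $y_2$ to be the $b$-nodes of $d(b(a)+c)$ and $d(b(a+a)+c)$ respectively: the two pointed forests satisfy exactly the same \EFH formulas, yet $y_1$ has \As-symbol $b(a)$ and \shal $b(a)+c$ while $y_2$ has \As-symbol $b(\hole)$ and \shal $b(\hole)+c$; these two nodes are not \spequiv-equivalent (the alphabets of their \shals differ, and the safety move also separates them). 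Hence $\chi_{b(a)}$ is not \EFH-expressible, your disjunction cannot define the class exactly, and indeed no \EFH formula can. Some repair is needed --- either coarsening the label tests of the $\mathbf{S}$-games to bisimulation-invariant ones, or arguing that the extra nodes captured by the best \EFH-approximation of $\chi_\sigma$ are harmless where the claim is used. To be fair, the paper's one-line justification glosses over the same point; but your explicit assertion that this step goes through ``just as in the \FOd case'' is exactly where it breaks.
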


Recall the definition of saturation given in Section~\ref{carac}.  The notion
of $\mathbf{S}$-saturation is obtained identically after replacing
$\pequiv{k}^X$ with $\spequiv$. With these new
definitions we get:

\begin{theorem}\label{th-efs}
A regular forest language $L$ is definable in \EFH iff its syntactic
morphism $\alpha: \A^{\Delta} \rightarrow (H,V)$ satisfies: 

\begin{enumerate}[label=\alph*)]
\item $H$ satisfies the identities
{\small\begin{equation}\label{eqhs} 2h=h \text{ and } f+g=g+f
  \end{equation}
}
\item $V$ satisfies Identity~\eqref{eqv}
{\small\begin{equation*}
(uv)^{\omega}v(uv)^{\omega}=(uv)^\omega
\end{equation*}
}
\item the leaf completion of $\alpha$ is closed under
  $\mathbf{S}$-saturation.
\end{enumerate}
\end{theorem}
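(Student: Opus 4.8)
The plan is to mirror the proof of Theorem~\ref{th-efmax} essentially verbatim, changing only the horizontal layer wherever it appears. First I would observe that the entire development of Sections~\ref{sec-reach}--\ref{more} is modular in the following precise sense: the vertical navigation, the forest-algebra machinery, the Antichain Composition Lemma, the notions of $P$-reachability and $P$-depth, and the inductive skeleton of the sufficiency proof never refer to the concrete horizontal modalities; they interact with the horizontal world only through (i) the equivalence $\mequiv{k}$ on \shals, used to guarantee that the sets $P$ we manipulate are definable (Claims~\ref{lemma-p-valid} and~\ref{lem:kdef}), (ii) the finer equivalence $\pequiv{k}^X$ on nodes of \shals, used to define saturation and the antichain formula in Case~2, and (iii) the two structural facts that hold of these equivalences, namely that they have finite index and that any navigation-based strategy for Spoiler also works in the relaxed game (Claim~\ref{lemma-types-fod}). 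So the first step of the proof is to replace $\mequiv{k}$ by $\smequiv$ and $\pequiv{k}^X$ by $\spequiv$ throughout, noting that for $\mathbf S$ these equivalences forget $k$ entirely (only the set of labels present in the \shal matters), so every ``$\mequiv k$'' or ``$\pequiv k X$'' becomes a $k$-free relation; Claim~\ref{lemma-S-types} plays the role of Claim~\ref{lemma-types-fod}, and finite index is immediate since each class is determined by a subalphabet of $\As$ (for $\pequiv{}^X$, a subalphabet together with a current label).

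Second, I would handle the horizontal identity. Over strings, the logic corresponding to $\mathbf S$ (arbitrary-sibling modality) has the expressive power of ``Boolean combinations of which letters occur'', i.e.\ a string language is definable iff its syntactic semigroup is aperiodic, commutative and idempotent --- equivalently satisfies $2h=h$ and $f+g=g+f$, which is exactly~\eqref{eqhs}. This replaces Theorem~\ref{thm-word-fod} on the horizontal side: the analogue of Lemma~\ref{lem-choicek} becomes ``if $p$ and $p'$ contain the same set of labels of \As then $p[\bar s]$ and $p'[\bar s]$ have the same forest type'', whose proof is the same one-line argument from~\eqref{eqhs} that a commutative idempotent semigroup evaluates a word from its support alone. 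The vertical identity~\eqref{eqv} is untouched, since the vertical modalities $\mathbf{EF},\mathbf F^{-1}$ are unchanged, and its use in Lemmas~\ref{lem-H}, \ref{lem:delchoice}, etc., is copied verbatim. With these two substitutions in place, the correctness direction (Section~\ref{sec-correc}): the \efgame argument of Lemma~\ref{lemma-winning} goes through with $\psim_n^X$ built from $\spequiv$ instead, again because the only properties used are transitivity of the node-equivalence, pseudo-saturation, and Lemma~\ref{lem-aper} (a vertical statement).

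Third, the sufficiency direction (Section~\ref{main-proof}): the three induction parameters --- $|X|$, the index of $P$, the $P$-depth of $v$ --- and the three cases (base case, bottom-up Case~1, top-down Case~2) are reproduced literally. The one genuinely delicate point, and the place I expect the main obstacle, is the termination argument in Case~2, specifically Lemma~\ref{lemma-LPtau}: there one must check that replacing port-nodes $\pequiv k X$-equivalent to a distinguished $(p,x)$ by \Xnodes strictly decreases the index of $P$, and the proof (via Claim~\ref{clm:finish}) exploits that the node-equivalence is strictly weaker than $\mequiv k$. For $\mathbf S$ this is still true --- $\smequiv$ distinguishes a \shal with label $b(\hole)$ at some position from a \shal with $b(a)$ there, whereas $\spequiv$ can relax such a mismatch via a safety move --- so the $X$-number of the modified \shal drops; but one must re-verify Claim~\ref{clm:global} and the safety-move bookkeeping in the $\mathbf S$-relaxed game, where Duplicator may move the pebble to an \emph{arbitrary} node rather than a left/right sibling. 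This re-verification is routine but is the part where a naive translation could break, because the ``$k+2$ rounds'' counting in Claim~\ref{clm:global} must be reconciled with the fact that the $\mathbf S$-game is $k$-independent; the clean fix is to note that in the $\mathbf S$-relaxed game a single round suffices to reach any target node, so the indices collapse and all the ``$k$'' and ``$k+2$'' subscripts can simply be dropped. Finally, the decidability part (Section~\ref{more}) carries over: profiles and configurations are defined using $\spequiv$, Proposition~\ref{prop:satpro} and the fixpoint algorithm of Proposition~\ref{prop:algo} are restated with the $\mathbf S$-relaxed game, and since that game ignores $k$, the set $\bcI[\alpha,X]$ equals $\bcI_0[\alpha,X]$ outright, which actually \emph{simplifies} the completeness proof (Operation~\eqref{op:two} and the pattern/pigeonhole machinery of Proposition~\ref{prop:comp} become easier, as no ``large $k$'' is needed). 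Assembling these pieces gives Theorem~\ref{th-efs}.
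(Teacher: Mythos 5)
Your proposal is correct and takes essentially the same route as the paper: the paper's proof of Theorem~\ref{th-efs} consists precisely of replacing the $k$-indexed equivalences $\mequiv{k}$ and $\pequiv{k}^X$ by their $k$-independent $\mathbf{S}$-variants and substituting Lemma~\ref{lem-choicek} with Lemma~\ref{lem-choicek-efh} (whose one-line proof from commutativity and idempotence matches your sketch), after which the rest of the argument is reused verbatim. You go into considerably more explicit detail than the paper does on re-verifying Claim~\ref{clm:global}, the termination argument of Lemma~\ref{lemma-LPtau}, and the simplification of the decidability fixpoint, but these are exactly the checkpoints the paper implicitly asserts carry over, and your analysis of why they do (in particular, that $\smequiv$ is strictly finer than $\spequiv$ on the relevant shallow multicontexts, so the index still drops in Case~2) is sound.
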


\noindent Note that \eqref{eqhs} simply states that the logic is closed under 
bisimulation, hence reflecting exactly the horizontal expressive power
of \EFH.

Concerning the proof of Theorem~\ref{th-efs}, aside from the initial
choice of the integer $k$ which is no longer necessary here, it is
identical to the one we gave for Theorem~\ref{th-efmax} after
replacing Lemma~\ref{lem-choicek} by the following result: 

\begin{lemma} \label{lem-choicek-efh}
Let $L$ be a language whose syntactic forest algebra satisfies the
identities stated in Theorem~\ref{th-efs}. For all \shals $p ~\smequiv p'$
and for all forests $s$, $p[\bar s]$ and
$p'[\bar s]$ have the same forest type.
\end{lemma}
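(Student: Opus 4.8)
The plan is to mimic the proof of Lemma~\ref{lem-choicek}, but to replace the appeal to Theorem~\ref{thm-word-fod} and Identity~\eqref{eqh} by the much simpler structural consequence of Identities~\eqref{eqhs}: that $(H,+)$ is a commutative idempotent semigroup (a semilattice). First I would reduce the computation of the forest type of $q[\bar s]$ to an evaluation of a word over $H$. Fix the forest $s$. For a single letter $c \in \As$, let $\hat c$ be the tree obtained from $c$ by substituting $s$ at its port when $c = b(\hole)$ (and $\hat c = c$ when $c$ has no port); put $\tau_s(c) := \alpha(\hat c) \in H$, which depends only on $c$ and $s$. Since $\bar s$ fills every port, for a \shal $q = q_1 + \cdots + q_n$ (each $q_i$ a letter of $\As$) we have
\[
\alpha(q[\bar s]) \;=\; \tau_s(q_1) + \cdots + \tau_s(q_n) \;=\; \beta\bigl(\tau_s(q)\bigr),
\]
where $\beta : H^+ \to H$ is the evaluation morphism sending a word to the iterated sum in $H$ of its letters, and $\tau_s$ is extended letter-by-letter to a map $\As^+ \to H^+$.

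Second I would record the elementary fact that in a commutative idempotent semigroup $h_1 + \cdots + h_n$ equals the sum of the \emph{set} of distinct values among $h_1,\dots,h_n$; in particular $\beta(w)$ depends only on the set of letters occurring in $w$. By Identities~\eqref{eqhs} this applies to $(H,+)$, so $\beta(\tau_s(q))$ is determined by the subset $\tau_s(\content{q}) \subseteq H$, and hence by $\content{q}$ alone (for the fixed $s$).

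Finally I would conclude: if $p \smequiv p'$ then $\content{p} = \content{p'}$ --- immediate, since in the \EFH-game on \shals Spoiler may move to any node of either \shal in a single step, so a letter present in only one of the two \shals would be winning for him (equality of $\As$-alphabets is moreover part of the definition of \smequiv, exactly as condition~(i) in $\pequiv{k}^X$). Therefore $\tau_s(\content{p}) = \tau_s(\content{p'})$, so by the second step $\beta(\tau_s(p)) = \beta(\tau_s(p'))$, that is $\alpha(p[\bar s]) = \alpha(p'[\bar s])$. There is essentially no obstacle here: the whole content lies in the two elementary observations above, and --- unlike Lemma~\ref{lem-choicek} --- no integer parameter $k$ is required, since $\smequiv$ already identifies all \shals sharing an $\As$-alphabet. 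The only two points deserving a written line are the letterwise identity $\alpha(q[\bar s]) = \beta(\tau_s(q))$ and the reduction of $\smequiv$ to equality of $\As$-alphabets; note that Identity~\eqref{eqv} on $V$ is not used, just as it is not used in the proof of Lemma~\ref{lem-choicek}.
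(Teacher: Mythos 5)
Your proposal is correct and follows essentially the same route as the paper: the paper's own proof simply observes that $p \smequiv p'$ forces $p[\bar s]$ and $p'[\bar s]$ to contain the same summands up to multiplicity, and then invokes idempotence and commutativity of $(H,+)$ from Identity~\eqref{eqhs}. Your write-up merely makes explicit the two elementary steps (the letterwise evaluation $\alpha(q[\bar s])=\tau_s(q_1)+\cdots+\tau_s(q_n)$ and the reduction of $\smequiv$ to equality of $\As$-alphabets) that the paper leaves implicit.
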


\begin{proof} Since $p ~\smequiv p'$ the forests
$p[\bar s]$ and $p'[\bar s]$ contain the same symbols but possibly with a
different number of occurrences. It follows from~\eqref{eqhs} that
$\alpha(p[\bar s]) = \alpha(p'[\bar s])$.
\end{proof}

\subsection{\EFHs and \FOds}
\newcommand\sspequiv{\ensuremath{\mathbf{S}^{\neq}\text{-}\!\!\pequiv{}^X}\xspace}
\newcommand\ssmequiv{\ensuremath{\mathbf{S}^{\neq}\text{-}\!\!\mequiv{}}\xspace}

As before the key point is the allowed moves in the
Ehrenfeucht-Fraïssé games. In this case we only require that 
Duplicator moves in a different position as soon as Spoiler does. 

As in the previous section, the games no longer depend on $k$ as it only
matters whether or not a label occurs and whether or not it occurs twice. We
write \ssmequiv and \sspequiv the equivalence relations resulting from this
game and its $X$-relaxed variant.

The following analog of Claim~\ref{lemma-types-fod} is an immediate
consequences of the definitions.

\begin{claim}\label{lemma-Sn-types}
Let $X \subseteq H$ and $(p,x)$ be a node. There is a
\EFHs formula $\psi_{p,x}$ having one free variable and such that for
any forest $s$, $\psi_{p,x}$ holds exactly at all nodes $(p',x')$ such
that $(p,x) ~\sspequiv (p',x')$.
\end{claim}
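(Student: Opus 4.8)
The plan is to reproduce the proof of Claim~\ref{lemma-types-fod} with the obvious substitutions, replacing the $X$-relaxed game by its $\mathbf{S}^{\neq}$-variant and $\mequiv{k}$ by $\ssmequiv$. The first step is the refinement observation: if Duplicator has a winning strategy in the (non-relaxed) $\mathbf{S}^{\neq}$-game witnessing $(p,x) \ssmequiv (p',x')$, then playing that strategy in the $X$-relaxed $\mathbf{S}^{\neq}$-game keeps both pebbles on nodes carrying \emph{exactly} the same label throughout; in particular the two pebbles are never simultaneously on nodes labelled $b(c)$ and $b(c')$ with $c \neq c'$, so Spoiler never obtains the right to play a safety move, and exact-label matches a fortiori satisfy the relaxed label constraints. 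Hence $(p,x) \ssmequiv (p',x')$ implies $(p,x) \sspequiv (p',x')$, so every $\sspequiv$-class is a union of $\ssmequiv$-classes.

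Next I would observe that $\ssmequiv$ has finite index and that each of its classes is defined by a formula of \EFHs. As the remark preceding the statement records, for the modality $\mathbf{S}^{\neq}$ the outcome of the game depends only on a finite amount of information about $(p,x)$ and $(p',x')$: the label of the distinguished node, and, for each letter $c \in \As$, the number of occurrences of $c$ among the siblings counted up to threshold two. One checks by a routine game argument that two nodes carrying the same such data are $\ssmequiv$-equivalent. Every such condition is plainly expressible by a Boolean combination of $\mathbf{S}^{\neq}$-formulas, hence by an \EFHs formula, and the desired formula $\psi_{p,x}$ is then the finite disjunction of the formulas defining the $\ssmequiv$-classes contained in the $\sspequiv$-class of $(p,x)$.

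There is no genuine obstacle here; the statement is, as noted, immediate from the definitions. The only point deserving a line of care is the refinement step, and that is verbatim the argument already used for Claim~\ref{lemma-types-fod} and its \EFH-counterpart Claim~\ref{lemma-S-types}.
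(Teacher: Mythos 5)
Your proof is correct and follows exactly the route the paper intends: the paper states this claim without proof as ``immediate from the definitions,'' and the intended argument is precisely the one you give, namely the refinement step from the proof of Claim~\ref{lemma-types-fod} (exact-label strategies block safety moves, so $\ssmequiv$ refines $\sspequiv$) combined with the finite-index, \EFHs-definable description of $\ssmequiv$-classes via labels counted up to threshold two. No gaps.
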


As in the previous case, replacing $\pequiv{k}^X$ with $\sspequiv$ in the
definition of saturation yields a new notion of saturation that we call
$\mathbf{S}^{\neq}$-saturation. We can show:

\begin{theorem}\label{th-efss}
A regular forest language $L$ is definable in \EFHs iff its syntactic
morphism $\alpha: \A^{\Delta} \rightarrow (H,V)$ satisfies: 

\begin{enumerate}[label=\alph*)]
\item $H$ satisfies the identities
{\small\begin{equation}\label{eqhss} 3h=2h \text{ and } f+g=g+f
  \end{equation}
}
\item $V$ satisfies Identity~\eqref{eqv}
{\small\begin{equation*}
(uv)^{\omega}v(uv)^{\omega}=(uv)^\omega
\end{equation*}
}
\item the leaf completion of $\alpha$ is closed under
  $\mathbf{S}^{\neq}$-saturation.
\end{enumerate}
\end{theorem}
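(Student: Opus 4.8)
The plan is to mirror, almost verbatim, the treatment of \FOd given in Sections~\ref{sec-reach}--\ref{more}, replacing every use of the \shal-equivalences $\pequiv{k}^X$ and $\mequivk$ by the $\mathbf{S}^{\neq}$-versions $\sspequiv$ and $\ssmequiv$ respectively. As already noted, these new equivalences do not depend on $k$: only the set of labels occurring in a \shal and, for each label, whether it occurs once or at least twice, matters. Hence the initial tuning of the integer $k$ in Section~\ref{main-proof} becomes unnecessary, exactly as in the proof of Theorem~\ref{th-efs}. The notion of $\mathbf{S}^{\neq}$-saturation is the one obtained from the definition in Section~\ref{sec-saturation} by substituting $\sspequiv$ for $\pequiv{k}^X$, and Claim~\ref{lemma-Sn-types} guarantees that the induced classes of nodes remain definable in \EFHs, which is all that is needed to run the Antichain Composition Lemma.

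For the necessary direction, Identity~\eqref{eqv} is obtained exactly as for \FOd: vertical chains of contexts are treated as strings over \orderv and Theorem~\ref{thm-word-fod} applies since \EFHs has the same vertical modalities as \EFF. Identity~\eqref{eqhss} is the string-level description of the sibling fragment induced by $\mathbf{S}^{\neq}$ --- a set of sibling sequences is $\mathbf{S}^{\neq}$-definable iff it depends only, for each label, on whether that label occurs zero, once or at least twice --- and is seen to be necessary through the $\ssmequiv$ game, which refines $\mequivk$ for every $k$. Closure under $\mathbf{S}^{\neq}$-saturation is proved by replaying the \efgame argument of Proposition~\ref{prop-saturation}: the forests $S_1,S_2$ are constructed exactly as in Section~\ref{section-game-definition}, the pseudo-saturation property of $\Delta_n$ now refers to the pseudo-$\mathbf{S}^{\neq}$-relaxed equivalence, and the three-case induction of Lemma~\ref{lemma-winning} goes through line by line --- the only differences are that the ``same label'' constraints are weakened to the relaxed-label constraints of the $\mathbf{S}^{\neq}$-relaxed game (which only helps Duplicator) and that the safety move is read off from those same constraints.

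For the sufficient direction one replays Section~\ref{main-proof} essentially verbatim; the only genuinely horizontal input there is Lemma~\ref{lem-choicek} (used via Lemma~\ref{lem:kdef}), which is replaced by its $\mathbf{S}^{\neq}$-analog: if $p \ssmequiv p'$ then $p[\bar s]$ and $p'[\bar s]$ have the same forest type for every forest $s$. This has the same one-line proof as Lemma~\ref{lem-choicek-efh}: the two forests contain the same tree-types with the same ``once versus at-least-twice'' pattern, so by commutativity and $3h=2h$ their $\alpha$-images coincide. With this in hand, all the auxiliary sets of \shals built during the induction ($P'$, $Q$, $\hat P$) are closed under $\ssmequiv$, hence expressible, the antichain formulas $\varphi$ are expressible in \EFHs by Claim~\ref{lemma-Sn-types}, and the three induction parameters --- $|X|$, the index of $P$ (now defined through $\sspequiv$ and $X$-numbers), and the $P$-depth of $v$ --- behave as before; in particular the index strictly decreases in Case~2, which is precisely where, as the paper stresses, the weakness of $\sspequiv$ relative to $\ssmequiv$ is essential, and where the proofs of Lemma~\ref{lemma-LPtau} and Claim~\ref{clm:finish} transfer without change because they are purely about the game.

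Finally, decidability is obtained by adapting Section~\ref{more}, which is here markedly simpler: since the relevant equivalences do not depend on $k$, the sets $\bcI[\alpha,X]$ of relevant configurations are finite by construction and the fixpoint of the ``Computing'' subsection collapses, the key string lemma $(u)^k u'' (u')^k \mequivk (u)^k(u')^k$ being replaced by the observation that $u u'' u' \ssmequiv$ any word over the same threshold-$2$ label multiset; Propositions~\ref{prop:comp} and~\ref{prop:satpro} then go through and reduce closure under $\mathbf{S}^{\neq}$-saturation to testing finitely many equations. The main obstacle is not conceptual but a matter of careful bookkeeping: ensuring that in the necessity game (the analog of Lemma~\ref{lemma-winning}) and in the index-decrease step of Case~2, every clause referring to label equality or to $\pequiv{k}^X$ is correctly reinterpreted under the relaxed $\mathbf{S}^{\neq}$ rules, and in particular that the safety move --- the device that keeps saturation a \emph{necessary} condition rather than merely sufficient --- still interacts correctly with the weakened label tests; a secondary check is that $3h=2h$ together with commutativity is exactly the horizontal identity delivered by the replayed sufficiency argument.
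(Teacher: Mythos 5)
Your proposal is correct and follows essentially the same route as the paper: the paper proves Theorem~\ref{th-efss} by declaring the argument identical to that of Theorem~\ref{th-efmax} once $\pequiv{k}^X$ and $\mequiv{k}$ are replaced by their $\mathbf{S}^{\neq}$-variants and Lemma~\ref{lem-choicek} is replaced by Lemma~\ref{lem-choicek-efhs}, whose one-line proof from commutativity and $3h=2h$ is exactly the one you give. Your additional remarks on the necessity game, the index decrease in Case~2, and the simplified decidability argument all match the paper's stated adaptations.
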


\noindent Notice that~\eqref{eqhss} reflects exactly the horizontal expressive
power of \EFHs: no horizontal order and counting up to threshold $2$.

Concerning the proof of Theorem~\ref{th-efss}, aside from the initial
choice of the integer $k$ which is no longer necessary here, it is
identical to the one we gave for Theorem~\ref{th-efmax} after
replacing Lemma~\ref{lem-choicek} by the following result:

\begin{lemma} \label{lem-choicek-efhs}
Let $L$ be a language whose syntactic forest algebra satisfies the
identities stated in Theorem~\ref{th-efss}. For all \shals $p ~\ssmequiv p'$ and all forests $s$, $p[\bar s]$ and 
$p'[\bar s]$ have the same forest type.
\end{lemma}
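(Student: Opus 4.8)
The plan is to follow the proof of Lemma~\ref{lem-choicek-efh} almost verbatim, now invoking the identities of Theorem~\ref{th-efss} in place of those of Theorem~\ref{th-efs}. Recall that, viewing \shals as strings over \As, the equivalence \ssmequiv identifies $p$ and $p'$ precisely when, for every letter $c \in \As$, the numbers of occurrences of $c$ in $p$ and in $p'$ are both $0$, both $1$, or both at least $2$. The first step is to record the consequence of the $H$-identities $3h = 2h$ and $f + g = g + f$: by commutativity the forest type of any forest depends only on the multiset of types of its trees, and by $3h = 2h$ — hence $nh = 2h$ for every $n \geq 2$, by an immediate induction — it depends only on, for each type $h \in H$, whether $h$ is the type of $0$, $1$, or at least $2$ of those trees.

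Next I would observe that passing from a \shal $q$ to the forest $q[\bar s]$ amounts to applying, letter by letter, the fixed map $\phi \colon \As \to H$ given by $a \mapsto \alpha(a)$, $b(\hole) \mapsto \alpha(b(s))$, and $b(a) \mapsto \alpha(b(a))$: the trees of $q[\bar s]$ are, with the same multiplicities, the $\phi$-images of the letters of $q$. It therefore remains to check that the ``$0/1/{\ge}2$'' type-profile of $q[\bar s]$ is already determined by the ``$0/1/{\ge}2$'' letter-profile of $q$. For a fixed $h \in H$, the number of trees of type $h$ in $q[\bar s]$ equals $\sum_{c \in \phi^{-1}(h)} \mathrm{occ}_q(c)$; a short case analysis, according to whether none, exactly one, or at least two letters of $\phi^{-1}(h)$ occur in $q$ (and, in the middle case, whether that letter occurs once or at least twice), shows this sum lands in $\{0\}$, $\{1\}$, or $\{2,3,\dots\}$ in a way that depends only on the per-letter thresholded counts. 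Since $p \ssmequiv p'$ equates exactly those counts, $p[\bar s]$ and $p'[\bar s]$ have the same type-profile, and hence the same forest type by the first step.

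The argument is short and essentially forced; the only point requiring a moment's care is this last case analysis, which reconciles letter-level multiplicities with tree-type-level multiplicities when $\phi$ fails to be injective (for instance $b(\hole)$ and $b(a)$ are sent to the same type whenever $s$ and $a$ have the same forest type). It is handled by the elementary fact that $\min(2, \sum_i a_i)$ is determined by the tuple $(\min(2, a_i))_i$, which is precisely the information that \ssmequiv retains.
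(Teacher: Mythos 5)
Your proof is correct and follows essentially the same route as the paper's, which simply observes that $p \ssmequiv p'$ forces $p[\bar s]$ and $p'[\bar s]$ to contain the same symbols with the same multiplicities up to threshold $2$ and then invokes Identity~\eqref{eqhss}. Your write-up merely makes explicit the two details the paper leaves implicit (that $3h=2h$ yields $nh=2h$ for all $n\geq 2$, and that thresholded letter counts determine thresholded type counts even when the letter-to-type map is not injective), both of which are handled correctly.
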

\begin{proof}
Since $p ~\ssmequiv p'$ the forests $p[\bar s]$
and $p'[\bar s]$ contain the same symbols with the same number of
occurrences up to threshold $2$. It follows from~\eqref{eqhss} that
$\alpha(p[\bar s]) = \alpha(p'[\bar s])$.
\end{proof} 

\subsection{\EFmax and \FOdxs}
\newcommand\sucpequiv{\ensuremath{\mathbf{Suc}\text{-}\!\!\pequiv{k}^X}\xspace}
\newcommand\sucmequiv{\ensuremath{\mathbf{Suc}\text{-}\!\!\mequiv{k}}\xspace}

In this case Duplicator not only must respect the direction in which
Spoiler has moved his pebble, but she also must place her pebble on
the successor (predecessor) of the current position if this was also
the situation for Spoiler.

The games now depends on $k$ and we write \sucmequiv and \sucpequiv the
equivalence relations resulting from this game and its $X$-relaxed variant.

The following analog of Claim~\ref{lemma-types-fod} is an immediate
consequences of the definitions.

\begin{claim}\label{lemma-Succ-types}
Let $X \subseteq H$ and $(p,x)$ be a node. There is a
\EFmax formula $\psi_{p,x}$ having one free variable and such that for
any forest $s$, $\psi_{p,x}$ holds exactly at all nodes $(p',x')$ such
that $(p,x) ~\sucpequiv (p',x')$.
\end{claim}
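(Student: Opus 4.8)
The plan is to copy, almost verbatim, the proof of Claim~\ref{lemma-types-fod}, with $\pequiv{k}^X$ replaced by $\sucpequiv$ and $\mequiv{k}$ by $\sucmequiv$. Concretely there are exactly two things to check: (1) the tighter equivalence $\sucmequiv$ on \shals refines the $X$-relaxed equivalence $\sucpequiv$, and (2) each $\sucmequiv$-class is definable by an \EFmax formula evaluated at a node. Granting these, $\psi_{p,x}$ is taken to be the (finite) disjunction of the \EFmax formulas obtained in (2), ranging over the $\sucmequiv$-classes contained in the $\sucpequiv$-class of $(p,x)$.

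First I would establish (1), i.e.\ that $(p,x)\sucmequiv(p',x')$ entails $(p,x)\sucpequiv(p',x')$. Take a winning strategy for Duplicator in the $k$-round non-relaxed successor game on $p,p'$ started at $(p,x),(p',x')$; I claim the same strategy wins the $k$-round $X$-relaxed successor game. The two games differ only in that the relaxed one weakens the label constraint on Duplicator's replies and equips Spoiler with an extra ``safety'' move, which is available solely while the two pebbles carry labels $b(c),b(c')$ with $c\neq c'$; the direction- and successor/predecessor-respecting rules are identical in both. Since Duplicator's transported strategy always answers with a node carrying \emph{exactly} the same label as Spoiler's, along every play consistent with it the two pebbles carry identical labels at all times; hence the precondition of the safety move is never met, Spoiler can never use it, and the weaker label constraint of the relaxed game is satisfied a fortiori. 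The side conditions $i)$ (same set of $\As$-labels) and $ii)$ (same label at the distinguished positions) in the definition of $\sucpequiv$ are already consequences of $(p,x)\sucmequiv(p',x')$, so indeed $(p,x)\sucpequiv(p',x')$. Consequently every $\sucpequiv$-class is a finite union of $\sucmequiv$-classes.

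For (2) I would use that the $k$-round successor game on strings over $\As$ is precisely the \efgame game for the temporal logic $\Fwordnext$, so by the standard string analogue of Lemma~\ref{lemma-ef-game} every $\sucmequiv$-class is defined by a $\Fwordnext$ formula of modal depth at most $k$. The three letters $a$, $b(a)$, $b(\hole)$ of $\As$ translate, at a node $y$ of a forest, into the \EFmax-expressible properties ``$y$ is a leaf of label $a$'', ``$y$ has label $b$ and the subforest below $y$ is a single leaf of label $a$'', and ``$y$ has label $b$ and the subforest below $y$ is not a single leaf'' --- each expressible using $\mathbf{EF}$, $\forward$ and $\past$ --- while the sibling moves of the string game are exactly the horizontal modalities $\next,\previous,\forward,\past$ of \EFmax. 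Substituting these letter tests and modalities into the $\Fwordnext$ formula, with all horizontal navigation restricted to the siblings of $y$, produces the required \EFmax formula, exactly as in the proof of Claim~\ref{lemma-p-valid}. This finishes the plan.

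The argument is essentially bookkeeping and I do not expect a real obstacle; the one step that deserves care is the claim in the second paragraph that, against the transported strategy, the two pebbles always occupy equally labelled nodes. This is precisely what keeps the safety move permanently unavailable to Spoiler, hence makes the $X$-relaxed game no harder for Duplicator and yields the refinement of $\sucpequiv$ by $\sucmequiv$; the analogous point is the crux of the proofs of Claim~\ref{lemma-S-types} and Claim~\ref{lemma-Sn-types} as well.
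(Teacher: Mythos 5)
Your proposal is correct and follows exactly the route the paper intends: the paper declares the claim ``an immediate consequence of the definitions'' by analogy with the proof of Claim~\ref{lemma-types-fod}, whose argument is precisely the one you reproduce, namely that the strict equivalence refines the $X$-relaxed one because a strategy that matches labels exactly keeps the safety-move precondition permanently false, so every $\sucpequiv$-class splits into finitely many $\sucmequiv$-classes, each \EFmax-definable via the string-game correspondence of Lemma~\ref{lemma-ef-game}. You merely make explicit the two bookkeeping points (label-preservation of the transported strategy, and the \EFmax translation of the $\As$-letter tests) that the paper leaves implicit.
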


As in the previous cases, we obtain from \sucpequiv a new notion of saturation
that we call $\next$-saturation. We can show:

\begin{theorem}\label{th-fodx}
A regular forest language $L$ is definable in \FOdxs iff its syntactic
morphism $\alpha: \A^{\Delta} \rightarrow (H,V)$ satisfies: 

\begin{enumerate}[label=\alph*)]
\item $H$ satisfies for all $h,g \in H$, for all $e \in H$ such that
  $2e=e$:
\begin{equation}\label{eqhx}
\omega(e+h+e+g+e)+ g+ \omega(e+h+e+g+e)=\omega(e+h+e+g+e)
\end{equation}

\item $V$ satisfies Identity~\eqref{eqv}
\begin{equation*}
(uv)^{\omega}v(uv)^{\omega}=(uv)^\omega
\end{equation*}

\item the leaf completion of $\alpha$ is closed under
  $\next$-saturation.
\end{enumerate}
\end{theorem}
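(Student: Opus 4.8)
The plan is to reuse, almost line for line, the proof of Theorem~\ref{th-efmax}, replacing the horizontal equivalence $\pequiv{k}^X$ by its $\next$-sensitive analogue \sucpequiv and $\mequivk$ by \sucmequiv, and replacing the string results on \FOdw by the corresponding results on \FOdxw. As explained at the start of Section~\ref{others}, the proof of Theorem~\ref{th-efmax} sees the horizontal modalities in only four places: the initial choice of $k$ (Lemma~\ref{lem-choicek}); the definability of sets of \shals and of nodes inside \shals (Claims~\ref{lemma-p-valid} and~\ref{lemma-types-fod}), which are here replaced by Claim~\ref{lemma-Succ-types} and its obvious variant for $P$-valid forests; the ``not branching'' special case of Section~\ref{main-proof}, where \shals of arity~$1$ are treated as strings; and the fixpoint algorithm of Section~\ref{more}. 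Everything concerning the vertical direction — Identity~\eqref{eqv} on $V$, the Antichain Composition Lemma, the bottom-up and top-down inductions, and the abstraction by profiles and configurations — is untouched, since \EFmax has the same vertical modalities as \EFF.

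For necessity, Identity~\eqref{eqv} is needed for the usual reason: a maximal chain of ancestors together with their sibling subforests is a string over $\orderv$, so Theorem~\ref{thm-word-fod} applies, and since \EFmax has no vertical successor this identity is exactly the right one. Identity~\eqref{eqhx} is needed because a sequence of siblings, read with $\succh$ and $\orderh$, is a string on which \FOdxs restricts to \FOdxw; working with the leaf completion, where every element of $H$ is the type of a leaf, removes the obstruction that \FOdxw-definable string languages do not form an Eilenberg variety, and the algebraic characterization of \FOdxw over strings then becomes precisely~\eqref{eqhx}, the idempotent $e$ encoding a repeated block of siblings. Closure under $\next$-saturation is necessary by re-running the \efgame argument of Section~\ref{sec-correc} with the relaxed game replaced by the one underlying \sucpequiv (in which Duplicator must additionally respect $\succh$-moves): the only properties of $\pequiv{k}^X$ used there are that it is a finite-index equivalence definable in the logic, that it is closed under side-by-side composition of \shals, and the ``two extra moves plus safety'' statement Claim~\ref{clm:global}; all three survive the passage to \sucpequiv — Claim~\ref{clm:global} in particular because Spoiler can still reach any port-node with one $\orderh$-move followed by a safety move.

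For sufficiency, the one genuinely new ingredient is the $\next$-analogue of Lemma~\ref{lem-choicek}: there is a $k'$ such that for all $k \geq k'$, any two \sucmequiv-equivalent \shals, filled with a common forest $s$, have the same forest type. Its proof mirrors that of Lemma~\ref{lem-choicek} with~\eqref{eqhx} in place of~\eqref{eqh}: view strings over $H$ with the natural morphism $\beta\colon H^+ \to H$; because every element of $H$ is a letter, Identity~\eqref{eqhx} is exactly the condition making each $\beta^{-1}(h)$ definable in \FOdxw over $H$, and $k'$ is the maximal quantifier rank of the resulting formulas (and, as in Section~\ref{main-proof}, one sets $k$ to the maximum of $k'$ and the saturation bound). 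With this in hand the three-parameter induction of Section~\ref{main-proof} — base case, bottom-up Case~1, top-down Case~2 — goes through verbatim with $\pequiv{k}^X$ replaced by \sucpequiv, $\mequivk$ by \sucmequiv, and \FOdxw used in the ``not branching'' special case (where~\eqref{eqv} still handles the vertical string of contexts, and the new string characterization is invoked only to see that \shals of a fixed context type form a \sucmequiv-closed, hence \FOdxs-definable, set). The proofs that \sucpequiv is a finite-index equivalence (the analogues of Lemma~\ref{lem-eqiseq} and the remark following it) are identical to the \FOd case.

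Decidability is obtained by repeating Section~\ref{more} with $\pequiv{k}^X$ replaced by \sucpequiv throughout: profiles, configurations, the sets $\bcI_k[\alpha,X]$ and $\bcI[\alpha,X]$, and profile saturation are defined as before, Proposition~\ref{prop:satpro} holds verbatim, and in the fixpoint algorithm the string identity $(u)^k u'' (u')^k \mequivk (u)^k (u')^k$ behind Operation~\eqref{op:two} is replaced by its \FOdxw counterpart, which holds for $k$ large when $u$ and $u'$ each use every letter of the common alphabet, now requiring a little extra care at the junctions of the three blocks — care absorbed by the flexibility of \uplift{\Bs}. The bound $\ell$ of Proposition~\ref{prop:algo} changes accordingly, but the algorithm and Corollary~\ref{cor:sat} are otherwise unchanged. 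The main obstacle is conceptual rather than notational: one must verify that~\eqref{eqhx} is the correct lift of the string characterization of \FOdxw to the horizontal semigroup equipped with its idempotents, and establish the $\next$-analogue of Lemma~\ref{lem-choicek}; this is where the idempotent padding in~\eqref{eqhx} does real work, and where the proof differs in substance, not merely notation, from the \FOd case.
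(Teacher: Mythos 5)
Your treatment of the characterization itself is essentially the paper's: the argument for Theorem~\ref{th-efmax} touches the horizontal modalities only through the equivalences on \shals and through Lemma~\ref{lem-choicek}, and the paper proves Theorem~\ref{th-fodx} by substituting for the latter exactly the analogue you describe (Lemma~\ref{lem-choicek-x}, derived from Identity~\eqref{eqhx} via Theorem~\ref{thm-word-fodx}), with the successor-sensitive games replacing $\mequivk$ and $\pequiv{k}^X$ throughout. For the statement as given --- the ``iff'' --- your proposal is correct and follows the same route.

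Your final paragraph, however, asserts something the paper explicitly leaves open: decidability of closure under $\next$-saturation, and hence of definability in \FOdxs. The step that fails is precisely the one you gloss over. Operation~\eqref{op:two} of the fixpoint algorithm rests on the \FOdw fact that $u^k u''(u')^k \mequivk u^k(u')^k$ whenever $u,u',u''$ use the same alphabet; the corresponding statement for \FOdxw is false, because with the successor predicate a formula of rank $2$ already detects the presence of a specific two-letter factor, and $u''$ (or the junctions between the three blocks) can introduce factors absent from $u^k(u')^k$. The correct invariant for \FOdxw involves sets of bounded-length factors rather than alphabets, and propagating this through the relaxed games, profiles and configurations is exactly the difficulty the authors could not resolve: the Decidability subsection of Section~\ref{others} and the Discussion both state that computing the indistinguishable sets for \FOdxs is left as an open problem. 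Since decidability is not part of the statement of Theorem~\ref{th-fodx}, this does not invalidate your proof of the stated equivalence, but the decidability claim is unsupported and should be withdrawn rather than presented as ``care absorbed by the flexibility of \uplift{\Bs}''.
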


\noindent Equation~(\ref{eqhx}) is extracted from the following result which is
essentially proved in~\cite{fodeux} based on a result
of~\cite{almeida} (see Footnote on page~\pageref{foot}).

\begin{theorem}[\cite{fodeux},\cite{almeida}]\label{thm-word-fodx}
A regular string language $L$ is definable in \FOdxw iff its syntactic
semigroup $S$ satisfies for all $u,v \in S$, for all $e \in S$ such
that $e^2=e$: 
\begin{equation*}
(eueve)^\omega v (eueve)^\omega =(eueve)^\omega
\end{equation*} 
\end{theorem}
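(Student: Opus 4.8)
The plan is to prove both implications. Fix a surjective morphism $\eta:A^{+}\to S$ onto the syntactic semigroup of $L$, with $L=\eta^{-1}(F)$ for some $F\subseteq S$ (possibly with $S^{1}$ to allow empty contexts), and write $\omega=\omega(S)$.

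\emph{Necessity.} Assume $L$ is defined by an \FOdxw sentence of quantifier rank $k$. Given $u,v\in S$ and an idempotent $e=e^{2}$, choose words $\mathbf{e},\mathbf{u},\mathbf{v}\in A^{+}$ with $\eta(\mathbf{e})=e$, $\eta(\mathbf{u})=u$, $\eta(\mathbf{v})=v$; since $e$ is idempotent, $\eta(\mathbf{e}^{j})=e$ for every $j\ge 1$. Put $\mathbf{s}=\mathbf{e}^{m}\mathbf{u}\mathbf{e}^{m}\mathbf{v}\mathbf{e}^{m}$ and let $N$ be a multiple of $\omega$, with $m$ and $N$ both far larger than $2^{k}$, so that $\eta(\mathbf{s}^{N})=(eueve)^{\omega}$. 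It then suffices to show that for every pair of words $(c,d)$ over $A$ the words $c\,\mathbf{s}^{N}\mathbf{v}\,\mathbf{s}^{N}\,d$ and $c\,\mathbf{s}^{N}\,d$ cannot be separated by the $k$-round \efgame game for \FOdxw: this forces $\eta(c\,\mathbf{s}^{N}\mathbf{v}\,\mathbf{s}^{N}\,d)\in F\iff\eta(c\,\mathbf{s}^{N}\,d)\in F$ for all $c,d$, i.e. $(eueve)^{\omega}v(eueve)^{\omega}=(eueve)^{\omega}$ in $S$. Duplicator's strategy is the classical one: while both pebbles lie in the common part $c$ or $d$ she copies Spoiler; once inside the long run of $\mathbf{e}$-padded islands she maintains the invariant that the two pebbled positions sit in occurrences of the same short factor of $\mathbf{s}$ and carry the same label together with the same immediate predecessor and successor, using the depth of the $\mathbf{e}^{m}$-padding both to reproduce whatever successor or order relation Spoiler exploits and to absorb the single extra central $\mathbf{v}$ of the first word (each round at most halves the available padding, whence $m,N\gg 2^{k}$).

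\emph{Sufficiency.} Assume $S$ satisfies the identity. The key reformulation is that it is an identity \emph{internal to the local submonoids} $eSe$: since $(eueve)^{\omega}$ is idempotent and both begins and ends with $e$ we have $e(eueve)^{\omega}=(eueve)^{\omega}e=(eueve)^{\omega}$, so the identity becomes $(eueve)^{\omega}(eve)(eueve)^{\omega}=(eueve)^{\omega}$; as $eueve=(eue)(eve)$, this is exactly the instance $(xy)^{\omega}y(xy)^{\omega}=(xy)^{\omega}$ of the identity of Theorem~\ref{thm-word-fod} with $x=eue$ and $y=eve$ in $eSe$, and conversely $eSe\in\mathbf{DA}$ for all idempotents $e$ yields every such identity. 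Hence the hypothesis says precisely that $S$ lies in the pseudovariety $\ell\mathbf{DA}$ of finite semigroups all of whose local submonoids are in $\mathbf{DA}$; by Almeida's locality of $\mathbf{DA}$~\cite{almeida}, $\ell\mathbf{DA}=\mathbf{DA}*\mathbf{D}$. I would then run the standard delay-theorem step: a language is recognised by a semigroup in $\mathbf{DA}*\mathbf{D}$ iff there is a window radius $t$ such that its $t$-block recoding $\widehat L$ — the language over a finite alphabet $\widehat A$ obtained by relabelling each position of a word by the factor of radius $t$ around it — is recognised by a monoid in $\mathbf{DA}$, hence, by Theorem~\ref{thm-word-fod}, is definable by an \FOdw sentence $\widehat\varphi$. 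Finally one translates $\widehat\varphi$ back to an \FOdxw sentence for $L$: each atom ``the block at $x$ is $w$'' of $\widehat\varphi$ becomes a bounded chain of nested quantifiers walking the $2t+1$ positions around $x$ via the successor predicate — reusing the two variables, since $t$ is a fixed constant and each such reading is a closed subformula — while the order quantifiers of $\widehat\varphi$ become order quantifiers over the original positions.

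\emph{Main obstacle.} The delicate point is this last translation: one must check that the block recoding, the consistency it implicitly imposes on the recoded word, and the \FOdw sentence $\widehat\varphi$ can all be simulated over the original word while never storing more than two positions at a time. This is exactly where the presence of the horizontal successor predicate (and the absence of a vertical/child predicate) is essential, and is where the argument of~\cite{fodeux}, on top of~\cite{almeida}, does the real work; the remaining ingredients — the short semigroup computation identifying the identity with membership in $\ell\mathbf{DA}=\mathbf{DA}*\mathbf{D}$, and the $\mathbf{DA}*\mathbf{D}$-versus-block-recoding correspondence — are routine.
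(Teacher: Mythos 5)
This theorem is not proved in the paper: it is imported verbatim from the literature (the paper attributes it to~\cite{fodeux} building on~\cite{almeida}, with the footnote pointing to a thesis for the equivalence of the identity with the published one), so there is no in-paper argument to compare against. Your sketch correctly reconstructs the standard external proof. The necessity direction is the classical \efgame padding argument, and your reduction of the identity to a statement about local submonoids is right: since $e(eueve)^{\omega}=(eueve)^{\omega}e=(eueve)^{\omega}$, the identity is exactly the $\mathbf{DA}$ identity of Theorem~\ref{thm-word-fod} instantiated in $eSe$ with $x=eue$, $y=eve$, so the hypothesis is $S\in\ell\mathbf{DA}$; Almeida's locality result gives $\ell\mathbf{DA}=\mathbf{DA}*\mathbf{D}$, the delay theorem converts this to $\mathbf{DA}$-recognizability of a block recoding, Theorem~\ref{thm-word-fod} gives an \FOdw sentence for the recoded language, and the successor predicate lets you read a fixed-radius block with two reused variables. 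Be aware, though, that what you have is an accurate \emph{outline} rather than a proof: the three steps you yourself flag --- maintaining the game invariant under both order and successor moves, the delay-theorem correspondence (including the consistency condition on recoded words), and the two-variable back-translation --- are precisely where the substance of~\cite{fodeux} and~\cite{almeida} lies, and none of them is routine. Since the paper itself only cites the result, deferring those steps to the references is acceptable here, but they could not be omitted if this theorem actually had to be proved.
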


Again, the proof of Theorem~\ref{th-fodx} follows the lines of the
proof of Theorem~\ref{th-efmax} after replacing
Lemma~\ref{lem-choicek} by the following simple result:

\begin{lemma} \label{lem-choicek-x} Let $L$ be a language whose syntactic
  forest algebra satisfies the identities stated in
  Theorem~\ref{th-fodx}. There exists a number $k'$ such that for all $k \geq
  k'$, all \shals $p ~\sucmequiv p'$ and all forests $s$, $p[\bar s]$ and
  $p'[\bar s]$ have the same forest type.
\end{lemma}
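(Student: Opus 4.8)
The plan is to follow the proof of Lemma~\ref{lem-choicek} essentially verbatim, substituting Theorem~\ref{thm-word-fodx} for Theorem~\ref{thm-word-fod}. The observation driving the whole argument is that Identity~\eqref{eqhx}, read additively on the semigroup $H$, is \emph{exactly} the identity of Theorem~\ref{thm-word-fodx}: the term $\omega(e+h+e+g+e)$ is $(eueve)^{\omega}$ with $u=h$ and $v=g$ written additively, and the side condition $2e=e$ says precisely that $e$ is an idempotent of $H$. So first I would record this translation and conclude, via Theorem~\ref{thm-word-fodx}, that for every $h\in H$ the language $\beta^{-1}(h)\subseteq H^{+}$ — where $\beta\colon H^{+}\to H$ is the natural morphism $h_{1}\cdots h_{n}\mapsto h_{1}+\cdots+h_{n}$ — is definable by some \FOdxw formula $\varphi_{h}$. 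Since $H$ is finite I would then set $k'$ to be the maximum number of rounds needed in the corresponding $\mathbf{Suc}$-\efgame game to witness all the $\varphi_{h}$; crucially $k'$ depends only on $\alpha$, not on $s$.

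Next, given $k\geq k'$, \shals $p\sucmequiv p'$ and a forest $s$, I would introduce the relabelling $\pi_{s}\colon\As\to H$ with $\pi_{s}(a)=\alpha(a)$, $\pi_{s}(b(a))=\alpha(b(a))$ and $\pi_{s}(b(\hole))=\alpha(b(\hole))\,\alpha(s)$, and note that the sequence of forest types of the trees of $p[\bar s]$ is obtained by applying $\pi_{s}$ letterwise to $p$ viewed as a word over $\As$ (and likewise for $p'$), so that $\alpha(p[\bar s])=\beta(\pi_{s}(p))$ and $\alpha(p'[\bar s])=\beta(\pi_{s}(p'))$. The key point is that a Duplicator strategy matching $\As$-labels a fortiori matches the coarser $H$-labels induced by $\pi_{s}$ (if two letters of $\As$ coincide, so do their images), and a relabelling leaves the word structure — order and successor — untouched; hence from $p\sucmequiv p'$ (with at least $k'$ rounds) Duplicator wins the $k'$-round $\mathbf{Suc}$-game on the words $\pi_{s}(p)$ and $\pi_{s}(p')$ over $H$. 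By the $\mathbf{Suc}$-analogue of Lemma~\ref{lemma-ef-game} these words satisfy the same \FOdxw sentences of that rank; taking $h=\beta(\pi_{s}(p))$ and applying this to $\varphi_{h}$ gives $\beta(\pi_{s}(p'))=h=\beta(\pi_{s}(p))$, i.e. $\alpha(p[\bar s])=\alpha(p'[\bar s])$.

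The only genuinely new content compared with Lemma~\ref{lem-choicek} is (i) checking that the syntactic-semigroup identity of Theorem~\ref{thm-word-fodx} (from~\cite{fodeux,almeida}) is the correct multiplicative rendering of Identity~\eqref{eqhx}, which is routine once the dictionary $+\leftrightarrow\cdot$ and $2e=e\leftrightarrow e^{2}=e$ is fixed, and (ii) the transfer of the $\mathbf{Suc}$-game along a relabelling, which is immediate for the reasons above. I do not expect either to be an obstacle; the ``hard part'', such as it is, is purely bookkeeping: ensuring $k'$ is chosen before $s$ (it is, since it depends only on the finitely many $\varphi_{h}$), and checking that the successor/predecessor \emph{mode} of Spoiler's moves is preserved under $\pi_{s}$, which holds for the same reason as the label transfer.
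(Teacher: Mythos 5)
Your proposal is correct and matches the paper's own proof, which is literally stated as ``identical to the one for Lemma~\ref{lem-choicek} after replacing Theorem~\ref{thm-word-fod} by Theorem~\ref{thm-word-fodx} and $\mequiv{k}$ with \sucmequiv''; you have simply spelled out the two substitution points (the additive/multiplicative dictionary for Identity~\eqref{eqhx} and the transfer of the game along the relabelling into $H$) that the paper leaves implicit.
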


\begin{proof}
This is a consequence of the fact that $H$ satisfies
Identity~\eqref{eqhx}. The proof is identical to the one we provided
for Lemma~\ref{lem-choicek} replacing Theorem~\ref{thm-word-fod} by
Theorem~\ref{thm-word-fodx} and \mequiv{k} with \sucmequiv.
\end{proof} 

\subsection{Decidability}

Deciding whether a regular forest language is definable in \EFH and \EFHs is
simple from Theorem~\ref{th-efs} and Theorem~\ref{th-efss}. As in
Section~\ref{more} we prove that the corresponding notions of saturation are
equivalent to their abstract variant. The latter are decidable because they
don't depend on $k$ and, up to equivalence, only finitely many $Q\subseteq
\As^+$ needs to be considered.

However, for \FOdxw, it is not clear how to generalize the
construction of the indistinguishable sets. We leave this and the
status of deciding definability in \mbox{\FOdxs} as an open problem.

Let $\ell$ be defined as in Proposition~\ref{prop:algo}. We
prove that for any $k \geq \ell$, $\bcI_k[\alpha,X] \subseteq
Sat[X,\alpha]$. We will need the following definition.

Let $k \in \nat$, $X \subseteq H$. To every \shal $q \in \As^+$, we
associate a configuration $\Gc_k[X](q) \in \bcI[\alpha,X]$. For any
$p,x$ set $\fV_{p,x} = \set{\beta(p',x') \mid (p,x)
\pequiv{k}^X (p',x')}$. We set
\[
\Gc_k[X](q) = \{\fV_{q,y} \mid y \in q\}
\]
The following two facts are immediate consequences of the definitions:
\begin{fct} \label{fct:gen1}
For all $k \leq k' \in \nat$, $X \subseteq H$ and $q \in \As^+$ we have
$\Gc_{k'}[X](q) \lesspr \Gc_k[X](q)$.
\end{fct}
\begin{fct} \label{fct:gen2}
For all $k \in \nat$ and $X \subseteq H$ we have $\bcI_k[\alpha,X] = \dclos
\{\Gc_k[X](q) \mid q \in \As^+\}$. 
\end{fct}

We can now finish the proof of Proposition~\ref{prop:algo}.
The proof is by  induction on the size of the alphabet as stated in
the proposition below.

\begin{proposition} \label{prop:comp}
Let $\Bs \subseteq \As$, $k \geq 2|\Bs|^2(|\bcC|+1)$ and
$p$ a \shal such that $p$ contains only labels in $\Bs$. Then
$\Gc_k[X](p) \in \dclos Sat[X,\alpha]$.
\end{proposition}

Using Proposition~\ref{prop:comp} with $\Bs = \As$, we obtain that for  
any $k \geq \ell$ and any $p \in \As^+$, we have $\Gc_k[X](p) \in
\dclos Sat[X,\alpha]$. It then follows
from Fact~\ref{fct:gen2} that $\bcI_k[\alpha,X] \subseteq \dclos
Sat[X,\alpha]$ which terminates the proof of
Proposition~\ref{prop:algo}. It now remains to prove
Proposition~\ref{prop:comp}. The remainder of the section is devoted
to this proof.

\smallskip

For the sake of simplifying the presentation, we assume that $p$ can 
be an empty \shal denoted '$\varepsilon$' and that
$Sat[X,\alpha]$ contains an artificial neutral element '$0$'
such that $\Gc_k[X](\varepsilon) = 0$ for any $k$. As $\varepsilon$ will be
the only \shal having that property this does not harm the generality
of the proof.

As explained above, the proof is by induction on the size of $\Bs$.
The base case happens when $\Bs = \emptyset$. In that case, $p =
\varepsilon$ and $\Gc_k[X](\varepsilon) \in Sat[X,\alpha]$ by  
definition. Assume now that $\Bs \neq \emptyset$, we set $k \geq
2|\Bs|^2(|\bcC|+1)$ and $p$ as a \shal containing only
labels in $\Bs$. We need to prove that $\Gc_k[X](p) \in
\dclos Sat[X,\alpha]$.

\medskip
First observe that when $p$ does not contain \emph{all} labels in
$\Bs$, the result is immediate by induction. Therefore, assume that
$p$ contains all labels in $\Bs$. We proceed as follows. First, we
define a new notion called a \patt{\Bs[X]}{n}. Intuitively, a \shal $q$
contains a \patt{\Bs[X]}{n} iff all labels in $\Bs$ (modulo
$\Bs[X]$-equivalence) are repeated at least $n$ times in $q$. Then,
we prove that if $p$ contains a \patt{\Bs[X]}{n} for a large enough
$n$, then $\Gc_k[X](p)$ can be decomposed in such a way that it can
be proved to be in $Sat[X,\alpha]$ by using induction on the
factors, and Operations~\eqref{op:one} and~\eqref{op:two} to compose
them. Otherwise, we prove that $\Gc_k[X](p)$ can be decomposed as a
sum of bounded length whose elements can be proved to be in
$Sat[X,\alpha]$ by induction. We then conclude using
Operation~\eqref{op:one}. We begin with the definition of 
\patts{\Bs[X]}{n}.

\medskip
\noindent {\bf \patts{\Bs[X]}{n}.} Consider the $\Bs[X]$-equivalence
of labels in $\Bs$ and let $m$ be the number of equivalence
classes. We fix an arbitrary order on these classes that we denote by
$C_0,\dots,C_{m-1} \subseteq \Bs$. Recall that \Cs is a an
$X$-approximation of \Bs iff \Cs contains at least one element of each
class. Let $n \in \nat$. We say that a \shal $q$ contains a
\patt{\Bs[X]}{n} iff $q$ can be decomposed as
\[
q = q_0 + c_0 + q_1 + c_1 + \dots + q_n + c_n + q_{n+1}
\]
\noindent
such that for all $i \leq n$, $c_i \in C_{j}$ (with $j= i \bmod m$) and
$q_i$ is a (possibly empty) \shal. In particular, the decomposition
above is called the \emph{leftmost decomposition} iff for all $i \leq
n$ no label in $C_{j}$ (with $j = i \bmod m$) occurs in
$q_i$. Symmetrically, in the \emph{rightmost decomposition}, for all $i
\geq 0$, no label in $C_{i}$ (with $j = i \bmod m$) occurs in
$q_{i+1}$. Observe that by definition the leftmost and rightmost
decompositions are unique. In the proof, we use the following decomposition
lemma.

\begin{lemma}[Decomposition Lemma] \label{lem:decomp}
Let $n \in \nat$. Let $q$ be a \shal that contains a \patt{\Bs[X]}{n}
and let $q = q_0 + c_0 + \dots + c_n + q_{n+1}$ be the associated
leftmost or rightmost decomposition. Then
\[
\Gc_k[X](q) \lesspr  \Gc_{k-n}[X](q_0) + \Gc_{k-n}[X](c_0) + \dots +
\Gc_{k-n}[X](c_n) + \Gc_{k-n}[X](q_{n+1})
\]
\end{lemma}

\begin{proof}
  This is a simple \efgame game argument. Because of the missing boundary labels
  within the $q_j$, using at most $n$ moves, Spoiler can make sure that
  the game stays within the appropriate segment $q_j$ and can use the remaining
  $k-n$ moves for describing that segment.
\end{proof}

This finishes the definition of patterns. Set $n = m(|\bcC| + 1)$. We now
consider two cases depending on whether our \shal $p$ contains a
\patt{\Bs[X]}{2n}.

\medskip
\noindent
{\bf Case 1: $p$ does not contain a \patt{\Bs[X]}{2n}.} In that case
we conclude using induction and Operation~\eqref{op:one}. Let $n'$ be
the largest number such that $p$ contains a \patt{\Bs[X]}{n'}. By
hypothesis $n' < 2n$. Let $p = p_0 + c_0 + \dots + c_{n'}+ p_{n'+1}$
be the associated leftmost decomposition. Observe that by definition,
for $i \leq n'$, $p_i$ uses a strictly smaller alphabet than
$\Bs$. Moreover, since $p$ does not contain a \patt{\Bs[X]}{n'+1} this
is also the case for $p_{n'+1}$. Set $\tilde{k} = k-n'$, by choice of
$k$, we have $\tilde{k} \geq 2(|\Bs|-1)^2(|\bcC|+1)$. Therefore, we
can use our induction hypothesis and for all $i$ we get,
\[
\Gc_{\tilde{k}}[X](p_i) \in \dclos Sat[X,\alpha]
\]
Moreover, for all $i$, $\Gc_{\tilde{k}}[X](c_i) \in \bcT[\alpha]
\subseteq Sat[X,\alpha]$. Finally, using
Lemma~\ref{lem:decomp} we obtain
\[
\Gc_{k}[X](p) \lesspr  \Gc_{\tilde{k}}[X](p_0) + \Gc_{\tilde{k}}[X](c_0) + \dots +
\Gc_{\tilde{k}}[X](c_{n'}) + \Gc_{\tilde{k}}[X](p_{n'+1})
\]
From Operation~\eqref{op:one} the right-hand sum
is in $\dclos Sat[X,\alpha]$. We then conclude that
$\Gc_{k}[X](p) \in \dclos Sat[X,\alpha]$ which terminates
this case.

\medskip
\noindent
{\bf Case 2: $p$ contains a \patt{\Bs[X]}{2n}.} In that case we
conclude using induction, Operation~\eqref{op:one} and
Operation~\eqref{op:two}. By hypothesis, we know that $p$ contains a
\patt{\Bs[X]}{n}, let $p = p_0 + c_0 + \dots + c_{n}+ p_{n+1}$ be the
associated leftmost decomposition. Since $p$ contains a
\patt{\Bs[X]}{2n}, $p_{n+1}$ must contain a \patt{\Bs[X]}{n}. We set
$p_{n+1} = p' + c'_0 + \dots + c'_{n}+ p'_{n+1}$ as the associated
rightmost decomposition. In the end we get
\[
p =  p_0 + c_0 + \dots + c_{n} + p' + c'_0 + p'_1 + \dots + c'_{n}+ p'_{n+1}
\]
Set $\tilde{k} = k - 2n$ and observe that by choice of $k$, $\tilde{k}
\geq 2(|\Bs|-1)^2(|\bcC|+1)$. Therefore, as in the previous
case, we get by induction that for all $i$, $\Gc_{\tilde{k}}[X](p_i)
\in \dclos Sat[X,\alpha]$, $\Gc_{\tilde{k}}[X](p'_i) \in
\dclos Sat[X,\alpha]$, $\Gc_{\tilde{k}}[X](c_i) \in \dclos
Sat[X,\alpha]$ and $\Gc_{\tilde{k}}[X](c'_i) \in \dclos
Sat[X,\alpha]$. Using the same inductive argument for $p'$
may not be possible as $p'$ might contain all labels in \Bs.

If $p'$ does not contain all labels in \Bs, then, by induction,
$\Gc_{\tilde{k}}[X](p') \in \dclos Sat[X,\alpha]$ and we can
then use Lemma~\ref{lem:decomp} as in Case~1 to conclude that
$\Gc_{k}[X](p) \in \dclos Sat[X,\alpha]$. Assume now that
$p'$ contains all labels in $\Bs$. Recall that $m$ is the number of
$\Bs[X]$-equivalence classes. For all $j \leq |\bcC|$, set
\[
\Vc_j = \sum^{m-1+jm}_{i = jm} (\Gc_{\tilde{k}}[X](p_{i}) +
\Gc_{\tilde{k}}[X](c_{i})) \qquad \Vc'_j = \sum^{m-1+jm}_{i = jm} 
(\Gc_{\tilde{k}}[X](c'_{i}) + \Gc_{\tilde{k}}[X](p'_{i+1}))
\]
Observe that for all $j$, by definition $\Vc_j,\Vc'_j$ have an
alphabet which is an $X$-approximation of $\Bs$ and by
Operation~\eqref{op:one}, $\Vc_j,\Vc'_j \in \dclos
Sat[X,\alpha]$. Moreover, it follows from a pigeon-hole
principle argument that the sequences $\Vc_0 + \cdots + \Vc_{|\bcC|}$
and $\Vc'_0 + \cdots + \Vc'_{|\bcC|}$ must contain ``loops'',
i.e. there exists $j_1 < j_2$ and $j'_1 < j'_2$
such that
\[
\begin{array}{rcl}
\Vc_0 + \cdots + \Vc_{j_1} & = & \Vc_0 + \cdots + \Vc_{j_2} \\
\Vc'_{j_2} + \cdots + \Vc'_{|\bcC|} & = & \Vc'_{j_1} + \cdots +
\Vc'_{|\bcC|}
\end{array}
\]
Set $\Uc_1 = \Vc_0 + \cdots + \Vc_{j_1}$, $\Uc_2 = \Vc_{j_1+1} +
\cdots + \Vc_{j_2}$, $\Uc'_1 = \Vc'_{j'_1} + \cdots \Vc'_{j'_2-1}$ and
$\Uc'_2 = \Vc'_{j'_2} + \cdots + \Vc'_{|\bcC|}$. Observe that by
Operation~\eqref{op:one}, we have $\Uc_1,\Uc_2,\Uc'_1,\Uc'_2 \in
\dclos Sat[X,\alpha]$ and that by construction the alphabets of
$\Uc_2,\Uc'_1$ are $X$-approximations of $\Bs$. Moreover, a little
algebra yields $\Uc_1 = \Uc_1 + \Uc_2 = \Uc_1 + \omega \Uc_2$ and
$\Uc'_2 = \Uc'_1 + \Uc'_2 = \omega \Uc'_1 + \Uc'_2$.

Set $p''= p_{j_2m} + \cdots + c_n + p' + c'_0 + \cdots +
p'_{j'_1m-1}$. Observe that by hypothesis on $p'$, $p''$ contains
all labels in $\Bs$. It follows from Fact~\ref{fct:gen1} and
Lemma~\ref{lem:decomp} that
\[
\Gc_{k}[X](p) \lesspr \Uc_1 + \Uc_2 + \Gc_{\tilde{k}}[X](p'')
+ \Uc'_1 + \Uc'_2 = \Uc_1 + \omega \Uc_2 + \Gc_{\tilde{k}}[X](p'')
+ \omega \Uc'_1 + \Uc'_2
\]
Moreover, since $p''$ has alphabet $\Bs$, it is immediate that
$\Gc_{\tilde{k}}[X](p'') \lesspr \uplift{\Bs}$. Therefore, 
\[
\Gc_{k}[X](p) \lesspr \Uc_1 + \omega \Uc_2 + \uplift{\Bs} + \omega \Uc'_1 + \Uc'_2
\]
It is now immediate from Operation~\eqref{op:two} $\omega \Uc_2 + \uplift{\Bs}
+ \omega \Uc'_1 \in \dclos Sat[X,\alpha]$. By combining this
with Operation~\eqref{op:one}, we obtain
\[
\Gc_{k}[X](p) \lesspr \Uc_1 + \omega \Uc_2 +  \uplift{\Bs} + \omega
\Uc'_1 + \Uc'_2 \in \dclos Sat[X,\alpha]
\]
We conclude that $\Gc_{k}[X](p) \in \dclos Sat[X,\alpha]$
which terminates the proof.

\section{Discussion}

We have obtained a characterization for \FOd, using identities on the
syntactic forest algebra and the new notion of saturation.  Our proof
technique applies to many other logical formalisms assuming these only
differ from \FOd by their horizontal expressive power and that they
can at least express the fact that two nodes are siblings.

We have shown all these characterizations to be decidable except for
\mbox{\FOdxs}. We leave this case as an open problem. As explained in
Section~\ref{others}, it would be enough to generalize our algorithm
for computing profiles (i.e. Proposition~\ref{prop:algo}) to the
appropriate notion of profile for \FOdxs.

Since \FOdv is unable to express the sibling relation, it
cannot be covered by our techniques and we leave open the problem of
finding a decidable characterization for this logic.

It would also be interesting to incorporate the vertical successor in
our proofs to obtain a decidable characterization for \FOdx. This
would yield a decidable characterization of the navigational core of
XPath. We believe this requires new ideas.

It terms of complexity, a rough analysis of the proof of
Theorem~\ref{cor-decid} yields a {\sc 4-Exptime} upper bound on the
complexity of the problem. It is likely that this can be
improved. Recall that the complexity of the same problem for the
corresponding logics over words, which amounts to checking
(\ref{eqv}), is polynomial is the size of the syntactic monoid. 

It would also be interesting to obtain an equivalent characterization
of \FOd which remains decidable while avoiding the cumbersome notion
of saturation. For instance it is not clear whether the notion of
confusion introduced in~\cite{BSW09} can be used as a replacement. We
leave this as an open problem.

\paragraph{\bf Acknowledgment} We thanks the reviewers for their comments on
earlier versions of this article. Their comments led to significant
improvements of the paper.

\bibliographystyle{alpha}
\bibliography{main}

\newcommand{\etalchar}[1]{$^{#1}$}
\begin{thebibliography}{EVW02}

\bibitem[Alm96]{almeida}
Jorge Almeida.
\newblock A syntactical proof of locality of {{\bf DA}}.
\newblock {\em International Journal of Algebra and Computation}, 6:165--177,
  1996.

\bibitem[Boj09]{mikolaj}
Miko{\l}aj Boja{\'n}czyk.
\newblock Two-way unary temporal logic over trees.
\newblock {\em Logical Methods in Computer Science}, 5(3), 2009.

\bibitem[BS09]{BS09}
Michael Benedikt and Luc Segoufin.
\newblock {Regular Languages Definable in {FO} and {FOmod}}.
\newblock {\em ACM Transactions of Computational Logic}, 11(1), 2009.

\bibitem[BS10]{lucicalp08}
Miko{\l}aj Boja{\'n}czyk and Luc Segoufin.
\newblock Tree languages defined in first-order logic with one quantifier
  alternation.
\newblock {\em Logical Methods in Computer Science}, 6(4), 2010.

\bibitem[BSS12]{luclics08}
Miko{\l}aj Boja{\'n}czyk, Luc Segoufin, and Howard Straubing.
\newblock Piecewise testable tree languages.
\newblock {\em Logical Methods in Computer Science}, 8(3), 2012.

\bibitem[BSW12]{BSW09}
Mikołaj Bojańczyk, Howard Straubing, and Igor Walukiewicz.
\newblock Wreath products of forest algebras, with applications to tree logics.
\newblock {\em Logical Methods in Computer Science}, 8(3), 2012.

\bibitem[BW06]{EFEX}
Miko{\l}aj Boja\'nczyk and Igor Walukiewicz.
\newblock Characterizing {EF} and {EX} tree logics.
\newblock {\em Theoretical Computer Science}, 358, 2006.

\bibitem[BW07]{forestalgebra}
Miko{\l}aj Boja{\'n}czyk and Igor Walukiewicz.
\newblock Forest algebras.
\newblock In {\em Automata and Logic: History and Perspectives}, pages
  107--132. Amsterdam University Press, 2007.

\bibitem[CDG{\etalchar{+}}]{tata}
Hubert Comon, Max Dauchet, Remi Gilleron, Florent Jacquemard, Denis Lugiez,
  Christof L{\"o}ding, Sophie Tison, and Marc Tommasi.
\newblock Tree automata techniques and applications.
\newblock Available on
  \href{http://tata.gforge.inria.fr}{http://tata.gforge.inria.fr}.
\newblock release 12th October 2007.

\bibitem[EVW02]{fodeux-UTL}
Kousha Etessami, Moshe Vardi, and Thomas Wilke.
\newblock First-order logic with two variables and unary temporal logic.
\newblock {\em Inf. Comput.}, 179(2):279--295, 2002.

\bibitem[{\'E}W05]{preclones}
Zolt{\'a}n {\'E}sik and Pascal Weil.
\newblock Algebraic characterization of regular tree languages.
\newblock {\em Theoretical Computer Science}, 340:291--321, 2005.

\bibitem[Kuf06]{kufDiss06}
Manfred Kufleitner.
\newblock {\em Logical Fragments for {M}azurkiewicz Traces: {E}xpressive Power
  and Algebraic Characterizations}.
\newblock Dissertation, Institut f{\"u}r {F}ormale {M}ethoden der {I}nformatik,
  {U}niversit{\"a}t {S}tuttgart, 2006.

\bibitem[Mar05]{marx}
Maarten Marx.
\newblock First order paths in ordered trees.
\newblock In {\em {P}roceedings of the 10th {I}nternational {C}onference in
  {D}atabase {T}heory ({ICDT}'05)}, pages 114--128, 2005.

\bibitem[Pla08]{place-csl08}
Thomas Place.
\newblock Characterization of logics over ranked tree languages.
\newblock In {\em {P}roceedings of the 17th {A}nnual {EACSL} {C}onference on
  {C}omputer {S}cience {L}ogic ({CSL}'08)}, pages 401--415, 2008.

\bibitem[Pot94]{Potthoff94}
Andreas Potthoff.
\newblock Modulo-counting quantifiers over finite trees.
\newblock {\em Theoretical Computer Science (TCS)}, 126(1):97--112, 1994.

\bibitem[PS10]{PlaceS10}
Thomas Place and Luc Segoufin.
\newblock Deciding definability in $\textrm{FO}_{2}(<)$ (or~{XP}ath) on
  trees.
\newblock In {\em {P}roceedings of the 25th {A}nnual {IEEE} {S}ymposium on
  {L}ogic in {C}omputer {S}cience ({LICS}'10)}, pages 1--1, 2010.

\bibitem[PS11]{PS09}
Thomas Place and Luc Segoufin.
\newblock A decidable characterization of locally testable tree languages.
\newblock {\em Logical Methods in Computer Science}, 7(4), 2011.

\bibitem[Str94]{bookstraub}
Howard Straubing.
\newblock {\em Finite Automata, Formal Logic and Circuit Complexity}.
\newblock Birkhauser Verlag GmbH, 1994.

\bibitem[TW98]{fodeux}
Denis Th{\'e}rien and Thomas Wilke.
\newblock Over words, two variables are as powerful as one quantifier
  alternation.
\newblock In {\em {P}roceedings of the 30th {ACM} {S}ymposium on the {T}heory
  of {C}omputing ({STOC}'98)}, pages 234--240, 1998.

\bibitem[Wil96]{Wil96}
Thomas Wilke.
\newblock An algebraic characterization of frontier testable tree languages.
\newblock {\em Theoretical Computer Science}, 154(1):85--106, 1996.

\end{thebibliography}

\end{document}